\renewcommand{\edit}[1]{\textcolor{black}{#1}}
\begin{document}

\title{Many-User Multiple Access with Random User Activity: \\ Achievability Bounds and Efficient Schemes}
  \author{Xiaoqi Liu
   \qquad
  Pablo Pascual Cobo
  \qquad 
  Ramji Venkataramanan}
  \date{}

\maketitle

\makeatletter
\def\blfootnote{\gdef\@thefnmark{}\@footnotetext} 
\makeatother

\blfootnote{X.\@ Liu was  supported  by a Schlumberger Cambridge International Scholarship and by the  UKRI 
  under the UK government’s Horizon Europe Guarantee 
  (grant number EP/Y028333/1). P.\@ Pascual Cobo was supported by an EPSRC Doctoral Training Partnership Award.  
 This paper was presented in part at the 2024 IEEE International Symposium on Information Theory and accepted to the IEEE Transactions on Information Theory. Author emails:  \texttt{\{xl394,pp423,rv285\}@cam.ac.uk}. }

\begin{abstract}
 We study the Gaussian multiple access channel with random user activity, in the regime where the number of users is proportional to the code length. The receiver may know some statistics about the number of active users, but does not know the exact number nor the identities of the active users. We  derive two  achievability bounds on the probabilities of \edit{missed detection}, false alarm, and active user error, and propose an efficient CDMA-type scheme whose performance can be compared against these bounds.   
 The first  bound  is a finite-length result based on  Gaussian random codebooks and  maximum-likelihood decoding. The second is an asymptotic bound, 
 established using  spatially coupled Gaussian codebooks and approximate message passing (AMP) decoding. These bounds can be used to compute an achievable \edit{tradeoff} between the active user density and energy-per-bit, for a fixed user payload and target error rate. 
The efficient CDMA scheme uses a spatially coupled signature matrix and AMP decoding, and we give rigorous asymptotic guarantees on its error performance. Our analysis provides the first state evolution result for spatially coupled AMP with matrix-valued iterates, which may be of independent interest. Numerical experiments demonstrate the promising error performance of the CDMA scheme for both small and large user payloads, when compared with the two  achievability bounds.
\end{abstract}

\tableofcontents
\section{Introduction}
We study the  Gaussian multiple access channel (GMAC), with  output of the form 
\begin{equation}\label{eq:standard_gmac}
\by = \sum_{\ell=1}^L\bc_\ell+\bveps,    
\end{equation}
where $L$ is the number of users,  $\bc_\ell\in\reals^n$ is the codeword of user $\ell$, $\bveps\sim \normal_n(\0, \sigma^2\bI)$ is random channel noise with spectral density $N_0=2\sigma^2$, and $n$ is the number of channel uses.
Motivated by applications such as the Internet of Things, there has been much interest in the \emph{many-user} regime,  where the number of users $L$ grows with  the code length $n$  \cite{chen2014many, chen2017capacity,polyanskiy2017perspective, zadik2019improved,  RaviK22}.

In this paper, we consider the  many-user regime where the number of users $L$  grows proportionally with $n$; the ratio $\mu:=L/n$ is called the user density \cite{ polyanskiy2017perspective}. Each user transmits a fixed  number of bits $k$  (payload) under a constant energy-per-bit constraint $\|\bc_\ell\|^2_2/k\le E_b$.  A key question in this regime is to characterize the tradeoff between user density $\mu$, the signal-to-noise ratio $E_b/N_0$ and the error probability in decoding the codewords $\{\bc_\ell\}$ from $\by$. The standard error metric is the per-user probability of error 
$\PUPE :=\frac{1}{L} \sum_{\ell=1}
^L \prob(\bc_\ell\neq \widehat{\bc_\ell})$
where $\widehat{\bc_\ell}$ denotes the decoder estimate of $\bc_\ell$. 
Achievability  and converse bounds were derived in \cite{polyanskiy2017perspective, zadik2019improved, kowshik2022improved}  in terms of the minimum $E_b/N_0$ needed to achieve a given target $\PUPE$ for a given user density $\mu$. Efficient coding schemes were proposed in \cite{hsieh2022near} in an attempt to approach the converse bound.

In practical  GMAC settings, the users are seldom consistently active. Instead, they are active in a sporadic and uncoordinated manner. Motivated by recent work in this direction \cite{chen2014many, Ordentlich2017low, Yavas2021gaussian, Yavas2021random, ngo2022unsourced, Ngo2024unsourced_alarm, gao2024unsourced},  we study the GMAC with \emph{random user activity}.
Letting $\Ka \le L$ denote the (random) number of active users, we  consider the proportional regime with $\E[\Ka]/L\to \alpha$ 
\edit{where $\alpha$ is a fixed constant. This implies that $\E[\Ka]/n \to \alpha\mu=: \mu_\a$ where $\mu_\a$ is also a  fixed constant.} 
The receiver may know the distribution of $K_\a$ or some statistics, but it does not know the identities of the active users  nor the exact value of $K_\a$. We study the tradeoff between the \emph{active} user density $\mu_\a$,  the signal-to-noise ratio $E_b/N_0$ and the decoding performance, measured in terms of the probabilities of \edit{missed detection} ($\MD$), false alarm ($\FA$), and active user error ($\AUE$) (see Section \ref{sec:3_errors}).

We emphasize that our GMAC setting is different from unsourced random-access \cite{polyanskiy2017perspective, Amalladinne2020Coded,fengler2021SPARCS,amalladinne2022unsourced, ngo2022unsourced, Ngo2024unsourced_alarm, gao2024unsourced}, where  all the users share the same codebook, 
a subset of them are active, \edit{and the decoder  recovers the set of transmitted messages (without necessarily recovering the identities of the users who sent the messages).} \edit{In contrast, our setting assigns each user a separate codebook,}
\edit{and the decoder can recover both the transmitted messages and  the sender identities.} While unsourced random-access is particularly relevant for  grant-free communication systems,  coding schemes based on sparse regression for the standard AWGN channel and the GMAC \cite{venkataramanan19monograph}  form the basis of several state-of-the-art schemes for unsourced  random-access
\cite{fengler2021SPARCS, amalladinne2022unsourced}. \edit{Analogously, the coding schemes  proposed in this paper can potentially be extended to the unsourced random-access setting.}

\subsection{Main Contributions}
In this work, we establish new  information-theoretic bounds and propose an efficient scheme for the  many-user GMAC with random user activity. 

\begin{enumerate}[i)]
    \item In Theorem \ref{thm:ngo_ach}, we derive finite-length achievability bounds on the probabilities of  \edit{missed detection} ($\MD$), false alarm ($\FA$), and active user error ($\AUE$) \edit{for an arbitrary distribution of $\Ka$ denoted $p_{\Ka}$}; these error probabilities 
    are defined in Section \ref{sec:3_errors}. The bounds are obtained by analyzing random i.i.d.\@ Gaussian codebooks with 
   joint  maximum-likelihood decoding.  

 \item Theorem \ref{thm:sparc_asymp_errors} provides an asymptotic  achievability bound for the GMAC with random user activity in the regime where $n,L\to\infty$ proportionally. The bound is derived by analyzing a scheme with a spatially coupled Gaussian codebook and iterative Approximate Message Passing (AMP) decoding. \edit{For simplicity the bound assumes $p_{\Ka}$ follows the Binomial distribution $\text{Bin}(\alpha, L)$, but our approach can be generalized to arbitrary $p_{\Ka}$.} 
 \edit{In the deterministic setting  (i.e., when $\alpha=1$),} the  bound in Theorem \ref{thm:sparc_asymp_errors} gives a strictly larger achievable region  than the bounds by Kowshik in  \cite{kowshik2022improved} and Zadik et  al.\@ \cite{zadik2019improved},  the best existing bounds for moderate to large $k$ (payload). See Corollary \ref{cor:asymp_error_alpha=1} and the discussion below it  for details, and Fig.~\ref{fig:alpha=1_kowshik_v_kuan_v_ours} for an example with $k=60$.

 The schemes used to derive the bound in Theorem \ref{thm:ngo_ach} uses joint maximum-likelihood decoding and is infeasible, even for small payloads $k$. The scheme for Theorem \ref{thm:sparc_asymp_errors} uses AMP decoding and can be implemented for small values of $k$, but the codebook size for each user (and hence the decoding complexity) scales exponentially with $k$.

 \item To  handle payloads of all sizes,
we propose an efficient CDMA-type  scheme, and characterize its asymptotic error performance  in Theorem \ref{thm:MetricsSE}, in terms of $E_b/N_0$, $\mu_\a$ and  $\alpha$. \edit{The scheme has small storage overhead since each user is assigned only a signature sequence rather than an entire codebook.} 
    The scheme uses a spatially coupled Gaussian design matrix (whose columns are the users' signature sequences) and an  AMP decoder tailored to handle random user activity.   We propose two different  choices for the denoising function used within the AMP decoder: a Bayes-optimal denoiser for small payloads, and a novel  thresholding denoiser that is computationally efficient for larger payloads up to hundreds of bits. \edit{Theorem \ref{thm:MetricsSE} assumes $p_{\Ka}=\text{Bin}(\alpha, L)$ for simplicity, but can be extended to general $p_{\Ka}$.}

    The AMP decoder, which aims to recover a matrix-valued signal (with a known prior) from noisy linear observations, can be viewed as a generalization of the spatially coupled AMP algorithm in \cite{donoho2013information} for vector-valued signals. To our knowledge, this is the first application of spatial coupling to random linear models with matrix-valued signals, which may be of independent interest. 
    Theorem \ref{thm:gen_sc_gamp_SE} provides the asymptotic distributional characterization of the AMP algorithm in this setting. 

 In Section \ref{subsec:num_results_CDMA}, we   compare the two achievability bounds   with the efficient   CDMA-type scheme. Our numerical results  demonstrate the promising performance of the  scheme for both small and larger payloads. 
\end{enumerate}

\edit{We summarize the assumptions and the details of coding schemes used for each of the above results in Table \ref{tab:summary}.}

  \begin{table}[t]
\centering
\begin{tabular}{|p{3.5cm}|p{3.2cm}|p{3.2cm}|p{3.5cm}|}
\hline
 & \centering Theorem \ref{thm:ngo_ach} & \centering 
 Theorem \ref{thm:sparc_asymp_errors} & \parbox{3cm}{\vspace{-0.05cm}\centering Theorem \ref{thm:MetricsSE}} \\
\hline
Assumptions & \parbox{3cm}{\centering \vspace{0.3cm}Fixed $n, L$,  \\ arbitrary $p_{\Ka}$ \vspace{0.3cm}} & \multicolumn{2}{c|}{\parbox{6cm}{\centering $n,L\to \infty$, \ $L/n =\mu$,\\  $p_\Ka=\text{Bin}(\alpha, L)$ but can be extended to arbitrary $p_{\Ka}$}} \\
\hline
Codebook & \parbox{3.5cm}{\centering \vspace{0.05cm}Random codebook\\with i.i.d.\@ design} & \parbox{3.5cm}{\centering \vspace{0.05cm}Random codebook\\with SC design}  & \parbox{3.5cm}{\centering \vspace{0.05cm}CDMA with SC design}\\
\hline
\vspace{-0.08cm}Decoding algorithm & \centering Maximum-likelihood & \multicolumn{2}{c|}{\raisebox{-0.3cm}{\centering AMP decoding}}\\
\hline
Memory and computational complexity &\multicolumn{2}{c|}{\raisebox{-0.3cm}{Exponential in $k$ }}& \raisebox{-0.3cm}{\centering  \qquad Linear in $k$}\\
\hline
\end{tabular}
\caption{Summary of main results, where  i.i.d.\@ (SC)  design refers to i.i.d.\@ (spatially coupled) Gaussian codebooks,  and ``Memory and computational complexity'' refers to the complexity of the coding scheme 
analyzed to obtain the result.}
\label{tab:summary}
\end{table}

\subsection{Related Work}  

\paragraph{Unsourced random access} Finite-length achievability bounds for  unsourced random-access, under various settings,  have been established in  \cite{Yavas2021gaussian, Yavas2021random,ngo2022unsourced, Ngo2024unsourced_alarm, gao2024unsourced}. The main results in \cite{Yavas2021gaussian, Yavas2021random}  are  stated in terms of   the joint-user error probability, with 
\cite{Yavas2021random} also providing bounds on the $\PUPE$. The probabilities of $\MD$ and $\FA$ are quantified separately in \cite{ngo2022unsourced, Ngo2024unsourced_alarm} and \cite{gao2024unsourced}.  Our finite length achievability bounds in Theorem 
  \ref{thm:ngo_ach} are  similar to those in \cite{ngo2022unsourced}, but a key difference is that in our setting the users have distinct codebooks, and we derive bounds on the probabilities of $\MD, \FA$, and $\AUE$ separately.

  \paragraph{Approximate Message Passing}   AMP is a class of first-order iterative algorithms initially proposed for estimation in random linear models  \cite{Kab03, donoho2009message, bayati2011dynamics, Krz12}. It has since  been applied to a variety of high-dimensional estimation problems such as low-rank matrix estimation \cite{fletcher2018matrix, montanari2021lowrank, barbier2020matrix} and generalized linear models \cite{rangan2011generalized, Mai20, Mon21}. Two appealing features of AMP are: i) it can be tailored to take advantage of the signal prior, and  ii)  under suitable model assumptions, its estimation performance in the high-dimensional limit can be characterized by a simple deterministic recursion called state evolution.  For a variety of high-dimensional statistical estimation problems, AMP has been shown to be optimal among first-order methods  \cite{celentano2020estimation}, and  is conjectured to achieve the optimal asymptotic estimation error among polynomial-time algorithms  \cite{celentano2022fundamental}.  We refer the reader to \cite{feng2022unifying}  for a survey on AMP algorithms for various statistical models.

  \paragraph{Spatial coupling} Coding schemes based on spatially coupled random linear models with  AMP decoding have been shown to be capacity-achieving for point-to-point channels \cite{barbier2017approximate,barbier2019universal,rush2021capacity},  and give  the best achievable tradeoffs for the many-user GMAC \cite{hsieh2022near,kowshik2022improved,liu2024LDPC}. These works show that spatially coupled designs significantly improve on the performance of i.i.d.\@ designs in certain regimes.
More generally, spatially coupled designs with AMP decoding have been shown to achieve the Bayes-optimal estimation error for both linear models \cite{donoho2013information} and generalized linear models \cite{cobo2023bayes} (with vector-valued signals). To  our knowledge, Theorem  \ref{thm:gen_sc_gamp_SE} presents the first state evolution result for AMP applied to spatially coupled linear models with a matrix-valued signal.

\paragraph{CDMA-based schemes} The asymptotic error performance   of CDMA with i.i.d. signature sequences 
    has been studied in a number of works in the absence of random user activity  
\cite{verduShamai99,shamaiVerdu01,CaireGuemRomyVerdu2004,tanaka2002cdma,guo2005randomly}. Assuming the signature sequences are i.i.d.\@ sub-Gaussian, AMP is the best known  decoding algorithm \cite{donoho2009message, bayati2011dynamics}. Recently, variants of CDMA have been    studied  for  activity detection in multiple-antenna networks \cite{chen2018sparse,liu2018massive, cakmak2024joint}. The  decoding task in these settings is an instance of the multiple measurement vector (MMV) problem, where the goal is to recover a matrix signal (with a specific prior) from  noisy linear observations. The AMP algorithm for the MMV problem
 \cite{Ziniel2013efficient} is used for decoding in \cite{chen2018sparse,liu2018massive}. However, the design matrix in all these works is i.i.d., whereas we use a spatially coupled design, which requires a novel AMP algorithm.

 \subsection{Performance Metrics}\label{sec:3_errors}
Let $\mc{C^{(\ell)}}=\{\bc_1^{(\ell)}, \bc_2^{(\ell)}, \dots, \bc_M^{(\ell)}\}$ denote the codebook of user $\ell$. Each active user transmits $k$ bits, so  $M=2^k$.  Let $w_\ell\in \{\emptyset, 1, 2, \dots, M\}$ denote the  index of the codeword chosen by user $\ell$, where $w_\ell=\emptyset$ indicates that the user is silent (not active). 
We use index pair $(i,j)$ to refer to
the $j$th codeword of the $i$th codebook. 
Then  the set of transmitted codewords can be defined as $\W:=\{(\ell, w_\ell): w_\ell\neq \emptyset\}$,  and the GMAC channel output can be written as
\begin{equation}
    \by = \sum_{\ell:(\ell,w_\ell)\in \W}\bc_{w_\ell}^{(\ell)} +\bveps,
\end{equation} 
where the number of active users is  $\Ka=\abs{\W}\le L$.

Given the channel output $\by$ and the codebooks $\{\mc{C}^{(\ell)}\}$, the decoder aims to recover the set of transmitted codewords. 
Let $\hw_\ell\in \{\emptyset, 1, 2, \dots, M\}$ denote the decoded codeword  in the $\ell$th codebook, and let $\hW:=\{(\ell, \hw_\ell): \hw_\ell\neq \emptyset\}$ denote the set of decoded codewords, or the decoded set in short, with size $|\hW|=: \hKa$. 

Such a decoder can make three types of errors, which we call \edit{missed detection} ($\MD$), false alarm ($\FA$)  and active user error ($\AUE$). $\MD$ refer to an active user declared silent, and  $\FA$ to a silent user declared active. An $\AUE$  occurs when an active users is correctly declared active, but the decoded codeword is incorrect. Given the codebooks $ \mc{C}^{(1)},\dots, \mc{C}^{(L)}$, the error probabilities corresponding to these events are defined as follows:
\begin{align}
     \pMD&:=\E\left[\ind\{\Ka\neq 0\}\cdot  \frac{1}{\Ka}\sum_{\ell: (\ell,w_\ell)\in \W}\ind\{\widehat{w_\ell} = \emptyset\}\right] ,\label{eq:def_my_pMD} \\
     \pFA &:=\E\left[\ind\{\widehat{\Ka}\neq 0\}\cdot  \frac{1}{\widehat{\Ka}}\sum_{\ell:(\ell, \hw_\ell)\in\hW}\ind\{w_\ell=\emptyset\}\right],
     \label{eq:def_my_pFA}\\
     \pAUE &:= \E\left[\ind\{\Ka\neq 0\}\cdot  \frac{1}{\Ka}\sum_{\ell:(\ell, w_\ell)\in \W}\ind\big\{\widehat{w_\ell}    \notin\{w_\ell, \emptyset\}\big\}\right] .\label{eq:def_my_pAUE}
\end{align} 
Here, the expectation is taken over $\Ka$, $ \hKa$, and the uniform distribution over messages for each active user.
 \paragraph*{Notation}
 The indicator function of an event $\mc{A}$ is denoted by $\ind\{\mc{A}\}$. For positive integers $a$ and $b$, $[a]:=\{1,...,a\}$ and $[a:b]=\{a, a+1, \dots, b\}$. We use boldface letters for vectors and matrices and  plain font for scalars. We use $A_{ij}$ to denote the $(i, j)$-th entry of matrix $\bA$. 
 Let $x^+=\max\{x, 0\}$. \edit{We use i.i.d.\@ as shorthand for independently and identically distributed.}
 We write $\mc{N}_d(\bmu, \bSigma)$ for a $d$-dimensional Gaussian with mean $\bmu$ and covariance $\bSigma$. \edit{We use  $\text{Bin}(p, n)$ to denote a Binomial distribution with  $n$ trials and success probability $p$.}  We write $\text{Gamma}(k, \theta) $ for the Gamma distribution with shape $k$ and scale $\theta$. We denote the Gamma function  by $\Gamma(k) = \int_0^\infty t^{k-1} e^{-t}\de t$, and the   lower and upper incomplete Gamma functions by $\gamma(k, w) = \int_0^w t^{k-1} e^{-t}\de t$ and $\Gamma(k, w) = \int_w^\infty t^{k-1} e^{-t}\de t$.

 \section{Finite-\edit{L}ength Random Coding Achievability Bounds}\label{sec:ngo_ach}
In this section, we derive  finite-length achievability bounds on $\pMD, \pFA$ and $\pAUE$ for 
a given $E_b/N_0$, $L$ and $n$, using a random coding scheme with maximum-likelihood decoding.    Without loss of generality,  we take $\sigma^2= N_0/2 = 1$, so the constraint on $E_b/N_0$ is enforced via $E_b$. 
The number of active users \edit{$\Ka$ is} assumed to be drawn from a  distribution $p_{\Ka}$.

\subsection{Random Coding and Maximum-\edit{L}ikelihood Decoding}\label{sec:ngo_ach_scheme}
Recall that each active user transmits $k$ bits using a length $n$ codeword.   For $\ell\in[L]$, 
we construct  the codebook $\mc{C}^{(\ell)}$ with   codewords 
 $\bc_j^{(\ell)} =\tilde{\bc}_{j}^{(\ell)} \ind\{\|\tilde{\bc}_{j}^{(\ell)}\|_2^2\le E_b k\}$, where   
 $\tilde{\bc}_j^{(\ell)}\stackrel{\text{i.i.d.}}{\sim}\normal_n(\bzero, E_b'k/n\bI)$ and $ 0<E_b' < E_b$. Note that the codewords across  codebooks are distinct almost surely.  The indicator function ensures the codewords $\bc_{j}^{(\ell)}$ satisfy the energy-per-bit constraint  $\|\bc_j^{(\ell)}\|_2^2/k\le E_b $.    Recall that the transmitted codeword of user $\ell \in [L]$ is denoted by  $\bc_{w_\ell}^{(\ell)}$.  The  decoder  
 first computes the  maximum-likelihood estimate $\Kap$ of  the number of active users $\Ka$, via
\begin{equation}\label{eq:determine_Ka'}
\Kap=\argmax_{K\in[\kl:\ku]} \, p(\by|\Ka=K)
 \end{equation}   
where  $\kl, \ku$  are deterministic integers between 0 and $L$, chosen such that $\prob(\Ka \notin [\kl: \ku])$ is small (e.g., $10^{-6}$).  This step is similar to the scheme in \cite{ngo2022unsourced}.
Using $\Kap$, the  decoder then produces an estimate  $\{\bc_{\hw_\ell}^{(\ell)}\, , \,\ell\in [L]\}$ of the transmitted set of  codewords $\{\bc_{w_\ell}^{(\ell)}\, , \, \ell\in [L]\}$ by solving  a combinatorial least squares problem. 
 Let  $w_\ell' \in\{\emptyset, 1, 2, \dots, M\}$ for $\ell\in[L]$ and define $\W':=\{(\ell, w'_\ell): w_\ell'\neq \emptyset\}$. The decoder computes:
\begin{equation}\label{eq:min_distance_decoder}
    \{\widehat{w_1}, \widehat{w_2}, \dots, \widehat{w_L}\} = \argmin_{\substack{\{w_1', w_2', \dots, w_L'\} : \\\uKap\le \abs{\W'}\le \oKap}} \|c(\W')-\by\|_2^2\, , \quad \text{where}\quad  c(\W'):=\sum_{\ell:(\ell,w_\ell')\in \W'} \bc_{w_\ell'}^{(\ell)}\, .
\end{equation}
Here, $\uKap=\max\{\kl,\Kap- \rl\}$ and $
  \oKap=\min\{\ku, \Kap+ \ru\}$, where \edit{$\rl, \ru\ge 0$} are prespecified  \edit{lower and upper} decoding \edit{radii  like in \cite{ngo2022unsourced}}.
In words, the decoder 
searches over every set  $\W'$ with size between $\uKap $ and $ \oKap$, which contains at most one codeword from each codebook,  to minimize  the distance between the sum of the codewords in $\W'$  and the  channel output $\by$. \edit{Setting $\rl=\ru$ provides equal control over $\pMD$ and $\pFA$, whereas setting $\rl<\ru$ controls  $\pMD$ more strictly than $\pFA$, and vice versa (see Fig.~\ref{fig:ngo_r_asymmetric}).} 
Choosing \edit{$\rl=\ru=L$}  gives the biggest search range with $\uKap=\kl$ and $\oKap=\ku$. 
The complexity of this decoder grows exponentially with $n$, making it computationally infeasible. 

\edit{ 
Applying this decoder,   $\pMD, \pFA$ and $\pAUE$ averaged over the  ensemble of joint-codebooks  $(\mc{C}^{(\ell)})_{\ell\in [L]}$ described above can \emph{individually} satisfy certain upper  bounds  $\pMD\le \epsMD, \pFA\le \epsFA$ and $\pAUE\le \epsAUE$, respectively. However, time-sharing among  two deterministic joint-codebooks from this ensemble is required to guarantee that all three bounds are achieved  \emph{simultaneously}, see \cite[Theorem 8]{Yavas2021random} and \cite[Remark 2]{ngo2022unsourced}.  Time-sharing  is a common  randomization technique over deterministic coding schemes to achieve  performance that no single deterministic scheme can attain  
\cite{polyanskiy2011feedback, Yavas2021random}. With time-sharing, } an error analysis of the random coding scheme with maximum-likelihood decoding  leads to the following result. 

\begin{thm}[Finite-length achievability bounds]\label{thm:ngo_ach} 
Let $ 0<E'_b< E_b$ and $P=E_bk/n$, $P'=E_b'k/n$. The decoding \edit{radii $\rl, \ru\ge 0$},
 and   $\kl, \ku $ satisfying $0\le \kl\le \ku\le L$ are defined as above. 
  \edit{There exists a randomized coding scheme (with time-sharing)} using the decoder described in \eqref{eq:determine_Ka'}--\eqref{eq:min_distance_decoder} that achieves 
  error probabilities satisfying 
$\pMD\le \epsMD$, $\pFA\le \epsFA$ and $\pAUE\le \epsAUE$ \edit{simultaneously}.
%
The quantities $\epsMD, \epsFA, \epsAUE$ are defined as follows in terms of the distribution of the number of active users $p_{\Ka}$:
\begin{align}
    & \epsMD =  \tp \,  + \, \sum_{\ka=\kl}^\ku p_{\Ka}(\ka) \sum_{\kap=\kl}^\ku \sum_{t\in\T}\sum_{\hatt\in\hTt} 
    \min\{p(t, \hatt), \xi(\ka, \kap)\} \cdot \sum_{\psi=0}^{\psiu} \frac{\nu(\tmin,\psi)}{\nu(\tmin)}
    \DelMD  \,,\label{eq:thm_epsMD}\\
   &  \qquad \DelMD= \frac{(\ka-\okap)^+ + (t-\hatt)^++\psi}{\ka} \,,  \quad \okap=\min\{\ku, \kap+ \ru\},  \nonumber\\
    & \epsFA = \tp \,  + \, \sum_{\ka=\kl}^\ku p_{\Ka} (\ka)\sum_{\kap=\kl}^\ku \sum_{t\in\T}\sum_{\hatt\in\hTt}
    \min\{p(t, \hatt), \xi(\ka, \kap)\} \cdot \sum_{\psi=0}^{\psiu}
     \frac{\nu(\tmin,\psi)}{\nu(\tmin)}\DelFA
      \,,\label{eq:thm_epsFA}\\
   & \qquad  \DelFA = \frac{(\ukap-\ka)^+ +(\hatt-t)^+ +\psi}{\ka-t -(\ka-\okap)^+  +\hatt+ (\ukap-\ka)^+}\, ,  \quad \ukap=\max\{\kl, \kap-\rl\},\nonumber \\
    & \epsAUE = \tp \, + \,  \sum_{\ka=\kl}^\ku p_{\Ka}(\ka)  \sum_{\kap=\kl}^\ku\sum_{t\in\T}\sum_{\hatt\in\hTt} \min\{
    p(t, \hatt), \xi(\ka, \kap)\} 
     \cdot \sum_{\psi=0}^{\psiu}   \frac{\nu(\tmin,\psi)}{\nu(\tmin)}\frac{\tmin-\psi}{\ka} \,.\label{eq:thm_epsAUE}
\end{align}
Here,
\begin{align}
         \tp & =  \prob\big(\Ka\notin [\kl: \ku]\big) + \E[\Ka]\frac{\Gamma(\frac{n}{2}, \frac{nP}{2P'})}{\Gamma(\frac{n}{2})}\, , \label{eq:tp_def}
\end{align}
and the sets $\T, \hTt$ are defined as:
\begin{align}
& \T = \Big[0: \min\{\ka, \okap\}\Big] \,, \quad 
 \hTt =\Big[\left\lbrace t+(\ka-\okap)^+ -(\ka-\ukap)^+ \right\rbrace^+ : \tu\Big] \,, \label{eq:T_thm}\\
  & \quad \text{ where }  \   \tu = \min\left\lbrace \okap-(\ukap-\ka)^+,\;  t+(\okap-\ka)^+ -(\ukap-\ka)^+ \right\rbrace \,.
     \label{eq:ut_thm}
\end{align}
The remaining quantities in the definitions of $ \epsMD,  \epsFA,  \epsAUE$ are defined as:
\begin{align}
    p(t, \hatt) &=\exp\left(-\frac{n}{2}E(t, \hatt)\right)\, , \quad  
     E(t, \hatt) = \max_{\rho,\rho_1\in[0,1]}\big[
     -\rho\rho_1   R_1(t, \hatt) -\rho_1 R_2(t)+E_0(\rho, \rho_1)\big] \label{eq:p_t,tHat_def}\, , \\
     &\quad\;\text{where}\quad  R_1(t, \hatt)=\frac{2}{n}\left(\tmin\ln M+ \ln\binom{\mc{R}}{\tmin}\right)\,,
     \qquad 
     \label{eq:R1_thm}\\
      & \quad \qquad\quad  \;\;\, \tmin = \min\{t, \hatt\}\,,  \quad \mc{R}  = L-\ka +\tmin - (\ukap-\ka)^+ -(\hatt-t)^+ \ge \tmin  \,,\label{eq:tmin_R_thm}\\
     & \quad \quad \qquad\;\; R_2(t) = \frac{2}{n} \ln \binom{\min\{\ka, \okap\}}{t}\,,\label{eq:R2_thm}\\
     &\quad\;\qquad\quad\;  E_0(\rho, \rho_1) =\max_{\lambda>-\frac{1}{P'\hat{t}}} \Big[\rho_1 a  (\rho ,\lambda)+\ln\big(1-\rho_1P_1 b(\rho ,\lambda) \big) \Big]\,,\label{eq:thm_E0}\\
     & \quad\;\qquad\quad\; a(\rho ,\lambda) =\rho \ln\big(1+P'\hatt \lambda\big) +\ln \big(1+P't\chi(\rho, \lambda)\big)\,, \label{eq:a_thm}\\
     &\quad\;\qquad\quad\;  b(\rho ,\lambda) = \rho\lambda -\frac{\chi(\rho, \lambda)}{1+P't\chi(\rho, \lambda)}\,, \qquad \chi(\rho, \lambda) = \frac{\rho \lambda}{ 1+P'\hatt \lambda}\,,\label{eq:b_mu_thm}\\
     & \quad\;\qquad\quad\;  P_1 =\left( (\ka-\okap)^+ + (\ukap-\ka)^+\right)P'+1 \,,\label{eq:P1_thm} \\
    \xi(\ka, \kap) &= \min_{\kappa\in[\kl:\ku]\setminus\kap}\left\lbrace\ind\{\kappa < \kap\}\frac{\Gamma(\frac{n}{2}, \zeta)}{\Gamma(\frac{n}{2})} + \ind\{\kappa > \kap\}\frac{\gamma(\frac{n}{2}, \zeta)}{\Gamma(\frac{n}{2})}\right\rbrace \,, \label{eq:xi_thm}\\
  &\quad\; \text{where}\quad   \zeta  =\frac{n}{2(1+\ka P')}\ln \left(\frac{1+ \kappa P'}{1+\kap P'}\right) \left[\frac{1}{1+\kap P'} - \frac{1}{1+\kappa P'}\right]^{-1} \,,  \label{eq:zeta_thm}\\
     %
  &\nu(\tmin,\psi)=
  \binom{\mc{R}-\tmin}{\psi}M^\psi \cdot \binom{\tmin}{\tmin-\psi}(M-1)^{\tmin-\psi} ,\quad \psi \in [0: \psiu]\,,\\
    &\nu(\tmin) = \sum_{\psi=0}^{\psiu} \nu(\tmin,\psi) \,,  \quad\;\psiu =\min\left\lbrace\tmin,\mc{R}-\tmin\right\rbrace \, .\label{eq:psiu_thm} 
     \end{align}
\end{thm}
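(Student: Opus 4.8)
The plan is to follow the random-coding union-bound template of \cite{ngo2022unsourced}, adapted to distinct per-user codebooks and to bounding the three metrics $\pMD,\pFA,\pAUE$ separately. First I would isolate the two atypical contributions collected in $\tp$ of \eqref{eq:tp_def}. On the event $\{\Ka\notin[\kl:\ku]\}$ no guarantee is claimed, so its probability is added directly. For the truncation, each transmitted codeword equals its untruncated Gaussian version $\tilde\bc\sim\normal_n(\0,P'\bI)$ unless $\|\tilde\bc\|_2^2>E_bk=nP$; since $\|\tilde\bc\|_2^2/P'$ is $\chi^2_n$, a union bound over the $\Ka$ transmitted codewords contributes $\E[\Ka]\,\Gamma(\tfrac{n}{2},\tfrac{nP}{2P'})/\Gamma(\tfrac{n}{2})$. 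Conditioned on the complementary good event I may treat the transmitted codewords as untruncated i.i.d.\ $\normal_n(\0,P'\bI)$ vectors, which is what makes the ensuing Gaussian computations closed-form.

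Next I would condition on $\Ka=\ka$ and analyze the maximum-likelihood count estimate $\Kap$ of \eqref{eq:determine_Ka'}. Under the hypothesis $\Ka=K$ the (marginal) law of the output is $\normal_n(\0,(1+KP')\bI)$, so choosing between two candidate counts reduces to a likelihood-ratio test on $\|\by\|_2^2$ with a threshold $\zeta$ as in \eqref{eq:zeta_thm}; the probability that a competing value $\kappa$ is preferred to $\kap$ is an incomplete-Gamma tail, and minimizing over the nearest competitor yields $\xi(\ka,\kap)$ of \eqref{eq:xi_thm}. For each pair $(\ka,\kap)$ this fixes the least-squares search window $[\ukap:\okap]$ in \eqref{eq:min_distance_decoder}.

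The core is a Gallager-type union bound on the least-squares decoder. Given $\ka,\kap$ I would index a decoding outcome relative to the closest-to-$\W$ admissible set by a pattern $(t,\hat t,\psi)$: $t=|\W\setminus\hW|$ transmitted codewords are removed and $\hat t=|\hW\setminus\W|$ wrong codewords are added, so that $\hat t-t=|\hW|-\ka$; of the $\tmin=\min\{t,\hat t\}$ paired swaps, $\psi$ place the added codeword on a previously-silent user (one missed detection and one false alarm each) and the remaining $\tmin-\psi$ reuse an active user with a wrong message (one active-user error each). The admissible ranges $\T,\hTt$ in \eqref{eq:T_thm}--\eqref{eq:ut_thm} encode the size constraint $\ukap\le|\hW|\le\okap$ together with the forced under/over-shoot $(\ka-\okap)^+$ and $(\ukap-\ka)^+$ when the true count lies outside the window. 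The number of candidate wrong sets of each type produces the rates $R_1(t,\hat t)$ and $R_2(t)$ of \eqref{eq:R1_thm}--\eqref{eq:R2_thm} (the per-user structure enters through $\mc{R}$ in \eqref{eq:tmin_R_thm}, since a swapped-in codeword is either a wrong message of an active user or any of $M$ messages of a new one), and $\nu(\tmin,\psi)$ of \eqref{eq:psiu_thm} refines the count by $\psi$. The probability that one fixed wrong set beats the reference set in squared distance is then controlled by a two-parameter Chernoff bound: writing the event as a comparison of Gaussian quadratic forms and optimizing the tilting parameters $\rho,\rho_1$ and the inner Gaussian-integral variable $\lambda$ gives $E_0(\rho,\rho_1)$ and hence $p(t,\hat t)=\exp(-\tfrac{n}{2}E(t,\hat t))$ through \eqref{eq:p_t,tHat_def}--\eqref{eq:b_mu_thm}, where $P_1$ in \eqref{eq:P1_thm} absorbs the extra variance injected by the $(\ka-\okap)^+$ forced-missed and $(\ukap-\ka)^+$ forced-added codewords that sit outside the optimized swap. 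Since the pattern-$(t,\hat t)$ error with count estimate $\kap$ is contained in both the least-squares error event and the event $\{\Kap=\kap\}$, its probability is at most $\min\{p(t,\hat t),\xi(\ka,\kap)\}$.

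Finally I would convert each pattern into its contribution to the three metrics: the missed detections number $(\ka-\okap)^++(t-\hat t)^++\psi$, the false alarms number $(\ukap-\ka)^++(\hat t-t)^++\psi$, and the active-user errors number $\tmin-\psi$; dividing by the relevant normalizer (by $\ka$ for $\pMD,\pAUE$ and by the decoded-set size for $\pFA$) reproduces $\DelMD,\DelFA$ and the factor $(\tmin-\psi)/\ka$. Summing over $\psi$ with weights $\nu(\tmin,\psi)/\nu(\tmin)$, then over $(t,\hat t)\in\T\times\hTt$, over $\kap\in[\kl:\ku]$, and over $\ka\sim p_{\Ka}$, and adding $\tp$, yields \eqref{eq:thm_epsMD}--\eqref{eq:thm_epsAUE}. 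These three ensemble-average bounds hold \emph{individually}; to meet all three \emph{simultaneously} with a deterministic code I would invoke time-sharing between two codebooks, as in \cite[Theorem 8]{Yavas2021random} and \cite[Remark 2]{ngo2022unsourced}. I expect the main obstacle to be precisely this Gallager step together with the pattern bookkeeping: setting up the quadratic-form comparison so that the forced codewords enter only through $P_1$ rather than being double-counted, pinning down $\T,\hTt,\mc{R}$ so that every admissible $\hW$ is covered exactly once, and carrying out the nested $\rho,\rho_1,\lambda$ optimization to reach the closed forms \eqref{eq:thm_E0}--\eqref{eq:b_mu_thm}.
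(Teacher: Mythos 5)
Your proposal is correct and follows essentially the same route as the paper's proof: the atypicality penalty $\tp$ from the truncation and count-range events, the Gamma-tail likelihood-ratio bound $\xi(\ka,\kap)$ for the ML count estimate, the Chernoff bound plus double application of Gallager's $\rho$-trick (union over added sets, then over removed sets) yielding $p(t,\hatt)$, the exchangeability-based combinatorial weights $\nu(\tmin,\psi)/\nu(\tmin)$ splitting swaps into $\MD$/$\FA$ pairs versus $\AUE$s, the $\min\{p(t,\hatt),\xi(\ka,\kap)\}$ intersection bound, the per-pattern error accounting giving $\DelMD,\DelFA,(\tmin-\psi)/\ka$, and time-sharing for simultaneity. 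The only cosmetic differences are that the paper formalizes your ``treat codewords as untruncated on the good event'' step via a change-of-measure lemma (bounding the total-variation distance by $\tp$) rather than by conditioning, and that it defines $t,\hatt$ as the sizes of the \emph{extra} error sets $\We,\hWe$ (excluding the forced components $(\ka-\okap)^+$ and $(\ukap-\ka)^+$) from the outset, which is the convention your final accounting implicitly adopts despite your initial definition $t=|\W\setminus\hW|$.
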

\begin{proof}
The proof is similar to the proof of the  bounds for unsourced random access in \cite[Theorem 1]{ngo2022unsourced}, involving union bounds over error events via a change of measure and applications of Chernoff bound and Gallager's $\rho$-trick. \edit{The proof is detailed in Section~\ref{sec:ngo_full_proof}, where} we present the preliminaries for the proof in Section \ref{sec:preliminaries}, followed by  the proof  for a simple special case in Section \ref{sec:special_case}, and the general case   in Section \ref{sec:general_case}.

\edit{The key additional technical step in our setting is the careful separation of error events into the three categories defined in \eqref{eq:def_my_pMD},\eqref{eq:def_my_pFA}, and \eqref{eq:def_my_pAUE}, rather than only considering $\MD$s and $\FA$s as in \cite{ngo2022unsourced}. This separation of error events is illustrated in Figs.~\ref{fig:special_case_venn} and \ref{fig:general_case_venn} and mathematically formalized in  Lemmas  \ref{lem:cond_p_aMD} and \ref{lem:cond_expectation_general_case}. Since users have distinct codebooks in our setting, the number of codewords in each category is different from \cite{ngo2022unsourced} (see e.g., Remark \ref{rmk:ours_vs_ngo}), leading to different expressions for $\epsMD, \epsFA$ and $\epsAUE$.}
\end{proof}

 \paragraph*{Remarks}
    \begin{enumerate}
        \item \label{rmk:poly_vs_ours}
    Theorem \ref{thm:ngo_ach} reduces to the achievability bound in \cite[Section IV]{polyanskiy2017perspective} by Polyanskiy when all users are active and the  number of  users  is known, i.e.,  $\Ka=\Kap=L$ with probability 1 and the decoding \edit{radii are $\rl=\ru=0$}. In this case, all errors are $\AUE$s. 
    \item For the tightest possible achievability bounds, $P'$ should be optimized over the interval $(0, P)$ using methods such as the golden section search \cite{kiefer1953sequential, Avriel1966minimax}. To reduce the computational cost of the bounds, 
    we may choose a fixed value such as $P' = 0.8P$ 
    instead.   Fig.~\ref{fig:ngo_diff_P1factor} \edit{and surrounding text} in Section~\ref{sec:ngo_numerical}  demonstrates  how the choice of $P'$ might affect the  bounds.
    \item Given $\rho, \rho_1, t,\hatt$, the optimal $\lambda$ in \eqref{eq:thm_E0} admits  a closed-form solution, same as  the one given in  Remark vii) following  Theorem 1 in \cite{ngo2022unsourced}. 
    \item 
    An alternative achievability  bound may be derived using information densities, analogous to \cite{polyanskiy2017perspective, ngo2022unsourced}. However, such bounds are very expensive to evaluate numerically and are expected to be largely dominated by the error-exponent-type  bounds in Theorem \ref{thm:ngo_ach}.
    \end{enumerate}

\subsection{Error Floors}\label{sec:error_floors}
As before, we take $N_0=2\sigma^2=2$ and control the signal-to-noise ratio $E_b/N_0=nP/(2k)$ through $P$.  The bounds for $\epsFA, \epsMD, \epsAUE$ in \eqref{eq:thm_epsFA}, \eqref{eq:thm_epsMD}, \eqref{eq:thm_epsAUE} are each the sum of $\tp$ and another term. 
      Corollary \ref{cor:error_floors} below  highlights  a key observation: the second term in  $\epsMD$ and $\epsFA$ does not vanish  when $P\to \infty$, unless we choose   large enough decoding \edit{radii  $\rl,\ru$}  that grow with the problem size, e.g.,  \edit{$\rl=\ru=L$}. In contrast, the second term $\epsAUE$ vanishes as $P\to \infty$. We give a heuristic discussion of these error floors before stating the result.

    We note that $\tilde{p}$ in \eqref{eq:tp_def} reduces to  $\prob\big(\Ka\notin [\kl:\ku]\big)$ as $P\to \infty$, and that the optimal value of $P'$ that  minimizes the error probabilities   should  grow with $P$, i.e., $P'\to \infty$. 
    As $P'\to \infty$,   our maximum-likelihood decoder in \eqref{eq:min_distance_decoder} is guaranteed to decode the transmitted set of codewords correctly, provided  the number of active users  $\Ka$ falls within the decoder's search range $[\uKap: \oKap]$, where we recall $\uKap=\max\{\kl,\Kap-\rl\}$ and $
  \oKap=\min\{\ku, \Kap+\ru\}$. 
The term $\tp$ in the error bounds  accounts for the bad event where $\left\lbrace\Ka\notin[\kl: \ku]\right\rbrace$. Conditioned on  the good event that  $\left\lbrace\Ka\in[\kl: \ku]\right\rbrace$, we can ensure that  $\Ka\in[\uKap: \oKap]$  by selecting  large enough decoding \edit{radii}, e.g., \edit{$\rl=\ru=L$}, such that $\uKap=\kl$ and $ \oKap=\ku$. In this case, the error probability bounds in Theorem \ref{thm:ngo_ach} reduce to $\epsMD, \epsFA, \epsAUE = \tilde{p}$.

    However, choosing such  large decoding \edit{radii $\rl, \ru$}  make  the bounds   infeasible to compute for  even moderately large $L$.  With   fixed \edit{$\rl, \ru$} that do not scale with $L$,  the search range $ [\uKap: \oKap] $  may not contain     $\Ka$, and therefore as $P\to \infty$, 
     \begin{itemize}
         \item if $\Ka< \uKap$, the decoder  commits at least $ (\uKap-\Ka)$ $\FA$s; 
         \item if $ \Ka> \oKap$, the decoder commits at least $ (\Ka-\oKap)$ $\MD$s.
     \end{itemize}  
     This implies that the bounds $\epsMD$ and $\epsFA$ in Theorem \ref{thm:ngo_ach} exhibit error floors. In contrast, $\epsAUE$ vanishes as $P\to\infty$. \edit{This limiting behaviour of $\epsMD$ and $\epsFA$ matches that found in \cite[Corollary 1]{ngo2022unsourced}, and we recall that $\epsAUE$ does not feature in the unsourced set-up in \cite{ngo2022unsourced}.} These results are summarized below. 
 \begin{cor}[Error floors]\label{cor:error_floors}
 Let $\bar{p}=\P(\Ka\notin [\kl: \ku])$, then  the error bounds $\epsMD, \epsFA$ and $\epsAUE$ in \eqref{eq:thm_epsMD}--\eqref{eq:thm_epsAUE} satisfy 
 \begin{align}
     \lim_{P\to\infty}\epsMD&\ge \barepsMD= \sum_{\ka=\kl}^{\ku}p_{\Ka}(\ka)\sum_{\kap=\kl}^{\ku} \frac{(\ka-\okap)^+}{\ka} \xi(\ka, \kap)+ \bar{p}\ , \label{eq:error_floor_MD}\\
      \lim_{P\to\infty}\epsFA&\ge \barepsFA= \sum_{\ka=\kl}^{\ku}p_{\Ka}(\ka)\sum_{\kap=\kl}^{\ku} \frac{(\ukap-\ka)^+}{\ka - (\ka-\okap)^+ + (\ukap - \ka)^+} \xi(\ka, \kap)+ \bar{p}\ , \label{eq:error_floor_FA}\\
      \lim_{P\to\infty}\epsAUE&\ge \barepsAUE=\bar{p}\ . \label{eq:error_floor_AUE}
 \end{align}
  Note that $\barepsMD = \barepsFA=\bar{p}$ in \eqref{eq:error_floor_MD} and \eqref{eq:error_floor_FA} when \edit{$\rl, \ru$} are chosen to be sufficiently large  such that $\uKap=\kl$ and $ \oKap=\ku$.
 \end{cor}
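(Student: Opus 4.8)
The plan is to pass to the limit $P\to\infty$ termwise inside the finite sums \eqref{eq:thm_epsMD}--\eqref{eq:thm_epsAUE} of Theorem~\ref{thm:ngo_ach}. This is legitimate because, with $n,L,\kl,\ku,\rl,\ru$ held fixed, every summation range ($\ka,\kap\in[\kl:\ku]$, $t\in\T$, $\hatt\in\hTt$, $\psi\in[0:\psiu]$) is finite and independent of $P$. I work along a sequence $P'=P'(P)$ with $P'\to\infty$ and $nP/(2P')\to\infty$ (e.g.\ $P'=\sqrt P$); this is the relevant regime, since the error-minimising $P'$ grows with $P$ (as argued above the corollary) while the clipping term must remain negligible. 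The first ingredient is the limit of the common term $\tp$ in \eqref{eq:tp_def}: since $\Gamma(\frac n2,w)\to0$ as $w\to\infty$ and $\E[\Ka]\le L<\infty$, its second summand vanishes, so $\tp\to\bar p=\prob(\Ka\notin[\kl:\ku])$; moreover $\tp\ge\bar p$ for every $P$.

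Next I would isolate the single summand $t=\hatt=0$, which I claim is the only one producing a non-vanishing $\xi$-contribution. Reading off \eqref{eq:R1_thm}--\eqref{eq:R2_thm} gives $R_1(0,0)=R_2(0)=0$, and substituting $t=\hatt=0$ into \eqref{eq:thm_E0}--\eqref{eq:b_mu_thm} yields $a(\rho,\lambda)=b(\rho,\lambda)=0$ for all $\rho,\lambda$, so $E_0\equiv0$ and hence $E(0,0)=0$; thus $p(0,0)=1$ identically in $P,P'$. Since $\xi(\ka,\kap)\in[0,1]$, this forces $\min\{p(0,0),\xi(\ka,\kap)\}=\xi(\ka,\kap)$. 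At $t=\hatt=0$ one has $\tmin=0$, so $\psiu=0$, the inner $\psi$-sum collapses to its single term with $\nu(0,0)/\nu(0)=1$, and evaluating $\DelMD,\DelFA$ at $t=\hatt=\psi=0$ produces exactly $(\ka-\okap)^+/\ka$ and $(\ukap-\ka)^+/[\ka-(\ka-\okap)^++(\ukap-\ka)^+]$, respectively. (That this summand is present follows from $0\in\T$ and, using $\ukap\le\okap$, from $0\in\hTt$.)

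The third ingredient is the limit of $\xi$. Expanding \eqref{eq:zeta_thm} as $P'\to\infty$ gives $\zeta\to\zeta^\star:=\frac{n\,\kap\kappa}{2\ka(\kappa-\kap)}\ln(\kappa/\kap)>0$, a finite positive constant, so by continuity of the incomplete-Gamma ratios $\xi(\ka,\kap)\to\xi^\star(\ka,\kap)\in(0,1)$; in the corollary $\xi(\ka,\kap)$ denotes this limit. Assembling the pieces: as every summand of \eqref{eq:thm_epsMD} and \eqref{eq:thm_epsFA} is nonnegative, I bound $\epsMD$ and $\epsFA$ below by $\tp$ plus their $t=\hatt=0$ summand and take $\liminf_{P\to\infty}$, obtaining $\liminf\epsMD\ge\bar p+\sum_{\ka}p_{\Ka}(\ka)\sum_{\kap}\frac{(\ka-\okap)^+}{\ka}\xi^\star(\ka,\kap)=\barepsMD$ and likewise $\liminf\epsFA\ge\barepsFA$. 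For $\epsAUE$ the argument is shorter: on every $\tmin=0$ summand the factor $(\tmin-\psi)/\ka$ vanishes (as $\psi=0$ is forced), so the entire sum in \eqref{eq:thm_epsAUE} is nonnegative and $\epsAUE\ge\tp\ge\bar p$ for all $P$, giving $\liminf\epsAUE\ge\bar p=\barepsAUE$. The closing remark is immediate: when $\ukap=\kl$ and $\okap=\ku$ for all $\kap$, then $(\ka-\okap)^+=(\ka-\ku)^+=0$ and $(\ukap-\ka)^+=(\kl-\ka)^+=0$ on $[\kl:\ku]$, so $\barepsMD=\barepsFA=\bar p$.

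The only delicate point in the lower bounds is coupling $P$ and $P'$ so that $\tp\to\bar p$ and $\xi\to\xi^\star$ hold together, and recording that the bound is valid for each admissible $P'$. I expect the genuinely technical step---needed only to upgrade the inequalities to equalities and confirm these are the exact floors---to be showing that every remaining summand vanishes, i.e.\ $p(t,\hatt)\to0$ whenever $\tmin=\min\{t,\hatt\}\ge1$. This should follow because $R_1(t,\hatt)$ then contains the fixed positive term $\frac2n\tmin\ln M$ while the $\ln(1+P'\hatt\lambda)$ and $\ln(1+P't\chi)$ contributions to $E_0$ grow without bound, driving $E(t,\hatt)\to\infty$ after the $\rho,\rho_1,\lambda$ optimisation; this error-exponent blow-up is not required for the stated $\ge$ inequalities.
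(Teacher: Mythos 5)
Your proposal is correct, and it proves the stated inequalities by a route that differs in character from the paper's. The paper's proof (Section \ref{sec:proof_error_floors}) is operational: it argues that as $P\to\infty$ the maximum-likelihood decoder commits no errors beyond the unavoidable initial ones in $\W_\iMD$ or $\hW_\iFA$ (which occur exactly when $\ka\notin[\ukap:\okap]$), and then obtains the floors by ``taking $t=\hatt=0$'' in \eqref{eq:thm_epsMD}--\eqref{eq:thm_epsAUE}, with the codeword-clipping part of $\tp$ removed through the iterated limits $\lim_{P'\to\infty}\lim_{P\to\infty}$. You instead work purely algebraically on the bound expressions: you check that the $t=\hatt=0$ summand is present ($0\in\T$ and, via $\ukap\le\okap$, $0\in\hTt$), that it carries full weight because $R_1(0,0)=R_2(0)=0$ together with $a=b=0$ force $E(0,0)=0$, hence $p(0,0)=1$ and the $\min$ equals $\xi(\ka,\kap)$; that $\psiu=0$ collapses the $\psi$-sum with ratio one; and that $\DelMD$, $\DelFA$ reduce exactly to the floor ratios. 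Nonnegativity of every other summand then yields the $\ge$ statements by a liminf, with no need to show that the remaining terms vanish---a step you rightly identify as required only for tightness, not for the corollary as stated. Your treatment is also more careful on two points the paper leaves implicit: the justification for discarding terms when proving a lower bound, and the meaning of $\xi(\ka,\kap)$ in the floor, which depends on $P'$ through $\zeta$; your coupling $P'(P)\to\infty$ with $nP/(2P')\to\infty$ and the computation $\zeta\to\zeta^\star$ resolve this precisely in the regime (growing $P'$) that the discussion preceding the corollary identifies as relevant. The trade-off is that the paper's argument conveys the operational reason the floors arise (the decoder's search range missing $\Ka$) and hints that they are attained with equality, while yours is the rigorous verification of the claimed inequalities.
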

 \begin{proof} The proof is similar to the proof of  \cite[Corollary 1]{ngo2022unsourced} and is provided in  Section \ref{sec:proof_error_floors}. 
 \end{proof}
 
\subsection{Numerical Results}\label{sec:ngo_numerical}
    \begin{figure}[t!]
        \centering
\subfloat[$\rl=\ru=0$, 1 or 2.]{\includegraphics[width=0.33\linewidth]{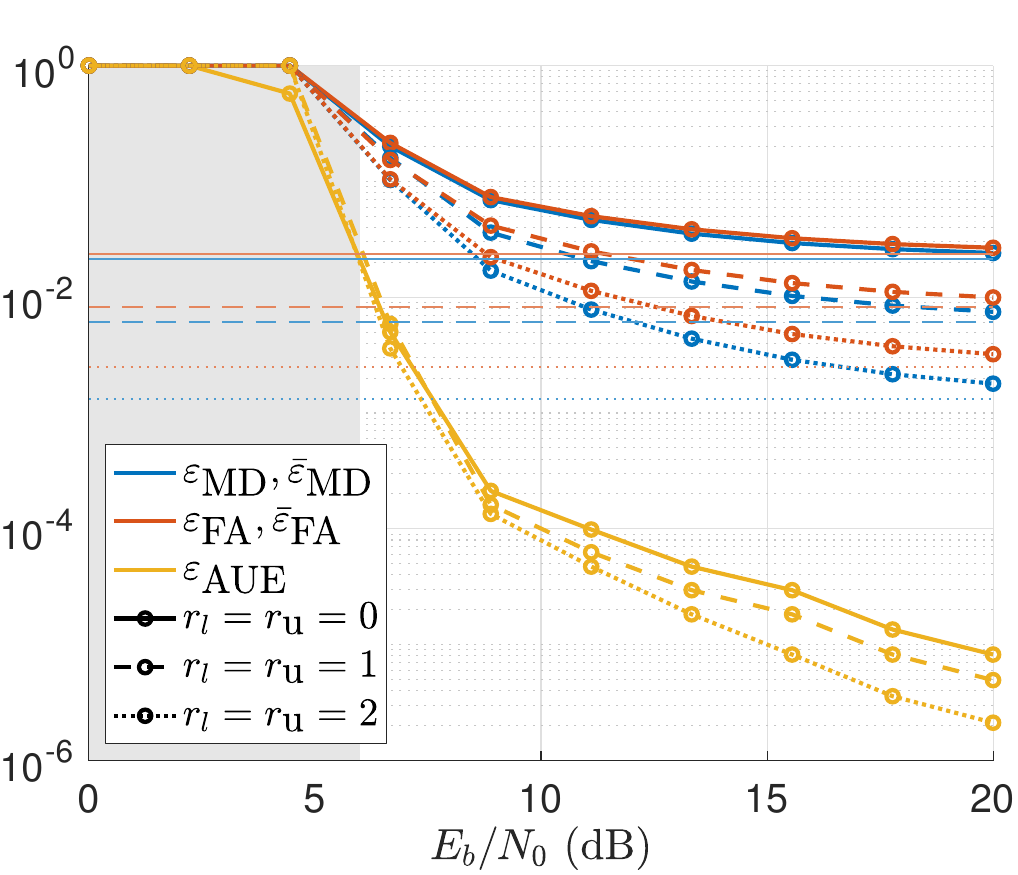} \label{fig:ngo_r0_1_2}}
        \subfloat[$\rl=\ru=0$ or $L$.]{\includegraphics[width=0.33\linewidth]{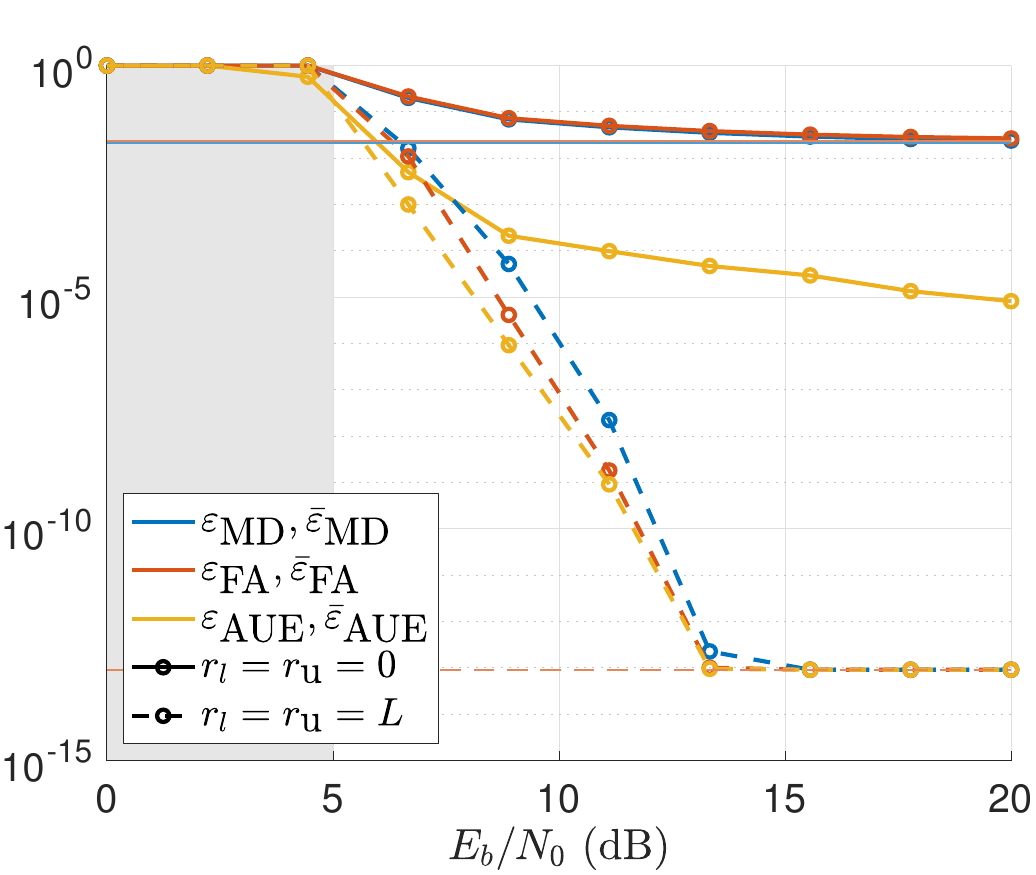} \label{fig:ngo_r_inf}}
        \subfloat[\edit{$\rl=0$, $\ru=2$, or $\rl=\ru=L$}.]{\includegraphics[width=0.33\linewidth]{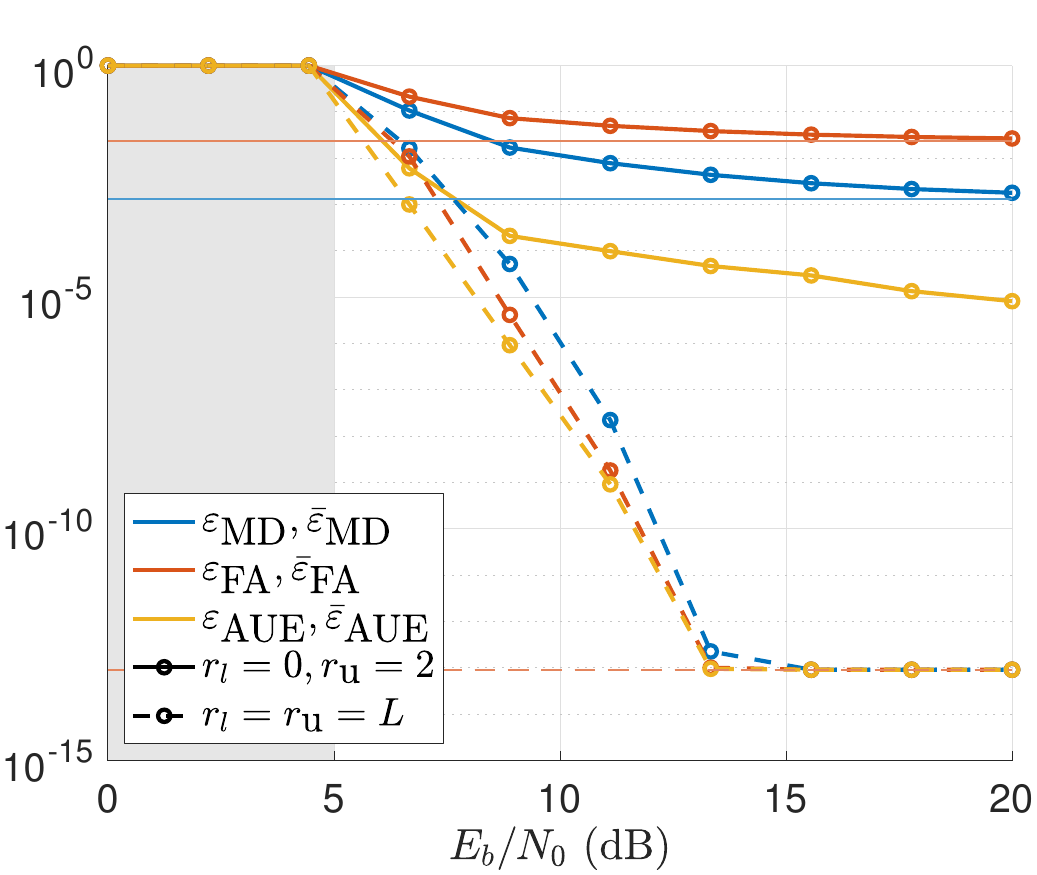} \label{fig:ngo_r_asymmetric}}
        \vspace*{-0.2cm}
        \caption{Three types of errors $\epsMD, \epsFA$ and $\epsAUE$ in Theorem \ref{thm:ngo_ach} (thicker curves) and the error floors $\barepsMD, \barepsFA$ and $\barepsAUE$ in Corollary \ref{cor:error_floors} (thinner horizontal lines)  plotted against $E_b/N_0$, with different choices of decoding \edit{radii $\rl, \ru$}. Different colours correspond to different types of errors; different line styles correspond to different  \edit{$\rl, \ru$}. Shaded regions: larger \edit{$\rl, \ru$} cause higher errors due to noise overfitting.   $n=2000,L=50, k=8 $, $\alpha=0.5$, $p_{\Ka}=\text{Bin}(\alpha, L),$ $P'$ is optimized over $ (0,P)$, and $(\kl,\ku)$ are   chosen so that $\P(\Ka\notin [\kl: \ku]) \le 10^{-13}$.   }
        \label{fig:ngo_ach_vary_r}
    \end{figure}
   
    We numerically evaluate the error  bounds $\epsMD, \epsFA$ and $\epsAUE$ in Theorem \ref{thm:ngo_ach}, and the error floors $\barepsMD, \barepsFA$ and $\barepsAUE$ of these error bounds in Corollary \ref{cor:error_floors} as $E_b/N_0\to\infty$. \edit{While Theorem \ref{thm:ngo_ach} and Corollary \ref{cor:error_floors} hold for any arbitrary probability distribution $p_{\Ka}$, 
    we assume  $p_{\Ka}$ to be the Binomial distribution 
    $\text{Bin}(\alpha, L)$ in our numerical experiments throughout this section.}
     We use  $\max\{\pMD, \pFA\} + \pAUE$ as   \edit{an overall error metric}, 
    since $\MD$s and $\FA$s largely exhibit a one-to-one correspondence. (This is easily seen in 
    the special case where  $\Ka=\hKa$, since each active user mistakenly declared silent corresponds to a silent user mistakenly declared active, creating an exact  one-to-one correspondence between the $\MD$s and $\FA$s.)
      Our  implementation of the bounds in Theorem \ref{thm:ngo_ach} and Corollary \ref{cor:error_floors} is adapted from the MATLAB code provided in \cite{ngo2022unsourced}
     and is  available at \cite{liu_ach_github}. 

    Fig.~\ref{fig:ngo_ach_vary_r} plots the error bounds $\epsMD, \epsFA$ and $\epsAUE$ as thick curves and the error floors $\barepsMD, \barepsFA$ and $\barepsAUE$ as thin horizontal lines.    We observe that $\epsMD, \epsFA$ and $\epsAUE$ decrease with $E_b/N_0$, and $\epsMD$ and $\epsFA $ are   lower bounded by and converge to the error floors $\barepsMD$ and $\barepsFA$ as $E_b/N_0$ increases.  The error floor  of $\epsAUE$ is $\barepsAUE=10^{-13}$, which is omitted from Fig.~\ref{fig:ngo_r0_1_2}. 
    The different line styles  correspond to different choices of  decoding \edit{radii $\rl, \ru$}. 
    Recall that  larger \edit{$\rl$ and  $\ru$} make the maximum-likelihood  decoder in \eqref{eq:min_distance_decoder} search over a larger set of values for $\hKa$. 
%
    We observe that while   larger \edit{$\rl$ and $\ru$} yield fewer errors at higher $E_b/N_0$, it can result in  higher errors at lower $E_b/N_0$ due to \emph{noise overfitting} (see shaded regions in the subfigures), \edit{where the decoder's larger search space increases the likelihood of mistakenly fitting to the channel noise realization rather than to the signal.} This phenomenon was also highlighted in \cite{ngo2022unsourced}.  
    
    Fig.~\ref{fig:ngo_r_inf} compares the errors when \edit{$\rl=\ru=0$ or $L$}, where \edit{$\rl=\ru=L$} makes the decoder  search for $\hW$ through all sets of codewords with size $\Ka\in [\kl: \ku]$. Note that with \edit{$\rl=\ru=L$}, $\epsMD, \epsFA$ and $\epsAUE$ all converge to the error floor $10^{-13}$ for large $E_b/N_0$, as expected due to  Corollary~\ref{cor:error_floors}. \edit{Fig.~\ref{fig:ngo_r_asymmetric} presents the asymmetric case where $\rl=0$ and $\ru=2$. This  biases the decoder toward  declaring  users active rather than  silent, resulting in  lower $\pMD$ than $\pFA$. Analogously, one can control $\pFA$ more strictly than $\pMD$ by choosing $\rl>\ru$.} \edit{In Fig.~\ref{fig:mua_crit_EbN0_alpha=0.7_diff_k}  in Section~\ref{subsec:num_results_CDMA}, we  plot the bound as   the minimum  $E_b/N_0$ required to achieve a target total error of $\max\{\pMD, \pFA\}+\pAUE\le 0.01$ for  a range of   active user densities $\mu_\a$.}

    The error floors in Corollary \ref{cor:error_floors} are  useful in informing our choice of parameters  such as \edit{$\rl, \ru$}, $\kl, \ku$ and $P'$. This is   explored  in Figs.~\ref{fig:ngo_diff_tail}--\ref{fig:ngo_diff_P1factor}, which   present  the contour plots of the total error floor $\max\{\barepsMD, \barepsFA\}+\barepsAUE$ for varying active user density $\mu_\a=\alpha L/n$ ($x$-axis), and different choices of decoding \edit{radii $\rl=\ru=r$} normalized by the standard deviation of $\Ka$ ($y$-axis). \edit{The contour plots give insight into how the performance of the maximum-likelihood decoder depends on $\rl, \ru, \kl, \ku$ and $P'$. 
    We discuss this below, noting that the shape of the plots may also depend on the (possibly suboptimal) bounding techniques in our analysis.}     
    
    In \edit{Fig.~\ref{fig:ngo_diff_tail}},
    $ \kl$ and $\ku$ are chosen to be the largest and smallest integers, respectively, so that $\P(\Ka\notin [\kl: \ku])\le \bar{p}$. 
    Fig.~\ref{fig:ngo_diff_tail} illustrates the effect of different choices  of $\bar{p}$. For  smaller values of $\mu_\a$, e.g., $\mu_\a=0.112$ (marked by the left gray vertical line in each subfigure),  a smaller $\bar{p}$ thus a larger search range can achieve a smaller error floor. Indeed, observe that  while choosing $\bar{p}=10^{-3}$ yields  an error floor as low as $0.02$, reducing  $\bar{p}$ to $10^{-7}$ further lowers the error floor to $0.01$. 
    In contrast, larger values of $\mu_\a$ correspond to a more challenging decoding task,  where the decoder observes increased interference among active users and attempts to recover a higher user payload. In such cases, choosing a larger $\bar{p}$ prevents the maximum-likelihood decoder from overfitting to noise. Indeed,  at  $\mu_\a=0.319$ for instance (marked by the right gray vertical line in each subfigure),  choosing $\bar{p}=10^{-3}$ instead of  $\bar{p}=10^{-7}$ leads to a significantly lower error floor.

 \begin{figure}[t!]
        \centering
        \subfloat[$\bar{p}= 10^{-3}$]{\includegraphics[width=0.51\linewidth]{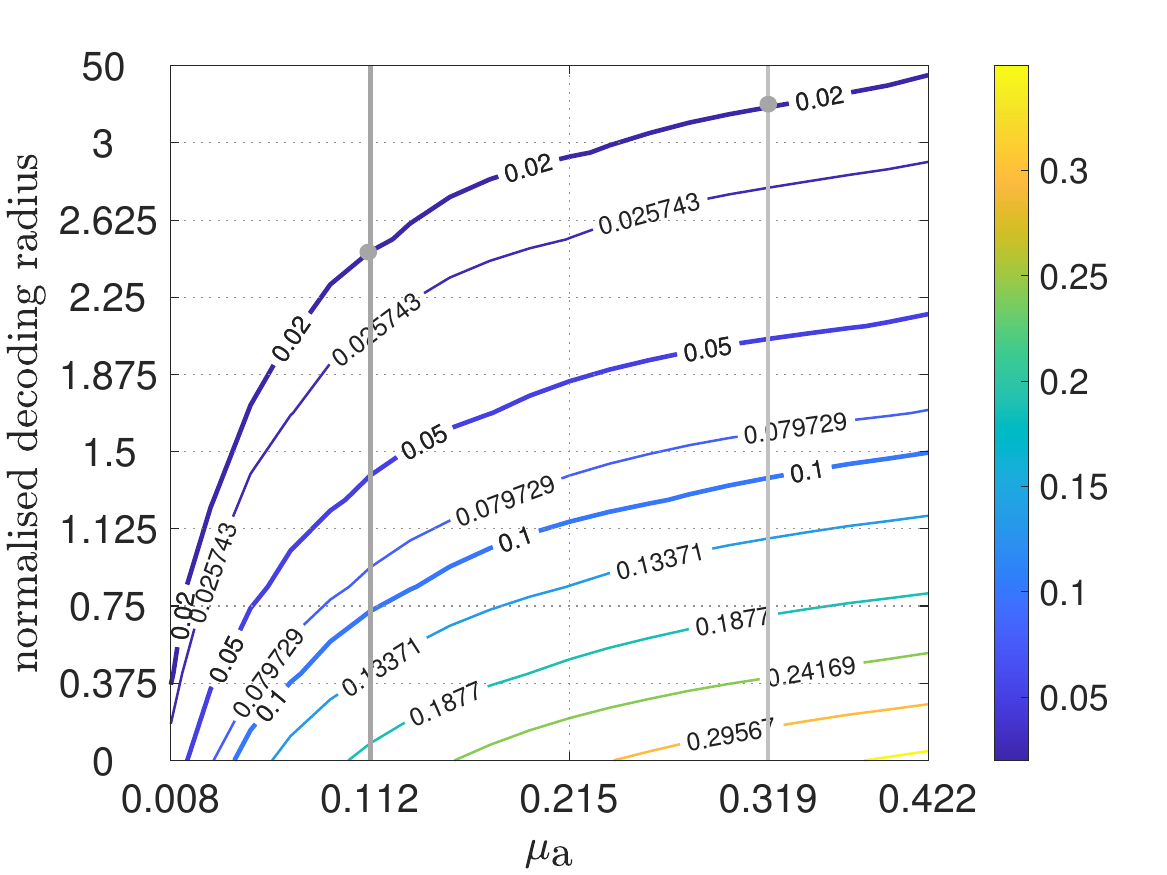}\label{fig:ngo_diff_tail_1e-3}}
        \subfloat[$\bar{p}= 10^{-7}$]{\includegraphics[width=0.49\linewidth]{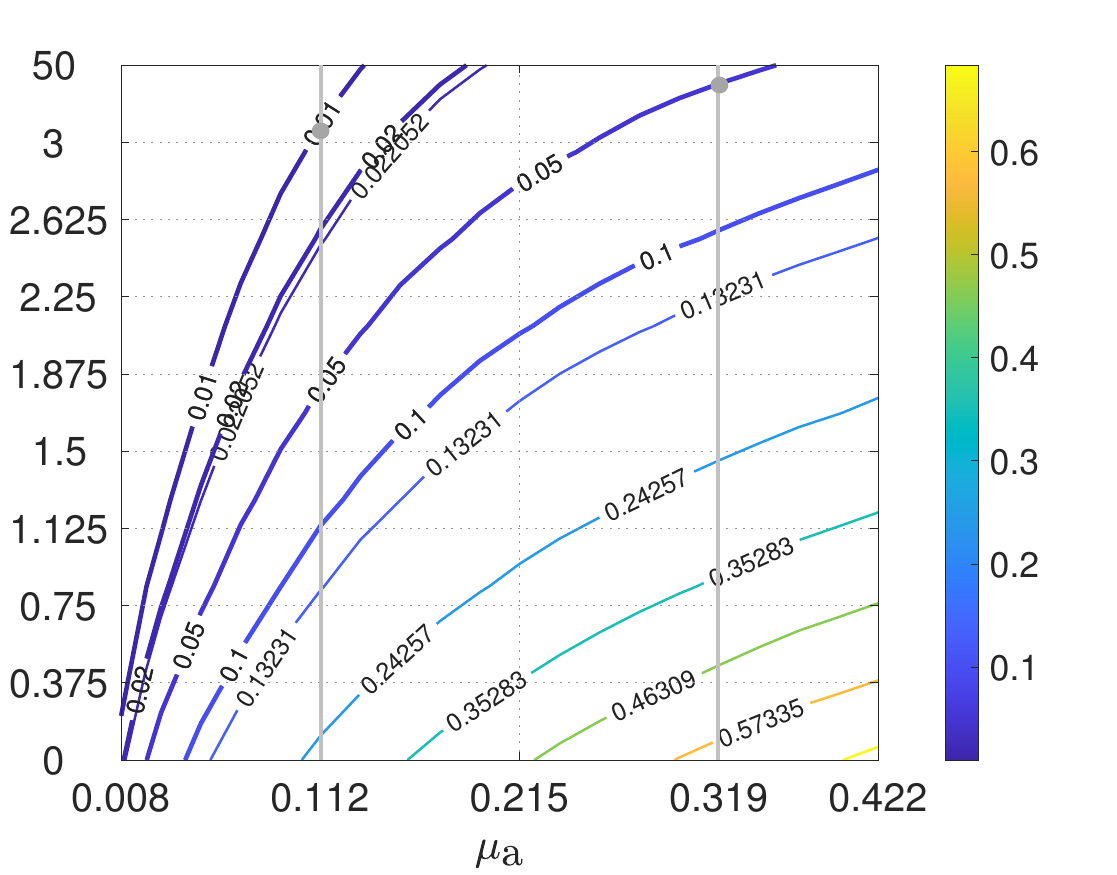}\label{fig:ngo_diff_tail_1e-7}}
        \vspace*{-0.2cm}
        \caption{Contour plots of $\max\{\barepsMD, \barepsFA\}+\barepsAUE$ for different active user density $\mu_\a$ ($x$-axis) and   normalized decoding radius $r/\std(\Ka)$   ($y$-axis) \edit{where $\rl=\ru=r$}.   $(\kl,\ku)$ are chosen to be the largest and smallest integers  so that $\P(\Ka\notin [\kl: \ku]) \le \bar{p}$, where $\bar{p}$ is indicated below the subfigures.  $L=600, k=6, \alpha=0.7$, $p_{\Ka}=\text{Bin}(\alpha, L)$, $P'=0.8P$, $r/\std(\Ka)=50$ corresponds to $r=L$.}
        \label{fig:ngo_diff_tail}
    \end{figure}
    \begin{figure}[t!]
        \centering
        \subfloat[$P'=0.8P$]{\includegraphics[width=0.52\linewidth]{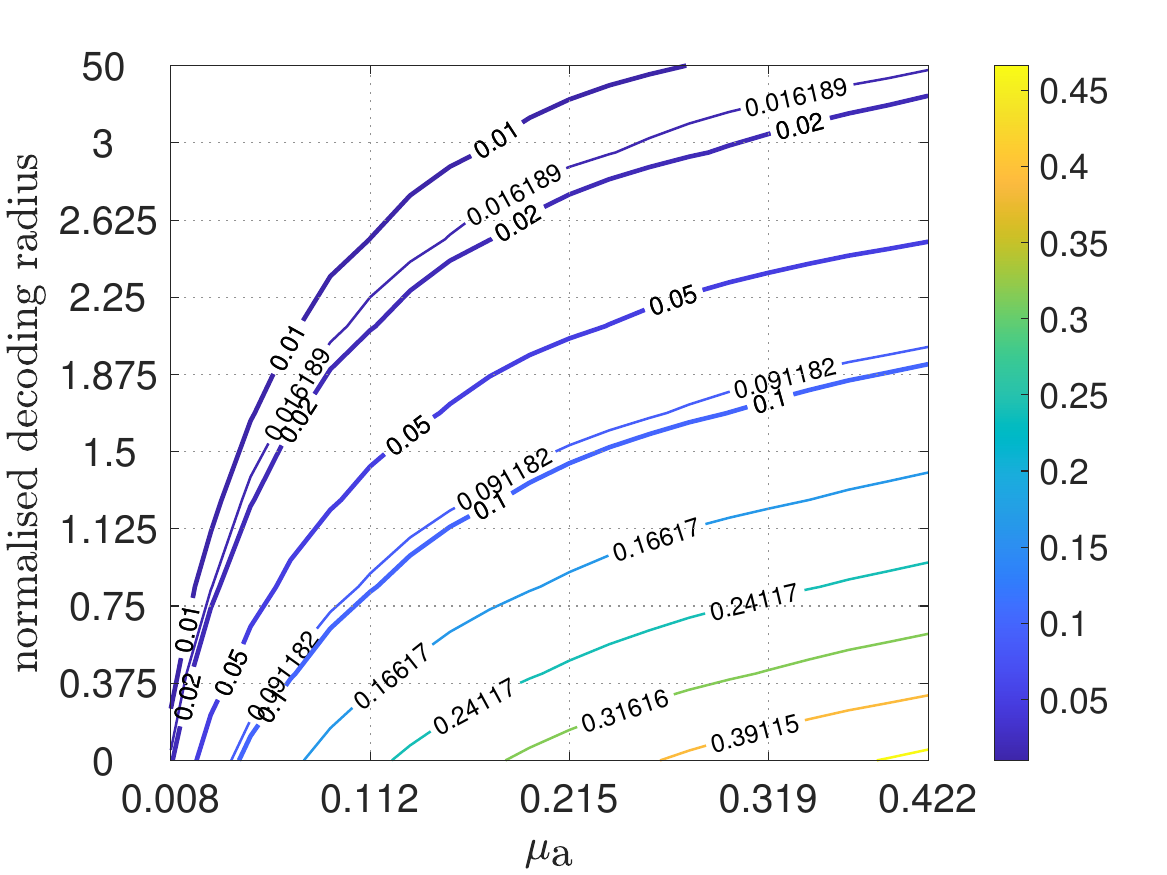}}
        \subfloat[$P'=0.9P$]{\includegraphics[width=0.47\linewidth]{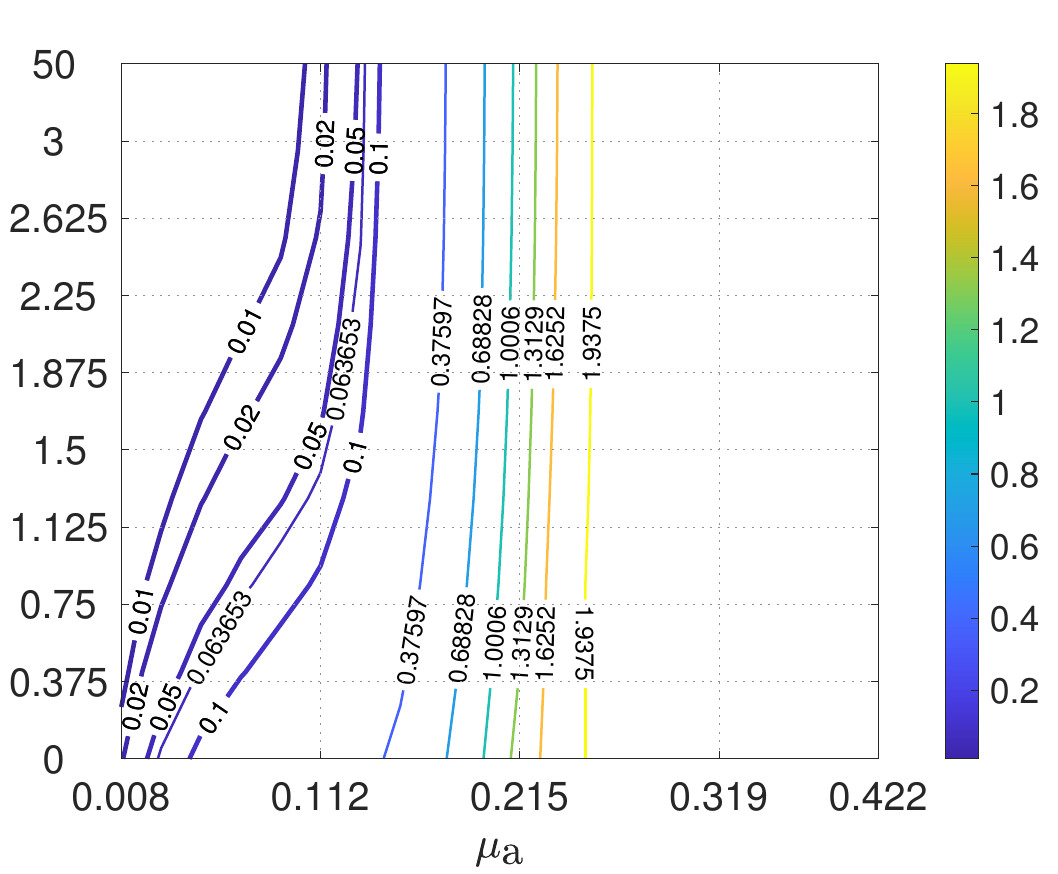}}
        \vspace*{-0.2cm}
        \caption{Contour plots of $\max\{\barepsMD, \barepsFA\}+\barepsAUE$ for different active user density $\mu_\a$ ($x$-axis) and normalized decoding radius $r/\std(\Ka)$ ($y$-axis) \edit{where $\rl=\ru=r$}.   Subfigures use different $P'$, indicated below the subfigures.  $L=600, k=6, \alpha=0.7$, $p_{\Ka}=\text{Bin}(\alpha, L)$, $r/\std(\Ka)=50$ corresponds to $r=L$, and $(\kl,\ku)$ are chosen   so that $\P(\Ka\notin [\kl: \ku]) \le 10^{-4}$.  }
        \label{fig:ngo_diff_P1factor}
    \end{figure}
    
    Fig.~\ref{fig:ngo_diff_P1factor}   compares different choices of $P'\in(0,P)$  with $\bar{p}$ fixed to be $10^{-4}$. We observe that the error floor can be significantly lowered by  choosing $P'$ not too close to $P$, especially  for larger values of $\mu_\a$. Nevertheless,  our experiments indicate that the optimal value of $P'$ varies on a case-by-case basis; it tends to be closer to $P$ for larger $L$ and $n$. Hence, whenever computational resources permit, such as in Fig.~\ref{fig:ngo_ach_vary_r}, we optimize $P'$ over $(0,P)$ using the golden section search method.

\paragraph{Computational cost}
    A  shortcoming of the  finite-length bounds  in Theorem \ref{thm:ngo_ach}  is their high  computational complexity, which is $\mc{O}(L^5)$ due to the five summations in each of the equations  \eqref{eq:thm_epsMD}--\eqref{eq:thm_epsAUE}. This  limits their applicability to relatively small values of $ L$ and $n$. In the next section (Section \ref{sec:asymp_ach}), we derive an  asymptotic bound which is significantly cheaper to evaluate,  with  computational complexity independent of $L$ or $n$.

\section{Asymptotic Random Coding Achievability Bounds}\label{sec:asymp_ach}
In this section,  we  provide  two  asymptotic bounds on $\pMD, \pFA$ and $\pAUE$ in the limit of $n,L\to\infty$ with $\alpha = \E[\Ka]/L$ and $\mu_\a = \E[\Ka]/n$  held constant. The first bound is tighter but hard to compute for large payloads, whereas the second is looser but can be computed efficiently for large payloads.

    \subsection{Random   Coding}\label{sec:sparc_scheme_for_asymp_ach}
    We consider a   random coding scheme similar to that in Section \ref{sec:ngo_ach_scheme} under AMP decoding instead of maximum-likelihood decoding. As before,  the codebook  of each user $\ell\in[L]$ is populated  with $M=2^k$ random codewords that satisfy the energy-per-bit constraint. When active, each user selects a codeword from their codebook uniformly at random for transmission.    For simplicity, we consider the prior where each user $\ell$ is independently active with probability $ \alpha$, i.e., $p_{\Ka}=\text{Bin}(\alpha, L)$, but the bounds can be adapted to other priors as well.  

    The random coding scheme  can be formulated  in terms a linear model. Let $\bA_\ell\in \reals^{n\times M}$ be the matrix whose columns store the codewords  of user $\ell$, scaled by ${1}/{\sqrt{E_bk}}$. 
    This scaling ensures  the columns of $\bA_\ell$ have unit squared $\ell_2$-norm in expectation, which is required by the standard AMP framework. Therefore  the codeword transmitted by user $\ell$ can be written as 
    \begin{align}
        \bc_\ell = \bA_\ell \bx_\ell\,, 
    \end{align}
 where $\bx_\ell
 \stackrel{\text{i.i.d.}}{\sim} p_{\bbx_{\sec}}$  with 
    \begin{align}\label{eq:px_sec}
        p_{\bbx_{\sec}} = (1-\alpha) \delta_{\bzero} + \alpha p_{\bbx_{\sec, \a}}\,, 
         \quad p_{\bbx_{\sec, \a}} = \frac{1}{M}\sum_{j=1}^M\delta_{\sqrt{E}\be_j}\,,\quad E=E_b k.
    \end{align}
Here, $\delta_{\bzero}$ is the unit mass at 
$\bzero$, and $\be_j \in\reals^M$ denotes the  canonical basis vector with one in the $j$th coordinate   and zeros everywhere else. This implies that 
$\bx_\ell$ 
    is  all-zero when the  user is silent, or one-sparse when the  user is active, with the nonzero coordinate uniformly distributed in $\{1, \ldots, M \}$. The vector $\bx_\ell$ can be interpreted as the message vector of user $\ell$. 

       \begin{figure}[t!]
    \centering
    \hspace{2cm}
  \includegraphics[width=0.75\textwidth]{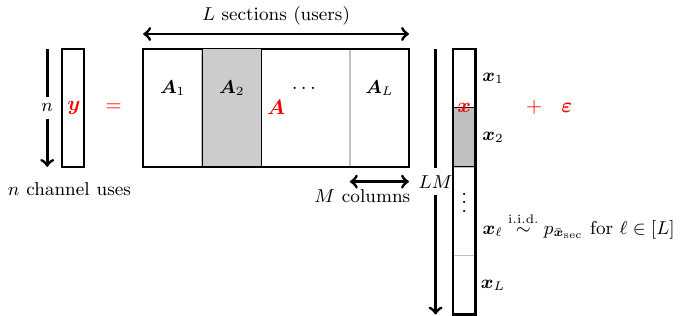}
    \caption{Coding scheme for the proof of  asymptotic achievability bounds. The joint message vector $\bx= [\bx_1^\top, \dots, \bx_L^\top]^\top $ has  $L$ sections corresponding to the $L$ users. Each section is drawn i.i.d.\@ from $p_{\bbx_{\sec}}$. The joint codebook matrix $\bA=[\bA_1, \dots, \bA_L]$ has $L$ sections, with the $\ell$-th section storing the codewords of the $\ell$-th user as columns of the matrix.}
    \label{fig:Y=AX+W_SPARC}
\end{figure}%

  Let $\bA = [\bA_1,  \dots, \bA_L] \in \reals^{n\times LM}$ and $\bx = [\bx_1^\top, \dots, \bx_L^\top]^\top\in \reals^{LM}$ be the concatenation of the random matrices and the message vectors of all users, respectively. Then the channel output from the GMAC in \eqref{eq:standard_gmac} takes the form
    \begin{align}\label{eq:Y=AX+W_SPARC}
        \by = \bA \bx + \bveps\in \reals^n,
    \end{align}
    where $\bveps\sim \mc{N}_n(\bzero, \sigma^2\bI)$ is the channel noise vector. See Fig.~\ref{fig:Y=AX+W_SPARC} for an illustration. 

The decoding task is to recover the concatenated message vector $\bx$ from $\by$ and $\bA$. This problem is similar to the decoding of Sparse Regression Codes (SPARCs) \cite{joseph2014sparc,barbier2017approximate, venkataramanan19monograph, rush2021capacity}, the key difference being that in SPARCs,  the prior $p_{\bbx_{\sec}}$ does not contain a mass at $\bzero$. Based on the similarity with SPARCs, we use a scheme with a  spatially coupled matrix $\bA$ and Approximate Message Passing (AMP) decoding to obtain asymptotic bounds on the three probabilities of error. Achievable regions for the many-user GMAC (without random access) were obtained using similar ideas in  \cite{hsieh2022near} and \cite{kowshik2022improved}.

\paragraph{Spatially coupled design} 
\begin{figure}[t!]
    \centering
    \includegraphics[width=0.55\textwidth]{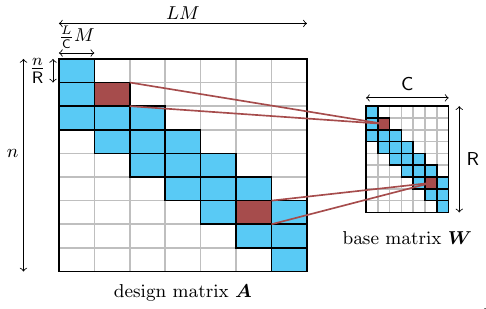}
    \vspace*{-0.2cm}
    \caption{A spatially coupled design matrix  $\bA$ constructed using a base matrix $\bW$ according to \eqref{eq:SC_design_Aij}. The base matrix shown here is an $(\omega, \Lambda)$ base matrix (defined in    Definition \ref{def:ome_lamb_rho}) with parameters $\omega=3, \Lambda=7$. The white parts of $\bA$ and $\bW$ correspond to zeros.}
    \label{fig:SC_sparc}
\end{figure}
Inspired by \cite{rush2021capacity,hsieh2022near,kowshik2022improved}, we choose $\bA\in\reals^{n\times LM}$ to be a spatially coupled  random Gaussian matrix  to ensure the tightest achievability bounds under AMP decoding. Fig.~\ref{fig:SC_sparc} illustrates an example of such a matrix.
 The matrix $\bA$ is divided into $\R$  \emph{row blocks} and $\C$ \emph{column blocks}, where $\R$ is chosen to divide $n$ and $\C$ to divide $L$. This implies that every column block contains $\frac{L}{\C}$ sections of $\bA$, corresponding to  $\frac{L}{\C}$ users. Let $\sfr(\cdot): [n] \to [\R]$ and $\sfc(\cdot): [LM]\to [\C]$ be the operators that map a particular row or column index into its corresponding row block or column block index. Then for $i\in [n]$ and $j\in [LM]$, the entries of $\bA$ are drawn according to
\begin{align}
    A_{ij}\stackrel{\text{i.i.d.}}{\sim} \normal\left(0, \frac{1}{n/\R} W_{\sfr(i), \sfc(j)}\right), \quad \text{for} \; i\in[n],\  j\in [LM].\label{eq:SC_design_Aij}
\end{align}
Here, $\bW \in \reals^{\R\times \C}$ is a  \emph{base matrix},  whose columns satisfy $\sum_{\sfr=1}^\R W_{\sfr,\sfc}=1$ for  $\sfc\in [\C]$. This implies that the columns of $\bA$ have unit squared $\ell_2$-norm in expectation. By concentration of measure, it follows that  $\|\bc_\ell\|_2^2/k =\|\bA_\ell\bx_\ell\|_2^2/k  \to E_b$ as $n\to\infty$, indicating that  the energy-per-bit constraint is asymptotically satisfied. The base matrix $\bW$ we use to obtain our bounds is defined in Definition \ref{def:ome_lamb_rho} below through two parameters $\omega$ and $\Lambda$.   
The standard i.i.d.\@ Gaussian design where  $A_{ij}\stackrel{\text{i.i.d.}}{\sim}\mc{N}(0, 1/n)$ is  a special case of the spatially coupled design, obtained by using a base matrix with a single entry $W=1$ ($\Lr=\Lc=1$).

\begin{definition}
    \label{def:ome_lamb_rho}
    An $(\omega , \Lambda)$ base matrix $\bW\in \reals^{\R\times \C}$ is described by two parameters: the coupling width $\omega\geq1$ and the coupling length $\Lambda\geq 2\omega-1$. The matrix has $\Lr=\Lambda+\omega-1$ rows and $\Lc=\Lambda$ columns, with each column having $\omega$ identical nonzero entries in the band-diagonal and zeros everywhere else.  For $\sfr \in [\Lr]$ and $ \sfc\in[\Lc]$, the $(\sfr,\sfc)$th entry of the base matrix is given by
    \begin{equation}
    \label{eq:W_rc}
    W_{\sfr,\sfc} =
    \begin{cases}
            \ \frac{1}{\omega} \quad &\text{if} \ \sfc \leq \sfr \leq \sfc+\omega-1,\\
        \ 0 \quad &\text{otherwise}.
    \end{cases}
    \end{equation}
    \end{definition}

Each nonzero block of the spatially coupled  matrix $\bA$ can be viewed as an (uncoupled) i.i.d.\@  matrix with $\frac{L}{\C}$ users and $\frac{n}{\R}$  channel uses. Hence we define the \emph{inner} user density to be 
\begin{align}
    \muin = \frac{L/\C}{n/\R} = \frac{\R}{\C}\mu = \left(1 + \frac{\omega-1}{\Lambda}\right)\mu.
    \label{eq:mu_indef}
\end{align}
Note that since $\omega\ge 1$, we have $\muin > \mu$. The difference between $\muin$ and $\mu$  becomes negligible when $\Lambda\gg \omega$. This relation will be used later  in the asymptotic performance analysis  of AMP decoding with the spatially coupled design matrix $\bA$.

\edit{The benefit of using a spatially coupled design compared to a standard i.i.d.\@ matrix is due to its band-diagonal structure, as shown in Fig.~\ref{fig:SC_sparc}. Each row in $\bA$  corresponds to a channel transmission, while each group of $M$ columns corresponds to the codebook of a user. (Columns $1$ to $M$ correspond to user 1, columns $(M+1)$ to $2M$ correspond to user 2, and so on.)  For illustration, suppose that each of the blocks in $\bA$ in Fig.~\ref{fig:SC_sparc} corresponds to 5 users and 5 transmissions. As there is a single non-zero block in the top row block of $\bA$, the first 5 transmissions (corresponding to the first row block) will only involve interference from the first 5 users and therefore the first 5 messages are easier to decode. As the next row-block contains two non-zero blocks, the next 5 transmissions will involve 10 users, but if the first 5 users' messages have been decoded, the next 5 messages will also be easier to decode. This generates a \textit{snowball} effect, where as the iterations progress, the adjacent users' messages present less interference. Gradually the full set of messages can be decoded, resulting in a gain in performance compared to the standard i.i.d.\@ Gaussian scheme. An illustration of the decoding progression can be found in \cite[Fig.~2.2]{hsieh2021thesis}.} 

\subsection{AMP Decoding and State Evolution}\label{sec:AMP_SE_asymp_ach}
The channel decoding task is to estimate the joint message vector $\bx$ given  the channel output $\by$, the design matrix $\bA$ and the channel noise variance $\sigma^2$.  We use an AMP algorithm  for decoding, similar to those in  \cite{hsieh2022near,kowshik2022improved}. Starting with the initial estimate $\bx^0=\bzero$ of the joint message vector $\bx$,    for $t\ge 0$, the AMP decoder recursively computes:
    \begin{align} 
        &\bz^t  = \by - \bA \bx^t+\tbv^t \odot\bz^{t-1} , \label{eq:AMP_sparc_z}\\
        &\bx^{t+1}  = \eta_t\left(\bs^t\right),\quad \text{where} \quad \bs^t  =\bx^t + \left(\tbQ^t \odot \bA\right)^\top \bz^t .\label{eq:AMP_sparc_x}
    \end{align}
    Here $\eta_t: \reals^{LM} \to \reals^{LM}$ is a Lipschitz denoising function that produces an updated estimate $\bx^{t+1}$ of $\bx$ from  the  effective observation $\bs^t$. 
    \edit{The operator $\odot$ is the Hadamard (entrywise) product, and quantities with negative  iteration index are set to all-zero vectors which means $\bz^0 = \by - \bA\bx^0 = \by$}. 
   Before giving the exact form of the vector $\tbv^t\in\reals^n $, the matrix $\tbQ^t\in\reals^{n\times LM}$ and the denoiser $\eta_t$ in \eqref{eq:AMP_sparc_x}, we  explain the asymptotic distributional properties of the AMP algorithm via a deterministic recursion called \emph{state evolution}.

\paragraph{State evolution (SE)}
Recall that the joint  message vector $\bx$ consists of  $L$ length-$M$ sections $\bx_1, \dots, \bx_L$ drawn i.i.d.\@ from $p_{\bbx_{\sec}}$. Each section corresponds to a specific column block $\sfc\in[\C]$, so there are $L/\C$ sections per column block.  Let $\bs^t_1, \dots, \bs^t_L$ denote the $L$ sections of $\bs^t$. 
For each section $\ell$ located in block $\sfc$, the debiasing term $\tbv^t\odot \bz^{t-1}$ in \eqref{eq:AMP_sparc_x} ensures that  for $n, L\to\infty$ with $\mu=L/n\in(0, \infty)$, and $t\ge 1$, we have 
\begin{align}\label{eq:SC_sparc_eff_obs}
    \left(\bs^t_\ell - \bx_\ell\right) \sim \normal_{M}(\bzero, \tau_{\sfc}^t\bI)\ ,
\end{align}
where the noise variance $\tau_\sfc^t$ 
can be characterized through the following SE recursion.  Initialized with $\psi_\sfc^0=E_bk$ for $\sfc\in [\C]$, the SE iteratively computes for $t\ge 0$, $\sfr\in[\R]$ and $\sfc\in[\C]$:
\begin{align}
    &\phi_\sfr^t = \sigma^2 + \muin \sum_{\sfc=1}^\C W_{\sfr,\sfc}\psi_{\sfc}^t\ ,\label{eq:SC_SE_sparc_phi}\\
    &\tau_\sfc^t = \left[\sum_{\sfr=1}^\R \frac{W_{\sfr,\sfc}}{\phi_\sfr^t}\right]^{-1}\ ,\label{eq:SC_SE_sparc_tau}\\
    &\psi_\sfc^{t+1}  = \E \big\|\bbx_{\sec} - \eta_{t,\sfc}\left(\bbx_{\sec} + \bg_\sfc^t\right) \big\|_2^2\ ,  \quad \text{where}\quad \bg_{\sfc}^t\sim \normal_{M}(\bzero, \tau_\sfc^t\bI)\ , \label{eq:SC_SE_sparc_psi}
\end{align}
where $\eta_{t, \sfc}: \reals^{M}\to \reals^M$ denotes the denoising function  associated with  block $\sfc$ (see \eqref{eq:denoiser_sparc}). 
Using the parameters in the SE equations, we can define 
the vector $\tbv^t\in\reals^n $ and the matrix $\tbQ^t\in\reals^{n\times LM}$ in \eqref{eq:AMP_sparc_z}--\eqref{eq:AMP_sparc_x} as follows:
\begin{align}
    &
    \tv_i^t = \frac{\muin \sum_{\sfc=1}^{\C} W_{\sfr(i), \sfc}\psi_{\sfc}^t}{\phi_{\sfr(i)^{t-1}}},\quad 
    \tQ_{ij}^t =  \frac{\tau_{\sfc(j)}^t}{\phi_{\sfr(i)}^t}\label{eq:AMP_sparc_Q_v} \quad \text{for}\;  i\in [n],\; j\in [LM] \ .
\end{align}
The parameter $\psi_\sfc^{t+1}$  can be interpreted as the MSE for a section  located in block $\sfc$. \edit{Since each block in the spatially coupled matrix $\bA$ has different variance, the Hadamard product with $\bQ$ in \eqref{eq:AMP_sparc_x} takes this into account and adjusts the matrix $\bA$ accordingly.} 

\paragraph{Choice of denoiser $\eta_t$ in AMP}
The distributional characterization in \eqref{eq:SC_sparc_eff_obs} motivates applying the denoiser $\eta_t$ in \eqref{eq:AMP_sparc_x} section-wise to its input, i.e., for $t\ge 1$, we have 
    \begin{align}
        \eta_t(\bs^t) = \begin{bmatrix}
            \eta_{t,1}\left(\bs^t_1\right)\\
            \vdots\\
            \eta_{t,1}\left(\bs^t_{L/\C}\right)\\
            \vdots\\
            \eta_{t,\C}\left(\bs^t_{(\C-1)L/\C+1}\right)\\
            \vdots\\
            \eta_{t,\C}\left(\bs^t_{L}\right)
        \end{bmatrix}
        \setlength{\arraycolsep}{0pt} 
        \begin{array}{ c }
          \left.\kern-\nulldelimiterspace
          \vphantom{\begin{array}{ c }
              \eta_1  \\ 
            \vdots \\ 
              \eta_1 \\   
            \end{array}}
            \right\}\text{$\frac{L}{\C}$ sections with $\sfc=1$}\\\\
            \vphantom{\vdots} 
            \left.\kern-\nulldelimiterspace
            \vphantom{\begin{array}{ c }
              \eta_\C \\
              \vdots\\
                \eta_\C\\
          \end{array}}
          \right\}\text{$\frac{L}{\C}$ sections with $\sfc=\C$.}
        \end{array}\label{eq:denoiser_sparc}
    \end{align}
Here, for $\sfc\in[\C]$, $\eta_{t, \sfc}: \reals^{M}\to \reals^M$ denotes the denoiser for block $\sfc$, and is applied to each of the $\frac{L}{\C}$ sections located in block $\sfc$. We choose $\eta_{t,\sfc}$ to be either the Bayes-optimal estimator or the marginal-MMSE estimator  for each section of the joint message vector  $\bx$ in block $\sfc$. \edit{These two estimators are described next.}

Let $\mc{L}_\sfc\coloneqq \left\lbrace(\sfc-1){L}/{\C}+1, \dots, \sfc{L}/{\C}\right\rbrace$ for $\sfc\in [\C]$. Then  \eqref{eq:SC_sparc_eff_obs}  says that    section $\ell\in \mc{L}_{\sfc}$ of the effective observation vector in iteration $t\ge 1$, $\bs_\ell^t$, is asymptotically distributed as the  signal $\bx_\ell$ embedded in i.i.d.\@ Gaussian noise with  variance $\tau_{\sfc}^t$. 
Hence, the Bayes-optimal   denoiser  for block $\sfc$, denoted by $\eta_{t, \sfc}^\Bayes: \reals^M\to\reals^M$ and   applied to  $\bs_\ell^t$, is the MMSE estimator  taking the form:
\begin{align}
  \bx_\ell^{t+1} &=  \eta^\Bayes_{t, \sfc}(\bs^t_\ell) = \E\left[\bbx_{\sec}\big | \ \bbx_{\sec}+\bg_{\sfc}^t = \bs_\ell^t\right]\nonumber\\
    & =\sum_{j=1}^M\sqrt{E}\,\be_j \cdot \frac{\frac{\alpha}{M}\exp\left({(s_{\ell}^t)_j\sqrt{E}}/{\tau_\sfc^t} - E/(2\tau_\sfc^t)\right) }{ \frac{\alpha}{M}\sum_{j'=1}^M\exp\left((s_{\ell}^t)_{j'}\sqrt{E}/\tau_{\sfc}^t - E/(2\tau_\sfc^t) \right) +(1-\alpha)}\,,\label{eq:bayes_opt_eta_t_ach}
\end{align}
where we recall  $E=E_bk$ and  $\be_j\in\reals^M$ is the $j$-th canonical basis vector.  

A suboptimal alternative to the Bayes-optimal denoiser is the marginal-MMSE denoiser, denoted by $\eta_{t,\sfc}^\marginal: \reals^M\to \reals^M$. This denoiser acts entrywise on its input and computes the entrywise conditional expectation using the marginal distribution of $\bbx_{\sec}$ and $\bg_\sfc^t$:
\begin{align}
    \bx_\ell^{t+1}= \eta^\marginal_{t, \sfc}(\bs_\ell^t) = 
    \begin{bmatrix}
        \E\left[\bar{x}_{\sec,1} | \ \bar{x}_{\sec,1}  + g_{\sfc,1}^t = (s_\ell^t)_1\right]\\
        \vdots\\
        \E\left[\bar{x}_{\sec,M} | \ \bar{x}_{\sec,M}  + g_{\sfc,M}^t = (s_\ell^t)_M\right]    
    \end{bmatrix},\label{eq:marg_eta_t_ach}
\end{align}
with 
\[\E\left[\bar{x}_{\sec,j} | \ \bar{x}_{\sec,j}  + g_{\sfc,j}^t = (s_\ell^t)_j\right] =\sqrt{E}\cdot \frac{\frac{\alpha}{M}\exp\left((s_\ell^t)_j\sqrt{E}/\tau_{\sfc}^t - E/(2\tau_\sfc^t)\right)}{\frac{\alpha}{M} \exp\left((s^t_\ell)_j\sqrt{E}/\tau_{\sfc}^t - E/(2\tau_\sfc^t)\right) + \left(1-\frac{\alpha}{M}\right)}\, , \quad j\in[M].\]

In the  next subsection (Section \ref{sec:sparc_asymp_error_analysis}), we  will  characterize the asymptotic  performance of the AMP decoder with $\eta_{t,\sfc}^\Bayes$ and with $\eta_{t,\sfc}^\marginal$. While  $\eta_{t,\sfc}^\marginal$ leads to worse performance than $\eta_{t,\sfc}^\Bayes$, its asymptotic performance  can be computed much more efficiently than of $\eta_{t,\sfc}^\Bayes$.

\paragraph{Hard-decision step in AMP}
In each iteration $t\ge 1$, given the effective observation $\bs^t$, the AMP decoder can produce a hard-decision  estimate $\hbx^{t+1}$ of the joint message  vector $\bx$.    Let $h_{t, \sfc}: \reals^M\to \reals^M$ be the hard-decision function for block $\sfc\in[\C]$, which applies section-wise to the $\frac{L}{\C}$ sections in block $\sfc$. To minimize the probability of detection error, we choose $h_{t, \sfc} $ to be the MAP estimator, i.e., for  section $\ell\in \mc{L}_{\sfc}$,  we have 
\begin{align}\label{eq:hard_decision_sparc}
 \hbx_\ell^{t+1}=   h_{t, \sfc}(\bs_\ell^t) = \argmax_{\bx' \in \mc{X}} \P\left(\bbx_{\sec} = \bx' \mid  \bbx_{\sec}+ \bg_\sfc^t = \bs_\ell^t\right),
\end{align}
where $\mc{X} = \{\sqrt{E}\, \be_j, \, \forall j\in[B]\}\cup \bzero$  is the support of $p_{\bbx_{\sec}}$. We can then calculate the error probabilities $\pMD, \pFA$ and $\pAUE$ by comparing the final estimate $\hbx_\ell^t$ against the ground truth $\bx_\ell$,  for  $\ell\in[L]$. Note that the entrywise variance $\tau_{\sfc}^t$ of $\bg_{\sfc}^t$ depends on the choice of the sequence of denoising functions $\eta_{0, \sfc}, \dots, \eta_{t,\sfc}$, as stated in \eqref{eq:SC_SE_sparc_phi}--\eqref{eq:SC_SE_sparc_psi}. This implies that $h_{t, \sfc}$ uses a different value for $\tau_{\sfc}^t$, depending on whether we use the Bayes-optimal denoiser or the marginal-MMSE denoiser.


\subsection{Asymptotic Error Analysis for AMP Decoding}\label{sec:sparc_asymp_error_analysis}
  We characterize the asymptotic error probabilities $\pMD, \pFA$ and $\pAUE$ of the random  coding scheme in Section \ref{sec:sparc_scheme_for_asymp_ach} 
 with spatially coupled design matrix  under AMP decoding, in the limit as $n, L\to\infty$ with $L/n\to \mu\in(0, \infty)$. We consider two AMP decoders with $\eta_t^\Bayes$ or  $\eta_t^\marginal$ as denoiser.
To quantify their error probabilities, we need to characterize the fixed point of the AMP decoders in the limit  $t\to \infty$. We do this using the stationary points of  suitable   potential functions, following the approach in \cite{barbier2017approximate,yedla2014simple_scalar}. We will bound the 
  error probabilities for each decoder  in terms of the minimizer of a potential function. The potential function corresponding to $\eta_{t}^\Bayes$ is  similar to the one used in  \cite{hsieh2022near}, and the potential function for  $\eta_t^\marginal$ is similar to that in \cite{kowshik2022improved}. 

\paragraph*{Potential function for AMP with  $\eta_{t}^\Bayes$}
Consider the  following single-section channel, with output $\bs_\tau \in \reals^M$ given by
\begin{align}\label{eq:sparc_single_section_channel}
    \bs_\tau = \bbx_{\sec}+ \sqrt{\tau}\bz \ \in\reals^M,
\end{align}
where $\bbx_{\sec}\sim p_{\bbx_{\sec}}$ and $ \bz\sim \normal_{M}(\bzero, \bI)$ independent of $\bbx_{\sec}$.
Then the potential function for the AMP decoder with  $\eta_{t}^\Bayes$ is defined  as \cite{hsieh2022near}:
\begin{align}\label{eq:pot_fn_sectionwise}
    \mc{F}_\Bayes\left(\mu,\sigma^2,\psi\right) = I(\bbx_{\sec}; \bs_\tau) + \frac{1}{2 \mu}\left[\ln\left(\frac{\tau}{\sigma^2}\right)-\frac{\mu\psi}{\tau}\right],
\end{align}
where $\mu=L/n$ is the user density, $\sigma^2$ is the channel noise of the GMAC, $ \psi\in[0, E]$ and $\tau=\sigma^2+\mu\psi$. The mutual information $I(\bbx_{\sec}; \bs_{\tau})$ is in base-$e$ and can be computed using \eqref{eq:sparc_single_section_channel} to be:
\begin{align}
    I(\bbx_{\sec}; \bs_\tau) 
    &= \alpha\ln\left(\frac{M}{\alpha}\right) -(1-\alpha)\ln(1-\alpha)   \nonumber\\
    &\quad  + \alpha \left(\ln \frac{\alpha}{M} + \frac{E}{2\tau}\right) - \alpha\E_{\bz}\left[\ln\left(\frac{\alpha}{M}\sum_{j= 2}^M e^{\sqrt{\frac{E}{\tau}}z_j - \frac{E}{2\tau}} + \frac{\alpha}{M}e^{\sqrt{\frac{E}{\tau}}z_1 + \frac{E}{2\tau}}+ 1-\alpha\right)\right]\nonumber\\
    &  \quad -(1-\alpha)\E_{\bz}\left[ \ln \left(\frac{\alpha}{M(1-\alpha)} \sum_{j=1}^M e^{\sqrt{\frac{E}{\tau}} z_j - \frac{E}{2\tau}}+ 1\right)\right].\label{eq:MI_sectionwise}
\end{align}
Define the largest minimizer  of $\mc{F}_\Bayes$ w.r.t.\@ $\psi$ as:
\begin{align}\label{eq:min_FBayes}
\mc{M}_\Bayes\left(\mu, \sigma^2\right) = \max\Bigg\{\argmin_{\psi\in[0,E]}\mc{F}_{\Bayes}\left(\mu,\sigma^2,\psi\right)\Bigg\}.
\end{align}
This will be used later in this subsection to characterize the asymptotic error probabilities of the AMP decoder that uses  $\eta_{t}^\Bayes$.

\paragraph*{Potential function for AMP with $\eta_{t}^\marginal$}
Consider the following scalar channel, with output $s_\tau \in \reals$ given by 
\begin{align}\label{eq:sparc_scalar_channel}
    s_\tau = \bar{x}_{\sec} + \sqrt{\tau}z \ \in\reals,
\end{align}
where $\bar{x}_{\sec}$ is distributed as the entrywise marginal of $\bbx_{\sec}$, i.e., 
\[ 
 \bar{x}_{\sec}=\begin{cases}
        \sqrt{E} & \text{with probability }\frac{\alpha}{M}\\
        0 & \text{otherwise}.
    \end{cases} 
    \]
The noise $z\sim \normal(0,1)$ is independent of $\bar{x}_{\sec}$. Then the potential function $\mc{F}_\marginal$ for the AMP decoder with $\eta_{t}^\marginal$ is defined as \cite{kowshik2022improved}:
\begin{align}\label{eq:pot_fn_entrywise}
    \mc{F}_\marginal\left(\mu,\sigma^2,\psi\right) = I(\bar{x}_{\sec}; s_\tau) +\frac{1}{2 \mu M}\left[\ln\left(\frac{\tau}{\sigma^2}\right)-\frac{\mu\psi}{\tau}\right],
\end{align}
 where $\mu, \sigma^2$, $\psi$ and  $\tau$ are as defined in \eqref{eq:pot_fn_sectionwise}, and $I(\bar{x}_{\sec}; s_\tau)$ can be computed as
 \begin{align}
    I(\bar{x}_{\sec}; s_\tau) &=  \frac{\alpha}{M} \ln\left(\frac{M}{\alpha}\right) -\left(1-\frac{\alpha}{M}\right)\ln\left(1-\frac{\alpha}{M}\right) 
    - \frac{\alpha}{M}\E_z\left[\ln\left(1 + \left(\frac{M}{\alpha}-1\right)e^{\sqrt{\frac{E}{\tau}}z -\frac{E}{2\tau}}\right)\right]
    \nonumber\\
    &\quad -\left(1-\frac{\alpha}{M}\right)\left\lbrace\E_z\left[\ln\left(e^{\sqrt{\frac{E}{\tau}}z-\frac{E}{2\tau}} +\frac{M}{\alpha}-1\right)\right] - \ln\left(\frac{M}{\alpha}-1\right)\right\rbrace.\label{eq:MI_entrywise}
 \end{align}
Similarly to  \eqref{eq:min_FBayes}, we define the largest minimizer  of $\mc{F}_\marginal$ w.r.t.\@ $\psi$ as:
\begin{align}
\mc{M}_\marginal\left(\mu, \sigma^2\right) = \max\Bigg\{\argmin_{\psi\in[0,E]}\mc{F}_{\marginal}\left(\mu,\sigma^2,\psi\right)\Bigg\}.
\end{align}
Using the definitions above, we can state the following result, which will be used to characterize the fixed point of each AMP decoder.

\begin{lem}[Fixed points of state evolution] \label{lem:sparc_fixed_point}
Consider the state evolution recursion in \eqref{eq:SC_SE_sparc_phi}--\eqref{eq:SC_SE_sparc_psi}, defined via  an $(\omega, \Lambda)$ base matrix $\bW$ (Definition \ref{def:ome_lamb_rho}), and $\eta_t$ given by either $\eta_{t}^\Bayes$ or $\eta_{t}^\marginal$. Then, for  $\sfc \in [\C]$, the SE parameter $\tau_\sfc^t$  is non-increasing in $t$ and converges to a fixed point  $\tau_\sfc^\infty$, which satisfies the following.

For any $\epsilon > 0$, there exists $\omega_0 < \infty$ and  $\Lambda_0< \infty$ such that for all $\omega> \omega_0$ and  $\Lambda> \Lambda_0$, the fixed points $\{\tau^\infty_\sfc \}_{\sfc \in [\C]}$ satisfy
    \begin{align}\label{eq:lem_tau_bar_sparc}
      \max_{\sfc\in[\C]}  \tau_{\sfc}^\infty  \le \tauconv
      :=\sigma^2 + \muin\left(\mc{M}\left(\muin, \sigma^2\right) + \epsilon\right),
    \end{align}
    where $\muin=\vartheta \mu$, $\vartheta:=\frac{\R}{\C} = 1+ \frac{\omega-1}{\Lambda}$, and   $\mc{M}(\muin,\sigma^2)=\mc{M}_\Bayes(\muin,\sigma^2)$ when $\eta_{t,\sfc}=\eta_{t,\sfc}^\Bayes$, or  $\mc{M}(\muin,\sigma^2)=\mc{M}_\marginal(\muin,\sigma^2)$ when $\eta_{t,\sfc}=\eta_{t,\sfc}^\marginal$. Moreover,
\begin{align}\label{eq:tau_c_large_W}
\lim_{\omega\to\infty}\lim_{\Lambda\to\infty}\tauconv\to \sigma^2 + \mu\left(\mc{M}\left(\mu, \sigma^2\right)+\epsilon\right).
\end{align}
\end{lem}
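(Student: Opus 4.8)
The plan is to treat the state evolution recursion \eqref{eq:SC_SE_sparc_phi}--\eqref{eq:SC_SE_sparc_psi} as a coupled scalar recursion and to invoke the potential-function/threshold-saturation machinery of Yedla et al.\ \cite{yedla2014simple_scalar} and Barbier et al.\ \cite{barbier2017approximate}, specialized to the single-section potentials $\mc{F}_\Bayes$ and $\mc{F}_\marginal$ of \eqref{eq:pot_fn_sectionwise} and \eqref{eq:pot_fn_entrywise}. I will handle both denoisers uniformly by writing $\mc{E}(\tau)$ for the single-section error $\E\big\|\bbx_{\sec}-\eta_{t,\sfc}(\bbx_{\sec}+\bg)\big\|_2^2$ appearing in \eqref{eq:SC_SE_sparc_psi}, which equals the section MMSE when $\eta_{t,\sfc}=\eta_{t,\sfc}^\Bayes$ and $M$ times the scalar marginal MMSE when $\eta_{t,\sfc}=\eta_{t,\sfc}^\marginal$; in either case $\mc{E}(\cdot)$ is non-decreasing, since a larger effective noise variance cannot improve estimation. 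With this notation \eqref{eq:SC_SE_sparc_psi} reads $\psi_\sfc^{t+1}=\mc{E}(\tau_\sfc^t)$.

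First I would establish monotonicity and convergence. Composing \eqref{eq:SC_SE_sparc_phi}--\eqref{eq:SC_SE_sparc_psi}, the profile $\psi^t=(\psi_1^t,\dots,\psi_\C^t)$ obeys $\psi^{t+1}=F(\psi^t)$, where $F_\sfc(\psi)=\mc{E}\big([\sum_{\sfr}W_{\sfr,\sfc}/(\sigma^2+\muin\sum_{\sfc'}W_{\sfr,\sfc'}\psi_{\sfc'})]^{-1}\big)$. Each $\phi_\sfr$ is increasing in every $\psi_{\sfc'}$; hence $\tau_\sfc$ is increasing in every $\phi_\sfr$ and therefore in every $\psi_{\sfc'}$; composing with the non-decreasing $\mc{E}(\cdot)$ shows that $F$ is a monotone (order-preserving) map in the componentwise partial order. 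Initializing at $\psi_\sfc^0=E_bk=E$, which upper-bounds $\mc{E}(\tau)$ for every $\tau$, gives $\psi^1\le\psi^0$, and monotonicity propagates this to $\psi^{t+1}\le\psi^t$ for all $t$. Since $\psi_\sfc^t\ge0$ is bounded below, monotone convergence yields a limit profile $\psi^\infty$, and hence $\tau_\sfc^t\downarrow\tau_\sfc^\infty$, establishing the first claim.

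The core of the argument is the saturation bound \eqref{eq:lem_tau_bar_sparc}. The construction is chosen so that any fixed point $\psi^\infty$ of the coupled system is a stationary point of a coupled potential $U(\psi)=\sum_\sfc[\text{single-section terms built from }\mc{F}]$, where $\mc{F}$ is taken to be $\mc{F}_\Bayes$ or $\mc{F}_\marginal$ as appropriate; the single-section potential is engineered (via the I-MMSE identity) so that its stationarity condition $\partial_\psi\mc{F}=0$ reproduces the uncoupled fixed-point equation $\psi=\mc{E}(\sigma^2+\muin\psi)$, whose largest root corresponds to $\mc{M}(\muin,\sigma^2)$ in \eqref{eq:min_FBayes}. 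I would then argue by contradiction: suppose $\max_\sfc\tau_\sfc^\infty$ exceeds the target $\tauconv$ determined by $\mc{M}(\muin,\sigma^2)+\epsilon$. Following the displacement/shift technique, I compare $U$ at the (monotone) fixed-point profile with its value at a one-step spatial shift; monotonicity of the profile forces a potential gap bounded below by the strictly positive uncoupled potential-energy difference, minus a boundary correction of order $O(1/\omega)$ arising from the width-$\omega$ coupling window. Choosing $\omega>\omega_0$ large enough makes the boundary term smaller than the gap, contradicting stationarity; the role of $\Lambda>\Lambda_0$ is to make the chain long enough that the seeded boundary conditions propagate the good fixed point inward. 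This pins $\max_\sfc\tau_\sfc^\infty$ to within the stated tolerance, giving \eqref{eq:lem_tau_bar_sparc}.

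Finally, for \eqref{eq:tau_c_large_W} I would use $\muin=(1+\tfrac{\omega-1}{\Lambda})\mu$: sending $\Lambda\to\infty$ gives $\muin\to\mu$, after which $\omega\to\infty$ leaves $\muin=\mu$, and continuity of the largest minimizer $\mc{M}(\cdot,\sigma^2)$ in its density argument gives $\mc{M}(\muin,\sigma^2)\to\mc{M}(\mu,\sigma^2)$; substituting into $\tauconv=\sigma^2+\muin(\mc{M}(\muin,\sigma^2)+\epsilon)$ yields the claimed limit. The main obstacle I anticipate is the threshold-saturation step: verifying that $\mc{E}(\cdot)$ and the potentials $\mc{F}_\Bayes,\mc{F}_\marginal$ satisfy the admissibility and regularity hypotheses (monotonicity, smoothness, boundedness, and the correct derivative--fixed-point correspondence) needed to run the shift argument, and carefully controlling the $O(1/\omega)$ boundary terms for both the vector-valued Bayes denoiser and the scalar marginal denoiser.
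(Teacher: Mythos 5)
Your proposal is correct and follows essentially the same route as the paper: both rest on the threshold-saturation/potential-function machinery of Yedla et al.\ \cite{yedla2014simple_scalar}, with monotonicity of the SE map giving convergence and the uncoupled potentials $\mc{F}_\Bayes$, $\mc{F}_\marginal$ supplying the fixed-point bound via their largest minimizers. The only real difference is economy: the paper rewrites the coupled SE as the single-line recursion $\gamma_{\sfr}^{t+1} = \sum_{\sfc} W_{\sfr\sfc}\,\mmse\big(\sum_{\sfr'} W_{\sfr'\sfc}/(\sigma^2+\muin\gamma_{\sfr'}^t)\big)$, observes that this matches Eqs.\ (27)--(28) of \cite{yedla2014simple_scalar} exactly, invokes their Theorem 1 as a black box, and then translates the bounds on $\{\gamma_\sfr^\infty\}$ back to $\{\tau_\sfc^\infty\}$, whereas you propose to re-run the displacement/shift argument with its $O(1/\omega)$ boundary control yourself---valid, but it amounts to re-proving the cited saturation theorem.
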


\begin{figure}[t!]
    \centering
   \includegraphics[width=0.58\textwidth]{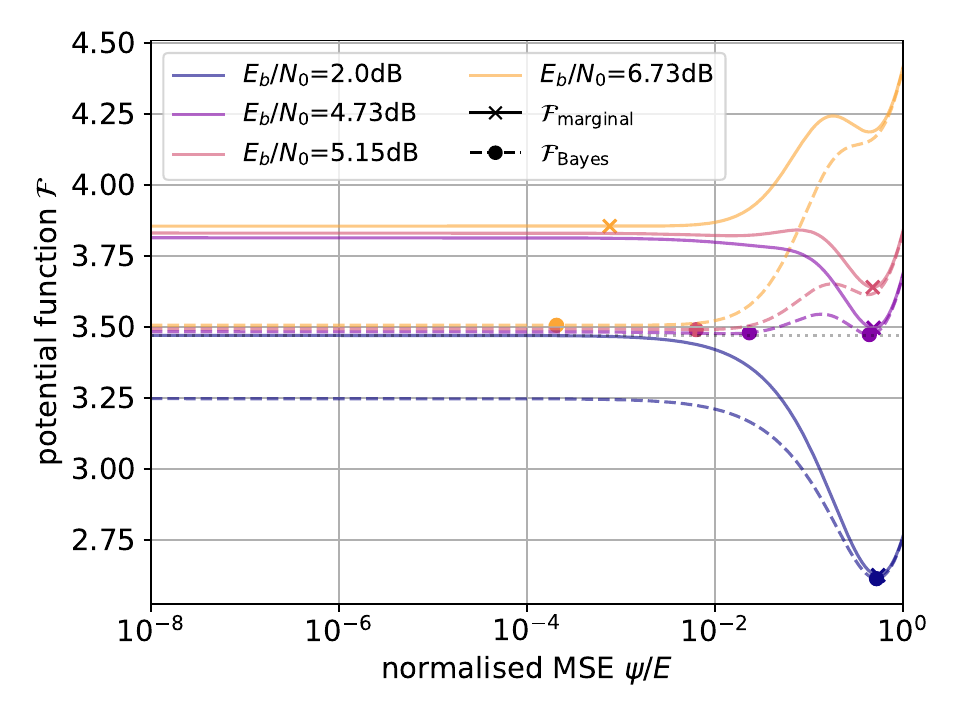}
    \vspace*{-0.4cm}
    \caption{Potential functions $\mc{F}_\Bayes$ and $\mc{F}_\marginal$ vs.\@ normalized MSE $\psi/E$. Different colours correspond to different values of $E_b/N_0$. Here $k=6, \alpha=0.7$, $ \mu = 0.28$.}\label{fig:RA_pot_fn_v_psi_k_6_alpha_0_7}
\end{figure}


We discuss the implications of Lemma  \ref{lem:sparc_fixed_point} before giving the proof. 
Consider  the AMP decoder in \eqref{eq:AMP_sparc_z}--\eqref{eq:AMP_sparc_x} for solving the linear estimation problem in \eqref{eq:Y=AX+W_SPARC} with a spatially coupled Gaussian  design. Recall from \eqref{eq:SC_sparc_eff_obs} that the  asymptotic noise variance in the effective observation for each iteration $t \ge 0$ is quantified by the SE parameters $\{\tau_\sfc^t\}_{\sfc\in[\C]}$, where $\sfc\in[\C]$ indexes the column blocks.  Also recall that these SE parameters depend on the denoiser used by the AMP decoder, i.e., $\eta_{t}^\Bayes$ or $\eta_t^\marginal$. Given a user density $\mu$, Lemma \ref{lem:sparc_fixed_point} says that at the fixed point of the AMP decoder, the effective noise variance  in all column blocks   is asymptotically upper bounded using the largest minimizer of the corresponding potential function $\mc{F}_\Bayes$ or $\mc{F}_\marginal$ for    user density $\muin =\ \vartheta\mu$, where we note that $\muin\to \mu$ in the limit of large base matrices (i.e., $\Lambda\to \infty$ then  $\omega \to \infty $). 
 
Using \eqref{eq:SC_SE_sparc_psi}, the  upper bound on  the asymptotic effective noise variance $\tau_\sfc^\infty$ in Lemma \ref{lem:sparc_fixed_point}  can be converted into an upper bound on  $ \psi_\sfc^\infty$, the asymptotic MSE for   block $\sfc$ achieved  at the fixed point of the AMP decoder.
%
We illustrate this result in Fig.~\ref{fig:RA_pot_fn_v_psi_k_6_alpha_0_7}, where we plot the potential functions $\mc{F}_\Bayes\equiv \mc{F}_\Bayes(\mu, \sigma^2, \psi)$ and $\mc{F}_\marginal\equiv \mc{F}_\marginal(\mu, \sigma^2, \psi)$ against $\psi/E\in [0,1]$, for $\mu=0.28$ and a few different values of  $E_b/N_0$ (varied through $\sigma^2$). Different colours represent different values of $E_b/N_0$.   The dashed curves  correspond to  $\mc{F}_\Bayes$ and  the solid curves  correspond to  $\mc{F}_\marginal$.  The global  minimizers of $\mc{F}_\Bayes$ and $\mc{F}_\marginal$ are marked by circles and crosses, respectively. These minimizers upper bound  $\psi_\sfc^\infty/ E$, the  asymptotic normalized MSE  achievable by the AMP decoder. 

Observe that for a lower SNR, e.g., $E_b/N_0=2$dB, $\mc{F}_\Bayes$ and $\mc{F}_{\marginal}$ each has a unique minimizer, and these minimizers coincide at a high normalized MSE (NMSE) of around $0.5.$ 
When the SNR  increases to $4.73$dB, while $\mc{F}_\marginal$ still has a unique minimizer at  $ \text{NMSE}\approx 0.5$, $\mc{F}_\Bayes$ exhibits two minimizers, one at $\text{NMSE}\approx0.5$ and the other at $\text{NMSE}\approx 0.02$. Nevertheless,  the largest minimizers of $\mc{F}_\marginal$ and $\mc{F}_\Bayes$ still coincide at $0.5$, indicating that the NMSE achievable by the AMP decoder, using either $\eta_t^\marginal$ or $\eta_t^\Bayes$ as its denoiser, is upper bounded by $0.5$.

As the SNR  further increases to $E_b/N_0=5.15$dB or  $6.73$dB,  the  minimizers of the potential functions shift towards smaller NMSE values, reaching as low as  $10^{-4}$. This implies that the AMP decoder  achieves smaller NMSE  for higher SNRs.
Moreover, $\mc{F}_\Bayes$ and $\mc{F}_{\marginal}$ begin to exhibit distinct  minimizers, with the minimizer of $\mc{F}_\Bayes$  consistently being smaller than that of $\mc{F}_{\marginal}$. This indicates that using $\eta_t^{\Bayes}$ instead of $\eta_t^{\marginal}$ as the denoiser  allows the AMP decoder to converge to a smaller NMSE, as one would expect.

\begin{proof}[Proof of Lemma \ref{lem:sparc_fixed_point}]
    This result is similar to \cite[Theorem 2]{hsieh2022near} and \cite[Lemma IV.1]{kowshik2022improved}, and can be proved using the same techniques. 
    Omitting the details, we sketch the key steps of the proof.  Defining  $\gamma_{\sfr}^t := \sum_{\sfc=1}^\C W_{\sfr\sfc} \psi_\sfc^t$ for $\sfr\in [\R]$,  the  SE recursion \eqref{eq:SC_SE_sparc_phi}--\eqref{eq:SC_SE_sparc_psi} associated with the spatially-coupled system  can be rewritten into a single-line recursion:
    \begin{align}
        \gamma_{\sfr}^{t+1} = \sum_{\sfc=1}^\C W_{\sfr\sfc} \,\mmse\left(\sum_{\sfr'=1}^\R  \,\frac{W_{\sfr'\sfc}}{\sigma^2 + \muin \gamma_{\sfr'}^t}\right)\quad \text{for}\; \sfr\in [\R],\label{eq:SC_SE_one_line_recursion}
    \end{align}
    where the function $\mmse(\cdot)$ is defined differently depending on the denoiser used in the AMP decoder.
    For the Bayes-optimal denoiser, we have $\mmse(1/\tau):=\E\|\bbx_{\sec}- \E\left[\bbx_{\sec}\mid \bbx_{\sec} + \sqrt{\tau}\bz\right]\|_2^2$ with $\bz\sim \normal_M(\bzero, \bI)$, and for the marginal-MMSE denoiser, we have  $\mmse(1/\tau):=\E(\bar{x}_{\sec}- \E\left[\bar{x}_{\sec}\mid \bar{x}_{\sec} + \sqrt{\tau}z\right])^2$ with $ z\sim \normal(0,1)$.

    The  SE recursion for the corresponding \emph{uncoupled} system  can be obtained by substituting $W = 1$ in \eqref{eq:SC_SE_sparc_phi}--\eqref{eq:SC_SE_sparc_psi} and dropping the block indices, resulting in the following single-line recursion:
    \begin{align}
       \gamma^{t+1}= \psi^{t+1} = \mmse\left(\frac{1}{\sigma^2 + \muin \psi^t}\right),\label{eq:uncoupled_recursion}
    \end{align}
    where the function $\mmse(\cdot)$ for each denoiser is defined as  in \eqref{eq:SC_SE_one_line_recursion}. The recursions \eqref{eq:SC_SE_one_line_recursion} and \eqref{eq:uncoupled_recursion} correspond exactly to those in \cite[Eqs.\@ (27)--(28)]{yedla2014simple_scalar}. Hence, by \cite[Theorem 1]{yedla2014simple_scalar}, the fixed points of the coupled recursion in \eqref{eq:SC_SE_one_line_recursion} exist and can be bounded from above using the largest minimizer  of the potential function corresponding to the uncoupled recursion in \eqref{eq:uncoupled_recursion}. This potential function can be computed using the formula in \cite[Eq.\@ (4)]{yedla2014simple_scalar}, and
    takes the form of $\mc{F}_\Bayes$ in \eqref{eq:pot_fn_sectionwise} or $\mc{F}_\marginal$  in \eqref{eq:pot_fn_entrywise},  depending on whether $\eta_t^\Bayes$ or $\eta_t^\marginal$ is the denoiser. Finally the bounds on the fixed points $\{ \gamma^\infty_\sfr \}_{\sfr \in [\R]}$ can be translated to bounds on $\{ \tau_\sfc^\infty \}_{\sfc \in [\C]}$ using the same arguments as in the proof of \cite[Theorem 2]{hsieh2022near}.
\end{proof}

Using the characterization of the asymptotic effective noise variance in Lemma \ref{lem:sparc_fixed_point}, we can bound the asymptotic error probabilities $\pMD, \pFA$ and $\pAUE$  achieved by the AMP decoder.

\begin{thm}[Asymptotic achievability bounds]\label{thm:sparc_asymp_errors}
    Consider the coding scheme described in Section \ref{sec:sparc_scheme_for_asymp_ach}, 
    with a spatially coupled design  $\bA$  constructed using an $(\omega,\Lambda)$ base matrix, and   the AMP decoder in \eqref{eq:AMP_sparc_z}--\eqref{eq:AMP_sparc_x} with   $\eta_{t, \sfc} = \eta_{t, \sfc}^\Bayes$ or $\eta_{t, \sfc} = \eta_{t, \sfc}^\marginal$.
 Let the base matrix parameters $\omega> \omega_0$ and $\Lambda> \Lambda_0$, where 
 $\omega_0, \Lambda_0$ are given by Lemma \ref{lem:sparc_fixed_point}. Let $\tauconv$ be as defined  in \eqref{eq:lem_tau_bar_sparc},   which takes different forms depending on  whether the AMP decoder uses $\eta_{t}^\Bayes$ or $\eta_{t}^\marginal$ as the denoiser. 
    %
Recall that $\hbx^{t+1}$ be the hard-decision estimate of $\bx$ that the AMP decoder returns in iteration $t$ as defined in \eqref{eq:hard_decision_sparc}.
    Define also  $\xi(\tau):= \ln\left(\frac{M}{\alpha} (1-\alpha)\right)/\sqrt{E/\tau}$, and
    \begin{align}
        \epsMD(\tau)&:=\Phi\left(\xi(\tau) - \frac{1}{2}\sqrt{\frac{E}{\tau}}\right) \Phi\left(\xi(\tau) + \frac{1}{2}\sqrt{\frac{E}{\tau}}\right)^{M-1}, \label{eq:sparc_epsMD_def}\\
        \epsFA(\tau)&:=\left[\frac{\alpha  \left(1- \Phi\left(\xi(\tau) - \frac{1}{2}\sqrt{\frac{E}{\tau}}\right) 
        \Phi\left(\xi(\tau) +\frac{1}{2}\sqrt{\frac{E}{\tau}}\right)^{M-1} \right)}{(1-\alpha) \left( 1- \Phi\left(\xi(\tau ) + \frac{1}{2}\sqrt{\frac{E}{\tau}}\right)^{M}\right) } +1\right]^{-1},\\
        \epsAUE(\tau)&:=1-\E_z\left[ \Phi\left(\max\left\lbrace \xi(\tau) + \frac{1}{2}\sqrt{\frac{E}{\tau}}, z+\sqrt{\frac{E}{\tau}}\right\rbrace\right)^{M-1}\right] , \quad z\sim\normal(0,1),\label{eq:sparc_epsAUE_def} 
    \end{align}
    where $\Phi(\cdot) $ denotes the cumulative density function of the standard Gaussian. 
     Fix any $\delta>0$, and let $T$ denote the first iteration for which $\max_{\sfc\in[\C]} \tau_{\sfc}^T\le \tauconv + \delta$. 
    Then the $\pMD, \pFA$ and $\pAUE$ achieved by the AMP decoder in iteration $T$ satisfy the following almost surely:
    \begin{align}
        \lim_{L\to\infty}\pMD 
        &= \lim_{L\to\infty} \E\left[\ind\{\Ka\neq0\}\cdot \frac{1}{\Ka} \sum_{\ell\in[L]: \bx_\ell \neq \bzero} \ind\{\hbx_\ell^{T+1}=\bzero\}\right] \nonumber\\
         &\stackrel{\text{a.s.}}{=}  \frac{1}{\C}\sum_{\sfc=1}^\C \P\left(h_{T, \sfc}(\bbx_{\sec,\a}+\bg_\sfc^{T})=\bzero\right)
          \le \epsMD(\tauconv+\delta), \label{eq:asymp_pMD_sparc} \\
    \lim_{L\to\infty}\pFA
    &= 
    \lim_{L\to\infty} \E\left[\ind\{\hKa\ne 0\}\cdot \frac{1}{\hKa} \sum_{\ell\in[L]: \hbx_\ell^{T+1}\neq \bzero} \ind\{\bx_\ell=\bzero\}\right]\nonumber \\
    & \stackrel{\text{a.s.}}{=}\left[\frac{\alpha\sum_{\sfc=1}^{\C}\P(h_{T,\sfc}(\bbx_{\sec,\a}+\bg_{\sfc}^T) \neq\bzero)}{(1-\alpha)\sum_{\sfc=1}^{\C}\P(h_{T,\sfc}(\bg^T_{\sfc})\neq \bzero)}+1\right]^{-1}
    \le
    \epsFA(\tauconv+\delta), \label{eq:asymp_pFA_sparc}\\
        \lim_{L\to\infty}\pAUE 
        &= \lim_{L\to\infty} \E\left[\ind\{\Ka\neq0\}\cdot \frac{1}{\Ka} \sum_{\ell\in[L]: \bx_\ell\ne \bzero} \ind\big\{\hbx_\ell^{T+1}\,\notin \{\bx_\ell, \bzero\}\big\}\right]\nonumber\\
        &\stackrel{\text{a.s.}}{=}  \frac{1}{\C}\sum_{\sfc=1}^\C\P\left( h_{T,\sfc}\left(\bbx_{\sec,\a}+ \bg_{\sfc}^T\right) \notin\{ \bbx_{\sec,\a}, \bzero\}\right) 
        \le \epsAUE(\tauconv+\delta) , \label{eq:asymp_pAUE_sparc}
    \end{align}
    where the limits are taken with $L/n \to\mu$ held constant.
\end{thm}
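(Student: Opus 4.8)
The plan is to combine the state-evolution (SE) distributional characterization in \eqref{eq:SC_sparc_eff_obs} with an explicit analysis of the MAP hard-decision rule $h_{T,\sfc}$ in \eqref{eq:hard_decision_sparc}, and then convert the exact per-block limiting probabilities into the stated bounds using monotonicity in $\tau$. The analysis of $h_{T,\sfc}$ and the resulting closed forms are identical for both denoiser choices; only the final step invokes the denoiser-specific $\tauconv$. I work at the fixed finite iteration $T$: by Lemma~\ref{lem:sparc_fixed_point} the $\tau_\sfc^t$ are non-increasing and converge to $\tau_\sfc^\infty\le\tauconv$, so $\max_\sfc\tau_\sfc^t\to\max_\sfc\tau_\sfc^\infty\le\tauconv<\tauconv+\delta$; hence $T<\infty$ and $\tau_\sfc^T\le\tauconv+\delta$ for every $\sfc$.

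First I would prove the ``$\stackrel{\text{a.s.}}{=}$'' equalities by writing each random empirical error rate as a functional of the empirical joint law of the pairs $\{(\bx_\ell,\bs_\ell^T)\}_{\ell\in[L]}$ grouped by column block. For $\pMD$, the numerator is $\sum_\ell\ind\{\bx_\ell\ne\bzero\}\,\ind\{h_{T,\sfc(\ell)}(\bs_\ell^T)=\bzero\}$ and the denominator is $\Ka=\sum_\ell\ind\{\bx_\ell\ne\bzero\}$. By \eqref{eq:SC_sparc_eff_obs} and the standard pseudo-Lipschitz (Wasserstein-$2$) form of the SE guarantee, the empirical law of $(\bx_\ell,\bs_\ell^T)$ over $\ell\in\mc{L}_\sfc$ converges a.s.\ to the law of $(\bbx_{\sec},\bbx_{\sec}+\bg_\sfc^T)$; the relevant test functions are indicators of MAP decision regions, which are bounded and continuous almost everywhere under this continuous Gaussian limit, so their block-wise empirical averages converge to the corresponding probabilities. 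Normalizing by $L$ (each block holds a fraction $1/\C$ of the sections, of which a fraction $\alpha$ are active) gives $\tfrac1L\sum_\ell\ind\{\bx_\ell\ne\bzero,\,h_{T,\sfc(\ell)}(\bs_\ell^T)=\bzero\}\to\tfrac\alpha\C\sum_\sfc\P(h_{T,\sfc}(\bbx_{\sec,\a}+\bg_\sfc^T)=\bzero)$ and $\tfrac1L\Ka\to\alpha$ a.s.\ (the latter by the law of large numbers on the activity indicators). Since the denominator tends to $\alpha>0$, the continuous-mapping theorem yields the claimed limit for $\pMD$, and an identical argument gives $\pAUE$. For $\pFA$ the denominator is $\hKa=\sum_\ell\ind\{h_{T,\sfc(\ell)}(\bs_\ell^T)\ne\bzero\}$, whose normalized limit is the (positive) quantity $\tfrac1\C\sum_\sfc[(1-\alpha)\P(h_{T,\sfc}(\bg_\sfc^T)\ne\bzero)+\alpha\P(h_{T,\sfc}(\bbx_{\sec,\a}+\bg_\sfc^T)\ne\bzero)]$; dividing the false-alarm count limit $\tfrac{1-\alpha}\C\sum_\sfc\P(h_{T,\sfc}(\bg_\sfc^T)\ne\bzero)$ by it and simplifying yields the stated ratio.

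Next I would evaluate the per-block probabilities in closed form. Since $p_{\bbx_{\sec}}$ is a mixture of $\bzero$ and the symmetric atoms $\{\sqrt E\,\be_j\}_{j=1}^M$ observed in i.i.d.\ Gaussian noise of variance $\tau$, the MAP rule $h_{T,\sfc}$ reduces to a threshold test on the largest coordinate: it outputs $\sqrt E\,\be_{j^\star}$ with $j^\star=\argmax_j s_j$ when $s_{j^\star}/\sqrt\tau>\xi(\tau)+\tfrac12\sqrt{E/\tau}$, and outputs $\bzero$ otherwise. Writing $\bs=\bbx_{\sec}+\sqrt\tau\,\bz$ with $\bz\sim\normal_M(\bzero,\bI)$ and integrating out the standard Gaussian coordinates then gives, for active inputs, $\P(h_{T,\sfc}(\bbx_{\sec,\a}+\bg_\sfc^T)=\bzero)=\epsMD(\tau_\sfc^T)$ and $\P(h_{T,\sfc}(\bbx_{\sec,\a}+\bg_\sfc^T)\notin\{\bbx_{\sec,\a},\bzero\})=\epsAUE(\tau_\sfc^T)$ (the AUE identity follows from splitting on whether the maximal coordinate is the true one and recognizing the $\max$ inside the expectation in \eqref{eq:sparc_epsAUE_def}), and for silent inputs $\P(h_{T,\sfc}(\bg_\sfc^T)\ne\bzero)=1-\Phi(\xi(\tau_\sfc^T)+\tfrac12\sqrt{E/\tau_\sfc^T})^{M}$; substituting these into the $\pFA$ ratio reproduces $\epsFA(\tau_\sfc^T)$ block-by-block.

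Finally I would pass to the stated upper bounds. For $\pMD$ and $\pAUE$ the limit is the block average $\tfrac1\C\sum_\sfc\epsMD(\tau_\sfc^T)$ (resp.\ with $\epsAUE$), while for $\pFA$ the ratio of block sums is a mediant of the per-block ratios $\epsFA(\tau_\sfc^T)$ and therefore does not exceed $\max_\sfc\epsFA(\tau_\sfc^T)$. Combined with $\tau_\sfc^T\le\tauconv+\delta$ and the monotonicity (non-decreasingness) of $\tau\mapsto\epsMD(\tau),\epsFA(\tau),\epsAUE(\tau)$, all three inequalities follow. I expect the main obstacle to be twofold. First, making the SE step rigorous for functionals that are discontinuous (indicators of decision regions) and non-Lipschitz (ratios with a random denominator): this requires sandwiching the indicators between Lipschitz functions and using that the limiting Gaussian law places no mass on the decision boundaries, together with the strict positivity of the limiting denominators. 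Second, and more delicate, is verifying monotonicity of $\epsMD,\epsFA,\epsAUE$ in $\tau$: because the MAP threshold $\xi(\tau)+\tfrac12\sqrt{E/\tau}$ itself depends on $\tau$, this does not follow from a simple channel-degradation argument and must be checked directly from the closed-form expressions (e.g.\ by differentiation after the change of variable $r=\sqrt{E/\tau}$).
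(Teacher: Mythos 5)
Your proposal is correct and follows essentially the same route as the paper's proof: the state-evolution convergence result combined with a sandwiching argument and the law of large numbers to establish the almost-sure equalities, an explicit reduction of the MAP hard-decision rule to a largest-coordinate threshold test to obtain the closed-form per-block probabilities, and finally the fixed-point bound $\tau_\sfc^T \le \tauconv + \delta$ from Lemma \ref{lem:sparc_fixed_point} to get the stated inequalities. If anything, you are more careful than the paper at the final step: the paper simply substitutes $\tau_\sfc^T \le \tauconv+\delta$ into the per-block expressions without remarking that this requires monotonicity of $\tau\mapsto\epsMD(\tau),\epsFA(\tau),\epsAUE(\tau)$ (nontrivial, since the MAP threshold depends on $\tau$) together with the mediant bound you note for the $\pFA$ ratio, both of which you correctly flag as needing a direct verification.
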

\begin{proof}
The first equality in each of  \eqref{eq:asymp_pMD_sparc}--\eqref{eq:asymp_pAUE_sparc} follows from the definition of $\pMD, \pFA$ and $\pAUE$ in \eqref{eq:def_my_pMD}--\eqref{eq:def_my_pAUE}.  The second equality in each of in \eqref{eq:def_my_pMD}--\eqref{eq:def_my_pAUE}  can be shown using the state evolution convergence result  and a sandwiching argument, analogous to the proof of \cite[Theorem 1]{hsieh2022near}.  As an example, we sketch the proof of  the second equality in \eqref{eq:asymp_pMD_sparc}.
Applying results from  \cite{bayati2011dynamics, rush2017capacity, rush2021capacity}, the joint empirical distribution of the AMP iterates converges as follows, to a law specified by the state evolution parameters. For any Lipschitz test function $\varphi: \reals^{M} \times \reals^{M} \to \reals$ and $t\ge 1$, we have
\begin{align}
    \label{eq:Lip_conv_sc}
    \lim_{L\to \infty} \frac{1}{L/\C}\sum_{\ell\in \mc{L}_\sfc}
    \varphi(\bx_\ell, \bs_\ell^t) \stackrel{a.s.}{=} \E\left[\varphi (\bar{\bx}_{\sec}, \bar{\bx}_{\sec} + \bg^t_\sfc)\right], \quad \bg_\sfc^t\sim \normal_M(\bzero, \tau_{\sfc}^t\bI),
\end{align}
\edit{where we recall that $\mc{L}_\sfc= \{(\sfc-1)L/\C+1, \dots, \sfc L/\C\}$}.
The claim in \eqref{eq:asymp_pMD_sparc}  requires a test function $\varphi$ that is defined via indicator functions, which are not Lipschitz. We handle this by sandwich
the required function $\varphi$ between two Lipschitz functions $\varphi_{\epsilon, -}(\bx_\ell, \bs_\ell^t)$, $\varphi_{\epsilon, +}(\bx_\ell, \bs_\ell^t)$  that both converge to $\varphi$ as $\epsilon\to 0$). This allows us to apply \eqref{eq:Lip_conv_sc} to  $ \varphi_{\epsilon,-}$ and $\varphi_{\epsilon,+}$, and translate them to $\varphi$ by taking $\epsilon\to 0$) and applying the dominated convergence theorem.  We refer the reader to the proofs of  \cite[Theorem 1]{hsieh2022near} or    \cite[Theorem 1]{liu2024LDPC} for details of the sandwich argument. Hence, for $t\ge 1$ we almost surely have
\begin{align}
      & \lim_{L\to \infty}  \ind\{\Ka\neq0\}\cdot \frac{1}{\Ka}\sum_{\ell\in[L]: \bx_\ell \neq \bzero}\ind\{\hat{\bx}_\ell^{t+1} = \bzero\} \nonumber\\
     & =\left[\lim_{L\to \infty}\ind\{\Ka\neq0\} \frac{L}{\Ka}\right]\cdot 
     \left[\lim_{L\to \infty}\frac{1}{L}\sum_{\ell=1}^L  \ind\{\bx_\ell \neq \bzero\; \text{and}\; \hat{\bx}_\ell^{t+1} = \bzero\}\right]\nonumber\\
     & \stackrel{\text{(a)}}{=} \frac{1}{\alpha} \cdot \frac{1}{\C}\sum_{\sfc=1}^{\C}\lim_{L\to \infty}\frac{1}{L/\C}\sum_{\ell\in \mc{L}_\sfc}\ind\left\lbrace\bx_\ell\neq \bzero\; \text{and}\; h_{t,\sfc}(\bx_\ell + \bg_\sfc^t)=\bzero \right\rbrace\nonumber\\
     & \stackrel{\text{(b)}}{=}   \frac{1}{\C}\sum_{\sfc=1}^{\C}\P\left( h_{t, \sfc}(\bbx_{\sec,\a} + \bg_\sfc^t) =\bzero \right), \label{eq:lim_expectation_MD}
   \end{align}
 where (a) holds by the strong law of large numbers, and (b) by \eqref{eq:Lip_conv_sc} combined with the sandwich argument.    
 
 In the rest of the proof,  we prove the last  inequality in each of  \eqref{eq:asymp_pMD_sparc}--\eqref{eq:asymp_pAUE_sparc}. Recall from \eqref{eq:hard_decision_sparc}--\eqref{eq:sparc_single_section_channel} that for  $\sfc\in [\C]$ and $\ell\in \mc{L}_\sfc$, the hard-decision step   in iteration $t\ge 1$ takes the form:
\begin{align}\label{eq:hard_decision_sparc_1}
\hbx_\ell^{t+1}&=   h_{t, \sfc}(\bs_\ell^t) = \argmax_{\bx' \in \mc{X}} \P\left(\bbx_{\sec} = \bx'\, |\, \ \bbx_{\sec}+ \bg^t_\sfc = \bs_\ell^t\right)\nonumber\\
& =  \argmax_{\bx'\in \mc{X}}\left\lbrace \ln p_{\bbx_{\sec}}(\bx')+ \frac{(\bs_\ell^t)^\top \bx'}{\tau_{\sfc}^t} - \frac{(\bx')^\top \bx'}{2\tau_\sfc^t}\right\rbrace\nonumber\\
& = \argmax_{\bx'\in \mc{X}}
\left\lbrace
    \ln \frac{\alpha}{M} + \frac{(\bs_\ell^t)^\top\bx'}{\tau_{\sfc}^t} - \frac{E}{2\tau_{\sfc}^t} \;  \text{ for}\;  \bx'\ne \bzero\ , \;\ 
        \ln(1-\alpha) \; \text{ for}\;  \bx'=\bzero
\right\rbrace,
\end{align}
where we  recall  $\mc{X} = \{\sqrt{E} \be_j, \, \forall j\in[M]\}\cup \bzero$. Using \eqref{eq:hard_decision_sparc_1} and assuming without loss of generality that $\bx_{\ell} = \sqrt{E}\be_1$, we first derive the expressions for  $\P(h_{t,\sfc}(\bbx_{\sec,\a} + \bg_{\sfc}^t) = \bzero)$, $\P(h_{t,\sfc}(\bg_{\sfc}^t) \ne \bzero )  $ and $\P(h_{t,\sfc}(\bbx_{\sec,\a}+\bg_{\sfc}^t) \ne \bbx_{\sec,\a} ) $,  which form the left side of the inequalities in \eqref{eq:asymp_pMD_sparc}--\eqref{eq:asymp_pAUE_sparc}.  

Letting $z_1, \ldots,  z_M \sim_{\text{i.i.d.}} \normal(0,1)$ and   $\xi(\tau) = \ln\left(\frac{M}{\alpha} (1-\alpha)\right)/\sqrt{E/\tau}$, the hard-decision rule in   \eqref{eq:hard_decision_sparc_1} implies that 
\begin{align}\label{eq:h_active_zero}
    &\P\left(h_{t, \sfc}(\bbx_{\sec,\a}+\bg_\sfc^t)=\bzero\right)\nonumber\\
     &= \P\left(\ln(1-\alpha) > \max\left\lbrace\ln\frac{\alpha}{M} + z_1\sqrt{\frac{E}{\tau_{\sfc}^t}} +\frac{E}{2\tau_{\sfc}^t}\ , \ \ln\frac{\alpha}{M} + z_j\sqrt{\frac{E}{\tau_{\sfc}^t}} - \frac{E}{2\tau_{\sfc}^t} \;\, \text{for all } 2\le j\le M\right\rbrace\right)\nonumber\\
    & = \P\left(z_1< \xi(\tau_{\sfc}^t) - \frac{1}{2}\sqrt{\frac{E}{\tau_{\sfc}^t}}\right) \prod_{j=2}^M\P\left(z_j  < \xi(\tau_{\sfc}^t) + \frac{1}{2}\sqrt{\frac{E}{\tau_{\sfc}^t}}\right)\nonumber\\
    & = \Phi\left(\xi(\tau_{\sfc}^t) - \frac{1}{2}\sqrt{\frac{E}{\tau_{\sfc}^t}}\right) 
    \Phi\left(\xi(\tau_{\sfc}^t) +\frac{1}{2}\sqrt{\frac{E}{\tau_{\sfc}^t}}\right)^{M-1}.
    %
\end{align}
        Similarly, we have
        \begin{align}
&  \P\left(h_{t,\sfc}\left(
    \bbx_{\sec,\a} + \bg_{\sfc}^t\right)  \notin \{\bbx_{\sec,\a}, \bzero\}\right)\nonumber\\
    & = \P\left(\ln\frac{\alpha}{M} + z_j\sqrt{\frac{E}{\tau_{\sfc}^t}} - \frac{E}{2\tau_{\sfc}^t} > \max\left\lbrace\ln (1-\alpha),\  
    \ln\frac{\alpha}{M} + z_1\sqrt{\frac{E}{\tau_{\sfc}^t}} + \frac{E}{2\tau_{\sfc}^t}\right\rbrace,\; \text{for some }2\le j\le M\right) \nonumber\\
    & = 1- \P\left(z_j < \max\left\lbrace \xi(\tau_{\sfc}^t ) + \frac{1}{2}\sqrt{\frac{E}{\tau_{\sfc}^t}}\ ,\ z_1+  \sqrt{\frac{E}{\tau_{\sfc}^t}}\right\rbrace, \; \text{for all }j\ge 2\right)\nonumber\\
    & = 1- \E_{z_1} \left[\Phi\left(\max\left\lbrace\xi(\tau_{\sfc}^t ) + \frac{1}{2}\sqrt{\frac{E}{\tau_{\sfc}^t}}\ ,\  z_1+\sqrt{\frac{E}{\tau_{\sfc}^t}}\right\rbrace\right)^{M-1}\right],\label{eq:h_active_neq_gt}
\end{align}
and
\begin{align}
    \P\left(h_{t,\sfc}\left(
         \bg_{\sfc}^t\right) \neq \bzero \right)
        & = 1- \P\left(h_{t,\sfc}\left(
         \bg_{\sfc}^t\right) =\bzero\right)\nonumber\\
         & = 1- \P\left(\ln (1-\alpha) > \ln \frac{\alpha}{M} + z_j\sqrt{\frac{E}{\tau_{\sfc}^t}} -\frac{E}{2\tau_{\sfc}^t}\,,\;\text{for all } j\in [M]\right)\nonumber\\
     &  = 1- \Phi\left(\xi(\tau_{\sfc}^t ) + \frac{1}{2}\sqrt{\frac{E}{\tau_{\sfc}^t}}\right)^{M}. \label{eq:P_hG_nonzero}
\end{align}
Finally, by substituting \eqref{eq:h_active_zero}--\eqref{eq:P_hG_nonzero} into the left side of each inequality in \eqref{eq:asymp_pMD_sparc}--\eqref{eq:asymp_pAUE_sparc},  and applying the result from Lemma \ref{lem:sparc_fixed_point} along with   $\max_{\sfc\in[\C]} \tau_{\sfc}^T\le \tauconv + \delta$ for some $\delta>0$, we obtain  the upper bounds on the asymptotic error probabilities. 
\end{proof}

When all the users are always active, i.e., $\alpha=1$, the schemes based on random  coding and AMP decoding, along with their asymptotic error analysis,  reduce to their counterparts in \cite{hsieh2022near, kowshik2022improved}. The  error probabilities $\pMD$ and $\pFA$ vanish, leaving $\pAUE$ as the only nonzero error probability, which is equivalent to the per-user probability of error,  $\PUPE := \E\big[\frac{1}{L}\sum_{\ell=1}^L\ind\{\hbx^{T+1}_\ell\ne \bx_\ell\} \big]$. Formally,  Corollary \ref{cor:asymp_error_alpha=1} below follows from  Theorem \ref{thm:sparc_asymp_errors} by taking  $\alpha=1$.
\begin{cor}[Asymptotic achievability bound for $\alpha=1$]\label{cor:asymp_error_alpha=1}
    Consider the setting of Theorem \ref{thm:sparc_asymp_errors} with $\alpha=1$. Let  $\tauconv$ be as defined in \eqref{eq:lem_tau_bar_sparc}, 
    and  $\hbx^{t+1}$ be the  hard-decision estimate of $\bx$ in iteration $t$, defined in \eqref{eq:hard_decision_sparc}.
    Fix any $\delta>0$, and let $T$ denote the first iteration for which $\max_{\sfc\in[\C]} \tau_{\sfc}^T\le \tauconv + \delta$. 
    Then in iteration $T$,  the per-user probability of error ($\PUPE$) achieved by the  AMP decoder satisfies the following almost surely:
    \begin{align}
        \lim_{L\to\infty}\PUPE&=\lim_{L\to\infty}\E\left[\frac{1}{L}\sum_{\ell=1}^L \ind\{\hbx_\ell^{T+1}\ne \bx_\ell\}\right]\le   
        1-\E_z\left[ \Phi\left( z+\sqrt{\frac{E}{\bar{\tau}_\vartheta + \delta}} \right)^{M-1}\right], \label{eq:asymp_pAUE_sparc_alpha=1}
       \end{align}
       where $z\sim \normal(0,1) $, and  the limit is taken with $L/n\to \mu$ held constant.
\end{cor}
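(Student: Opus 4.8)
The plan is to specialize Theorem~\ref{thm:sparc_asymp_errors} to $\alpha=1$ and show that the only surviving error probability is $\pAUE$, which then coincides with the $\PUPE$.

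First I would record what changes at $\alpha=1$. Every user is active almost surely, so $\bx_\ell\neq\bzero$ for all $\ell\in[L]$, $\Ka=L$, and the prior reduces to $p_{\bbx_{\sec}}=p_{\bbx_{\sec,\a}}$ with no mass at $\bzero$. The state evolution recursion \eqref{eq:SC_SE_sparc_phi}--\eqref{eq:SC_SE_sparc_psi}, the fixed-point bound of Lemma~\ref{lem:sparc_fixed_point}, and the distributional limit \eqref{eq:Lip_conv_sc} all continue to hold, so $\tauconv$ is well defined and $\max_{\sfc\in[\C]}\tau_\sfc^T\le\tauconv+\delta$ at the designated iteration $T$, exactly as in the theorem. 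Examining the MAP hard-decision rule \eqref{eq:hard_decision_sparc_1}, the log-prior assigned to the silent hypothesis $\bx'=\bzero$ is $\ln(1-\alpha)=-\infty$, so the $\argmax$ never selects $\bzero$; equivalently, the threshold $\xi(\tau)$ defined in Theorem~\ref{thm:sparc_asymp_errors} diverges to $-\infty$. Hence $\hbx_\ell^{T+1}\neq\bzero$ deterministically, which forces $\pMD=0$. Since no user is ever declared silent and every user is active, the per-user error event $\{\hbx_\ell^{T+1}\neq\bx_\ell\}$ coincides with the active-user-error event $\{\hbx_\ell^{T+1}\notin\{\bx_\ell,\bzero\}\}$; thus the middle expression in \eqref{eq:asymp_pAUE_sparc_alpha=1} equals $\pAUE$, giving the first equality directly from the definition of $\PUPE$.

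Next I would invoke the $\pAUE$ characterization \eqref{eq:asymp_pAUE_sparc} and simplify it at $\alpha=1$. With the silent hypothesis removed, the MAP comparison runs over the $M$ active candidates only, so (taking $\bbx_{\sec,\a}=\sqrt{E}\be_1$ without loss of generality) a block-$\sfc$ error occurs iff $z_j>z_1+\sqrt{E/\tau_\sfc^T}$ for some $j\ge2$, giving $\P\big(h_{T,\sfc}(\bbx_{\sec,\a}+\bg_\sfc^T)\neq\bbx_{\sec,\a}\big)=1-\E_{z}[\Phi(z+\sqrt{E/\tau_\sfc^T})^{M-1}]$. This is precisely the $\xi(\tau)\to-\infty$ simplification of \eqref{eq:sparc_epsAUE_def}, in which the term $\xi(\tau)+\frac12\sqrt{E/\tau}$ drops out of the maximum. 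Averaging over $\sfc\in[\C]$, using monotonicity of this expression in $\tau$ together with $\tau_\sfc^T\le\tauconv+\delta$, yields
\begin{align}
\pAUE\;\le\;1-\E_{z}\!\left[\Phi\!\left(z+\sqrt{\frac{E}{\tauconv+\delta}}\right)^{M-1}\right],
\end{align}
which is the right-hand side of \eqref{eq:asymp_pAUE_sparc_alpha=1}.

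The computations are routine specializations of Theorem~\ref{thm:sparc_asymp_errors}, and I do not expect a genuine obstacle. The only point meriting care is the rigorous elimination of the silent hypothesis from the decision rule; this is immediate at $\alpha=1$ because $\bzero\notin\mathrm{supp}(p_{\bbx_{\sec,\a}})$ by construction, so there is no need to interpret the $-\infty$ log-prior as a limiting artifact.
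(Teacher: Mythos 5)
Your proposal is correct and follows essentially the same route as the paper, which proves the corollary simply by specializing Theorem \ref{thm:sparc_asymp_errors} to $\alpha=1$: the silent hypothesis drops out of the MAP rule \eqref{eq:hard_decision_sparc_1} (equivalently $\xi(\tau)\to-\infty$), so $\pMD=\pFA=0$, $\PUPE$ coincides with $\pAUE$, and the bound \eqref{eq:sparc_epsAUE_def} reduces to $1-\E_z\big[\Phi\big(z+\sqrt{E/(\tauconv+\delta)}\big)^{M-1}\big]$ via monotonicity in $\tau$. Your write-up merely makes explicit the details the paper leaves implicit.
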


Recall  that $\tauconv$  takes different values depending on the denoiser used by the AMP decoder.  
With $\eta_t^{\Bayes}$ as the denoiser,  \eqref{eq:asymp_pAUE_sparc_alpha=1}  matches the bound by Hsieh et al.\@ \cite{hsieh2022near}; see equations  (22) and (37) in \cite{hsieh2022near}. When the AMP decoder employs $\eta_t^{\marginal}$ as its denoiser, \edit{Eq.~\eqref{eq:asymp_pAUE_sparc_alpha=1}} offers a tighter upper bound on $\PUPE$ than the bound by Kowshik  in \cite[Thm.\@ IV.3]{kowshik2022improved}. This is because while our bound assumes the optimal section-wise MAP  rule \eqref{eq:hard_decision_sparc} in the hard-decision step, the latter uses the suboptimal entrywise MAP rule, coupled with the following union bound:
 \begin{align}
     \prob\left(\hat{\bx}_\ell^{t+1} \neq \bx_\ell\right) = \prob\left((\hat{x}_{\ell}^{t+1})_j \neq (x_\ell)_{j}\ \text{for any }j\in [M]\right) \le \sum_{j=1}^M\prob\left((\hat{x}_\ell^{t+1})_j \neq (x_\ell)_{j}\right).\label{kowshik_union}
\end{align}  
Moreover,  both the bound in  \eqref{eq:asymp_pAUE_sparc_alpha=1} and that in \cite[Thm.\@ IV.3]{kowshik2022improved} involve only the distribution function of  the standard normal or its expectation, and are thus equally efficient to compute for larger $k$. 

\subsection{Numerical Results}\label{sec:sparc_ach_numerical}


We numerically evaluate  the asymptotic error bounds on $\pMD, \pFA$ and $\pAUE$ in Theorem~\ref{thm:sparc_asymp_errors} as the system size $n,L\to\infty$ with $L/n \to \mu$ held constant. We also consider the limit of large base matrices, with  $\Lambda\to\infty$ and then $\omega\to\infty$ (hence $ \vartheta = 1+ \frac{\omega-1}{\Lambda}\to 1$ and $\muin=\vartheta\mu \to \mu$). 
Our Python code is available at \cite{liu_amp_github}, and implementation details are given in Appendix~\ref{appx:implementation}.

\begin{figure}[t!]
    \centering
    \includegraphics[width=0.55\textwidth]{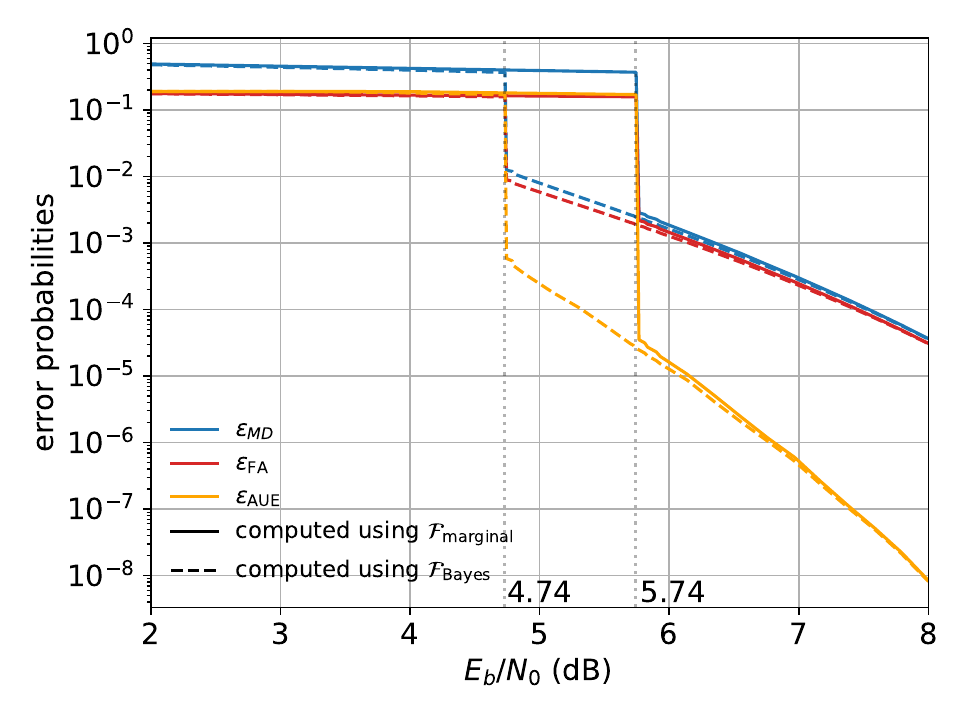}
    \vspace*{-0.4cm}
    \caption{Asymptotic error bounds $\epsMD$, $\epsFA$ and $\epsAUE$  vs.\@  $E_b/N_0$ with $\Lambda\to \infty$ then $\omega\to \infty$,  
    and $ \delta\to 0$ in \eqref{eq:asymp_pMD_sparc}--\eqref{eq:asymp_pAUE_sparc}. $k=6, \alpha=0.7$, $ \mu = 0.28$ (same setting as Fig.~\ref{fig:RA_pot_fn_v_psi_k_6_alpha_0_7}).}
    \label{fig:RA_errors_k_6_alpha_0_7}
\end{figure}
\begin{figure}[t!]
    \centering
    \subfloat[$k=4$.]{\includegraphics[width=0.51\textwidth]{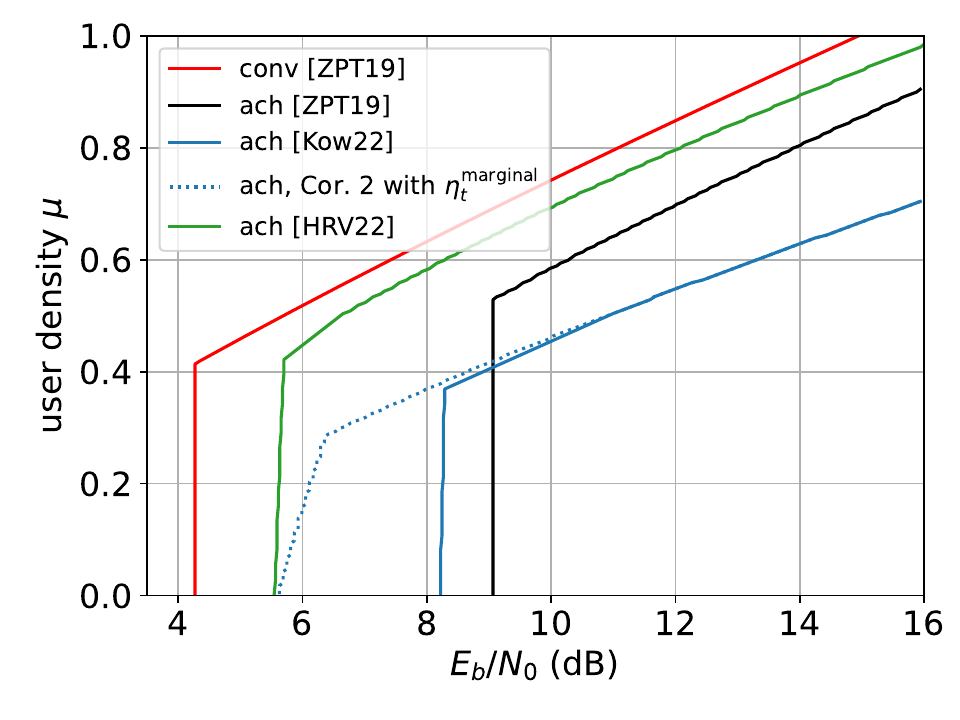}\label{fig:ach_comparison_k4}}
    \hspace*{-0.3cm}
    \subfloat[$k=60$. The achievability bound of Hsieh et al.\@ \cite{hsieh2022near} is omitted as it cannot be computed in an efficient or numerically stable way.]{\includegraphics[width=0.51\textwidth]{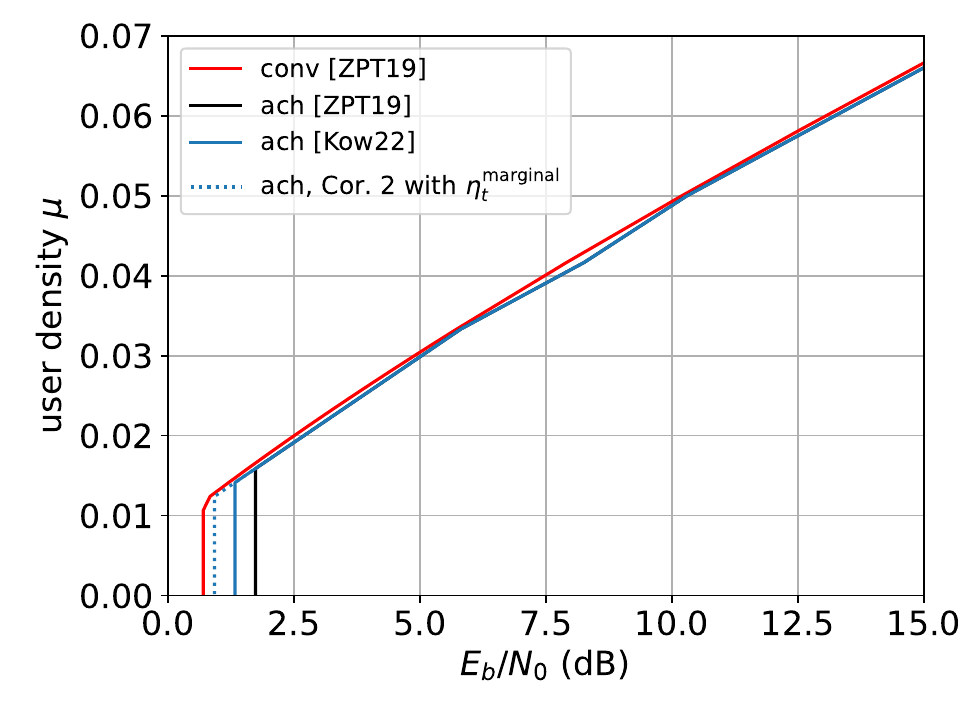}\label{fig:ach_comparison_k60}}
    \caption{Comparison of asymptotic achievability bounds for $\PUPE$ (when $\alpha=1$). Maximum $\mu$  achievable as a function of $E_b/N_0$ for a target   $ \PUPE=10^{-3}$. Subfigures correspond to different user payloads $k$.}\label{fig:alpha=1_kowshik_v_kuan_v_ours}
\end{figure}

Fig.~\ref{fig:RA_errors_k_6_alpha_0_7} plots the error bounds  against  $E_b/N_0$ for the same setting as in Fig.~\ref{fig:RA_pot_fn_v_psi_k_6_alpha_0_7}.
 The dashed curves are obtained using  $\mc{F}_{\Bayes}$ and the solid curves using   $\mc{F}_{\marginal}$. Observe that the error bounds decrease with increasing $E_b/N_0$, with a sharp drop at $4.74$dB or $5.74$dB, depending on whether the bounds are computed using   $\mc{F}_{\Bayes}$ or $\mc{F}_{\marginal}$. This is consistent with   Fig.~\ref{fig:RA_pot_fn_v_psi_k_6_alpha_0_7}, where the minimizer of $\mc{F}_\Bayes$ is significantly  smaller than that of $\mc{F}_{\marginal}$ between $ 4.74$ dB and $ 5.74$dB. In contrast, despite the gap between  the minimizers of $\mc{F}_\Bayes$ and $\mc{F}_{\marginal}$ for $E_b/N_0>5.74$dB in Fig.~\ref{fig:RA_pot_fn_v_psi_k_6_alpha_0_7}, the magnitude of the gap is very small (e.g., $10^{-4}$). 
As a result, the  error bounds associated with  $\mc{F}_\Bayes$ or $\mc{F}_{\marginal}$ nearly coincide with each other  for $E_b/N_0>5.74$dB in Fig.~\ref{fig:RA_errors_k_6_alpha_0_7}.

Moreover, we  observe that  $\epsAUE$ is always lower than $\epsMD$ and $\epsFA$ in  Fig.~\ref{fig:RA_errors_k_6_alpha_0_7}, which is consistent with  the behaviour of the  finite-length bounds in Theorem~\ref{thm:ngo_ach} (see  Fig.~\ref{fig:ngo_ach_vary_r}). Indeed, both the   finite-length  and the asymptotic bounds are  based on random coding, the former using an i.i.d.\@ design matrix and the latter a spatially coupled design. This partially explains why the two  bounds exhibit the same trend. \edit{Furthermore, Fig.~\ref{fig:mua_crit_EbN0_alpha=0.7_diff_k} in Section~\ref{subsec:num_results_CDMA} presents the  active user density $\mu_\a=\alpha\mu$ achievable as a function of $E_b/N_0$ for a target total error  $\max\{\pMD, \pFA\}+\pAUE\le 0.01$, obtained by evaluating  the bounds in Theorem \ref{thm:sparc_asymp_errors}. The results are shown for $\alpha=0.7$,  and  $k=6$ or $k=60$.}

In the absence of random access ($\alpha=1$), Fig.~\ref{fig:alpha=1_kowshik_v_kuan_v_ours} compares our asymptotic achievability bound \eqref{eq:asymp_pAUE_sparc_alpha=1} for $\PUPE$ using $\eta_t=\eta_t^\marginal$  with the   bounds  by Hsieh et al.\@ \cite{hsieh2022near} and Kowshik \cite{kowshik2022improved}. We plot the maximum $\mu$ achievable as a function of $E_b/N_0$ for a target $\PUPE=10^{-3}$. In both  subfigures, the solid red and black curves correspond to the converse and achievability bounds from \cite{zadik2019improved}.

Fig.~\ref{fig:ach_comparison_k4} considers a  small user payload of $k=4$ bits.   The solid green  curve corresponds to the achievability bound by Hsieh et al.\@ \cite{hsieh2022near}, 
the best known one in this regime. 
We observe that our bound (dotted blue)  significantly improves on the bound in  \cite{kowshik2022improved} (solid blue) at lower user densities, and almost matches the near-optimal bound  in \cite{hsieh2022near} 
 (green) at very low user densities. We recall that the improvement arises from our use of  the optimal section-wise MAP rule in the hard-decision step, compared to entry-wise MAP  in 
 \cite{kowshik2022improved} (see \eqref{kowshik_union}).

%
Fig.~\ref{fig:ach_comparison_k60} considers a larger user payload $k=60$, where the bound by Hsieh et al.\@ \cite{hsieh2022near} cannot be computed efficiently without numerical issues. Here  our bound (dotted blue) gives a larger achievable region than the other computable bounds (\cite{zadik2019improved} in solid black and  \cite{kowshik2022improved} in solid blue), which is very close to the converse bound from \cite{zadik2019improved} (solid red).

\section{Efficient CDMA-\edit{T}ype Coding Scheme}
The two  achievability bounds in Theorems \ref{thm:ngo_ach} and \ref{thm:sparc_asymp_errors} above are based on random coding schemes which cannot be implemented for user payloads larger than a few bits, e.g., $k>8$. This is because the  size  $M=2^k$  of the users' codebooks grows exponentially in $k$, posing challenges  in   both  memory  and decoding  complexity. 
In this section, we propose an efficient CDMA-type coding scheme with decoding complexity  that is linear in the payload $k$, and provide performance guarantees that can be compared with the achievability bounds in Theorems \ref{thm:ngo_ach} and \ref{thm:sparc_asymp_errors}.

\subsection{CDMA-\edit{T}ype Coding Scheme}

    
    \edit{As in Section~\ref{sec:asymp_ach}, we consider the simple prior where each user $\ell\in[L]$  is independently active with probability $ \alpha$.  Let   $\bX_\ell \in \reals^{k}$ denote  the message of user $\ell$. We assume  
$\bX_\ell\stackrel{\text{i.i.d.}}{\sim} p_{\bbX}$ for $\ell \in [L]$, where $\bbX \in \reals^{k}$ has the distribution 
\begin{align}\label{eq:p_X}
 p_{\bbX} =   (1-\alpha)\delta_{\bzero} + \alpha \, p_{\bbX_\a}\, .
\end{align}
Here,  $p_{\bbX_\a}$ is   a uniform distribution on the set of BPSK sequences $ \mc{X}_{\a}=\{\pm\sqrt{E_b}\}^k$, where we recall that $E_b$ is the energy per bit.}
Recall that $k$ is  fixed, and does not grow with $n$ and $L$. To construct the codeword of user $\ell$, we  first compute the outer product of a  signature sequence $\ba_\ell\in \reals^{\frac{n}{k}}$  with  $\bX_\ell$ to obtain the matrix $\bC_\ell=\ba_\ell\bX_\ell^\top\in\reals^{\frac{n}{k}\times k}$.  The codeword $\bc_\ell$ of user $\ell$ is then obtained as $\bc_\ell = \vectorize(\bC_\ell) \in\reals^n$. 

For simplicity, let  $\tn=\frac{n}{k}$, and define the matrices $\bA= [\ba_1, \dots, \ba_L]\in \reals^{\tn\times L}$ and $\bX=[\bX_1, \dots, \bX_L]^\top \in \reals^{L\times k}$. 
Then the GMAC channel output in \eqref{eq:standard_gmac} can be rewritten in matrix form as
\begin{equation}
\bY = \sum_{\ell=1}^L\bC_\ell+\bEps = \bA \bX +\bEps \, \in \reals^{\tn\times k},
\label{eq:model_B=1}
\end{equation}
where $\Eps_{ij}\stackrel{\text{i.i.d.}}{\sim}\normal(0, \sigma^2)$ for $i\in \tilde{n}$ and $ j\in [k]$  is  the channel noise.
See Fig.~\ref{fig:Y=AX+W} for an  illustration. 
\edit{This coding scheme can be straightforwardly adapted to  more complicated priors $p_{\bbX}$, including those that model higher-order modulations  such as QPSK instead of BPSK.}

\paragraph*{Spatially coupled  design}
We use a spatially coupled design for $\bA\in \reals^{\tn \times L}$, the matrix of signature sequences. It is defined similarly to the one  in Section \ref{sec:sparc_scheme_for_asymp_ach},  with independent zero-mean Gaussian entries whose variances are specified by a {base matrix} $\bW\in \reals^{\R \times \C}$. 
More precisely, 
\begin{equation}
    \label{eq:construct_sc_A}
A_{i\ell}\stackrel{\text{i.i.d.}}{\sim} \mc{N}\bigg(0,\frac{1}{\tn/\Lr} 
W_{\sfr(i), \sfc(\ell)} \bigg),  \quad \text{for }  i \in [\tn], \ \ell\in [L].
\end{equation}
Similarly to Section \ref{sec:sparc_scheme_for_asymp_ach},  the operators $\sfr(\cdot):[\tn]\rightarrow[\Lr]$ and $\sfc(\cdot):[L]\rightarrow[\Lc]$ here  map a particular row or column index of $\bA$ to its corresponding {row block} or {column block} index.  This construction  ensures  the  columns of $\bA$ have unit squared $\ell_2$-norm asymptotically, and therefore the energy-per-bit constraint is satisfied asymptotically since $ \|\bC_\ell\|_F^2/k =\|\ba_\ell\bX_\ell^{\top}\|_F^2 /k = \|\ba_\ell\|_2^2E_b\to E_b$ as $\tn\to \infty$.  
Our theoretical analysis applies to a generic base matrix $\bW$. For the numerical results, we use the $(\omega, \Lambda)$ base matrix in Definition \ref{def:ome_lamb_rho}.

 \begin{figure}[t!]
\centering
\includegraphics[width=0.7\linewidth]{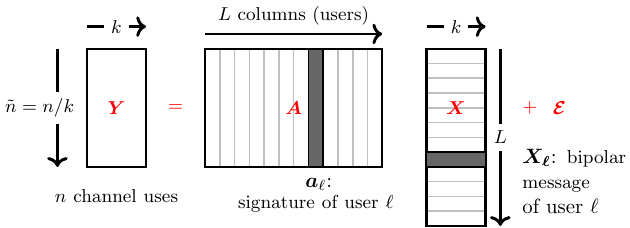}
\vspace{-0.2cm}
\caption{CDMA-type  coding scheme.}
\label{fig:Y=AX+W}
 \end{figure}

\subsection{AMP Decoding and State Evolution}\label{sec:AMP-SE}
This section  describes an AMP decoder to recover $\bX$ from $\bY$ given the spatially coupled matrix $\bA$, and provides the  asymptotic guarantees on its error performance. A key difference from the AMP decoder in Section \ref{sec:AMP_SE_asymp_ach} is that  the signal $\bX$ in \eqref{eq:model_B=1} is a matrix with a row-wise i.i.d. prior, rather than a vector with a section-wise prior.

Starting with an initializer $\bX^0=\left[\E\left[\bbX\right], \, ... \, , \, \E\left[\bbX\right]\right]^\top=\bzero_{L\times k}$,  the  decoder computes the following  in iteration $t\ge 0$: 
\begin{align}
&\bZ^t = \bY -\bA\bX^t +\btZ^t, 
\label{eq:alg_sc_Zt}\\
& \bX^{t+1} =\eta_t\left( \bS^t\right),\quad \text{where}\quad \bS^t= \bX^t+\bV^t.
\label{eq:alg_sc_Xt}    
\end{align}
Here  $\bX^{t+1}$ is an updated estimate of $\bX$ produced using a denoising function $\eta_t: \mathbb{R}^{L\times k}\to \mathbb{R}^{L\times k}$, and $\bZ^t$ can be viewed as a modified residual. The denoiser is assumed to be Lipschitz and separable, acting row-wise on matrix inputs:
\begin{align}
        \eta_t(\bS) = \begin{bmatrix}
            \eta_{t,1}\left(\bS_1\right)^\top\\
            \vdots\\
            \eta_{t,1}\left(\bS_{L/\C}\right)^\top\\
            \vdots\\
            \eta_{t,\C}\left(\bS_{(\C-1)L/\C+1}\right)^\top\\
            \vdots\\
            \eta_{t,\C}\left(\bS_{L}\right)^\top
        \end{bmatrix}
        \setlength{\arraycolsep}{0pt} 
        \begin{array}{ c }
          \left.\kern-\nulldelimiterspace
          \vphantom{\begin{array}{ c }
              \eta_1  \\ 
            \vdots \\ 
              \eta_1 \\   
            \end{array}}
            \right\}\text{$\frac{L}{\C}$ users with $\sfc=1$}\\\\
            \vphantom{\vdots} 
            \left.\kern-\nulldelimiterspace
            \vphantom{\begin{array}{ c }
              \eta_\C \\
              \vdots\\
                \eta_\C\\
          \end{array}}
          \right\}\text{$\frac{L}{\C}$ users with $\sfc=\C$.}
        \end{array}
\label{eq:eta_t_separable}
    \end{align}
Here $\eta_{t,\sfc}: \mathbb{R}^{k}\to \mathbb{R}^{k}$ corresponds to the denoiser for users in block $\sfc\in [\C]$. 

For $t\ge 0$,  $\tilde{\bZ}^t$ and $ \bV^t$ in   \eqref{eq:alg_sc_Zt}--\eqref{eq:alg_sc_Xt} are defined through a matrix 
$\boldsymbol{Q}^t\in\reals^{k\R\times k\C}$, which consists of $ \R \times \C$   submatrices, each of size $k\times k$.  For $\sfr \in [\R], \sfc \in [\C]$, the submatrix $\bQ_{\sfr,\sfc}^t\in \reals^{k\times k}$ is defined as:
\begin{equation}
\bQ_{\sfr,\sfc}^t=\left[\bPhi_{\sfr}^t\right]^{-1}\bT_\sfc^t\,,
   \label{eq:Q_t_def}
\end{equation}
where $\bPhi_\sfr^t, \bT_\sfc^t\in\reals^{k\times k}$ are  deterministic matrices defined later  in \eqref{eq:SC_SE_Phi_t}--\eqref{eq:G_c_t}. 
The $i$th row of matrix $\btZ^t\in\reals^{\tn \times k}$ then takes the form:
\begin{equation}
    \btZ_i^t = {k\muin} \sum_{\sfc =1}^{\Lc} W_{\sfr(i), \sfc} \cdot  \frac{1}{L/{\C}}\sum_{\ell\in \mc{L}_\sfc} \eta_{t-1, \sfc}'\left( \bS_\ell^{t-1} \right)  \, \big[\bQ^{t-1}_{\sfr(i), \sfc} \big]^\top \,\bZ_i^{t-1} \, , \quad  \text{for }  i\in [\tn],
    \label{eq:Z_tilde}
\end{equation}
where  $\muin =  \frac{\R}{\C}\mu$ and $\mc{L}_\sfc= \{(\sfc-1)L/\C+1, \dots, \sfc L/\C\}$.  Here
$\eta_{t,\sfc}'(\bs) = \frac{\de \eta_{t,\sfc}(\bs)}{\de \bs}\in\reals^{k\times k}$ is the derivative (Jacobian) of $\eta_{t,\sfc}$, and  quantities with negative  iteration index are set to all-zero matrices which means at iteration $t=0$, 
\begin{equation}
    \bZ^0=\bY - \bA\bX^0.
    \label{eq:Z_init}
\end{equation}
The $\ell$th row of $\bV^t\in \reals^{L\times k}$ takes the form:
\begin{equation}
    \bV_\ell^t = \sum_{i=1}^{\tn} A_{i,\ell}\Big[\bQ_{\sfr(i),\sfc(\ell)}^t \Big]^\top \bZ_i^t \, , \quad  \text{for }\ell\in[L].
     \label{eq:SC_AMP_V_t}
\end{equation}

\begin{remark}
\normalfont
     We have chosen our notation  to highlight the similarities between the AMP decoder in \eqref{eq:alg_sc_Zt}--\eqref{eq:alg_sc_Xt}  and the one in \eqref{eq:AMP_sparc_z}--\eqref{eq:AMP_sparc_x} in Section \ref{sec:AMP_SE_asymp_ach}. i) The effective observation $\bS^t$ in \eqref{eq:alg_sc_Xt}  defined via \eqref{eq:SC_AMP_V_t} is similar to the expression for $\bs^t$ in \eqref{eq:AMP_sparc_x}. ii)  The debiasing term  $\btZ^t$ in \eqref{eq:alg_sc_Zt}  defined in  \eqref{eq:Z_tilde} is also analogous to the debiasing term $\tbv^t\odot \bz^{t-1}$ in \eqref{eq:AMP_sparc_z}, by noticing that 
 one may rewrite the expression for $\tv^t_i$ in \eqref{eq:AMP_sparc_Q_v} as
\begin{align}
    \tv_i^t=M\muin \sum_{\sfc=1}^{\C} W_{\sfr(i),\sfc}\cdot \frac{1}{LM/\C}\sum_{\ell\in \mc{L}_\sfc}\sum_{j=1}^M
    \left[\frac{ \de \left[\eta_{t-1, \sfc}(s_\ell^{t-1})\right]_j}{\de \left(s_{\ell}^{t-1}\right)_j}\right]^\top Q_{\sfr(i),\sfc}^{t-1} ,\quad  \text{for}\; i\in [n], \label{eq:tilde_vi_expanded}
    \end{align}
    where $Q^t_{\sfr, \sfc} = {\tau_{\sfc}^t}/{\phi_{\sfr}^t}$.
\end{remark}

\paragraph{State evolution (SE)}\label{sec:SC-SE} As we  show  in Theorem \ref{thm:gen_sc_gamp_SE} below, for $t\ge 1,$ in the limit as $L, n\to \infty$ with $L/n\to \mu $,  
the  empirical distribution  of the rows of  $\bZ^t$ in the  block $\sfr\in [\R]$ converges  to a Gaussian  $\mc{N}_k(\bzero, \bPhi_{\sfr}^t) $. Similarly, the empirical distribution of the  rows of $(\bS^t-\bX)$ in block $\sfc\in [\C] $ converges to a Gaussian $\mc{N}_k(\bzero, \bT_{\sfc}^t)$. 
To initialize the state evolution, we make the following assumption \textbf{(A0)} on the AMP initialization $\bX^0 \in \reals^{L \times k}$.

\textbf{(A0)} Denoting by $\bX_\sfc,\,\bX^0_\sfc \in \reals^{L/\C \times k}$ the $\sfc$th block of rows of $\bX,\,\bX^0 \in  \reals^{L\times k}$, 
we assume that there exists a symmetric non-negative definite matrix  $\bXi_\sfc \in \reals^{k \times k}$ for each $\sfc \in [\C]$ such that  we almost surely have
\begin{equation}
\lim_{L \to \infty}  \,    \frac{1}{L/\C} 
\left(\bX_\sfc^0 - \bX_\sfc\right)^\top  \left(\bX_\sfc^0 - \bX_\sfc\right) = \bXi_\sfc.
\label{eq:init_limits_mat}
\end{equation}
The AMP initialization $\bX_\ell^0= \E\left[\bbX\right]=\bzero_{k}$ for $\ell\in [L]$ corresponds to $\bXi_\sfc = \text{Cov}\left[\bbX\right]=\alpha E_b\bI_{k\times k}$. 
The state evolution iteration is initialized with $\bPsi_\sfc^0$ for $\sfc \in [\Lc]$ given by:
\begin{equation}
    \bPsi_\sfc^0 = \bXi_\sfc \label{eq:Psic0}
\end{equation}

The covariance matrices $ \bPhi_{\sfr}^t, \bT_{\sfc}^t\in \reals^{k\times k}$ are then iteratively defined for $t\ge0$ via the following state evolution recursion (SE), for $\sfr\in[\Lr]$ and $\sfc\in[\Lc]$:
\begin{align}
&\bPhi_{\sfr}^t  = \sigma^2 \bI + k\muin\sum_{c=1}^\Lc W_{\sfr,\sfc}\bPsi_\sfc^t\,,\label{eq:SC_SE_Phi_t}\\
&\bPsi_\sfc^{t+1} =  \E \left\lbrace\left[ \eta_{t, \sfc}\left(\bbX +\bG_{\sfc}^{t}\right)-\bbX\right]\left[ \eta_{t,\sfc}\left(\bbX +\bG_{\sfc}^{t}\right)-\bbX\right]^\top\right\rbrace 
 \label{eq:SC_SE_Psi_t},\\
&\text{where } \bG^t_{\sfc} \sim\mathcal{N}_k(\bzero, \bT_{\sfc}^t)\,\perp  \, 
\bbX \sim p_{\bbX}, \quad 
 \bT_{\sfc}^t ={\left[\sum_{\sfr=1}^\Lr W_{\sfr,\sfc}{[\bPhi_\sfr^t]}^{ -1}\right]^{-1}}.
\label{eq:G_c_t}
\end{align}
We can interpret $\bPsi_\sfc^t$ as the   covariance of the asymptotic estimation error between the estimate $\bX^t$ and the true signal matrix $\bX$, for the $\sfc$-th block of users.

\paragraph{Choice of  denoiser $\eta_t$ in AMP}  Since the empirical distribution of the  rows of $\bS^t$ in  block $\sfc$ converges to  the law of  $\bar{\bX}+\bG_{\sfc}^t$ for $t\ge 1$, the Bayes-optimal denoiser $\eta^{\Bayes}_{t,\sfc}: \reals^{k}\to \reals^{k}$ takes the form $\eta_{t,\sfc}^{\Bayes}(\bs)=\mathbb{E} \left[\bbX | \bbX + \bG_{\sfc}^t=\bs \right]$. 
Although this denoiser minimizes the MSE of the estimate, its computational cost  is exponential in $k$, making it impractical for even moderately large payloads. 
 The marginal-MMSE denoiser   $\eta_{t,\sfc}^\marginal: \reals^{k}\to \reals^{k}$ with $ \big[\eta_{t,\sfc}^\marginal(\bs)\big]_j =\E[\bar{X}_j\mid \bar{X}_j+(G^t_\sfc)_j = s_j]$ for $j\in[k]$ enjoys a computational cost linear in $k$, but  performs poorly  as it doesn't account for the correlation between the entries of $\bar{\bX} \in \reals^k$. This correlation is due to the point mass at $\bzero$ in the prior \eqref{eq:p_X}, corresponding to the silent users.

  In light of the limitations of $\eta_{t,\sfc}^\Bayes$ and $\eta_{t,\sfc}^\marginal$, we propose the following \emph{thresholding} denoiser, denoted by $\eta_{t, \sfc}^{\thres}(\bs): \reals^{k}\to \reals^{k}$.   \edit{This denoiser first performs a  Bayesian hypothesis test using $Y=\|\bs\|_2^2/k\in \reals$ as the test statistic, testing  $H_0: \bs\stackrel{d}{=}\bG_\sfc^t$ versus $H_1: \bs\stackrel{d}{=}\bar{\bX}_\a+\bG_\sfc^t$, with prior probabilities $\prob(H_1)= 1 - \prob(H_0) =\alpha$. If $H_0$ is chosen, it   returns the all-zero vector; if  $H_1$ is chosen, it produces an entrywise MMSE nonzero signal estimate.}    Mathematically, we have for $t\ge 1$, $\sfc\in[\C]$ and $j\in [k]$,
\edit{
\begin{align}
   \left[\eta_{t, \sfc}^{\thres}(\bs)\right]_j =  \begin{cases}
      0 & \text{if $H_0$ is chosen}\,,\\
      \E\big[(\bar{X}_{\a})_j\ |\ (\bar{X}_{\a})_j+(G_{\sfc}^t)_j = s_j \big]
      & \text{otherwise}\,,
    \end{cases}   \label{eq:thres_denoiser}    
\end{align}}
where, since $p\big((\bar{X}_\a)_j = \sqrt{E_b}\big) = p\big((\bar{X}_\a)_j=-\sqrt{E_b}\big)=\frac{1}{2}$,
\begin{align*}
\E\big[(\bar{X}_{\a})_j\ |\ (\bar{X}_{\a})_j+(G_{\sfc}^t)_j = s_j \big]
= \sqrt{E_b}\tanh \big( \sqrt{E_b}s_j/ (T_{\sfc}^t)_{j, j}\big).
%
\end{align*}
This denoiser $\eta_{t, \sfc}^\thres$ has the same $\mc{O}(k)$ computational cost  as $\eta_{t, \sfc}^\marginal$, but  can noticeably  enhance decoding  performance because it uses the fact that approximately $(1-\alpha)$ fraction of the rows of the signal matrix $\bX$  are all-zero. 
The Jacobian of this denoising function can also be easily computed.

\edit{\emph{Designing the  hypothesis test:}  We use $\chi_k^2(\mu)$ to denote a chi-squared random variable with  $k$ degrees of freedom and non-centrality parameter $\mu$. If  $\mu_1, \dots, \mu_k$ are the means of  $k$ independent Gaussians with unit variance whose squares are summed to form the chi-squared variable, then $\mu=\sum_{j=1}^k \mu_j^2$.
We approximate the entries of $\bG_\sfc^t$ as  i.i.d.\@ with variance $\bar{T}^t_\sfc\coloneqq\frac{1}{k}\sum_{j=1}^k (T^t_{\sfc})_{j,j}$. Based on this, the test statistic  $Y$ follows a  chi-squared distribution whose  non-centrality parameter depends on whether  $\bs$ contains a non-zero signal. Furthermore, by the central limit theorem, as $k\to \infty$, we have   $\left(\chi_k^2(\mu)-(k+\mu)\right)/ \sqrt{2(k+2\mu)}\stackrel{d}{\to} \mc{N}(0,1)$. Therefore, assuming large $k$, we use the following approximations for the test statistic:
\begin{align}
\begin{split}
       & \text{under }H_0: Y\sim (\bar{T}_\sfc^t/k) \,  \chi_k^2(0) \; \approx   \mc{N}\left(\bar{T}_\sfc^t\,,\, 2 (\bar{T}_\sfc^t)^2/k\right)\, , \\
    %
    & \text{under }H_1: Y\sim (\bar{T}_\sfc^t/k) \, \chi_k^2(kE_b/\bar{T}_\sfc^t) \; \approx  \mc{N}\left( \bar{T}_\sfc^t+ E_b\,,\, 2 \bar{T}_\sfc^t (\bar{T}_\sfc^t + 2E_b)/k\right) \, . 
    \label{eq:H0_H1_chi_squared} 
\end{split}
 \end{align}
 Applying \eqref{eq:H0_H1_chi_squared}, the MAP decision rule chooses $H_0$ if 
 \begin{align}
     Y^2 - \bar{T}_\sfc^t Y + \frac{1}{2E_b} \left(-\bar{T}_\sfc^t E_b^2 -  \bar{T}_\sfc^t(\bar{T}_\sfc^t + 2E_b) \cdot  \frac{4\bar{T}_\sfc^t}{k}\ln\left[\frac{1-\alpha}{\alpha} \sqrt{\frac{\bar{T}_\sfc^t + 2E_b}{\bar{T}_\sfc^t}}\right] \right) < 0\,,\label{eq:chi_squared_thres}
 \end{align}
 and chooses $H_1$ otherwise. This implies that  $\bs$ is always declared to contain  a non-zero signal when $\alpha\in [\alpha^*, 1]$ with $ \alpha^*\coloneqq \left[\sqrt{\bar{T}_\sfc^t/(\bar{T}_\sfc^t+2E_b)} \exp\left(-E_bk/(8\bar{T}_\sfc^t)\right)+1\right]^{-1} $.  When $\alpha\in [0,\alpha^*)$,  the decision follows the threshold test in \eqref{eq:chi_squared_thres}.} Moreover, depending on the application, one might want to  control  $\pMD$ more strictly than $\pFA$, or vice-versa. \edit{In such cases, a principled approach is to apply the threshold test in  \eqref{eq:chi_squared_thres}  by substituting  a  higher value $\hat{\alpha}$ for $\alpha$  to control $\pMD$ more strictly, or a lower value $\hat{\alpha}$ to control $\pFA$ more strictly. The effectiveness of this approach is illustrated in Fig.~\ref{fig:mismatch_alpha} in Section~\ref{subsec:num_results_CDMA}.} 



\paragraph*{Hard-decision step in AMP}
In each iteration $t\ge 1$, the decoder can produce a hard-decision   estimate for each row of the signal matrix. The MAP estimate is analogous to the one   in \eqref{eq:hard_decision_sparc}. 
Additionally, we also consider a hard-decision estimator complementary to the thresholding denoiser $\eta_{t,\sfc}^\thres$ defined in \eqref{eq:thres_denoiser},  which provides an entrywise estimate as follows.  For $\sfc\in [\C]$ and $j\in [k]$,
\begin{align}\label{eq:thres_hard_decision}
   \left[h_{t, \sfc}(\bs)\right]_j =  \begin{cases}
      0 \qquad \text{if $H_0$ is chosen}\,,\\
      \argmax_{x\in \{\pm\sqrt{E_b}\}}\P\big((\bar{X}_{\a})_j=x \ |\ (\bar{X}_{\a})_j+(G_{\sfc}^t)_j = s_j \big)=\text{sign}(s_j)\sqrt{E_b}
      \qquad   \text{otherwise}\,.
    \end{cases}       
\end{align}
This estimator has an $\mc{O}(k)$ computational complexity.

 \subsection{Asymptotic Error Analysis for AMP Decoding}\label{sec:error_analysis_AMP_cdma}

We first present a general state evolution result, Theorem \ref{thm:gen_sc_gamp_SE}, that characterizes the asymptotic performance of AMP for the model \eqref{eq:model_B=1} with a spatially coupled design matrix $\bA$. Theorem \ref{thm:gen_sc_gamp_SE} applies to generic row-wise  priors on the signal $\bX$ and to any Lipschitz denoisers for the AMP algorithm. 
It states that   the  row-wise 
 empirical distributions of the AMP iterates in \eqref{eq:alg_sc_Zt}--\eqref{eq:alg_sc_Xt} converge to the laws of the state evolution random variables defined in \eqref{eq:SC_SE_Phi_t}--\eqref{eq:G_c_t}. Based on this result, we then obtain Theorem \ref{thm:MetricsSE}, which characterizes the asymptotic $\pMD, \pFA$ and $\pAUE$ of the AMP decoder for the CDMA scheme.

 To present Theorem \ref{thm:gen_sc_gamp_SE}, we require the following assumptions on the  general  linear model \edit{$\bY = \bA \bX +\bEps$, where $\bA \in \reals^{\tn\times L}$,$\bX \in \reals^{L \times k}$,  $\bEps \in \reals^{\tn\times k}$.}

\noindent\textbf{(A1)} \edit{As $L,\tn \to \infty$,  $L/\tn\to k\mu$, for some constant $\mu>0$}. The number of columns in the signal matrix, $k$,  is a fixed constant that does not grow  with $n$.

\noindent\textbf{(A2)} Both the signal matrix $\bX$ and noise matrix $\bEps$ are independent of $\bA$. 

\noindent\textbf{(A3)} As $n \to \infty$, the row-wise empirical distributions of the signal and the noise matrices converge in Wasserstein distance to well-defined limits. More precisely, write $\nu_L(\bX)$ and $\nu_{\tn}(\bEps)$ for the row-wise empirical distributions of $\bX$ and $\bEps$, respectively. Then for some $m \in [2, \infty)$,  there exist $k$-dimensional  random variables $\bbX \sim p_{\bbX}$  and $\bbE\sim p_{\bbE}$ with $\int_{\reals^k}\|\bx\|^m dp_{\bbX}(\bx), \int_{\reals^k}\|\bx\|^m dp_{\bbE}(\bx)<\infty$ such that  $d_m(\nu_L(\bX),p_{\bbX}) \rightarrow 0$ and $d_m(\nu_{\tn}(\bEps), p_{\bbE}) \rightarrow 0$ almost surely. Here $d_m(P,Q)$ denotes the $m$-Wasserstein distance between distributions $P$ and $Q$ defined on the same Euclidean probability space.

Theorem \ref{thm:gen_sc_gamp_SE} is stated in terms of \emph{pseudo-Lipschitz} test functions. A function $\xi:\reals^k\rightarrow\reals$ is pseudo-Lipschitz of order $m$ if  $|\xi(\bx)-\xi(\by)|\leq C \left(1+\|\bx\|_2^{m-1}+\|\by\|_2^{m-1}\right)\|\bx-\by\|_2$ for all $\bx,\by\in\reals^k$, for some constant $C>0$. Pseudo-Lipschitz functions of order $m=2$ include the squared difference $\xi(\bu, \bv)=\|\bu - \bv\|_2^2$, and the correlation $\xi(\bu,\bv)=\left\langle\bu, \bv\right\rangle$.

\begin{thm}[State evolution result] Consider the  linear model $\bY = \bA \bX +\bEps \, \in \reals^{\tn\times k}$ with the spatially coupled design  matrix $\bA$ in \eqref{eq:construct_sc_A},
and the AMP decoding algorithm in \eqref{eq:alg_sc_Zt}--\eqref{eq:alg_sc_Xt} with 
a  sequence of Lipschitz continuous denoisers $\{\eta_t\}_{t\ge 0}$. 
Assume   \textbf{(A0)}--\textbf{(A3)} hold. 
Let $\xi: \reals^{k}\times \reals^k \to \reals$ and $\zeta: \reals^{k}\times \reals^k \to \reals$ be  pseudo-Lipschitz test functions of order $m$, where $m$ is specified by \textbf{(A3)}. 
Then, for $t \ge 1$ and  $\sfr \in [\Lr]$ and $\sfc \in [\Lc]$, we almost surely have:
\begin{align}
  &\lim_{L \to \infty} \,  \frac{1}{L/\C} \sum_{\ell \in  \mc{L}_\sfc} \xi \left(\bX_\ell^{t+1} \, , \bX_\ell\right)  = \E\left[ \xi\left(\eta_{t,\sfc}\left(\bbX+{\bG}_{\sfc}^t\right), \bbX\right) \right],     \label{eq:Xt_result} \\
  &\lim_{\tn \to \infty} \,  \frac{1}{\tn/\R} \sum_{i \in  \mc{I}_\sfr} \zeta\left(\bZ_i^t \, , \bEps_i\right)  = \E\left[ \zeta \left( \tilde{\bG}_{\sfr}^t \, , \bbE\right) \right],     \label{eq:Zt_result}
\end{align}
\label{thm:gen_sc_gamp_SE}
where ${\bG}_{\sfc}^t \sim\normal_k(\bzero, \bT_{\sfc}^t)$ and $\tilde{\bG}_{\sfr}^t \sim \mathcal{N}_k\left(\0, \bPhi_\sfr^t\right)$   with $\bPhi_\sfr^t, \bT_{\sfc}^t$  given by \eqref{eq:SC_SE_Phi_t}--\eqref{eq:G_c_t},  and  \\  
$\mc{L}_\sfc:= \left\lbrace(\sfc-1){L}/{\C}+1, \dots, \sfc{L}/{\C}\right\rbrace$ for $\sfc\in [\C]$ and $ \mc{I}_\sfr:= \left\lbrace (\sfr-1)\tn/\R+1, \dots, \sfr\tn/\R\right\rbrace$ for $\sfr\in [\R]$.
\end{thm}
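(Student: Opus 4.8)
The plan is to establish the result by adapting the conditioning technique of Bayati and Montanari \cite{bayati2011dynamics}, as extended to spatially coupled designs in \cite{rush2021capacity}, to the matrix-valued setting. The genuinely new feature relative to those works is that each AMP iterate carries a $k$-dimensional row (with $k$ a fixed constant), so the scalar state evolution parameters must be replaced by the $k \times k$ covariance matrices $\bPhi_\sfr^t$ and $\bT_\sfc^t$, and the Onsager corrections involve the $k\times k$ Jacobians $\eta_{t,\sfc}'$ rather than scalar derivatives. First I would recast \eqref{eq:alg_sc_Zt}--\eqref{eq:alg_sc_Xt} into a canonical abstract form built from two iterate sequences, the residual iterates $\bZ^t \in \reals^{\tn \times k}$ and the effective-observation iterates $\bS^t \in \reals^{L \times k}$, together with the block-dependent nonlinearities $\{\eta_{t,\sfc}\}$. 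The Onsager coefficients implicit in $\btZ^t$ and $\bV^t$ should then be identified as the $\bW$-weighted blockwise averages of $\eta_{t-1,\sfc}'$ contracted against $\bQ_{\sfr,\sfc}^t = [\bPhi_\sfr^t]^{-1}\bT_\sfc^t$, exactly the form the master recursion requires.

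The second step is to handle the two structural complications simultaneously. The asymmetry of the rectangular Gaussian $\bA$ is treated by the standard symmetrization embedding: stack $\bZ^t$ and $\bS^t$ into a single iterate driven by the symmetric Gaussian matrix obtained by placing $\bA$ and $\bA^\top$ in the off-diagonal blocks, so that one abstract recursion captures both lines of \eqref{eq:alg_sc_Zt}--\eqref{eq:alg_sc_Xt}. The spatial coupling is encoded as a variance profile: by \eqref{eq:construct_sc_A} the entries of $\bA$ are independent Gaussians whose variance is constant within each $(\sfr,\sfc)$ block and proportional to $W_{\sfr,\sfc}$. I would therefore run the conditioning argument in its block-constant variance-profile form, now with matrix-valued iterates, treating each row of an iterate as a vector in $\reals^k$ and each denoiser as a Lipschitz map $\reals^k \to \reals^k$.

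The third step is to execute the conditioning lemma for this abstract recursion to obtain the joint asymptotic Gaussianity and to read off the limiting covariances. Conditioning on the $\sigma$-algebra generated by the past iterates, one shows blockwise that each block of $\bZ^t$ (respectively of $\bS^t - \bX$) behaves like a Gaussian matrix with i.i.d.\@ rows of covariance $\bPhi_\sfr^t$ (respectively $\bT_\sfc^t$), up to a vanishing correction. Concentration of the relevant quadratic forms, together with concentration of the blockwise averages $\frac{1}{L/\C}\sum_{\ell \in \mc{L}_\sfc}\eta_{t-1,\sfc}'(\bS_\ell^{t-1})$, then yields \eqref{eq:Xt_result}--\eqref{eq:Zt_result} for pseudo-Lipschitz test functions of order $m$, using \textbf{(A1)}--\textbf{(A3)} to guarantee $m$-Wasserstein convergence of the empirical distributions of $\bX$ and $\bEps$, and \textbf{(A0)} to seed the induction via $\bPsi_\sfc^0 = \bXi_\sfc$. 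A final induction on $t$ verifies that the covariances produced by the conditioning argument coincide with \eqref{eq:SC_SE_Phi_t}--\eqref{eq:G_c_t}: the factor $k\muin W_{\sfr,\sfc}$ in \eqref{eq:SC_SE_Phi_t} arises from summing the block variances against the row prior, and the harmonic-mean form of $\bT_\sfc^t$ in \eqref{eq:G_c_t} comes from aggregating the row-block contributions in the symmetrized recursion.

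The main obstacle will be carrying the matrix-valued Onsager bookkeeping through the conditioning technique while simultaneously retaining the block variance profile. In the scalar case the Onsager coefficient is an average of scalar derivatives, whereas here it is an average of $k\times k$ Jacobians, and one must show that the induced corrections asymptotically decouple the rows and produce exactly the stated covariances rather than an entangled limit. Establishing the requisite concentration of these matrix-valued averages uniformly over the $\C$ (respectively $\R$) blocks, and checking that the $k\times k$ covariance iteration is well posed --- in particular that the matrices $\bPhi_\sfr^t$ remain invertible so that $\bT_\sfc^t$ in \eqref{eq:G_c_t} is well defined --- is where the bulk of the technical effort lies.
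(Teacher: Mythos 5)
Your plan is sound in outline, but it takes a genuinely different route from the paper's proof, and the part you defer as "the main obstacle" is exactly the part the paper's route avoids having to prove. You propose to keep the spatial coupling as an explicit block variance profile and to re-run the Bayati--Montanari conditioning argument \cite{bayati2011dynamics} (after the standard symmetrization embedding) for matrix-valued iterates, as in the spatially coupled extension \cite{rush2021capacity}. The paper never works with a variance-profile matrix: it normalizes the coupling away by defining an i.i.d.\@ Gaussian matrix $\bM$ via $M_{i\ell} = A_{i\ell}/\sqrt{\R\, W_{\sfr(i),\sfc(\ell)}}$ where $W_{\sfr(i),\sfc(\ell)}\neq 0$ (with fresh Gaussians filled in on the zero blocks, see \eqref{eq:M_def}), absorbs the factors $\sqrt{W_{\sfr\sfc}}$ and the matrices $\bQ^t_{\sfr\sfc}$ into block-dependent nonlinearities $\tf_t, \tg_t$ whose outputs are replicated across row/column blocks, so that the abstract iterates live in $\reals^{\tn\times k\R}$ and $\reals^{L\times k\C}$ (see \eqref{eq:ft_choice}--\eqref{eq:gt_choice}), and then invokes an existing abstract state evolution theorem (Theorem \ref{thm:SE_gen}, which follows from \cite{javanmard2013state}, see also \cite{cobo2023bayes}) as a black box. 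The remaining work in the paper is algebraic bookkeeping: an induction showing that the abstract iterates coincide with the algorithm's iterates, $\be^t_{i,\sfr} = -\bZ^t_i + \bEps_i$ and $\bh^{t+1}_{\ell,\sfc} = \bX_\ell - \bS^t_\ell$ (\eqref{eq:eti_claim}--\eqref{eq:ht1j_claim}), and that the abstract SE covariances reduce to $\bPhi_\sfr^t$ and $\bT_\sfc^t$ (\eqref{eq:SE_equiv}). What each approach buys: yours is self-contained and probabilistically transparent, but the master lemma it needs --- Gaussian conditioning with a block variance profile, matrix-valued Onsager terms, and uniform concentration of the $k\times k$ Jacobian averages --- does not exist in the works you cite (\cite{bayati2011dynamics} has no coupling, and \cite{rush2021capacity} has coupling but scalar SE parameters), so carrying it out would essentially mean reproving \cite{javanmard2013state} in greater generality rather than adapting a known argument. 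The paper's reduction makes that unnecessary, and it also disposes of your well-posedness worry for free: $\bPhi_\sfr^t = \sigma^2\bI + k\muin\sum_{\sfc} W_{\sfr\sfc}\bPsi_\sfc^t$ is $\sigma^2\bI$ plus a sum of positive semidefinite matrices, hence always invertible, so $\bT_\sfc^t$ in \eqref{eq:G_c_t} is automatically well defined. If you pursue your route, the concrete task is to state and prove that conditioning lemma; if you instead adopt the paper's route, the concrete task is to design the lifted nonlinearities and verify the two inductions.
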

\begin{proof}
    The proof of Theorem \ref{thm:gen_sc_gamp_SE} is given in Section \ref{app:gen_sc_gamp_proof}.
\end{proof} 
Using Theorem \ref{thm:gen_sc_gamp_SE}, we can obtain  Theorem \ref{thm:MetricsSE} which characterizes 
the asymptotic $\pMD, \pFA$ and $\pAUE$ of the AMP decoder \eqref{eq:alg_sc_Zt}--\eqref{eq:alg_sc_Xt} with a generic Lipschitz denoiser.

\begin{thm}[Asymptotic error probabilities of CDMA scheme]\label{thm:MetricsSE}
Consider the CDMA-type scheme defined in \eqref{eq:model_B=1} with the spatially coupled design matrix $\bA$ in \eqref{eq:construct_sc_A}, with the rows of the signal matrix $\bX$ drawn   i.i.d.\@ according to the prior $p_{\bbX}$ in \eqref{eq:p_X}.
Consider the AMP decoder \eqref{eq:alg_sc_Zt}--\eqref{eq:alg_sc_Xt} with a sequence of Lipschitz continuous denoisers $\{\eta_t\}_{t\ge 0}$ taking the form of \eqref{eq:eta_t_separable}. 
 Let $\hat{\bX}^{t+1} = h_{t}(\bS^t )$ be the hard-decision estimate  in  iteration $t$.
 %
In the limit as $L, n \to \infty$ with $L/n \to \mu$, 
the $\pMD$, $\pFA$ and $\pAUE$ in iteration $t\ge 1$  satisfy the following almost surely:
\begin{align}
&  \lim_{L\to \infty}    \pMD =\lim_{L\to \infty} \E\Bigg[\frac{\ind\{\Ka\neq0\}}{\Ka}\sum_{ \ell\in[L]: \bX_\ell \neq \bzero}\ind\{\hat{\bX}_\ell^{t+1} = \bzero\}\Bigg] 
  = \frac{1}{\Lc}\sum_{\sfc=1}^\Lc\prob \left( h_{t,\sfc}\left(\bbX_\a+ \bG_{\sfc}^t\right) = \bzero \right) , \label{eq:MD_thm}\\ 
& \lim_{L\to \infty}     \pFA=\lim_{L\to \infty}\E\Bigg[ \frac{\ind\{\hKa\ne 0\}}{\widehat{\Ka}}\sum_{\substack{\ell\in[L]:\\\hat{\bX}_\ell^{t+1} \neq \bzero}}\ind\{\bX_\ell=\bzero\} \Bigg] 
 = \left[\frac{\alpha\sum_{\sfc=1}^{\C}\prob(h_{t,\sfc}(\bbX_{\a}+\bG_{\sfc}^t) \neq\bzero)}{(1-\alpha)\sum_{\sfc=1}^{\C}\prob(h_{t,\sfc}(\bG^t_{\sfc})\neq \bzero)}+1\right]^{-1}\label{eq:FA_thm}, \\
 &\lim_{L\to \infty}    \pAUE=\lim_{L\to \infty}\E\Bigg[ \frac{\ind\{\Ka\ne0\}}{\Ka}\sum_{\ell\in[L]: \bX_\ell \neq \bzero}\ind\big\{\hat{\bX}_\ell^{t+1} \notin \{\bX_\ell,  \bzero\}\big\} \Bigg]\nonumber\\
 &\quad\quad\qquad\;\;
 = \frac{1}{\Lc}\sum_{\sfc=1}^\Lc\prob \left( h_{t,\sfc}\left(\bbX_\a+ \bG_{\sfc}^t\right)  \notin \{\bbX_\a, \bzero\}\right).
    \label{eq:AUE_thm}
\end{align}
Here $\bbX_\a\sim p_{\bbX_{\a}}$  and $\bG_{\sfc}^t\sim \mc{N}_k(\bzero, \bT_{\sfc}^t)$ are independent, with their laws given in \eqref{eq:p_X} and \eqref{eq:G_c_t}.
\end{thm}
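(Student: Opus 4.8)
The first equality in each of \eqref{eq:MD_thm}--\eqref{eq:AUE_thm} is simply the definition of $\pMD,\pFA,\pAUE$ from \eqref{eq:def_my_pMD}--\eqref{eq:def_my_pAUE} specialized to the CDMA model \eqref{eq:model_B=1}, where the event $\{w_\ell=\emptyset\}$ is replaced by $\{\bX_\ell=\bzero\}$ and the decoder estimate by the hard-decision row $\hat\bX_\ell^{t+1}=h_{t,\sfc(\ell)}(\bS_\ell^t)$. The substance is in the second equality, which I would establish by combining the state-evolution result of Theorem \ref{thm:gen_sc_gamp_SE} with the strong law of large numbers, following the template used for Theorem \ref{thm:sparc_asymp_errors}.

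Take $\pMD$ as the model case. Writing $\Ka=\sum_{\ell=1}^L\ind\{\bX_\ell\neq\bzero\}$ and factoring,
\begin{align*}
\ind\{\Ka\neq0\}\frac{1}{\Ka}\sum_{\ell:\bX_\ell\neq\bzero}\ind\{\hat\bX_\ell^{t+1}=\bzero\}
=\Big[\ind\{\Ka\neq0\}\tfrac{L}{\Ka}\Big]\cdot\Big[\tfrac1L\textstyle\sum_{\ell=1}^L\ind\{\bX_\ell\neq\bzero,\ h_{t,\sfc(\ell)}(\bS_\ell^t)=\bzero\}\Big].
\end{align*}
Since the rows of $\bX$ are i.i.d.\@ from $p_{\bbX}$ with $\prob(\bbX\neq\bzero)=\alpha$, the strong law gives $\ind\{\Ka\neq0\}L/\Ka\to1/\alpha$ almost surely. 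For the second factor I would split the sum over the $\C$ column blocks, $\tfrac1L\sum_\ell=\tfrac1\C\sum_{\sfc=1}^\C\tfrac{1}{L/\C}\sum_{\ell\in\mc{L}_\sfc}$, and invoke Theorem \ref{thm:gen_sc_gamp_SE}, whose proof yields convergence of the block-wise empirical law of the pair $(\bX_\ell,\bS_\ell^t)$ to the law of $(\bbX,\bbX+\bG_\sfc^t)$. Applied to the test function $\varphi(\bx,\bs)=\ind\{\bx\neq\bzero,\ h_{t,\sfc}(\bs)=\bzero\}$ this gives the block limit $\prob(\bbX\neq\bzero,\ h_{t,\sfc}(\bbX+\bG_\sfc^t)=\bzero)=\alpha\,\prob(h_{t,\sfc}(\bbX_\a+\bG_\sfc^t)=\bzero)$, using the mixture structure \eqref{eq:p_X} and independence of $\bG_\sfc^t$ from the activity indicator. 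Multiplying by the $1/\alpha$ prefactor collapses to \eqref{eq:MD_thm}. The cases $\pAUE$ and the numerator of $\pFA$ are identical with $\varphi$ replaced by $\ind\{\bx\neq\bzero,\ h_{t,\sfc}(\bs)\notin\{\bx,\bzero\}\}$ and $\ind\{\bx=\bzero,\ h_{t,\sfc}(\bs)\neq\bzero\}$ respectively.

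The one real subtlety, which I expect to be the main obstacle, is that these test functions $\varphi$ are built from indicators and are therefore not pseudo-Lipschitz, so Theorem \ref{thm:gen_sc_gamp_SE} does not apply directly. I would resolve this exactly as in the proof of Theorem \ref{thm:sparc_asymp_errors}: sandwich each indicator between two pseudo-Lipschitz functions $\varphi_{\epsilon,-}\le\varphi\le\varphi_{\epsilon,+}$ that agree with $\varphi$ outside an $\epsilon$-neighbourhood of the decision boundary of $h_{t,\sfc}$, apply \eqref{eq:Xt_result} to each bound, and let $\epsilon\to0$, invoking dominated convergence (see the detailed argument in \cite[Thm.\@ 1]{hsieh2022near} or \cite[Thm.\@ 1]{liu2024LDPC}). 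The two sandwich limits coincide provided the limiting law of $\bbX+\bG_\sfc^t$ assigns zero probability to the boundary of the MAP/threshold decision regions. This holds because $\bT_\sfc^t=\big[\sum_\sfr W_{\sfr,\sfc}[\bPhi_\sfr^t]^{-1}\big]^{-1}\succ\bzero$ (as $\bPhi_\sfr^t\succeq\sigma^2\bI$), so $\bG_\sfc^t$ admits a density, while the boundary is cut out by finitely many log-likelihood equalities and is hence Lebesgue-null.

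For $\pFA$ the normalization is by $\hKa=\sum_\ell\ind\{h_{t,\sfc(\ell)}(\bS_\ell^t)\neq\bzero\}$ rather than $\Ka$, so instead of the strong law I would write $\pFA$ as the ratio of two block-averaged empirical sums, apply the convergence above to numerator and denominator separately, and take the quotient of the almost sure (nonzero) limits. The denominator limit $\tfrac1\C\sum_\sfc[(1-\alpha)\prob(h_{t,\sfc}(\bG_\sfc^t)\neq\bzero)+\alpha\,\prob(h_{t,\sfc}(\bbX_\a+\bG_\sfc^t)\neq\bzero)]$ and the numerator limit $\tfrac{1-\alpha}{\C}\sum_\sfc\prob(h_{t,\sfc}(\bG_\sfc^t)\neq\bzero)$ combine, after clearing the common factor $1/\C$, into the reciprocal expression in \eqref{eq:FA_thm}. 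Everything beyond the sandwiching step and the boundary-null verification is routine bookkeeping with the mixture prior \eqref{eq:p_X} and the law of large numbers.
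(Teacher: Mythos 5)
Your proposal is correct and follows essentially the same route as the paper: the paper's proof invokes Theorem \ref{thm:gen_sc_gamp_SE} (with the identity-denoiser substitution yielding convergence of the block-wise empirical law of $(\bS_\ell^t,\bX_\ell)$) and then repeats the factorization/SLLN/block-splitting/sandwiching steps of \eqref{eq:Lip_conv_sc}--\eqref{eq:lim_expectation_MD} from the proof of Theorem \ref{thm:sparc_asymp_errors}, exactly as you describe. Your write-up is in fact more detailed than the paper's, notably in justifying that the sandwich limits coincide because the Gaussian law of $\bG_\sfc^t$ is non-degenerate and the decision boundaries of $h_{t,\sfc}$ are Lebesgue-null, a point the paper delegates to cited references.
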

\begin{proof}
We  use Theorem \ref{thm:gen_sc_gamp_SE} to prove   Theorem \ref{thm:MetricsSE}. It is easy to check that assumptions \textbf{(A0)}--\textbf{(A3)} are satisfied. Setting  $\eta_{t,\sfc}(\bs)=\bs$ in \eqref{eq:Xt_result} gives
\begin{align}\label{eq:Xt_result_simplified}
    \lim_{L \to \infty} \,  \frac{1}{L/\C} \sum_{\ell \in  \mc{L}_\sfc} \xi\left(\bS_\ell^t \, , \bX_\ell\right)  \stackrel{a.s.}{=} \E\left[ \xi\left(\bbX+{\bG}_{\sfc}^t, \bbX\right) \right],
\end{align}
for any pseudo-Lipschitz function of order $2$.
The results \eqref{eq:MD_thm}--\eqref{eq:AUE_thm} then  follow from arguments similar to \eqref{eq:Lip_conv_sc}--\eqref{eq:lim_expectation_MD} in the proof of Theorem \ref{thm:sparc_asymp_errors}. 
\end{proof}
\begin{cor}
    The asymptotic error guarantees  in Theorem \ref{thm:MetricsSE} hold for the AMP decoder with any of the three denoisers: $\eta^{\Bayes}_{t,\sfc}, \eta^{\marginal}_{t,\sfc}$, and for $\eta^{\thres}_{t,\sfc}$.
    \end{cor}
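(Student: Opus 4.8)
The strategy is to deduce the corollary from Theorem \ref{thm:MetricsSE}, which already establishes \eqref{eq:MD_thm}--\eqref{eq:AUE_thm} for every Lipschitz continuous denoiser of the separable form \eqref{eq:eta_t_separable}. Each of the three candidate denoisers is manifestly of this separable form, since each is indexed by the block $\sfc$ and applied row-wise; hence the only hypothesis left to check is regularity. I would treat the two Lipschitz denoisers first, for which the corollary is immediate, and then the thresholding denoiser, whose discontinuity is the real content of the statement. It is worth stressing that the discontinuities of the hard-decision maps $h_{t,\sfc}$ are not the issue here: those are already accommodated by the sandwich argument inside the proof of Theorem \ref{thm:MetricsSE}. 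What is new is a discontinuity of the denoiser itself, which sits inside the AMP recursion.

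For the Bayes-optimal denoiser, $\eta^{\Bayes}_{t,\sfc}(\bs)=\E[\bbX\mid \bbX+\bG_\sfc^t=\bs]$ is the posterior mean for a Gaussian channel with noise covariance $\bT_\sfc^t$ and a prior $p_{\bbX}$ supported in the ball of radius $\sqrt{kE_b}$. Its Jacobian equals $[\bT_\sfc^t]^{-1}$ times the posterior covariance of $\bbX$ given $\bs$, and the latter is bounded in operator norm by the squared diameter of the support; hence $\eta^{\Bayes}_{t,\sfc}$ is globally Lipschitz. The marginal-MMSE denoiser acts entrywise, and each coordinate is a scalar posterior mean for the three-point marginal prior on $\{0,\pm\sqrt{E_b}\}$ through a scalar Gaussian channel, whose derivative is the (bounded) posterior variance divided by $(\bT_\sfc^t)_{j,j}$; thus $\eta^{\marginal}_{t,\sfc}$ is Lipschitz as well. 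For these two denoisers Theorem \ref{thm:MetricsSE} applies verbatim.

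The main obstacle is the thresholding denoiser $\eta^{\thres}_{t,\sfc}$ of \eqref{eq:thres_denoiser}, which is \emph{not} Lipschitz: across the decision boundary $\mc{B}_\sfc:=\{\bs:\|\bs\|_2^2/k \text{ solves } \eqref{eq:chi_squared_thres} \text{ with equality}\}$ its output jumps from $\bzero$ to a nonzero entrywise $\tanh$ estimate. Since the test statistic is $\|\bs\|_2^2/k$, the set $\mc{B}_\sfc$ is a sphere, a codimension-one surface in $\reals^k$, which carries zero mass under the absolutely continuous law of the effective observation $\bbX+\bG_\sfc^t$. I would therefore approximate $\eta^{\thres}_{t,\sfc}$ by a family of Lipschitz separable denoisers $\{\eta^{\thres,\epsilon}_{t,\sfc}\}_{\epsilon>0}$ that agree with $\eta^{\thres}_{t,\sfc}$ outside an $\epsilon$-band around $\mc{B}_\sfc$ and interpolate boundedly inside it. Theorem \ref{thm:gen_sc_gamp_SE} applies to each $\eta^{\thres,\epsilon}_{t,\sfc}$, producing state evolution matrices $\bT_\sfc^{t,\epsilon}, \bPhi_\sfr^{t,\epsilon}$. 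Because $\eta^{\thres,\epsilon}_{t,\sfc}$ and $\eta^{\thres}_{t,\sfc}$ differ only on the $\epsilon$-band whose $(\bbX+\bG_\sfc^t)$-mass vanishes as $\epsilon\to0$, the SE update \eqref{eq:SC_SE_Psi_t} is continuous along this approximation, and a short induction on $t$ using dominated convergence gives $\bT_\sfc^{t,\epsilon}\to\bT_\sfc^t$ and $\bPhi_\sfr^{t,\epsilon}\to\bPhi_\sfr^t$ for each fixed $t$.

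It then remains to transfer the error characterization from the approximants to $\eta^{\thres}_{t,\sfc}$ itself. The expressions \eqref{eq:MD_thm}--\eqref{eq:AUE_thm} depend on the denoiser only through the limiting covariance $\bT_\sfc^t$ and the fixed hard-decision rule $h_{t,\sfc}$, so once the per-block empirical distribution of the effective observation $\bS^t$ is shown to converge to $\bbX+\bG_\sfc^t$ under $\eta^{\thres}_{t,\sfc}$, the result follows by the same sandwich-and-dominated-convergence step as in the proof of Theorem \ref{thm:sparc_asymp_errors}. Establishing this last convergence is the delicate part: it requires controlling how the $\epsilon$-band mismatch between $\eta^{\thres,\epsilon}_{t,\sfc}$ and $\eta^{\thres}_{t,\sfc}$ propagates through the finitely many AMP iterations up to $t$, so that the two algorithms produce asymptotically the same empirical law of $\bS^t$ in the joint limit $L\to\infty$ followed by $\epsilon\to0$. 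The measure-zero nature of $\mc{B}_\sfc$ under the nondegenerate Gaussian law of $\bbX+\bG_\sfc^t$ is exactly what keeps this propagated error controllable.
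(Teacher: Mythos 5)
Your treatment of $\eta^{\Bayes}_{t,\sfc}$ and $\eta^{\marginal}_{t,\sfc}$ matches the paper: both are posterior means through Gaussian channels with compactly supported priors, hence Lipschitz with bounded Jacobians, and Theorem~\ref{thm:MetricsSE} applies verbatim. Where you diverge is the thresholding denoiser, and there the comparison is instructive: the paper's entire proof is the assertion that \emph{all three} denoisers ``can be verified to be Lipschitz continuous by direct differentiation,'' followed by an appeal to Theorem~\ref{thm:MetricsSE}. You are right that this assertion fails for $\eta^{\thres}_{t,\sfc}$ as literally defined in \eqref{eq:thres_denoiser}: the output jumps from $\bzero$ to an entrywise $\tanh$ estimate of norm bounded away from zero across the sphere where \eqref{eq:chi_squared_thres} holds with equality, so the denoiser is not even continuous, and Theorem~\ref{thm:MetricsSE} (whose hypotheses demand Lipschitz denoisers) does not apply to it directly. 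Your proposed repair --- approximate by Lipschitz denoisers that agree off an $\epsilon$-band around the decision sphere and exploit that this sphere is null under the nondegenerate law of $\bbX+\bG^t_\sfc$ --- is a genuinely different route from the paper's, and it is the honest one given the hard-threshold definition.

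That said, your argument is not complete, and the incompleteness sits exactly at the step you flag. Convergence of the state evolution parameters $\bT^{t,\epsilon}_\sfc \to \bT^t_\sfc$ is the easy half; the hard half is that the AMP run with $\eta^{\thres,\epsilon}_{t,\sfc}$ and the AMP run with $\eta^{\thres}_{t,\sfc}$ are different dynamical systems, and Theorem~\ref{thm:gen_sc_gamp_SE} says nothing about the iterates produced by a discontinuous denoiser. You must show the two trajectories stay close: that the fraction of rows whose effective observations land in the $\epsilon$-band at any of the first $t$ iterations vanishes as $\epsilon \to 0$, and that the resulting row-sparse perturbation is not amplified by the matrix multiplications and the discontinuous denoiser into an order-one difference in the empirical law of $\bS^t$. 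That induction with perturbation bounds is described but never executed, so as written the proposal proves the corollary only for the first two denoisers. A cheaper route to a complete proof is to redefine the test in \eqref{eq:chi_squared_thres} with an arbitrarily steep Lipschitz sigmoid replacing the hard indicator: the resulting denoiser is genuinely Lipschitz, Theorem~\ref{thm:MetricsSE} then applies directly (the discontinuity of the hard-decision map $h_{t,\sfc}$ is already absorbed by the sandwich argument), and the error formulas change by an arbitrarily small amount.
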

    \begin{proof}All three denoisers can be verified to be Lipschitz continuous by direct differentiation. The result then follows from Theorem \ref{thm:MetricsSE}.
    \end{proof}

\edit{We remark that in the special case of an i.i.d.\@ Gaussian design matrix (i.e., $\R = \C =1$), the asymptotic characterization of the CDMA scheme in  Theorem \ref{thm:MetricsSE} can be directly obtained from the state evolution result of Bayati and Montanari \cite{bayati2011dynamics}.}

\subsection{Numerical Results} \label{subsec:num_results_CDMA}
The Python code for all simulations presented in this section can be found  at \cite{liu_amp_github}. 
Fig.~\ref{fig:marg_v_thres} plots $\pMD, \pFA$, and $\pAUE$ for the CDMA scheme,  with a   user payload of $k = 60$ bits. For such payloads, the marginal-MMSE denoiser $\eta_t^\marginal$ and  the thresholding denoiser $\eta_t^\thres$ are the only computationally viable options, 
coupled with the entrywise MAP hard-decision step or the thresholding hard-decision step defined in \eqref{eq:thres_hard_decision}. The two subfigures of Fig.~\ref{fig:marg_v_thres} compare the  performance of the two denoisers for varying $E_b/N_0$. We observe that the thresholding 
denoiser $\eta^\thres_t$ achieves significantly lower error probabilities. In both cases, the empirical error probabilities (crosses) closely match the asymptotic formulas from Theorem~\ref{thm:MetricsSE} (solid lines).


\edit{Fig.~\ref{fig:mismatch_alpha} shows  the limiting error probabilities for the CDMA scheme using the thresholding denoiser $\eta_t^\thres$, where the threshold test \eqref{eq:chi_squared_thres} uses a  user-chosen  value $\hat{\alpha}$ for  $\alpha$   instead of the true $\alpha=0.7$. The curves are obtained using state evolution and  the expressions in Theorem \ref{thm:MetricsSE}. We observe that the  denoiser balances $\pMD$ and $\pFA$ equally when $\hat{\alpha}=\alpha$ (Fig.~\ref{fig:mismatch_b}). However, when $\hat{\alpha}\ne \alpha$, the performance becomes asymmetric:  choosing  $\hat{\alpha}<\alpha$  biases the denoiser towards detecting silent users, favoring a smaller $\pFA$ (Fig.~\ref{fig:mismatch_a}), while 
choosing   $\hat{\alpha}>\alpha$ biases the denoiser towards detecting active users, leading to a smaller $\pMD$  (Fig.~\ref{fig:mismatch_c}).} 

Additionally,  we observe that  the CDMA  schemes  tend to incur higher $\pAUE$ than  $\pMD$ or $\pFA$, which  contrasts  the random coding schemes illustrated in Figs.~\ref{fig:ngo_ach_vary_r} and \ref{fig:RA_errors_k_6_alpha_0_7} where $\pAUE$ is typically   \emph{lower} than   $\pMD$ and $\pFA$.  This is because, in the  CDMA schemes, the AMP decoder recovers each  user's message as a  sequence of $k$ bipolar symbols or all zeros, and it  is more likely to make a symbol error than to mistake  all $k$ symbols for zeros, or vice versa. In contrast, for random coding, the maximum-likelihood  decoder in \eqref{eq:min_distance_decoder} or the AMP decoder in \eqref{eq:AMP_sparc_z}--\eqref{eq:AMP_sparc_x} is more likely to mistake a transmitted codeword for noise, or vice versa, than to mistake a transmitted codeword for a different codeword. 
\begin{figure}[t!]
    \centering
 \includegraphics[width=0.6\linewidth]{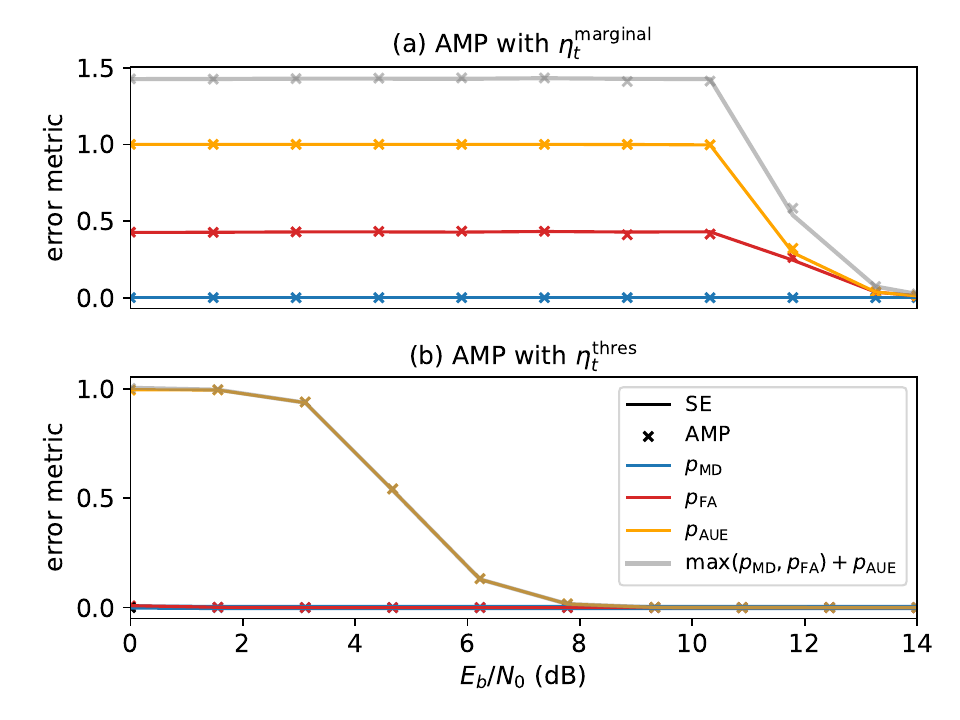}
\vspace{-0.5cm}
    \caption{Error probabilities $\pMD, \pFA, \pAUE$  and \edit{the total error metric $\max\{\pMD, \pFA\}+ \pAUE$} 
    from Theorem \ref{thm:MetricsSE}  for  the CDMA scheme with  $k=60, \alpha=0.7, \mu_\a=0.013$,  $L=8000$, and spatial coupling parameters $(\omega=1, \Lambda=1)$. \edit{Subfigures (a) and (b)} correspond to  $\eta_t^\marginal$ and $\eta_t^\thres$ as the denoiser, respectively. }
    \label{fig:marg_v_thres}
\end{figure}

\begin{figure}[t!]
\centering
    \subfloat[$\hat{\alpha}=0.6$ ]{\includegraphics[width=0.337\textwidth]{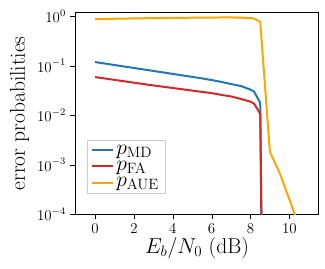}\label{fig:mismatch_a} }
    \hspace*{0.01cm}
    \subfloat[$\hat{\alpha}=\alpha=0.7$]{\includegraphics[width=0.32\textwidth]{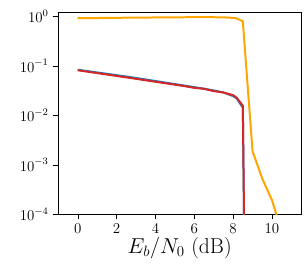}\label{fig:mismatch_b} }
    \hspace*{0.01cm}
    \subfloat[$\hat{\alpha}=0.8$]{\includegraphics[width=0.32\textwidth]{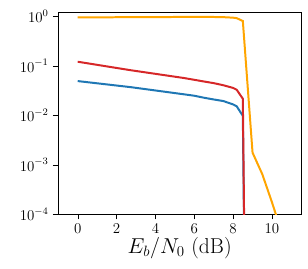}\label{fig:mismatch_c} }
    \caption{Limiting error probabilities  $\pMD, \pFA, \pAUE$  for the CDMA scheme with $\eta_t^{\thres}$, where $k=60,  \alpha=0.7, \mu_\a=2$, and spatial coupling parameters are $(\omega=11, \Lambda = 50)$.  The denoiser $\eta^\thres$ uses the threshold test \eqref{eq:chi_squared_thres} with a mismatched value  $\hat{\alpha}$ for $\alpha$.}
    \label{fig:mismatch_alpha}
\end{figure}

Fig.~\ref{fig:mua_crit_EbN0_alpha=0.7_diff_k} compares the limiting error probabilities of the efficient CDMA scheme in Theorem~\ref{thm:MetricsSE} with the two   achievability bounds from Theorems~\ref{thm:ngo_ach} and \ref{thm:sparc_asymp_errors}.   We plot the maximum active user density $\mu_\a$ achievable as a function of $E_b/N_0$ for a \edit{target total error $\max\{\pMD, \pFA\}+\pAUE\le0.01$.}
%
The asymptotic achievability bounds from Theorem~\ref{thm:sparc_asymp_errors} are computed using either the potential function $\mc{F}_\Bayes$ (dashed black) or $\mc{F}_\marginal$ (solid black) with $\Lambda\to \infty$ then $\omega\to \infty$.

Fig.~\ref{fig:RA_pot_fn_ach_v_sc_se_k6} considers a small user payload $k=6$, same setting as in Figs.~\ref{fig:RA_pot_fn_v_psi_k_6_alpha_0_7}--\ref{fig:RA_errors_k_6_alpha_0_7}. The CDMA schemes employ AMP decoding with either the Bayes-optimal denoiser $\eta_t^\Bayes$ (solid blue) or the thresholding denoiser $\eta_t^\thres$ (solid orange). Notably,   the achievable region of the efficient  thresholding denoiser is only slightly smaller than that of the Bayes-optimal denoiser, and the latter  is smaller than that of random coding (solid and dashed black).

The dotted black curve in Fig.~\ref{fig:RA_pot_fn_ach_v_sc_se_k6} is computed using the finite-length achievability bounds from Theorem~\ref{thm:ngo_ach} with \edit{$L=100$} \edit{and $(\kl, \ku)$ chosen so that $\P(\Ka\notin [\kl:\ku])\le10^{-6}$}. 
The decoding \edit{radii} are chosen to be \edit{$\rl=\ru=L$}, which  implies that the maximum-likelihood decoder in  \eqref{eq:min_distance_decoder} searches through all possible combinations of active users out of $L$, \edit{and  $P'$ is optimized over the full range of values $(0,P)$.} This gives the tightest achievability bound in this setting.  
We observe  that the overall shape of all the plots in Fig.~\ref{fig:RA_pot_fn_ach_v_sc_se_k6}  is quite similar. They  are steep (near-vertical) for very low active user densities ($\mu_\a< 0.17$), where the multi-user interference is nearly  cancelled, essentially achieving the performance of  a single-user channel. The plots exhibit a sharp  transition at around $\mu_\a\in (0.17,0.21)$. Beyond this point, the plots display a much shallower slope due to the noticeable multi-user interference.
This observation is consistent with the findings in \cite{polyanskiy2017perspective,zadik2019improved,kowshik2022improved, kowshik2021fundamental}.

\begin{figure}[t!]
    \centering
    \subfloat[$k=6$,  CDMA scheme uses $\omega=6,\Lambda=40$. ]{\includegraphics[width=0.49\textwidth]{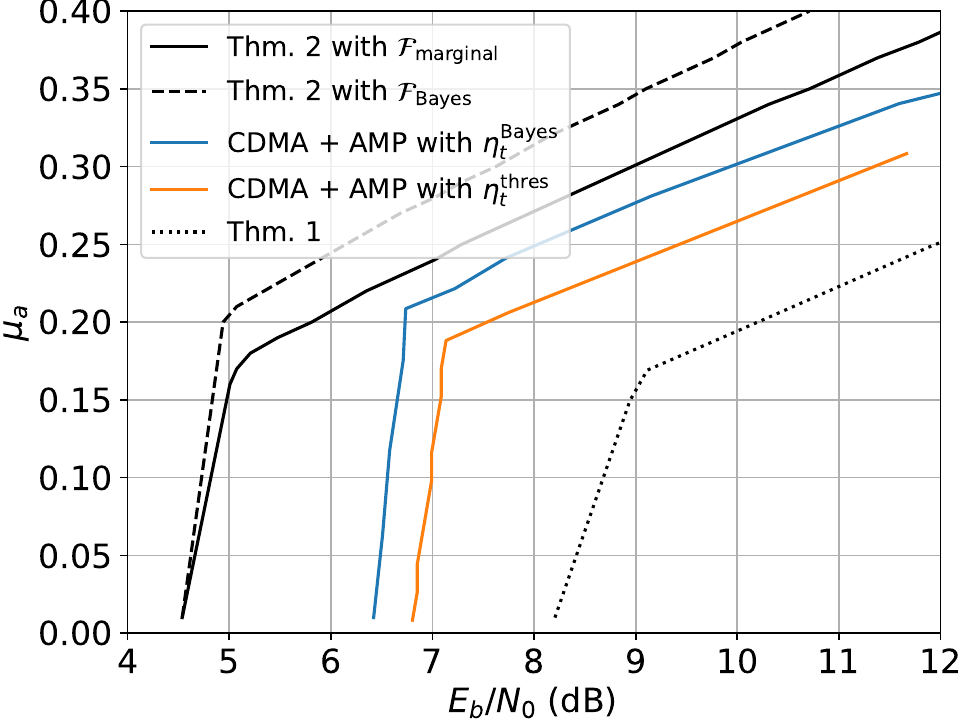}\label{fig:RA_pot_fn_ach_v_sc_se_k6} }
    \hspace*{0.01cm}
    \subfloat[$k=60$,  CDMA scheme uses $\omega=11,\Lambda=50$.]{\includegraphics[width=0.49\textwidth]{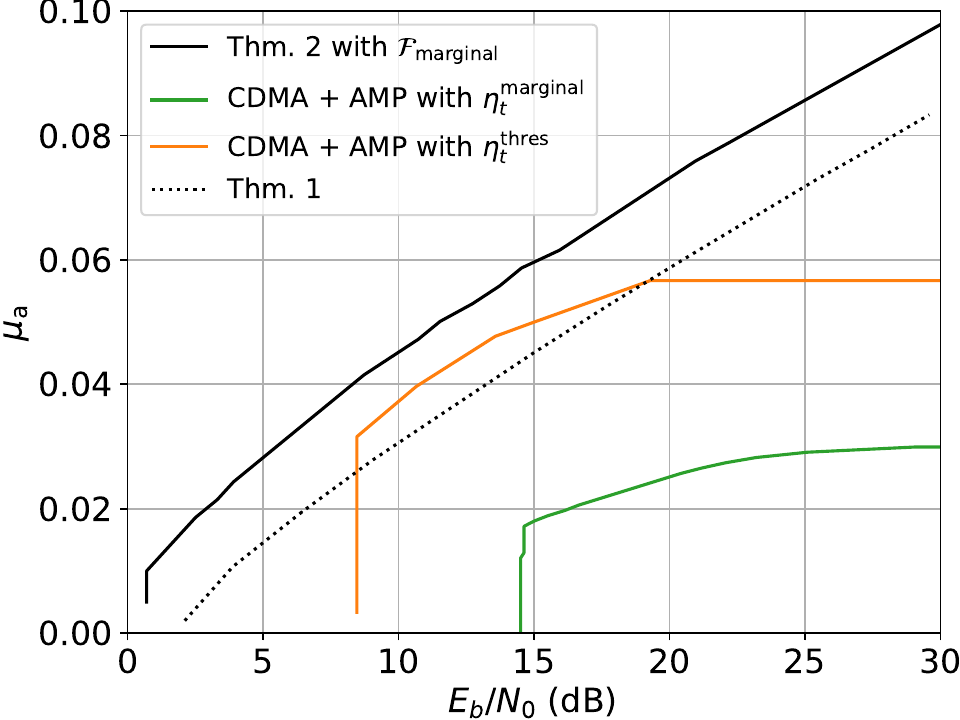}\label{fig:mua_crit_EbN0_alpha=0.7,k60} }
    \caption{Achievable $\mu_\a$  as a function of $E_b/N_0$ for a target \edit{total error $\max\{\pMD, \pFA\}+\pAUE\le 0.01$ with}
    $\alpha=0.7$, \edit{$p_{\Ka} = \text{Bin}(\alpha, L)$}, and $k=6$ or $k=60$. Comparison of the finite-length achievability bounds from Theorem \ref{thm:ngo_ach} (evaluated for \edit{$L=100$}), 
    the asymptotic achievability bounds from Theorem \ref{thm:sparc_asymp_errors} with $\Lambda\to \infty$ then $\omega\to \infty$, and the asymptotic performance characterization of the CDMA scheme  in Theorem \ref{thm:MetricsSE} with specific $(\omega, \Lambda)$. }
    \label{fig:mua_crit_EbN0_alpha=0.7_diff_k}
\end{figure}

    Fig.~\ref{fig:mua_crit_EbN0_alpha=0.7,k60} considers  a larger user payload $k=60$. The asymptotic achievability bounds  are computed  using $\mc{F}_\marginal$ because those  based on $\mc{F}_\Bayes$ are computationally infeasible for such a large $k$. 
     The CDMA schemes employ  the  thresholding denoiser $\eta_t^\thres$ or the marginal-MMSE denoiser $\eta_t^\marginal$ because the Bayes-optimal denoiser $\eta_t^\Bayes$  is  infeasible. We observe that $\eta_t^\thres$ significantly outperforms  $\eta_t^\marginal$ since it takes into account the correlation between the $k$ bipolar symbols transmitted by each user. \edit{Note that at $\mu_\a=0.013$, the transition thresholds    of the CDMA curves (i.e., 8dB and 14.5dB) in Fig.~\ref{fig:mua_crit_EbN0_alpha=0.7,k60} are consistent with the  $E_b/N_0$ values where the curves plateau in Fig.~\ref{fig:marg_v_thres}.} 
     
     \edit{Finally, the dotted black curve in Fig.~\ref{fig:mua_crit_EbN0_alpha=0.7,k60} plots the finite-length achievability bound from Theorem~\ref{thm:ngo_ach}  using $L=100, \rl=\ru=L,$  $P'$ optimized over $(0, P)$, and $(\kl, \ku)$ chosen so that $\P(\Ka\notin [\kl:\ku])\le10^{-6}$. The curve exhibits a  slope similar to the asymptotic achievability bound based on $\mc{F}_{\marginal}$ (solid black curve). Notably, for   $\mu_\a\le 0.026$ or $\mu_\a\ge0.057$, the finite-length bound achieves the target error metric over a larger range of $E_b/N_0$ values than the bounds for the efficient  CDMA schemes (orange and green curves). This confirms that  random coding with maximum-likelihood decoding  outperforms CDMA with suboptimal AMP decoding as expected.}

   \section{Proof of Theorem \ref{thm:ngo_ach}}\label{sec:ngo_full_proof}

\subsection{Preliminaries}\label{sec:preliminaries}

 For unsourced multiple-access with random user activity, where all users share the same codebook, Ngo et al.\@ recently obtained a finite-length achievability bound \cite[Theorem 1]{ngo2022unsourced}.
 We follow the structure of their proof and adapt their bounds to the standard (sourced) GMAC where users have distinct codebooks. 
 As in  \cite{polyanskiy2017perspective, ngo2022unsourced}, the proof of Theorem \ref{thm:ngo_ach} involves union bounds over error events via a change of measure, Chernoff bound and Gallager's $\rho$-trick, which we lay out as preliminaries below. 
A key additional technical step in our setting is to carefully separate error events into the three categories defined in \eqref{eq:def_my_pMD},\eqref{eq:def_my_pFA}, and \eqref{eq:def_my_pAUE}. 

\edit{The proof, except for the final step, assumes a given  Gaussian joint-codebook from the ensemble  described in Section \ref{sec:ngo_ach_scheme} and establishes  the error probability guarantees in \eqref{eq:thm_epsMD}--\eqref{eq:thm_epsAUE} individually. The final step then shows the existence of a randomized coding scheme that uses  time-sharing among  two such joint-codebooks to simultaneously achieve the bounds \eqref{eq:thm_epsMD}--\eqref{eq:thm_epsAUE}.} 

\begin{lem}[Change of measure, {\cite[Lemma 4]{ohnishi2021novel}}]\label{lem:change_measure}
Let $p$ and $q$ be two probability measures. Consider a random variable $X$ supported on a set $\mc{S}$ and a function $f:\mc{S}\to [0,1].$ It holds that $\E_p[f(X)]\le \E_q[f(X)]+d_{\TV}(p,q)$, where $d_{\TV}(p,q)$ denotes the total variation distance between $p$ and $q$, i.e., $d_{\TV}(p,q) = \frac{1}{2}\int_{\mc{S}}\abs{\frac{\mathrm{d} p}{\mathrm{d} q}-1}\mathrm{d} q$.
\end{lem}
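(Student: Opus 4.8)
The plan is to prove the inequality by directly expanding the difference of expectations as a single integral against $q$, and then bounding it using the constraint $0 \le f \le 1$ together with the fact that $p$ and $q$ are both probability measures. First I would fix a common dominating measure (for instance $\mu = p+q$) so that the relevant Radon--Nikodym derivatives exist; equivalently, assuming $p \ll q$ one may work directly with the density $\frac{\mathrm{d} p}{\mathrm{d} q}$ as written in the statement. This makes the quantity $\frac{\mathrm{d} p}{\mathrm{d} q}$ appearing in $d_{\TV}(p,q)$ well-defined and lets me write
\[
\E_p[f(X)] - \E_q[f(X)] = \int_{\mc{S}} f(x)\left(\frac{\mathrm{d} p}{\mathrm{d} q}(x) - 1\right)\mathrm{d} q(x).
\]

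The next step uses $0 \le f \le 1$: the integrand is largest when $f = 1$ on the region where $\frac{\mathrm{d} p}{\mathrm{d} q} - 1 > 0$ and $f = 0$ elsewhere, which yields
\[
\E_p[f(X)] - \E_q[f(X)] \le \int_{\mc{S}} \left(\frac{\mathrm{d} p}{\mathrm{d} q}(x) - 1\right)^{+}\mathrm{d} q(x).
\]
The final step is the identity $\int_{\mc{S}} \big(\frac{\mathrm{d} p}{\mathrm{d} q} - 1\big)^{+}\,\mathrm{d} q = d_{\TV}(p,q)$. This holds because both measures are normalized, so $\int_{\mc{S}} \big(\frac{\mathrm{d} p}{\mathrm{d} q} - 1\big)\,\mathrm{d} q = 1 - 1 = 0$; hence the positive and negative parts of $\frac{\mathrm{d} p}{\mathrm{d} q} - 1$ have equal integral, and each equals half of $\int_{\mc{S}} \big|\frac{\mathrm{d} p}{\mathrm{d} q} - 1\big|\,\mathrm{d} q$, which is precisely $d_{\TV}(p,q)$ by definition. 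Combining the two displays completes the argument.

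I do not expect any genuine obstacle here: the statement is elementary and standard, quoted from \cite[Lemma 4]{ohnishi2021novel}. The only point meriting mild care is the measure-theoretic bookkeeping — ensuring the densities are well-defined via a common dominating measure, and justifying the ``balanced positive/negative part'' step through the normalization of $p$ and $q$ — but this is entirely routine, so in the paper I would simply cite the reference and include the short self-contained argument above if a proof is desired.
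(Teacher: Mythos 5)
Your proof is correct. Note that the paper itself supplies no proof of this lemma: it is quoted verbatim from \cite[Lemma 4]{ohnishi2021novel} and used as a black box, so there is no internal argument to compare against. Your argument is the standard one and is complete: the expansion $\E_p[f(X)]-\E_q[f(X)]=\int_{\mc{S}} f\,(\tfrac{\mathrm{d}p}{\mathrm{d}q}-1)\,\mathrm{d}q$, the pointwise bound by the positive part using $0\le f\le 1$, and the normalization identity $\int_{\mc{S}}(\tfrac{\mathrm{d}p}{\mathrm{d}q}-1)\,\mathrm{d}q=0$ forcing the positive-part integral to equal $\tfrac{1}{2}\int_{\mc{S}}|\tfrac{\mathrm{d}p}{\mathrm{d}q}-1|\,\mathrm{d}q$ are all valid, and your remark about a common dominating measure (or the implicit assumption $p\ll q$ already built into the statement's formula for $d_{\TV}$) disposes of the only measure-theoretic subtlety.
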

\begin{lem}[Chernoff bound]\label{lem:chernoff}
For a random variable $X$ with moment-generating function $\E[e^{tX}]$ defined for all $\abs{t}\le b$, it holds for all $\lambda\in[0,b]$ that $\P(X\le x)\le e^{\lambda x}\E[e^{-\lambda X}]$.
\end{lem}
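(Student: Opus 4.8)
The statement is the standard lower-tail Chernoff inequality, so the plan is short and elementary. First I would fix an arbitrary $\lambda\in[0,b]$ and note that the hypothesis on the moment-generating function makes the right-hand side well-defined: since $-\lambda\in[-b,0]\subseteq[-b,b]$, the quantity $\E[e^{-\lambda X}]=\E[e^{tX}]\big|_{t=-\lambda}$ is finite, so $e^{\lambda x}\E[e^{-\lambda X}]$ is finite and the bound is non-vacuous. (The edge case $\lambda=0$ simply yields the trivial bound $\P(X\le x)\le 1$, which is consistent with the argument below.)

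The core step is to re-express the target event so that Markov's inequality applies to a nonnegative random variable. Because $\lambda\ge 0$, multiplying the inequality $X\le x$ by $-\lambda\le 0$ reverses it to $-\lambda X\ge -\lambda x$, and applying the monotone increasing map $u\mapsto e^u$ preserves the direction. This gives the event inclusion $\{X\le x\}\subseteq\{e^{-\lambda X}\ge e^{-\lambda x}\}$, valid for every $\lambda\ge 0$. I would then apply Markov's inequality to the nonnegative random variable $e^{-\lambda X}$ at the level $e^{-\lambda x}>0$, yielding
\begin{equation*}
\P(X\le x)\;\le\;\P\!\left(e^{-\lambda X}\ge e^{-\lambda x}\right)\;\le\;\frac{\E[e^{-\lambda X}]}{e^{-\lambda x}}\;=\;e^{\lambda x}\,\E[e^{-\lambda X}],
\end{equation*}
which is exactly the claimed inequality. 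Since $\lambda\in[0,b]$ was arbitrary, the bound holds for all such $\lambda$.

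There is no genuine obstacle here; the only points requiring care are the sign bookkeeping when multiplying the inequality by $-\lambda$ (so that the direction flips correctly and the inclusion, rather than an equality, is used to cover $\lambda=0$ uniformly), and the appeal to the moment-generating-function hypothesis to guarantee finiteness of $\E[e^{-\lambda X}]$ on the whole range $[0,b]$. Both are routine, and Markov's inequality is a standard result that I would invoke directly without reproof.
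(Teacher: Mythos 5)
Your proof is correct: the exponentiation-plus-Markov argument, with the sign flip handled via $\{X\le x\}\subseteq\{e^{-\lambda X}\ge e^{-\lambda x}\}$ for $\lambda\in[0,b]$, is the standard derivation, and the paper itself states this lemma as a known fact without proof, so your argument is exactly the canonical one it implicitly relies on. Nothing is missing; the finiteness remark and the $\lambda=0$ edge case are handled appropriately.
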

\begin{lem}[Gallager's $\rho$-trick] \label{lem:rho_trick}  For a series of events  $\mc{E}_1, \dots, \mc{E}_m$,  it holds that 
\begin{align}\label{eq:rho_trick}
    \P\left(\cup_{i=1}^m \mc{E}_i\right)\le \left(\sum_{i=1}^m\P(\mc{E}_i)\right)^\rho\quad \forall \rho\in[0,1].
\end{align}
Moreover, if    $\P(\mc{E}_i|V)\le e^{-E(V)}$ for $i\in[m]$, where $V$ is some random variable and $E(V)$ a  function of $V$, then \eqref{eq:rho_trick} implies that 
\begin{equation}
\P(\cup_{i=1}^m \mc{E}_i)\le m^\rho \,\E_V\left[e^{-\rho E(V)}\right].\label{eq:rho_trick_cond}
\end{equation}
\end{lem}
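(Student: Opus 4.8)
The plan is to prove the two inequalities in sequence, deriving the conditional bound \eqref{eq:rho_trick_cond} as a direct consequence of the unconditional bound \eqref{eq:rho_trick} via the tower property of conditional expectation. The entire argument is elementary: the only ingredients are the union bound, the trivial fact that a probability is at most $1$, and the monotonicity of the map $s \mapsto s^\rho$ for $\rho \in [0,1]$.

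To establish \eqref{eq:rho_trick}, first write $S := \sum_{i=1}^m \P(\mc{E}_i) \ge 0$. Two standard bounds hold simultaneously: the union bound gives $\P(\cup_{i=1}^m \mc{E}_i) \le S$, and since the left-hand side is a probability we also have $\P(\cup_{i=1}^m \mc{E}_i) \le 1$. Hence $\P(\cup_{i=1}^m \mc{E}_i) \le \min\{1, S\}$, and it suffices to show $\min\{1,S\} \le S^\rho$ for every $\rho \in [0,1]$. I would argue this by splitting into two cases. If $S \le 1$, then because $\rho \le 1$ and $S \in [0,1]$ we have $S^\rho \ge S = \min\{1,S\}$. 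If instead $S > 1$, then since $\rho \ge 0$ we have $S^\rho \ge S^0 = 1 = \min\{1,S\}$. Either way $\min\{1,S\} \le S^\rho$, which yields \eqref{eq:rho_trick}.

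For the conditional bound \eqref{eq:rho_trick_cond}, I would apply \eqref{eq:rho_trick} conditionally on $V$. Under the hypothesis $\P(\mc{E}_i \mid V) \le e^{-E(V)}$ for each $i \in [m]$, the conditional version of \eqref{eq:rho_trick} gives
\begin{equation*}
\P\big(\cup_{i=1}^m \mc{E}_i \mid V\big) \le \Big(\sum_{i=1}^m \P(\mc{E}_i \mid V)\Big)^\rho \le \big(m\,e^{-E(V)}\big)^\rho = m^\rho e^{-\rho E(V)}.
\end{equation*}
Taking expectation over $V$ and invoking the tower property $\P(\cup_{i=1}^m \mc{E}_i) = \E_V\big[\P(\cup_{i=1}^m \mc{E}_i \mid V)\big]$ then gives $\P(\cup_{i=1}^m \mc{E}_i) \le m^\rho\,\E_V[e^{-\rho E(V)}]$, which is \eqref{eq:rho_trick_cond}.

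Given how elementary the statement is, there is no genuine obstacle; the only point requiring mild care is to verify that \eqref{eq:rho_trick} is valid when applied to the \emph{conditional} measure $\P(\cdot \mid V)$, so that the middle inequality in the display above is legitimate. Since \eqref{eq:rho_trick} was proved for an arbitrary probability measure using only the union bound and $\P(\cdot) \le 1$, both of which hold verbatim for $\P(\cdot \mid V)$ for (almost every) fixed value of $V$, this transfer is immediate, and the final expectation step is justified by the law of total expectation.
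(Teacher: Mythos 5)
Your proof is correct and follows exactly the intended (standard) argument: the paper states this lemma as a known preliminary without proof, and the phrase ``\eqref{eq:rho_trick} implies'' in the statement signals precisely your route of applying \eqref{eq:rho_trick} under the conditional measure $\P(\cdot \mid V)$ and then taking expectation over $V$. Your case split showing $\min\{1,S\} \le S^{\rho}$ for $\rho \in [0,1]$, combined with the union bound and $\P(\cdot)\le 1$, is the standard derivation of \eqref{eq:rho_trick}, and the transfer to the conditional measure plus the tower property correctly yields \eqref{eq:rho_trick_cond}.
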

\begin{lem}[Gaussian identity]\label{lem:gaussian_identity}
 For $\bX\sim \normal_n(\bmu, \sigma^2\bI)$, 
 \begin{equation}
 \E\left[e^{-\gamma \|\bX\|_2^2}\right] = (1+2\sigma^2 \gamma)^{-\frac{n}{2}} \exp\left(-\frac{\gamma \|\bmu\|_2^2}{1+2\sigma^2\gamma}\right) ,\quad \forall \gamma > -\frac{1}{2\sigma^2}.   \nonumber
 \end{equation}
\end{lem}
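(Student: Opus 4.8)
The plan is to reduce this $n$-dimensional identity to a one-dimensional Gaussian integral by exploiting independence across coordinates. Since $\bX\sim\normal_n(\bmu,\sigma^2\bI)$ has independent entries $X_i\sim\normal(\mu_i,\sigma^2)$, we may write $\|\bX\|_2^2=\sum_{i=1}^n X_i^2$, so that $e^{-\gamma\|\bX\|_2^2}=\prod_{i=1}^n e^{-\gamma X_i^2}$ and hence $\E\big[e^{-\gamma\|\bX\|_2^2}\big]=\prod_{i=1}^n\E\big[e^{-\gamma X_i^2}\big]$. It therefore suffices to evaluate the scalar expectation $\E\big[e^{-\gamma X^2}\big]$ for $X\sim\normal(\mu,\sigma^2)$ and then take the product over $i=1,\dots,n$.

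For the scalar case I would write out $\E\big[e^{-\gamma X^2}\big]=\frac{1}{\sqrt{2\pi\sigma^2}}\int_{\reals} \exp\!\big(-\gamma x^2-(x-\mu)^2/(2\sigma^2)\big)\,\de x$ and collect the exponent as a quadratic in $x$. The coefficient of $x^2$ is $-\big(\gamma+\tfrac{1}{2\sigma^2}\big)=-\tfrac{1+2\sigma^2\gamma}{2\sigma^2}$, which is strictly negative exactly under the stated hypothesis $\gamma>-\tfrac{1}{2\sigma^2}$; this is precisely the condition that makes the integral converge. Completing the square and applying the standard Gaussian integral $\int_{\reals}e^{-a(x-c)^2}\,\de x=\sqrt{\pi/a}$, valid for $a>0$, then reduces everything to algebra.

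The last step is bookkeeping. The prefactor $\frac{1}{\sqrt{2\pi\sigma^2}}\sqrt{\pi/a}$ simplifies to $(1+2\sigma^2\gamma)^{-1/2}$ using $2\sigma^2 a=1+2\sigma^2\gamma$, while the residual constant left in the exponent after completing the square simplifies to $-\gamma\mu^2/(1+2\sigma^2\gamma)$. Multiplying the $n$ scalar factors then produces the $(1+2\sigma^2\gamma)^{-n/2}$ prefactor and, since $\sum_{i=1}^n\mu_i^2=\|\bmu\|_2^2$, the exponent $-\gamma\|\bmu\|_2^2/(1+2\sigma^2\gamma)$, as claimed. I do not expect a genuine obstacle here, since this is a routine moment-generating-function computation; the only points requiring care are tracking the $\mu$-dependent terms correctly through the completion of the square and noting that the convergence condition on $\gamma$ is exactly what guarantees the Gaussian integral is finite.
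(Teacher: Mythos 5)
Your proof is correct and complete: the factorization over independent coordinates, the convergence condition $\gamma > -\tfrac{1}{2\sigma^2}$ emerging as exactly the positivity of the quadratic coefficient, and the algebra after completing the square (prefactor $(1+2\sigma^2\gamma)^{-1/2}$ and residual exponent $-\gamma\mu^2/(1+2\sigma^2\gamma)$ per coordinate) all check out. The paper itself states this lemma without proof, treating it as a standard Gaussian moment-generating-function identity, so there is nothing to compare against; your derivation is the standard one and fills the gap correctly.
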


\paragraph{Change of measure}
\label{sec:change_of_measure}
Recall the definitions of the error probabilities $\pMD, \pFA$ and $\pAUE$ from  \eqref{eq:def_my_pMD}--\eqref{eq:def_my_pAUE}. 
Since the quantities inside the expectations  in \eqref{eq:def_my_pMD}--\eqref{eq:def_my_pAUE} are  bounded between 0 and 1, using Lemma \ref{lem:change_measure}, we replace the measure over which the expectations are taken with a new measure, denoted by $q$, under which:  i) $ \Ka\in[\kl: \ku]$ almost surely,  and ii) the  codeword transmitted by user $\ell$ is simply $\bc_{w_\ell}^{(\ell)} = \tilde{\bc}_{w_\ell}^{(\ell)}$  instead of $\bc_{w_\ell}^{(\ell)}=\tilde{\bc}_{w_\ell}^{(\ell)} \ind\{\|\tilde{\bc}_{w_\ell}^{(\ell)}\|_2^2\le E_bk\}$ for every $(\ell,w_\ell)\in \W$. Recall that $\tilde{\bc}_{w_\ell}^{(\ell)}\sim \normal_n(\bzero, E_b'k/n\bI) $ and  $ P' = E_b'k/n < P=E_bk/n$. Thus, recalling $\Ka := |\W|$,  the  total variation distance  $d_{\TV}$ between the new and original measures is upper bounded by 
\begin{align}
    d_{\TV} &\le  \P(\Ka\notin [\kl: \ku]) + 
\E_{\Ka}\left[\P\left(\bigcup_{\ell: (\ell, w_\ell)\in \W}\|\bc_{w_\ell}^{(\ell)}\|_2^2 >E_bk\, \bigg| \,\Ka\right)\right]\nonumber\\
&\stackrel{\text{(a)}}{\le} \P(\Ka\notin [\kl:\ku]) + \E_{\Ka} \left[\sum_{\ell: (\ell, w_\ell)\in \W}\P\left(\|\bc_{w_\ell}^{(\ell)}\|_2^2>nP\right)\right]\nonumber\\
    &\stackrel{\text{(b)}}{=}\P(\Ka\notin [\kl: \ku]) + \E[\Ka]\frac{\Gamma(\frac{n}{2}, \frac{nP}{2P'})}{\Gamma(\frac{n}{2})}=: \tp\,, \label{eq:tilde_p}
\end{align}
where (a) applies a union bound and uses 
\[\P\left(\|\bc_{w_\ell}^{(\ell)}\|_2^2>E_bk\, \big|\,\Ka\right) = \P\left(\|\bc_{w_\ell}^{(\ell)}\|_2^2>E_bk\right)=\P\left(\|\bc_{w_\ell}^{(\ell)}\|_2^2>nP\right),\]
 and (b) holds because $\frac{1}{P'}\|\bc_{w_\ell}^{(\ell)}\|_2^2\sim \textrm{Gamma}(\frac{n}{2}, 2)$ and $\P\left(\textrm{Gamma}(n', \theta)>z_0\right) =\Gamma(n', \frac{z_0}{\theta})/\Gamma(n')$. 
 Therefore,  Lemma \ref{lem:change_measure} implies that  
\begin{align}
    \pMD&\le \E_q\left[\ind\{\Ka\neq 0\}\cdot  \frac{1}{\Ka}\sum_{\ell: (\ell,w_\ell)\in \W}\ind\{\widehat{w_\ell} = \emptyset\}\right] + \tp\,, \label{eq:change_measure_pMD}\\
    \pFA &\le\E_q\left[\ind\{\widehat{\Ka}\neq 0\}\cdot  \frac{1}{\widehat{\Ka}}\sum_{\ell:(\ell, \hw_\ell)\in\hW}\ind\{w_\ell=\emptyset\}\right]+ \tp\,,\label{eq:change_measure_pFA}\\
    \pAUE &\le \E_q\left[\ind\{\Ka\neq 0\}\cdot  \frac{1}{\Ka}\sum_{\ell:(\ell, w_\ell)\in \W}\ind\big\{\widehat{w_\ell}\notin\{w_\ell, \emptyset\}\big\}\right]+ \tp\,.\label{eq:change_measure_pAUE}
\end{align} 
 In the next subsection, we   evaluate the $ \E_q[\cdot]$ terms on the right side  in a simple special case, with the calculations for  the general case deferred to Section \ref{sec:general_case}.

     \subsection{A  Special Case}\label{sec:special_case}
  Under the new measure $q$,  we begin by analyzing the error probabilities for a special case  where $\Ka=\ka, \Kap=\kap$ are both assumed fixed, \edit{$\rl=\ru=0$} and  $\kap\le \ka\le L$.
  As a result, $ \ukap= \okap=\kap $ and $\hKa :=|\hW|=\kap$ in \eqref{eq:min_distance_decoder}. 

  The Venn diagram in Fig.~\ref{fig:special_case_venn}  illustrates the relation between different sets of  codewords in this special  case. The intersection $\W\cap \hW$ denotes the set of  transmitted codewords that are decoded correctly.  
   The  set in gray, $\W_\iMD$, is called the  \emph{initial} set of $\MD$s, which represents the  $(\ka-\kap)$ transmitted codewords that are guaranteed to be \edit{missed} simply  because $\kap\le \ka$.

   In addition to $\W_\iMD$, the remaining sets $(\W\setminus \hW) \setminus \W_\iMD=: \We$ and  $\hW\setminus\W=:\hWe$ represent the \emph{extra} codewords in error. The  sets in blue, $\W_\eMD$ and $\hW_\eFA$, represent the  extra  $\MD$s and $\FA$s. Their elements have a one-to-one correspondence  because each  active user declared silent corresponds to a silent user declared active. Therefore  these sets share the same size $\abs{\W_\eMD}=|\hW_\eFA|$. 
\begin{figure}[t!]
    \centering
    \includegraphics{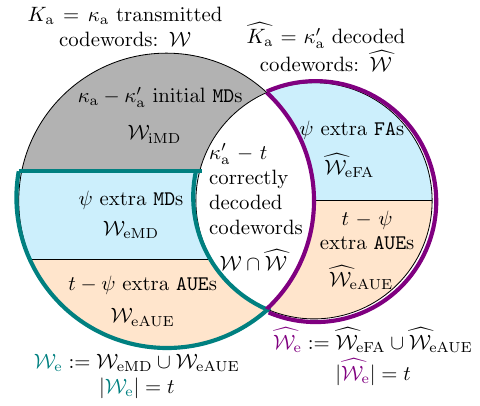}
    \caption{Venn diagram illustrating the relation between different sets of messages in the special case defined in Section \ref{sec:special_case}, where $\Ka=\ka, \Kap=\kap$ are fixed, \edit{$\rl=\ru=0$} and $\kap\le \ka$, and integers $t\in[0:\kap]$ and $ \psi\in[0:\min\{t,L-\ka\}]$. Diagram analogous to \cite[Fig.\@ 7]{ngo2022unsourced}. }
    \label{fig:special_case_venn}
    \end{figure}

   The  sets in orange, $\W_\eAUE$ and $\hW_\eAUE$, represent the extra active user errors ($\AUE$s), where  the  users who transmitted the codewords  $\W_\eAUE$ are declared active, but their codewords are mistaken for  some other codewords $\hW_\eAUE$ in their respective codebooks. Thus, these sets also have the same size $\abs{\W_\eAUE}=|\hW_\eAUE|$. Noting that  $\We=\W_\eMD\cup \W_\eAUE$ and $\hWe=\hW_\eFA\cup \hW_\eAUE$, we have $|\We|=|\hWe|$.
    In terms of notation, the  subscripts ``i'' and ``e'' denote, respectively, the initial errors due to $\hKa = \kap$ being different from $\Ka=\ka$, and the extra errors occurred in the further decoding process. 

 Based on the relation between different sets of messages illustrated in Fig.~\ref{fig:special_case_venn}, and the fact that $\Ka=\ka, \Kap=\kap $ are fixed in the special case,  we  simplify the right side of \eqref{eq:change_measure_pMD}--\eqref{eq:change_measure_pAUE} as follows: 
   \begin{align}
       \pMD &\le  
       \E_q\left[\frac{ \abs{\W_\iMD}+ \abs{\W_\eMD}}{\Ka}\right] +\tp  = 
       \frac{ (\ka-\kap)+ \E_q\abs{\W_\eMD}}{\ka} +\tp \label{eq:pMD_special_case},\\
       \pFA &
       \le \E_q\left[\frac{|\hW_\eFA|}{\Kap}\right] +\tp
       =\frac{\E_q |\W_\eMD|}{\kap} +\tp \label{eq:pFA_special_case},\\
       \pAUE &
       \le \E_q\left[\frac{|\W_\eAUE|}{\Ka}\right]
       =\frac{\E_q\left[\abs{\We}-\abs{\W_\eMD}\right]}{\ka} +\tp \label{eq:pAUE_special_case}.
\end{align}
Note that   $\E_q\abs{\W_\eMD} =\E_q\left[\E_q \abs{\W_\eMD}\big|\abs{\We}\right]$ by the law of total expectation. 
Therefore, to evaluate the right side of \eqref{eq:pMD_special_case}--\eqref{eq:pAUE_special_case}, we only need  to characterize the marginal distribution of  $\abs{\We}$, and the  distribution of  $\abs{\W_\eMD}$ conditioned on $\abs{\We}$. We first provide these distributions in Lemmas  \ref{lem:p_W_aMD_union_W_AUE} and \ref{lem:cond_p_aMD}, and then prove the lemmas. For simplicity, we will henceforth drop the subscript $q$ from $\E_q[\cdot]$, and stress that all expectations are  with respect to the new measure $q$.

\begin{lem}\label{lem:p_W_aMD_union_W_AUE}
       In the special case, where  $\Ka=\ka$ and $\Kap=\kap$ are fixed, \edit{$\rl=\ru=0$} and 
       $\kap\le\ka\le L$, it holds  for every integer $t \in [0: \kap]$ that
\begin{equation}\label{eq:p_Wa=t_special_case_lem}
       \P\left(\abs{\We} = t\right) \le \exp\left(-\frac{n}{2} \check{E}(t)\right),
   \end{equation}
   where
   \begin{align}
       &\check{E}(t) = \max_{\rho,\rho_1\in[0,1]} \big[  -\rho\rho_1\check{R}_1(t) - \rho_1\check{R}_2(t) + \check{E}_0(\rho,\rho_1) \big],\label{eq:E(t,t)_special_case_lem}\\
       &\check{R}_1(t)= \frac{2}{n} \left(t\ln M + \ln\binom{\check{\mc{R}}}{t}\right),\quad \check{\mc{R}}=L-\ka+t, \quad 
       \check{R}_2(t)= \frac{2}{n}\ln\binom{\kap}{t}, \label{eq:R1_R2_special_case_lem}\\
       &\check{E}_0(\rho,\rho_1) = \max_{\check{\lambda}>-\frac{1}{P't}} \big[ \rho_1 \check{a}(\rho, \check{\lambda}) +\ln(1-\rho_1\check{P}_1\check{b}(\rho, \check{\lambda})) \big],\label{eq:E0_special_case_lem}\\
        &\check{a}(\rho, \check{\lambda}) = \rho\ln(1+ P't\check{\lambda})+\ln\left(1+ P't \check{\chi}(\rho, \check{\lambda})\right),
       \label{eq:a0_b0_lem}\\
       &\check{b}(\rho, \check{\lambda}) =\rho\check{\lambda} - \frac{\check{\chi}(\rho, \check{\lambda})}{1+P't\check{\chi}(\rho, \check{\lambda})},\qquad \check{\chi}(\rho, \check{\lambda}) = \frac{\rho\check{\lambda}}{1+P't\check{\lambda}},\label{eq:chi_special_case_lem}\\
    &\check{P}_1 = (\ka-\kap)P'+1\,.
       \label{eq:mu0_P2_lem}
   \end{align}
   \end{lem}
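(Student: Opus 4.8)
The plan is to establish \eqref{eq:p_Wa=t_special_case_lem} via a Gallager-style random-coding error exponent, working throughout under the tilted measure $q$ from \eqref{eq:tilde_p}, so that every codeword is an i.i.d.\ $\normal_n(\bzero, P'\bI)$ vector. Since $\rl=\ru=0$ forces $\hKa=\kap$, the min-distance decoder \eqref{eq:min_distance_decoder} returns the size-$\kap$ set $\hW$ minimizing $\|c(\W')-\by\|_2^2$. First I would express $\{|\We|=t\}$ as a union over \emph{error patterns}: a pattern specifies which $t$ of the $\kap$ decoded slots carry extra errors (the true-side sets $\W_\eMD\cup\W_\eAUE=\We$) together with the $t$ replacement codewords forming $\hWe$. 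Comparing $\hW$ against the natural reference set $\W_{\mathrm{ref}}=(\W\cap\hW)\cup\We$, which shares the same $\kap-t$ correct codewords, and cancelling the common part, a pattern contributes an error only if
\begin{equation*}
\big\|\,c(\hWe)-c(\We)-c(\W_\iMD)-\bveps\,\big\|_2^2 \;\le\; \big\|\,c(\W_\iMD)+\bveps\,\big\|_2^2 .
\end{equation*}
The crucial observation is that the initially-missed codewords $c(\W_\iMD)$ (there are $\ka-\kap$ of them, each $\normal_n(\bzero,P'\bI)$) combine with the channel noise $\bveps\sim\normal_n(\bzero,\bI)$ into an effective noise of per-coordinate variance $(\ka-\kap)P'+1=\check P_1$, exactly the constant in \eqref{eq:mu0_P2_lem}, while the error and signal codewords $c(\hWe),c(\We)$ each contribute variance $tP'$.

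Next I would count the patterns. Choosing the $t$ errored slots among the $\kap$ decoded ones gives the factor $\binom{\kap}{t}$, i.e.\ the rate $\check R_2(t)$ in \eqref{eq:R1_R2_special_case_lem}; the replacement codewords in $\hWe$ are drawn from the $\check{\mc R}=L-\ka+t$ codebooks not pinned to a correctly-decoded slot (the $L-\ka$ silent users, contributing $\FA$s, plus the $t$ errored active users, whose own codebooks supply the $\AUE$ alternatives), giving $\binom{\check{\mc R}}{t}M^{t}$ and hence $\check R_1(t)$. The use of \emph{distinct} codebooks, rather than the single shared codebook of \cite{ngo2022unsourced}, is precisely what changes these combinatorial counts.

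For a single pattern I would bound the conditional error probability with the Chernoff bound (Lemma \ref{lem:chernoff}) applied to the quadratic-form inequality above, evaluating the resulting Gaussian moment-generating functions with Lemma \ref{lem:gaussian_identity}; optimizing the Chernoff parameter $\check\lambda$ over $\check\lambda>-1/(P't)$ produces the inner exponent $\check E_0(\rho,\rho_1)$ of \eqref{eq:E0_special_case_lem}, with the auxiliary functions $\check a,\check b,\check\chi$ of \eqref{eq:a0_b0_lem}--\eqref{eq:chi_special_case_lem} arising directly from these integrals. The two union bounds are then collapsed with Gallager's $\rho$-trick (Lemma \ref{lem:rho_trick}) in a nested fashion: the parameter $\rho$ handles the inner union over the $\binom{\check{\mc R}}{t}M^t$ replacement codewords (producing the $\rho\rho_1\check R_1$ term), while $\rho_1$ handles the outer union over the $\binom{\kap}{t}$ slot choices together with the averaging over the true codewords (producing the $\rho_1\check R_2$ term). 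Taking the maximum over $\rho,\rho_1\in[0,1]$ yields $\check E(t)$ in \eqref{eq:E(t,t)_special_case_lem}.

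The main obstacle I anticipate is not any single estimate but the careful bookkeeping that threads these pieces together: correctly identifying the eligible set of $\check{\mc R}$ codebooks for the replacement codewords (separating the $\MD$/$\FA$ contributions from the $\AUE$ contributions while respecting that each codebook supplies at most one codeword), isolating precisely the $\check P_1$ worth of effective noise, and arranging the two applications of the $\rho$-trick so that $\check R_1$ picks up both $\rho$ and $\rho_1$ but $\check R_2$ only $\rho_1$. Verifying that the Chernoff/MGF computation reproduces $\check E_0$ exactly, including the admissible range of $\check\lambda$, should be routine once the conditioning structure is fixed, and closely parallels the corresponding step in \cite{ngo2022unsourced}.
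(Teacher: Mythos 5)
Your proposal is correct and follows essentially the same route as the paper's proof: the identical error criterion obtained by comparing $\hW$ against the reference set $(\W\cap\hW)\cup\We$, the same Chernoff/Gaussian-identity computation with effective noise variance $\check{P}_1=(\ka-\kap)P'+1$, and the same nested Gallager $\rho$-trick with the counts $\binom{\check{\mc{R}}}{t}M^t$ (inner, over $\hWe$) and $\binom{\kap}{t}$ (outer, over $\We$) producing the $\rho\rho_1\check{R}_1$ and $\rho_1\check{R}_2$ terms. Your identification of where the distinct-codebook setting alters the combinatorics relative to the unsourced case of \cite{ngo2022unsourced} also matches the paper's Remark \ref{rmk:ours_vs_ngo}.
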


\begin{lem}\label{lem:cond_p_aMD}Consider the special case, where  $\Ka=\ka$ and $\Kap=\kap$ are fixed, \edit{$\rl=\ru=0$} and 
       $\kap\le\ka\le L$. Define $\check{\mc{R}}=L-\ka+t$. For every integer  $t\in[0:\kap]$ and $ \psi\in[0:\min\{t,\check{\mc{R}}-t\}]$, let 
       \begin{align}
  \nu(t,\psi)= \binom{\check{\mc{R}}-t}{\psi}M^\psi \cdot \binom{t}{t-\psi}(M-1)^{t-\psi},\quad 
    \nu(t) = \sum_{\psi=0}^{\min\{t,\check{\mc{R}}-t\}} \nu(t,\psi).
\label{eq:u(t,psi),u(t)_special_case}
\end{align}
Then the following hold:
       \begin{enumerate}[(1)]
           \item \label{lem:num_values_of_We}\edit{Given $\We$ with size $|\We|=|\hWe|=t$, the number of distinct choices for the set  $\hWe$   equals $\nu(t)$. Furthermore, $\nu(t)\le \binom{\check{\mc{R}}}{t}M^t$.}
    \item\label{lem:cond_prob_psi} It holds that  $\P\left(|\W_\eMD|=\psi \,\bigg|\, |\We|=t\right) = \nu(t, \psi)/\nu(t)$.
       \end{enumerate}
\end{lem}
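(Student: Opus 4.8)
The plan is to treat the two claims separately: part (1) by a direct combinatorial enumeration of the admissible decoded sets, and part (2) by an exchangeability argument over the i.i.d.\ Gaussian codebook under the measure $q$.

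For part (1), I would fix the set $\We$ of $t$ extra-error active users together with their transmitted codewords, and enumerate the possible $\hWe$. Since $\hKa=\kap$ is fixed in this special case, the number of extra false alarms must equal the number of extra missed detections, so I would index configurations by $\psi=|\W_\eMD|\in[0:\min\{t,\check{\mc{R}}-t\}]$ and count as follows: choose the $t-\psi$ users of $\We$ that become active-user errors ($\binom{t}{t-\psi}$ ways) and assign each an incorrect codeword index ($(M-1)^{t-\psi}$ ways); then choose the $\psi$ false-alarm users from the $L-\ka=\check{\mc{R}}-t$ genuinely silent users ($\binom{\check{\mc{R}}-t}{\psi}$ ways) and assign each a codeword index ($M^\psi$ ways). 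Because the $\We$-users and the silent users are disjoint and $\psi$ is recoverable from $\hWe$ as the number of its elements belonging to silent users, these choices are in bijection with the admissible $\hWe$ having split $\psi$; multiplying the four factors yields $\nu(t,\psi)$, and summing over $\psi$ gives $\nu(t)$. For the upper bound, I would note that every admissible $\hWe$ selects $t$ distinct users from the combined pool of $\check{\mc{R}}=L-\ka+t$ users (the silent users together with $\We$), each carrying at most $M$ codeword indices; relaxing the per-user index restriction to $M$ and dropping the matching constraint between false alarms and extra missed detections only enlarges the count, giving $\nu(t)\le\binom{\check{\mc{R}}}{t}M^t$.

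For part (2), I would argue that under $q$ all codewords are i.i.d.\ $\normal_n(\bzero,P'\bI)$, hence exchangeable, and every non-transmitted codeword is independent of $\by$ (a function of the transmitted codewords and the noise alone). Conditioning on $\{|\We|=t\}$ and, using the symmetry of the ensemble over user labels and codeword indices, on a fixed realisation of $\We$, I would show that the set $\hWe$ returned by the least-squares decoder \eqref{eq:min_distance_decoder} is uniformly distributed over the $\nu(t)$ configurations enumerated in part (1): any two admissible configurations should be exchanged by a measure-preserving relabelling of the candidate (non-transmitted) codewords that leaves both $\by$ and the decoding objective invariant. Grouping the $\nu(t)$ equally likely configurations by their split then gives $\P(|\W_\eMD|=\psi\mid|\We|=t)=\nu(t,\psi)/\nu(t)$, since exactly $\nu(t,\psi)$ of them have $\psi$ extra missed detections.

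The hard part will be making the conditional uniformity in part (2) fully rigorous. One must verify that conditioning on the error multiplicity $\{|\We|=t\}$ does not bias the least-squares decoder towards configurations with a particular split $\psi$; this is delicate because the at-most-one-codeword-per-user constraint interacts with the conditioning event and is not preserved by every codeword relabelling, so the clean exchangeability symmetry only acts within a fixed $\psi$. I expect to resolve this following the symmetry argument underlying \cite[Theorem 1]{ngo2022unsourced} (illustrated there in Fig.~7), adapted to the present setting where distinct users have distinct codebooks — which is exactly what changes the combinatorial factors from $M$ to $M-1$ for active-user errors and fixes the silent-user pool size at $L-\ka$.
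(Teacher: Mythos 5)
Your proposal follows the paper's proof essentially step for step. For part (1), the paper performs the identical count: $\psi$ false alarms drawn from the codebooks of the $L-\ka=\check{\mc{R}}-t$ silent users (factor $\binom{\check{\mc{R}}-t}{\psi}M^\psi$) and $t-\psi$ active-user errors drawn from the untransmitted codewords of the $\We$-senders (factor $\binom{t}{t-\psi}(M-1)^{t-\psi}$), with the bound $\nu(t)\le\binom{\check{\mc{R}}}{t}M^t$ obtained by relaxing $(M-1)^{t-\psi}$ to $M^{t-\psi}$ and applying Vandermonde's identity — the same relaxation you describe verbally. For part (2), the paper's argument is exactly your exchangeability argument, compressed to a single sentence (``since all codewords are exchangeable across codebooks, each possible configuration of $\hWe$ occurs with equal probability''), followed by taking expectation over $\We$.

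One remark on the ``hard part'' you flag. The cross-$\psi$ uniformity issue — that swapping the value of a wrong codeword in an active user's codebook with that of a codeword in a silent user's codebook is measure-preserving but does \emph{not} map the one-codeword-per-codebook search space of the decoder \eqref{eq:min_distance_decoder} onto itself — is not addressed in the paper either; the paper asserts exchangeability across codebooks without distinguishing configurations with different splits $\psi$. So your concern does not indicate a divergence from the paper's route; rather, you have identified the one step where the paper's own proof is terse, and any rigor you add there would strengthen, not change, the argument. Be aware, though, that your plan to import the symmetry argument of \cite{ngo2022unsourced} will not resolve this by itself: in the unsourced setting all non-transmitted codewords lie in a single common codebook and are fully exchangeable with no co-occurrence constraints, which is precisely why the subtlety you identified never arises there, whereas here it is created by the distinct-codebook constraint structure.
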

 
Substituting \eqref{eq:p_Wa=t_special_case_lem}--\eqref{eq:u(t,psi),u(t)_special_case} into \eqref{eq:pMD_special_case}--\eqref{eq:pAUE_special_case} yields Theorem \ref{thm:ngo_ach} for  the special case. We next provide the proofs of Lemmas \ref{lem:p_W_aMD_union_W_AUE} and \ref{lem:cond_p_aMD}.
\qed

\begin{proof}[Proof of Lemma \ref{lem:p_W_aMD_union_W_AUE}]
We recall from \eqref{eq:min_distance_decoder} the notation $c(\W'):=\sum_{j:(j,w_j')\in \W'} \bc_{w_j'}^{(j)}\,$. The channel output  takes the form
\begin{equation}\label{eq:def_y_special_case}
           \by = c(\W) +\bveps =c(\W_\iMD) +c(\We) +c(\W\cap \hW) +\bveps\,,
       \end{equation}
       where we have used the fact  $\W= \W_\iMD \,  \cup  \, \We  \, \cup \,  (W\cap \hW)$, \edit{where the subsets forming the union are pairwise disjoint} (see Fig.~\ref{fig:special_case_venn}). 
       Since the set decoded by the maximum-likelihood decoder can be expressed as $\hW = \hWe \cup (\W\cap \hW) $, we have 
       $$\|\by - c(\hWe)-c(\W\cap \hW)\|_2^2 \, < \, \|\by-c(\We)-c(\W\cap \hW)\|_2^2\,, $$ where we have used $|\We| = |\hWe|$. 
       Combined with 
       \eqref{eq:def_y_special_case}, this is equivalent to
\begin{equation}\label{eq:special_case_error_criterion}
    \|c(\W_\iMD)+c(\We)-c(\hWe)+\bveps\|_2^2<\|c(\W_\iMD)+\bveps\|_2^2\,   . 
       \end{equation}
       Letting $\mc{E}(\W_\iMD, \We, \hWe)$ denote the event that  $(\W_\iMD, \We, \hWe)$ satisfies \eqref{eq:special_case_error_criterion} with $|\We|=|\hWe|=t$,  we have 
       \begin{align}\label{eq:p_Wa=p_union_E}
           \P\left(\abs{\We}=t\right)
    =\P\left(\bigcup_{\We: \abs{\We}=t}\,\bigcup_{\hWe: |\hWe|=t} \mc{E}(\W_\iMD, \We, \hWe)\right), \quad \text{for}\; t\in [0:\kap].
       \end{align}
       To bound the right side of \eqref{eq:p_Wa=p_union_E}, we first note that
       \begin{align}
           &\P\left(\mc{E}(\W_\iMD, \We, \hWe)\,\bigg|\, \W_\iMD, \We, \bveps\right)\nonumber\\
           & = \P\left(\|c(\W_\iMD)+c(\We)-c(\hWe)+\bveps\|_2^2<\|c(\W_\iMD)+\bveps\|_2^2 \,\bigg| \, \W_\iMD, \We, \bveps\right)\nonumber\\
           & \stackrel{\text{(a)}}{\le }\exp\left(\check{\lambda}_0 \|c(\W_\iMD)+\bveps\|_2^2\right)\E_{\hWe}\left[\exp\left(-\check{\lambda}_0 \|c(\W_\iMD)+c(\We)-c(\hWe)+\bveps\|_2^2\right)\,\bigg|\, \W_\iMD, \We, \bveps\right]\nonumber\\
           &\stackrel{\text{(b)}}{=}\exp\left(\check{\lambda}_0\|c(\W_\iMD)+\bveps\|_2^2\right)
           (1+2P't\check{\lambda}_0)^{-\frac{n}{2}}\exp\left(-\frac{\check{\lambda}_0\|c(\W_\iMD)+c(\We)+\bveps\|_2^2}{1+2P't\check{\lambda}_0}\right), \; \forall \check{\lambda}_0>-\frac{1}{2P't}\nonumber\\
           & =: \exp\left(-\textsf{A}(\W_\iMD, \We, \bveps)\right),\label{eq:base_event_E_cond_all}
       \end{align}
       where (a) applies the  Chernoff bound in Lemma \ref{lem:chernoff}, and (b) applies the Gaussian identity in Lemma \ref{lem:gaussian_identity},  noting that  $c(\hWe)\sim \normal_n(\bzero, P't\bI)$.
We now apply Gallager's $\rho$-trick   \eqref{eq:rho_trick_cond}  twice to add two unions $\bigcup_{\hWe}$ and $\bigcup_{\We}$  to the left side of \eqref{eq:base_event_E_cond_all} and obtain an upper  bound on \eqref{eq:p_Wa=p_union_E}.

\paragraph{$\rho$-trick for   $\bigcup_{\hWe}$} We first apply \eqref{eq:rho_trick_cond} to add the union over $\hWe$ to the left side of \eqref{eq:base_event_E_cond_all},  obtaining   for every $\rho\in [0,1],$
       \begin{align}
    \P\left(\bigcup_{\hWe: |\hWe|=t}\mc{E}(\W_\iMD, \We, \hWe)\,\bigg| \, \W_\iMD, \We, \bveps\right)
        \le  \left[\binom{\check{\mc{R}}}{t} M^{t}\right]^\rho \exp\left(-\rho \textsf{A}(\W_\iMD, \We, \bveps)\right)\label{eq:apply_rho_trick_1},
       \end{align}
      where  recall that $M$ denotes the number of messages in each codebook. \edit{Given $\We$ with size $|\We|=t$, the multiplicative factor $\binom{\check{\mc{R}}}{t}M^t$ 
      is an upper bound on the number of ways to construct $\hWe$ given $\hWe$ with size $|\hWe|=t$, as we   prove in Lemma \ref{lem:cond_p_aMD}, Part \ref{lem:num_values_of_We}}. 
      %
%
    Expanding the right side of \eqref{eq:apply_rho_trick_1} \edit{and taking expectation over $\We$ yields}
       \begin{align}
    &  \P\left(\bigcup_{\hWe: |\hWe|=t}\mc{E}(\W_\iMD, \We, \hWe)\,\bigg| \, \W_\iMD, \bveps\right) \nonumber\\
    & \le \binom{\check{\mc{R}}}{t}^\rho M^{\rho t}
        (1+2P't\check{\lambda}_0)^{-\frac{n\rho}{2}}
\exp\left(\rho\check{\lambda}_0 \|c(\W_\iMD)+\bveps\|_2^2\right) \cdot \nonumber\\
& \quad\;\;  \E_{\We}\left[\exp\left(-\frac{\rho\check{\lambda}_0 \|c(\W_\iMD)+c(\We)+\bveps\|_2^2}{1+2P't\check{\lambda}_0}\right)\,\bigg|\, \W_\iMD, \bveps\right]\nonumber\\
& \stackrel{\text{(a)}}{=}  \binom{\check{\mc{R}}}{t}^\rho M^{\rho t}
        (1+2P't\check{\lambda}_0)^{-\frac{n\rho}{2}}
\exp\left(\rho\check{\lambda}_0 \|c(\W_\iMD)+\bveps\|_2^2\right)\cdot \nonumber\\
& \quad \;\; 
\left[ (1+2P't\check{\chi}_0)^{-\frac{n}{2}}\exp\left(-\frac{\check{\chi}_0\|c(\W_\iMD)+\bveps\|_2^2}{1+2P't\check{\chi}_0}\right) \right]\nonumber\\
&\stackrel{\text{(b)}}{=}  \binom{\check{\mc{R}}}{t}^\rho M^{\rho t}
\exp\left(-\frac{n\check{a}_0}{2}+\check{b}_0\|c(\W_\iMD)+\bveps\|_2^2\right),\label{eq:apply_rho_trick_1_ext}
       \end{align}
       where (a) applies the Gaussian identity in Lemma \ref{lem:gaussian_identity} and the shorthand $\check{\chi}_0:=\rho\check{\lambda}_0/(1+2P't\check{\lambda}_0)$, noting that $c(\We)\sim \normal_n(\bzero, P't\bI)$; (b) uses the 
 shorthand  $\check{b}_0: = \rho\check{\lambda}_0 - {\check{\chi}_0}/(1+2P't\check{\chi}_0)$ and $\check{a}_0:=\rho\ln(1+2P't\check{\lambda}_0)+\ln(1+2P't\check{\chi}_0)$.

\paragraph{$\rho$-trick for   $\bigcup_{\We}$}   We now apply the $\rho$-trick \eqref{eq:rho_trick_cond} again to add the  union  over $ \We$ to the left side of \eqref{eq:apply_rho_trick_1_ext}.  \edit{Given $\W_\iMD$}, since $\We\subset \W\setminus \W_{\iMD}$  and $|\W\setminus \W_{\iMD}| = \kap$, there are $\binom{\kap}{t}$ ways to construct $\We$ with size $t$. Therefore, for $\rho_1\in[0,1]$, we have
    \begin{align}
& \P\left(\bigcup_{\We:\abs{\We}=t}\bigcup_{\hWe: |\hWe|=t} \mc{E}(\W_\iMD, \We, \hWe)\,\bigg|\, \W_\iMD, \bveps \right) \nonumber\\
& \le  \binom{\kap}{t}^{\rho_1} \binom{\check{\mc{R}}}{t}^{\rho\rho_1} M^{\rho\rho_1 t}
\exp\left(-\frac{n\rho_1\check{a}_0}{2}+\rho_1\check{b}_0\|c(\W_\iMD)+\bveps\|_2^2\right) \,.
\end{align}
\edit{Taking expectation over $\W_\iMD$, and $\bveps\sim\mc{N}_n(\bzero, \bI)$,  we obtain}
\begin{align}
&\P\left(\bigcup_{\We:\abs{\We}=t}\bigcup_{\hWe: |\hWe|=t} \mc{E}(\W_\iMD, \We, \hWe)\right)\nonumber\\
       & \le \binom{\kap}{t}^{\rho_1} \binom{\check{\mc{R}}}{t}^{\rho\rho_1} M^{\rho\rho_1 t} 
\exp\left(-\frac{n\rho_1\check{a}_0}{2}\right)
\E_{\W_\iMD, \bveps}\left[\exp\left(\rho_1\check{b}_0\|c(\W_\iMD)+\bveps\|_2^2\right)\right]\nonumber\\
& = \binom{\kap}{t}^{\rho_1} \binom{\check{\mc{R}}}{t}^{\rho\rho_1} M^{\rho\rho_1 t} 
\exp\left(-\frac{n\rho_1\check{a}_0}{2}\right)
\left[1-2\rho_1 \check{P}_1\check{b}_0\right]^{-\frac{n}{2}}
\label{eq:p_double_unions_E_expanded_special}.
    \end{align}
   The last step in \eqref{eq:p_double_unions_E_expanded_special} applies the Gaussian identity in Lemma \ref{lem:gaussian_identity} to $c(\W_\iMD) +\bveps\sim \normal_n(\bzero, \check{P}_1\bI)$ with  $\check{P}_1=\abs{\W_\iMD}P'+1=(\ka-\kap)P'+1$ as defined in \eqref{eq:mu0_P2_lem}.

    Finally, substituting \eqref{eq:p_double_unions_E_expanded_special}  into \eqref{eq:p_Wa=p_union_E} \edit{and minimizing the upper bound with respect to $\rho, \rho_1\in[0,1]$ and $\check{\lambda}_0>-\frac{1}{2P't}$} yields
    \begin{align}
    &    \P\left(\abs{\We}=t\right) \le 
      \min_{\substack{\rho,\rho_1\in[0,1],\\ \check{\lambda}_0>-\frac{1}{2P't}}  }\binom{\kap}{t}^{\rho_1}\binom{\check{\mc{R}}}{t}^{\rho\rho_1} M^{\rho\rho_1 t} 
\exp\left(-\frac{n\rho_1\check{a}_0}{2}\right)
\left[1-2\rho_1\check{P}_1\check{b}_0\right]^{-\frac{n}{2}}\nonumber\\
& =\min_{\substack{\rho,\rho_1\in[0,1],\\\check{\lambda}>-\frac{1}{P't}} } \binom{\kap}{t}^{\rho_1} \binom{\check{\mc{R}}}{t}^{\rho\rho_1} M^{\rho\rho_1 t} 
\exp\left(-\frac{n\rho_1 \check{a}}{2}\right)\left[1-\rho_1\check{P}_1\check{b}\right]^{-\frac{n}{2}} = \exp\left(-\frac{n}{2}\check{E}(t)\right),\label{eq:p_Wa_special_case_proof}
    \end{align}
 where  $\check{\lambda}:=2\check{\lambda}_0$, $\check{E}(t)$ is given in \eqref{eq:E(t,t)_special_case_lem}--\eqref{eq:E0_special_case_lem},  and $ \check{a}, \check{b}, \check{\chi}$ are given in \eqref{eq:a0_b0_lem}--\eqref{eq:chi_special_case_lem}, which implies that $ \check{a}_0=\check{a}$, $\check{b}_0=\check{b}/2$ and $\check{\chi}_0=\check{\chi}/2$.
\end{proof}

   \begin{remark}[Comparisons with 
   unsourced setting]\label{rmk:ours_vs_ngo}\normalfont For unsourced random access, the $\binom{\check{\mc{R}}}{t} M^t$ factor in  \eqref{eq:apply_rho_trick_1} needs to be  replaced by  $\binom{M-\ka}{t}$, as in \cite[Eq.\@ (82)]{ngo2022unsourced}. This is because in unsourced random access,  the $t$ elements of $\hWe$ are chosen from the $(M-\ka)$  codewords that are not transmitted in the common codebook. 
   \end{remark}

\begin{proof}[Proof of Lemma \ref{lem:cond_p_aMD}]

\edit{\textbf{Proof of part \ref{lem:num_values_of_We}:}} \edit{Given $\We$ with size $|\We|=t$,} we observe from Fig.~\ref{fig:special_case_venn} that each element of $\hWe$  comes from  either:
\begin{enumerate}[i), leftmargin=0.7cm, labelindent=0.7cm]
    \item \label{item:silent_special}one of the   codebooks of the $L-\ka=\check{\mc{R}}-t$ silent users, or
    \item \label{item:active_special} the set of untransmitted codewords in one of the codebooks of the $t$  users who transmitted $\We$.
\end{enumerate}
We refer to these sets as type-\ref{item:silent_special}  and type-\ref{item:active_special}, respectively.
Noting that $\hWe=\hW_\eFA\cup \hW_\eAUE$,   each element of $\hW_\eFA$ comes from a type-\ref{item:silent_special} codebook and each element of $\hW_\eAUE$ comes from a type-\ref{item:active_special} codebook  excluding $\We$. The decoding rule \eqref{eq:min_distance_decoder} ensures that all the elements of $\hWe=\hW_\eFA\cup \hW_\eAUE$ come from distinct codebooks.  
Hence,  to have $|\hWe|=t$ with   $|\hW_\eFA|=\psi$ and $|\hW_\eAUE|=t-\psi$,  for  $\psi\in [0:\min\{t, L-\ka\}]$, $\hWe$ needs to contain  $\psi$ codewords, each from  a distinct  type-\ref{item:silent_special} codebook, and $(t-\psi)$ codewords,  each from a distinct type-\ref{item:active_special} codebook. There are 
 \begin{equation}\label{eq:nu_t_psi_special}
     \nu(t,\psi) = \binom{L-\ka}{\psi}M^\psi \cdot \binom{t}{t-\psi}(M-1)^{t-\psi}
 \end{equation}
 ways to construct such a set  $\hWe$.  \edit{Moreover, recalling $\check{\mc{R}}=L-\ka+t$,} there are
 \begin{align}\label{eq:nu_t_special_def}
\nu(t)\coloneqq\sum_{\psi=0}^{\min\{t, L-\ka\}}\nu(t,\psi) \le \sum_{\psi=0}^{\min\{t, L-\ka\}}\binom{L-\ka}{\psi} \cdot \binom{t}{t-\psi} \cdot M^{t} = \binom{\check{\mc{R}}}{t} M^t 
 \end{align}
 ways in total  to construct  $\hWe$ for any valid values of  $ \psi$, \edit{where the inequality in \eqref{eq:nu_t_special_def} holds due to Vandermonde's identity.}

\paragraph{\edit{Proof of part \ref{lem:cond_prob_psi}:}}  We characterize the conditional distribution of  $|\W_{\eMD}|$ 
by characterizing the conditional distribution of $|\hW_{\eFA}|$ since  $ |\W_{\eMD}|=|\hW_{\eFA}|$ (see Fig.~\ref{fig:special_case_venn}).  \edit{The key is to leverage the combinatorial argument established in  part \ref{lem:num_values_of_We}. Given $\We$, we know there are  $\nu(t,\psi)$ possible configurations for $\hW_{\eFA}$  with size $|\hW_{\eFA}| = \psi$, and $\nu(t)$ total possible configurations for $\hWe$ with size  $|\hWe| = t$.} 
%
Since all codewords  are exchangeable across codebooks, each possible configuration of $\hWe$ occurs with equal probability. 
We therefore have
\begin{align}
    \P\left(|\hW_\eFA| = \psi\,\bigg|\,  |\hWe|=t, \We\right) = \frac{\nu(t,\psi)}{\nu(t)}\label{eq:u_tb/ut_special_case}\,.
\end{align}
 Finally, using  $|\W_\eMD|=|\hW_\eFA|$ and $ |\We|= |\hWe| $ \edit{and taking expectation over $\We$ concludes the proof.} 
\end{proof}

  \subsection{The General Case}\label{sec:general_case}
\begin{figure}[t!]
    \centering
    \hspace*{1cm}\includegraphics[width=0.77\columnwidth]{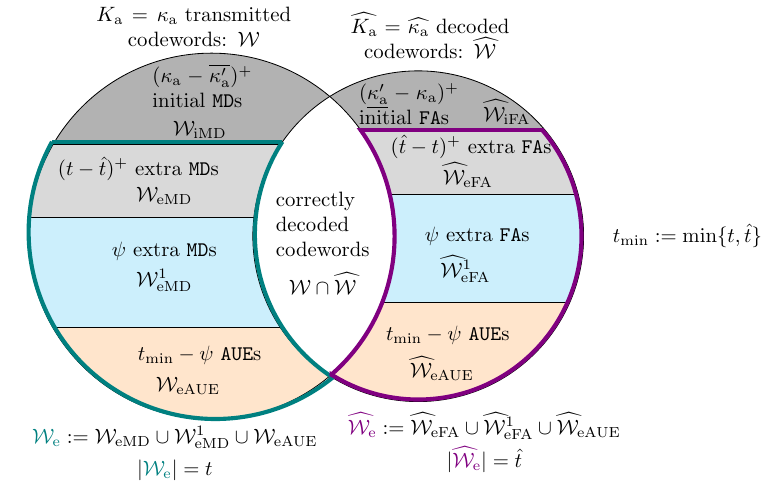}\hspace*{-1cm}
    \caption{Venn diagram illustrating the relation between different sets of messages in the general case, where $\Kap, \Ka$ are random and  \edit{$\rl, \ru\ge 0$}. Diagram analogous to \cite[Fig.\@ 8]{ngo2022unsourced}. Recall notation $x^+=\max\{x,0\}$.}
    \label{fig:general_case_venn}
    \end{figure}
    
This section provides the proof of  Theorem \ref{thm:ngo_ach} in the general case, where  $\Ka$, $\Kap$ (and hence $\hKa$) are  random, and the decoding \edit{radii $\rl, \ru\ge 0$}. As before, we use $\ka$, $\kap, \ukap, \okap$ and $\hka$ to denote the realization of random variables $\Ka$, $\Kap, \uKap, \oKap$ and $\hKa$.

Fig.~\ref{fig:general_case_venn} illustrates the relation between different sets of codewords. 
As before, let  $\W$ be the  set of transmitted codewords with size $\abs{\W}=\ka$ in a particular instance.  
Recall that given \edit{$\rl$ and $\ru$}, the decoder  computes $\kap$ in \eqref{eq:determine_Ka'} and determines  the decoded set $\hW$ in \eqref{eq:min_distance_decoder} with size $|\hW|=\hka\in[\ukap: \okap]$, where  $\ukap$ and $\okap$  are functions of $\kap$.
%
Similar to the special case, the difference between $\ka$ and $\hka$ leads to  $|\ka-\hka|$  $\MD$s or   $\FA$s, depending on whether or not $\ka > \hka$.
 However,  given $\ka $ and $\kap$ (hence $\ukap$ and $\okap$), $\hka$ is unknown and the only knowledge  we have about $\hka$ is that  $\hka\in[\ukap: \okap]$. Therefore, 
 \begin{enumerate}[a)]
     \item \label{enum:a} If $\ka >\okap$, we are guaranteed with $(\ka-\okap)$  $\MD$s initially, represented by $\W_{\iMD}$.
     \item \label{enum:b} If $\ka< \ukap $, we are guaranteed with $(\ukap - \ka)$ $\FA$s initially, represented by $\hW_{\iFA}$.
     \item \label{enum:c} If $\ka\in [\ukap: \okap]$,  the relation between $\ka$ and $\hka$ is unknown, so we cannot declare any  $\MD$s or $\FA$s initially. 
 \end{enumerate}
To ensure our results hold across  all three scenarios \ref{enum:a}, \ref{enum:b} and \ref{enum:c}, we denote $|\W_{\iMD}| $ by $(\ka-\okap)^+$ and $ |\hW_{\iFA}| $ by $ (\ukap-\ka)^+$. The $\{\cdot\}^+$ operators ensure that at most one of $\W_{\iMD}$ and $\hW_{\iFA}$ is non-empty.

As before, $\W\cap \hW$ represents the set of transmitted codewords decoded correctly. The sets 
$\We:=(\W\setminus\hW)\setminus \W_\iMD$ and 
$\hWe:=(\hW\setminus \W)\setminus \hW_\iFA$ correspond to any additional errors. $\We$ contains the extra $\MD$s and $\AUE$s, while $\hWe$ contains the  extra $\FA$s and $\AUE$s. 
In contrast to the special case where $|\We|=|\hWe|$, in the general case $\We$ and $\hWe$ may differ in size. We  denote their sizes by  $|\We|=t$ and $|\hWe|=\hatt$.
It follows   that there are $|t-\hatt|$ additional $\MD$s or $\FA$s depending on whether or not  $t> \hatt$. These extra errors are represented by $\W_\eMD$ and $\hW_\eFA$, respectively, with sizes $ |\W_{\eMD}| = (t-\hatt)^+$ and $ |\hW_{\eFA}| = (\hatt-t)^+$.

Having defined  $\W_\iMD, \hW_\iFA, \W_\eMD,$ and $\hW_\eFA,$ it is easy to see that the remaining sets  $\We\setminus\W_\eMD$ and $ \hWe\setminus \hW_\eFA$ share the same size $\tmin:=\min\{t, \hatt\}$. 
Similar to the special case, $\We\setminus\W_\eMD$ and $ \hWe\setminus \hW_\eFA$ each  contains two types of errors, coloured in blue and orange, respectively.
The sets in blue, denoted by  $\W_\eMD^1$ and $ \hW_\eFA^1$, are the  extra $\MD$s and $\FA$s. Their elements \edit{are in one-to-one correspondence with $|\W_\eMD^1| = |\hW_\eFA^1|$: each  transmitted codeword  $w\in\W_\eMD^1$ corresponds to  a unique decoded codeword  $\hw\in\hW_\eFA^1$ from a silent user's codebook, where   $w$ is mistakenly decoded as $\hw$.} 
The sets in orange, denoted by $\W_\eAUE$ and  $\hW_\eAUE$, are the  extra  $\AUE$s.  Their elements \edit{are also in one-to-one correspondence with $ |\W_\eAUE| = |\hW_\eAUE|$, because for every transmitted codeword  $w\in\W_\eAUE$, there is a unique decoded codeword  $\hw\in\hW_\eAUE$ from  the same codebook as $w$ such  that  $w$ is decoded as $\hw$.}

\begin{figure}[t!]
    \centering
    \subfloat[$\ka >\okap$ implies  $t\ge \hatt$.]{\includegraphics[width=0.48\linewidth]{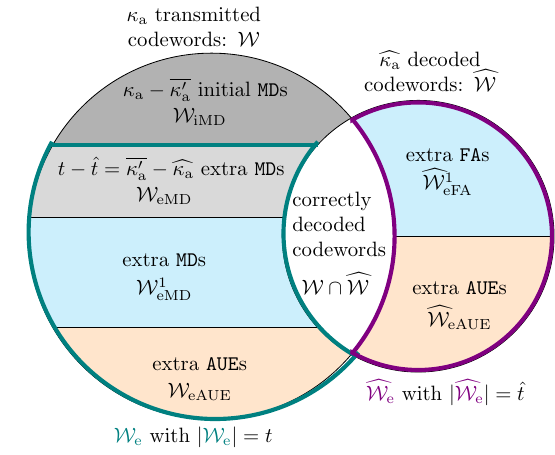}\label{fig:general_case_venn_larger_Ka}}
    \hfill
    \subfloat[$\ka<\ukap$ implies  $t\le \hatt$.]{\includegraphics[width=0.45\linewidth]{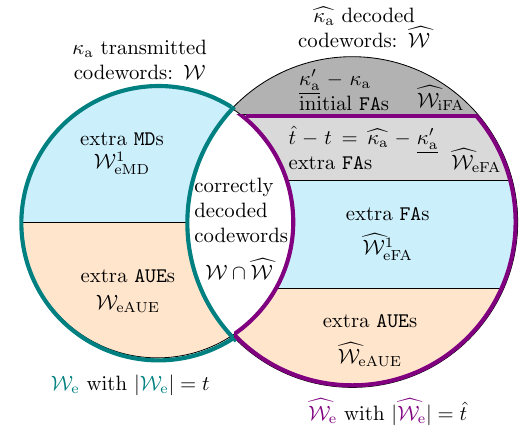}
    \label{fig:general_case_venn_larger_hKa}}\\
    \subfloat[{$\ka \in  [\ukap: \okap ]$} allows either $t\ge \hatt$ or $t< \hatt$. ]{\includegraphics[width=0.45\linewidth]{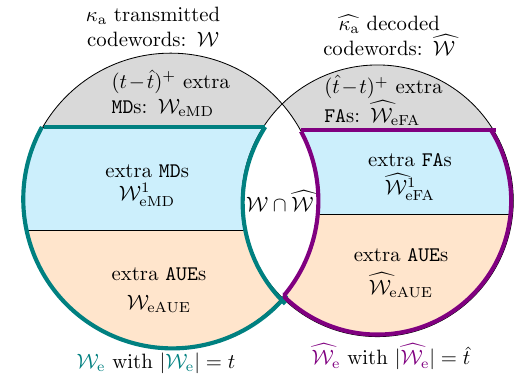}
    \label{fig:general_case_venn_ambiguous}}
    \caption{Venn diagrams illustrating the relation between different sets of messages in the general case, where $\Kap, \Ka$ are random and  \edit{$\rl, \ru\ge 0$}. Subfigures correspond to three different scenarios: $\ka >\okap$, $\ka<\ukap$, or $ \ka\in [\ukap: \okap]$.}
    \label{fig:venn_general_specialized}
    \end{figure}

\begin{remark}[Different values of $\ka$]\normalfont
    Note that $\ka> \okap $ implies $t\ge \hatt$, and 
 $\ka< \ukap$ implies $t\le \hatt$, whereas  $ \ka\in [\ukap: \okap]$ allows $t$ and $\hatt$ to have either relation. We verify the first claim  as an example: when $\ka> \okap\ge \hka$,  we have   $\ka - \hka =|\W_\iMD| + |\W_{\eMD}|  $, which implies that $|\W_{\eMD}| = (\ka - \hka) - |\W_{\iMD}| = \okap-\hka =|\We\cup (\W\cap \hW)| - |\hWe\cup (\W\cap \hW)| = t-\hatt\ge 0$ (see Fig.~\ref{fig:general_case_venn_larger_Ka}).
 The second and third 
 claims can be verified similarly.
 Fig.~\ref{fig:venn_general_specialized}  illustrates how the relation between different sets of messages  shown in  Fig.~\ref{fig:general_case_venn} specializes  to the three different cases: $\ka> \okap$, $\ka< \ukap$, or $ \ka\in [\ukap: \okap]$.
\end{remark}

We now proceed to prove  Theorem \ref{thm:ngo_ach} in the general case. First note that the  right side of \eqref{eq:change_measure_pMD}--\eqref{eq:change_measure_pAUE}  can be rewritten as:
\begin{align}
    \pMD &\le \E\left[\frac{\abs{\W_\iMD} + \abs{\W_\eMD} +\abs{\W_\eMD^1}}{\Ka}\right] +\tp \nonumber\\
    &
    =\E\left[\frac{(\Ka-\oKap)^+ + (|\We|-|\hWe|)^+ +\abs{\W_\eMD^1}}{\Ka}\right] +\tp \label{eq:pMD_general_case}\,,\\
    \pFA &\le \E\left[\frac{|\hW_\iFA|+|\hW_\eFA|+ |\hW_\eFA^1|}{\hKa}\right] + \tp \nonumber\\
    &= \E\left[\frac{(\uKap -\Ka)^+ +(|\hWe| -|\We|)^++ |\W_\eMD^1|}{ \Ka -|\We| -(\Ka-\oKap)^+  +|\hWe|+ (\uKap-\Ka)^+}\right] + \tp \label{eq:pFA_general_case}\,,\\
    \pAUE &\le \E\left[\frac{|\W_\eAUE|}{\Ka}\right] +\tp
    =\E\left[\frac{\min\{|\We|, |\hWe|\}-|\W_\eMD^1|}{\Ka}\right] +\tp\,.\label{eq:pAUE_general_case}
\end{align}
Since $\uKap$ and $\oKap$ are functions of $\Kap$, the expectations above are w.r.t.\@ $(\Ka, \Kap, |\We|, |\hWe|)$. To evaluate the right side of \eqref{eq:pMD_general_case}--\eqref{eq:pAUE_general_case}, observe that it suffices to characterize the joint  distribution of $(\Ka, \Kap, |\We|, |\hWe|)$ and the distribution of $|\W_{\eMD}^1|$ conditioned on $(\Ka, \Kap, |\We|, |\hWe|)$. We provide these distributions in Lemmas \ref{lem:general_p_t,tHat}
 and \ref{lem:cond_expectation_general_case} after introducing the necessary notation. 

 \edit{Following the definitions  in \eqref{eq:T_thm}--\eqref{eq:ut_thm} of Theorem \ref{thm:ngo_ach}, let 
 $\T$ denote the set of possible values for $t=|\We|$, and let $\hTt$ denote the set of possible values for  $\hatt=|\hWe|$ when $|\We|=t$.}
 We repeat these definitions below for  convenience:
  \begin{align}
     &\T :=\left[0: \min\{\ka,\okap\}\right] \label{eq:def_T}, \\
     & \hTt := \left[\left\lbrace t+(\ka-\okap)^+ - (\ka-\ukap)^+ \right\rbrace^+: \tu\right] \label{eq:def_barTt}, \\
     & \text{where}\quad \tu  := \min\left\lbrace\okap-(\ukap-\ka)^+, \; t+(\okap-\ka)^+ -(\ukap-\ka)^+ \right\rbrace .\label{eq:def_tu}
 \end{align}
In particular, \edit{Eq.~\eqref{eq:def_T}} follows from  $|\W\setminus\hW|=|\W_\iMD\cup \We|\le |\W|$, i.e., $(\ka-\okap)^++t\le\ka$. We verify \eqref{eq:def_barTt}--\eqref{eq:def_tu} by   noticing that  (i) $|\hW\setminus \W|=|\hW_\iFA\cup \hWe|\le |\hW| =\hka\le \okap$ leads to  $(\ukap-\ka)^++\hatt\le  \okap$, and (ii) $\hka\in[\ukap: \okap]$ implies 
 \begin{align}
     &\ukap\le \hka= \ka-t-(\ka-\okap)^+  +\hatt+ (\ukap-\ka)^+\le \okap\nonumber\\
     \Rightarrow & 
     \begin{cases}
      \hatt\le t + (\okap-\ka)^+-(\ukap-\ka)^+\\
      \hatt \ge t +(\ka-\okap)^+ - (\ka-\ukap)^+\,.
    \end{cases}  \nonumber
 \end{align}
 
 We next present Lemmas \ref{lem:general_p_t,tHat} and \ref{lem:cond_expectation_general_case} using the definitions in  \eqref{eq:def_T}--\eqref{eq:def_tu}. We also recall 
 that 
$\Ka\in[\kl: \ku]$ under the new measure specified in Section \ref{sec:change_of_measure}, and that $\Kap\in[\kl: \ku]$  due to the constraints in the decoding rule \eqref{eq:determine_Ka'}.

 \begin{lem}\label{lem:general_p_t,tHat}
    In the general case,  for every $\ka\in[\kl: \ku]$, $\kap\in[\kl: \ku]$, $ t\in \T$ and  $\hatt\in \hTt$, we have
    \begin{align}
          \P\left(\Ka=\ka, \Kap=\kap, \abs{\We}=t,  |\hWe|=\hatt\right)
    \le p_\Ka\left(\ka\right)\min\left\lbrace p(t,\hatt), \xi(\ka, \kap)\right\rbrace,\label{eq:prob_4_items_general_case}
    \end{align}
    where $p(t,\hatt)$ and $\xi(\ka, \kap)$ are defined  in \eqref{eq:p_t,tHat_def} and  \eqref{eq:xi_thm} of Theorem \ref{thm:ngo_ach}.
\end{lem}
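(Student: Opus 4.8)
The plan is to factor the joint probability via the chain rule and then bound the two quantities in the minimum separately. Writing the event as the intersection $\{\Ka=\ka\}\cap\{\Kap=\kap\}\cap\{|\We|=t\}\cap\{|\hWe|=\hatt\}$, I would first peel off the prior on the number of active users, so that (under the measure $q$ of Section~\ref{sec:change_of_measure}, on which $\Ka\in[\kl:\ku]$ almost surely)
\[
\P(\Ka=\ka, \Kap=\kap, |\We|=t, |\hWe|=\hatt) = p_\Ka(\ka)\,\P\big(\Kap=\kap,\, |\We|=t,\, |\hWe|=\hatt \,\big|\, \Ka=\ka\big).
\]
It then suffices to show that the conditional probability is bounded above by \emph{both} $p(t,\hatt)$ and $\xi(\ka,\kap)$, since a quantity dominated by two upper bounds is dominated by their minimum.

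For the $\xi(\ka,\kap)$ bound, I would discard the constraints on $|\We|$ and $|\hWe|$ and bound $\P(\Kap=\kap\mid\Ka=\ka)$ directly through the ML number-of-users estimator in \eqref{eq:determine_Ka'}. Under $q$ and conditioned on $\Ka=\ka$, the received signal satisfies $\by\sim\normal_n(\bzero,(1+\ka P')\bI)$, so $\|\by\|_2^2/(1+\ka P')\sim\chi^2_n$. The event $\{\Kap=\kap\}$ is contained in $\{p(\by\mid\kap)\ge p(\by\mid\kappa)\}$ for every $\kappa\in[\kl:\ku]\setminus\{\kap\}$; writing out the two Gaussian likelihoods, each such pairwise event reduces to a one-sided threshold on $\|\by\|_2^2$, which becomes an upper- or lower-incomplete-Gamma tail depending on the sign of $\kappa-\kap$. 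Minimizing over $\kappa$ then reproduces $\xi(\ka,\kap)$ with the threshold $\zeta$ of \eqref{eq:zeta_thm}.

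For the $p(t,\hatt)$ bound, I would adapt the error-exponent argument of Lemma~\ref{lem:p_W_aMD_union_W_AUE} to the general configuration of Fig.~\ref{fig:general_case_venn}. Working on the event $\{\Kap=\kap\}$ (so that the search range $[\ukap:\okap]$ is fixed), I would express the decoding-error criterion as a squared-distance inequality comparing the decoded set $\hW$ against the genie reference set, condition on the initial error sets $\W_\iMD,\hW_\iFA$, on the matched part, and on the noise, and apply the Chernoff bound (Lemma~\ref{lem:chernoff}). Since the decoded-erroneous codewords $c(\hWe)$ are i.i.d.\@ Gaussian, the Gaussian identity (Lemma~\ref{lem:gaussian_identity}) integrates them out, and two successive applications of Gallager's $\rho$-trick (Lemma~\ref{lem:rho_trick}) add the unions over the admissible choices of $\hWe$ and of $\We$, yielding the rate terms $R_1(t,\hatt)$ and $R_2(t)$ with the combinatorial counts $\binom{\mc{R}}{\tmin}M^{\tmin}$ and $\binom{\min\{\ka,\okap\}}{t}$. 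Finally I would integrate out the remaining Gaussians via $c(\W_\iMD)+c(\hW_\iFA)+\bveps\sim\normal_n(\bzero,P_1\bI)$, producing the factor $P_1$ of \eqref{eq:P1_thm}, and optimizing over $\rho,\rho_1,\lambda$ gives $E(t,\hatt)$ and hence $p(t,\hatt)$.

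The main obstacle will be the bookkeeping in this error-exponent step: unlike the special case, here $|\We|=t$ and $|\hWe|=\hatt$ may differ, there are initial false alarms $\hW_\iFA$ in addition to initial missed detections $\W_\iMD$, and the matched portion of size $\tmin=\min\{t,\hatt\}$ splits into a false-alarm type and an active-user-error type. Obtaining the correct variance $P_1$ in the final Gaussian integral (which must collect the $(\ka-\okap)^+$ initial missed detections \emph{and} the $(\ukap-\ka)^+$ initial false alarms) and the correct count $\mc{R}=L-\ka+\tmin-(\ukap-\ka)^+-(\hatt-t)^+$ of still-available silent codebooks in $R_1$ requires a careful case split on the sign of $t-\hatt$ as in Fig.~\ref{fig:venn_general_specialized}; this is precisely where the analysis departs most from the unsourced treatment in \cite{ngo2022unsourced}.
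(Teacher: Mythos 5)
Your proposal follows essentially the same route as the paper's proof: peel off $p_{\Ka}(\ka)$, bound the conditional probability by each of $\xi(\ka,\kap)$ (via pairwise likelihood comparisons for the ML estimate of $\Ka$, reducing to incomplete-Gamma tails) and $p(t,\hatt)$ (via Chernoff, the Gaussian identity, two applications of Gallager's $\rho$-trick with the counts $\binom{\mc{R}}{\tmin}M^{\tmin}$ and $\binom{\min\{\ka,\okap\}}{t}$, and a final Gaussian integral with variance $P_1$), exactly as in the paper's terms (i) and (ii). The only cosmetic discrepancy is writing $c(\W_\iMD)+c(\hW_\iFA)+\bveps$ instead of $c(\W_\iMD)-c(\hW_\iFA)+\bveps$, which is immaterial since both have law $\normal_n(\bzero,P_1\bI)$.
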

 \begin{lem}\label{lem:cond_expectation_general_case}In the general case, define $\tmin:=\min\{t,\hatt\}$, $\mc{R}: = L-\ka +\tmin - (\ukap-\ka)^+ -(\hatt-t)^+ \ge \tmin$,  and $\psiu:=\min\{\tmin, \mc{R}-\tmin\}$ as in \eqref{eq:tmin_R_thm} and \eqref{eq:psiu_thm} of Theorem \ref{thm:ngo_ach}. For every $\ka\in[\kl: \ku]$, $\kap\in[\kl: \ku]$, $ t\in \T$, $\hatt\in \hTt$, and $\psi\in[0: \psiu]$, let 
\begin{align}\label{eq:u(t,psi),u(t)_general_case}
  \nu(\tmin,\psi)= \binom{\mc{R}-\tmin}{\psi}M^\psi \cdot \binom{\tmin}{\tmin-\psi}(M-1)^{\tmin-\psi},\quad 
    \nu(\tmin) = \sum_{\psi=0}^{\psiu} \nu(\tmin,\psi).
\end{align}
Then the following hold:
\begin{enumerate}[(1)]
\item\label{lem:nu_tmin} \edit{Given $\Ka=\ka$, $\Kap=\kap$,    $\We$  (with $|\We|=t$),  $|\hWe|=\hatt$,   $ \W_\eMD$ and   $\hW_\eFA$, the number of distinct choices for the set  $\hWe$ equals $\nu(\tmin)$. Furthermore, $\nu(\tmin)\le \binom{\mc{R}}{\tmin}M^\tmin$.} 
    \item \label{lem:cond_psi_tmin_general} It holds that 
\begin{equation}\label{eq:prob_|W21|_general_case}
    \P\left(|\W_{\eMD}^1|=\psi\,\bigg|\, \Ka=\ka, \Kap=\kap, |\We| = t, |\hWe|=\hatt\right)
    =  \nu(\tmin,\psi)/\nu(\tmin).
    \end{equation}
\end{enumerate}
\end{lem}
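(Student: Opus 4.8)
The plan is to prove both parts of Lemma~\ref{lem:cond_expectation_general_case} by a direct combinatorial counting argument that generalizes the one used for the special case in Lemma~\ref{lem:cond_p_aMD}, reading the relevant set relations off the Venn diagram in Fig.~\ref{fig:general_case_venn}. The key observation is that, once we condition on $\Ka=\ka$, $\Kap=\kap$, on $\We$ (hence on $\W_{\eMD}\subseteq\We$), on $|\hWe|=\hatt$, and on $\hW_{\eFA}$, the only undetermined part of $\hWe$ is the subset $\hWe\setminus\hW_{\eFA}$ of size $\tmin=\min\{t,\hatt\}$, which decomposes as the disjoint union of the extra false alarms $\hW_{\eFA}^1$ and the extra active-user errors $\hW_{\eAUE}$. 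I would count the configurations of these two pieces separately and multiply.

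For part~\ref{lem:nu_tmin}, first I would identify the pool of codebooks available to each piece. By the decoding rule \eqref{eq:min_distance_decoder}, all codewords in $\hWe$ lie in distinct codebooks, so each element of $\hW_{\eFA}^1$ (size $\psi$) must be drawn from a silent user's codebook not already used by $\hW_{\iFA}$ or $\hW_{\eFA}$; the number of such codebooks is $L-\ka-(\ukap-\ka)^+-(\hatt-t)^+=\mc{R}-\tmin$, giving $\binom{\mc{R}-\tmin}{\psi}M^\psi$ choices. Each element of $\hW_{\eAUE}$ (size $\tmin-\psi$) must come from a codebook of one of the $\tmin$ active users whose transmitted codeword lies in $\We\setminus\W_{\eMD}$, and must differ from that transmitted codeword, giving $\binom{\tmin}{\tmin-\psi}(M-1)^{\tmin-\psi}$ choices. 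Multiplying yields $\nu(\tmin,\psi)$ as in \eqref{eq:u(t,psi),u(t)_general_case}, and summing over the admissible range $\psi\in[0:\psiu]$ with $\psiu=\min\{\tmin,\mc{R}-\tmin\}$ gives $\nu(\tmin)$. The bound $\nu(\tmin)\le\binom{\mc{R}}{\tmin}M^{\tmin}$ then follows from Vandermonde's identity exactly as in \eqref{eq:nu_t_special_def}.

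For part~\ref{lem:cond_psi_tmin_general}, I would invoke the exchangeability of codewords across codebooks under the random i.i.d.\@ Gaussian ensemble (and the new measure $q$ of Section~\ref{sec:change_of_measure}), so that each of the $\nu(\tmin)$ admissible configurations of $\hWe\setminus\hW_{\eFA}$ is equally likely conditioned on the given quantities. Since exactly $\nu(\tmin,\psi)$ of these configurations satisfy $|\hW_{\eFA}^1|=\psi$, and since the one-to-one correspondence between the blue sets gives $|\W_{\eMD}^1|=|\hW_{\eFA}^1|$, the conditional probability in \eqref{eq:prob_|W21|_general_case} equals $\nu(\tmin,\psi)/\nu(\tmin)$.

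The main obstacle I anticipate is bookkeeping rather than any deep difficulty: I must verify that the count of available silent codebooks, $\mc{R}-\tmin$, and the count of active codebooks in play, $\tmin$, are both correct simultaneously across the three scenarios $\ka>\okap$, $\ka<\ukap$, and $\ka\in[\ukap:\okap]$ depicted in Fig.~\ref{fig:venn_general_specialized}. This requires checking that the $(\cdot)^+$ truncations in the definition of $\mc{R}$ correctly absorb the initial sets $\W_{\iMD}$, $\hW_{\iFA}$ and the extra sets $\W_{\eMD}$, $\hW_{\eFA}$, so that at most one of each pair is nonempty, and confirming the reduction to Lemma~\ref{lem:cond_p_aMD} (where $t=\hatt$, $\okap=\ukap=\kap$, and $\mc{R}=\check{\mc{R}}$) as a sanity check.
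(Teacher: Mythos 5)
Your proposal is correct and follows essentially the same route as the paper's proof: part (1) by counting configurations of $\hW_{\eFA}^1\cup\hW_{\eAUE}$ over the pool of $\mc{R}-\tmin$ unused silent-user codebooks and the $\tmin$ codebooks of the users who transmitted $\We\setminus\W_{\eMD}$ (with the Vandermonde bound), and part (2) by exchangeability of codewords giving equal probability to each of the $\nu(\tmin)$ configurations, followed by averaging out the conditioning on the specific sets. The bookkeeping you flag (that the $(\cdot)^+$ terms in $\mc{R}$ correctly account for $\W_{\iMD}$, $\hW_{\iFA}$, $\W_{\eMD}$, $\hW_{\eFA}$ across the three scenarios) is exactly the verification the paper carries out when deriving $\mc{R} = L-\ka+\tmin-(\ukap-\ka)^+-(\hatt-t)^+$.
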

\edit{For the given Gaussian joint-codebook  (described in Section \ref{sec:preliminaries})}, substituting \eqref{eq:prob_4_items_general_case}--\eqref{eq:u(t,psi),u(t)_general_case}  into \eqref{eq:pMD_general_case}--\eqref{eq:pAUE_general_case}  
 \edit{establishes  the individual guarantees in} \eqref{eq:thm_epsMD}--\eqref{eq:thm_epsAUE} of Theorem \ref{thm:ngo_ach}. \edit{To show that \eqref{eq:thm_epsMD}--\eqref{eq:thm_epsAUE} hold simultaneously, we proceed similarly to \cite[Theorem 19]{polyanskiy2011feedback} and \cite[Theorem 8]{Yavas2021random} by constructing a  randomized coding scheme that uses time-sharing among at most two joint-codebooks from the Gaussian ensemble. This completes the proof of  Theorem \ref{thm:ngo_ach} in the general case.} 
 In the  remainder of this section, we  provide the proofs of Lemmas \ref{lem:general_p_t,tHat} and \ref{lem:cond_expectation_general_case}, which are analogous to the proofs of Lemmas \ref{lem:p_W_aMD_union_W_AUE} and \ref{lem:cond_p_aMD} for the special case.

\begin{proof}[Proof of Lemma \ref{lem:general_p_t,tHat}]
To prove \eqref{eq:prob_4_items_general_case}, by the law of total probability
 we need to show 
\begin{align}\label{eq:cond_prob_to_prove}
    \P\left(\Kap=\kap, \abs{\We}=t,  |\hWe|=\hatt \,\bigg|\,  \Ka=\ka\right)\le \min\{p(t,\hatt), \xi(\ka, \kap)\}.
\end{align}
As before,  let $\ukap$ and $\okap$ denote the realizations of $\uKap$ and $\oKap$ when  $\Kap=\kap$,  given $\kl, \ku$, \edit{$\rl$ and $\ru$}. 
From  the decoding rule in 
 \eqref{eq:determine_Ka'}--\eqref{eq:min_distance_decoder},   we have  with $\by\sim \normal_n(\bzero, (\ka P'+1)\bI) $, 
     \begin{align}
    &\P\left( \Kap=\kap, \abs{\We}=t,  \, |\hWe|=\hatt \,\bigg|\,  \Ka=\ka\right) \nonumber\\
    & \le \P\left( \argmax_{\kappa\in[\kl:\ku]} \, p(\by| \Ka=\kappa)=\kap,  
     \abs{\We}=t,  |\hWe|=\hatt, |\hW|\in [\ukap:\okap]\,\bigg|\, \Ka=\ka 
    \right)\nonumber\\
    &\le \min\Bigg\lbrace
    \underbrace{\P\left( \argmax_{\kappa\in[\kl:\ku]} \, p(\by|\Ka=\kappa)=\kap\right)}_{\text{term (i)}},\;
    \underbrace{\P\left(\abs{\We}=t,  |\hWe|=\hatt, |\hW|\in [\ukap:\okap]
    \,\bigg|\, \Ka=\ka \right)}_{\text{term (ii)}}\Bigg\rbrace\,,\label{eq:two_upper_bounds_pt,tHat,xi}
    \end{align} 
    \edit{
    where the first step holds because $\Kap=\kap$  implies $|\hW|\in [\ukap:\okap]$, and 
    the second step holds because $\P(A\cap B)\le \min\{\P(A), \P(B)\}$.}
We bound the terms (i) and (ii) on the right side of \eqref{eq:two_upper_bounds_pt,tHat,xi}. For the first term, using $\by\sim \normal_n(\bzero, (\ka P'+1)\bI) $, we have
    \begin{align}
    \text{term (i)} &= 
    \prob\big(p(\by|\Ka=\kap) > p(\by|\Ka=\kappa), \, \forall \kappa\in[\kl: \ku]\setminus \kap\big) \nonumber\\
    &\le \min_{\kappa\in[\kl:\ku]\setminus \kap} \P\left(p(\by|\Ka=\kap)> p(\by|\Ka=\kappa)\right)\nonumber\\
  & =\min_{\kappa\in[\kl:\ku]\setminus \kap} \P\Bigg\lbrace-\frac{n}{2}\ln(1+\kap P')-\frac{\|\by\|^2_2}{2(1+\kap P')}> -\frac{n}{2}\ln(1+\kappa P')-\frac{\|\by\|^2_2}{2(1+\kappa P')}\Bigg\rbrace\nonumber\\
  & =\min_{\kappa\in[\kl:\ku]\setminus \kap}  \P\left\lbrace\left(\frac{1}{1+\kap P'} - \frac{1}{1+\kappa P'}\right)\|\by\|_2^2< n\ln \left(\frac{1+ \kappa P'}{1+ \kap P'}\right)\right\rbrace\nonumber\\
  &\stackrel{\text{(a)}}{=} \min_{\kappa\in[\kl:\ku]\setminus \kap}\Bigg\lbrace\ind\{\kappa < \kap\}\frac{\Gamma(\frac{n}{2}, \zeta)}{\Gamma(\frac{n}{2})} + \ind\{\kappa > \kap\}\frac{\gamma(\frac{n}{2}, \zeta)}{\Gamma(\frac{n}{2})}\Bigg\rbrace = \xi(\ka, \kap)\,, \label{eq:xi_proof}
    \end{align}
where (a) uses 
$\frac{1}{1+\ka P'}\|\by\|_2^2\sim \text{Gamma}(\frac{n}{2}, 2)$ and
\begin{align}
    \zeta =\frac{n}{2(1+\ka P')}\ln \left(\frac{1+ \kappa P'}{1+\kap P'}\right) \left[\frac{1}{1+\kap P'} - \frac{1}{1+\kappa P'}\right]^{-1}.\label{eq:zeta_proof}
\end{align}
Results \eqref{eq:xi_proof}--\eqref{eq:zeta_proof} correspond to \eqref{eq:xi_thm}--\eqref{eq:zeta_thm} in Theorem \ref{thm:ngo_ach}.

Next, we bound  term (ii) in \eqref{eq:two_upper_bounds_pt,tHat,xi}.  Similar to the special case,  we have 
\begin{align}
\by=c(\W)+\bveps = 
c(\W_\iMD)+c(\We) +c(\W\cap \hW)+\bveps. \label{eq:channel_output_general}    
\end{align}
The decoder returns $ \hW_\iFA\cup\hWe\cup (\W\cap \hW)$ as its estimate, which implies that
$$\|\by-c(\hW_\iFA)-c(\hWe)-c(\W\cap \hW)\|_2^2
<\|\by-c(\hW_\iFA)-c(\We)-c(\W\cap \hW)\|_2^2. $$ Combined with 
\eqref{eq:channel_output_general}, this is equivalent to 
\begin{align}
\|c(\W_\iMD)+c(\We)-c(\hW_\iFA) -c(\hWe) +\bveps\|_2^2
<\|c(\W_\iMD)-c(\hW_\iFA)+\bveps\|_2^2.\label{eq:general_case_error_criterion}
\end{align}
Let $\mc{E}(\W_\iMD, \hW_\iFA, \We, \hWe)$ be the event that $(\W_\iMD, \hW_\iFA, \We, \hWe)$ satisfies \eqref{eq:general_case_error_criterion} \edit{with $\Ka=\ka$, $|\We|=t$, $|\hWe|=\hatt$ and $|\hW|\in [\ukap:\okap]$}. Then we can rewrite  term (ii) in \eqref{eq:two_upper_bounds_pt,tHat,xi} as
\begin{align}\label{eq:quantity_to_be_bounded_by_ptt}
    \text{term (ii)} = \P\left\lbrace\bigcup_{\We: |\We|=t}\bigcup_{\hWe: |\hWe|=\hatt}
    \mc{E}(\W_\iMD, \hW_\iFA, \We, \hWe)\right\rbrace.
    \end{align}
We follow the same steps as in \eqref{eq:p_Wa=p_union_E}--\eqref{eq:p_Wa_special_case_proof} to bound \eqref{eq:quantity_to_be_bounded_by_ptt}. First note that
\begin{align}\label{eq:base_event_general}
    &\P\left(\mc{E}(\W_\iMD, \hW_\iFA, \We, \hWe)\,\bigg|\, \W_\iMD, \hW_\iFA, \We, \W_\eMD, \hW_\eFA, \bveps\right)\nonumber\\
     & \stackrel{\text{(a)}}{\le }\exp\left(\lambda_0 \|c(\W_\iMD)-c(\hW_\iFA)+\bveps\|_2^2\right) \cdot \nonumber\\
     & \quad \quad 
     \E_{\hWe}\left[\exp\left(-\lambda_0 \|c(\W_\iMD)+c(\We)-c(\hW_\iFA)-c(\hWe)+\bveps\|_2^2\right)\,\bigg|\, \W_\iMD, \hW_\iFA, \We, \bveps\right]\nonumber\\
    &\stackrel{\text{(b)}}{=}\exp\left(\lambda_0\|c(\W_\iMD)-c(\hW_\iFA)+\bveps\|_2^2\right) \cdot \nonumber\\
    & \quad \quad
    (1+2 P'\hatt\lambda_0)^{-\frac{n}{2}}\exp\left(-\frac{\lambda_0\|c(\W_\iMD)+c(\We)-c(\hW_\iFA)+\bveps\|_2^2}{1+2 P'\hatt\lambda_0}\right), \; \forall \lambda_0>-\frac{1}{2 P'\hatt}\,,
\end{align}
where (a) applies the Chernoff bound in Lemma \ref{lem:chernoff}, and (b) applies  the Gaussian identity in Lemma \ref{lem:gaussian_identity}. 

Define the  shorthand $\chi_0=\rho\lambda_0/(1+2 P'\hatt\lambda_0), b_0=\rho\lambda_0-\chi_0/(1+2P't\chi_0)$ and $a_0=\rho \ln \left(1+2P'\hatt \lambda_0\right) + \ln \left(1+2P't\chi_0\right)$. Analogous to \eqref{eq:apply_rho_trick_1}--\eqref{eq:p_double_unions_E_expanded_special} in the special case, next we add  two unions over $\We$ and $\hWe$ to the event in \eqref{eq:base_event_general}  and use Gallager's $\rho$-trick \eqref{eq:rho_trick_cond} twice to obtain  an upper bound on \eqref{eq:quantity_to_be_bounded_by_ptt}.

\paragraph{$\rho$-trick for $\bigcup_{\hWe}$}   We apply the $\rho$-trick \eqref{eq:rho_trick_cond} to add a union over $\hWe$ to the left side of \eqref{eq:base_event_general}. \edit{As we prove in Lemma \ref{lem:cond_expectation_general_case} part \ref{lem:nu_tmin},
the number of ways to construct $\hWe$ with size $|\hWe|=\hatt$ is upper bounded by  $\binom{\mc{R}}{\tmin}M^\tmin$, where $\tmin=\min\{t,\hatt\}$ and 
 $\mc{R} = L-\ka +\tmin - (\ukap-\ka)^+ -(\hatt-t)^+ \ge \tmin$.} 
Hence applying the $\rho$-trick \edit{and taking expectation over $\We, \W_\eMD$ and $\hW_\eFA$} yields for $\rho\in[0,1]$,
\begin{align}
    &\P\left(\bigcup_{\hWe: |\hWe|=\hatt}\mc{E}(\W_\iMD, \hW_\iFA, \We, \hWe)\,\bigg|\, \W_\iMD, \hW_\iFA,  \bveps\right)\nonumber\\
    &\le \left[\binom{\mc{R}}{\tmin}M^\tmin\right]^\rho 
        (1+2 P'\hatt\lambda_0)^{-\frac{n\rho}{2}}
\exp\left(\rho\lambda_0\|c(\W_\iMD)-c(\hW_\iFA)+\bveps\|_2^2\right)  \cdot \nonumber\\
& \quad\quad \E_{\We}\left[\exp\left(-\chi_0 \|c(\W_\iMD)+c(\We)-c(\hW_\iFA)+\bveps\|_2^2\right)\,\bigg|\, \W_\iMD,\hW_{\iFA}, \bveps\right]\nonumber\\
    &\stackrel{\text{(a)}}{=}  \binom{\mc{R}}{\tmin}^\rho M^{\rho\tmin}
        (1+2 P'\hatt\lambda_0)^{-\frac{n\rho}{2}}
\exp\left(\rho \lambda_0\|c(\W_\iMD)-c(\hW_\iFA)+\bveps\|_2^2\right)  \cdot \nonumber\\
& \quad \quad\left\lbrace (1+2P' t \chi_0)^{-\frac{n}{2}}\exp\left(-\frac{\chi_0\|c(\W_\iMD)-c(\hW_\iFA)+\bveps\|_2^2}{1+2P't\chi_0}\right) \right\rbrace\nonumber\\
& = \binom{\mc{R}}{\tmin}^\rho M^{\rho\tmin}
\exp\left(-\frac{na_0}{2} +b_0  \|c(\W_\iMD)-c(\hW_\iFA)+\bveps\|_2^2\right),\label{eq:rho_trick1_general}
\end{align}
where (a) applied the Gaussian identity from Lemma \ref{lem:gaussian_identity}. 
\paragraph{$\rho$-trick for $\bigcup_{\We}$} \edit{Given $\W_\iMD$,} we recall that $\We \subset \W\setminus \W_{\iMD}$ and $|\W\setminus \W_{\iMD}| = \ka -  (\ka-\okap)^+ = \min\{\ka, \okap\}$, hence there are $\binom{\min\{\ka, \okap\}}{t}$ ways to construct $\We$ with size $|\We|=t$. 
 Applying the $\rho$-trick  \eqref{eq:rho_trick_cond} again to add a union over $\We$ to the left side of  \eqref{eq:rho_trick1_general}  
yields that for $\rho_1\in[0,1]$,
\begin{align}
    \text{term (ii)}& \stackrel{(a)}{=} \P\left(\bigcup_{\We:|\We|=t}\bigcup_{\hWe: |\hWe|=\hatt}\mc{E}(\W_\iMD, \hW_\iFA, \We, \hWe)\right)\nonumber\\
    &\le \min_{\substack{\rho, \rho_1\in[0,1],\\\lambda_0>-\frac{1}{2 P'\hatt}}}\binom{\min\{\ka, \okap\}}{t}^{\rho_1}  \cdot \nonumber\\
& \quad \;\;\E_{\W_\iMD, \hW_\iFA, \bveps}\left[\binom{\mc{R}}{\tmin}^{\rho\rho_1} M^{\rho\rho_1\tmin}
\exp\left(-\frac{n\rho_1 a_0}{2} +\rho_1 b_0 \|c(\W_\iMD)-c(\hW_\iFA)+\bveps\|_2^2\right)\right]\nonumber\\
    & \stackrel{\text{(b)}}{=} \min_{\substack{\rho, \rho_1\in[0,1],\\\lambda_0>-\frac{1}{2 P'\hatt}}} \binom{\min\{\ka,\okap\}}{t}^{\rho_1}\binom{\mc{R}}{\tmin}^{\rho\rho_1} M^{\rho\rho_1\tmin} 
    \exp\left(-\frac{n\rho_1 a_0}{2} \right)(1-2\rho_1 P_1 b_0)^{-\frac{n}{2}}\nonumber\\
    &\stackrel{\text{(c)}}{=} \min_{\substack{\rho, \rho_1\in[0,1],\\\lambda>-\frac{1}{P'\hatt}}} \binom{\min\{\ka,\okap\}}{t}^{\rho_1}
    \binom{\mc{R}}{\tmin}^{\rho\rho_1} M^{\rho\rho_1\tmin}
    \exp\left(-\frac{n\rho_1 a}{2} \right)(1-\rho_1P_1b)^{-\frac{n}{2}}=p(t, \hatt)\ , \label{eq:p_double_union_E_expanded_general}
\end{align}
where (a) applied \eqref{eq:quantity_to_be_bounded_by_ptt}; (b) used the shorthand $P_1=(\abs{\W_\iMD}+|\hW_\iFA|)P'+1=((\ka-\okap)^++(\ukap-\ka)^+)P'+1$  defined in \eqref{eq:P1_thm}; (c) used $\lambda:=2\lambda_0$ and $a,b, \chi$ as defined in \eqref{eq:a_thm}--\eqref{eq:b_mu_thm} (thus $ a_0=a, b_0=b/2, \chi_0=\chi/2$) and $p(t,\hatt)$ as defined in \eqref{eq:p_t,tHat_def}.
Substituting \eqref{eq:xi_proof} and \eqref{eq:p_double_union_E_expanded_general} back into  \eqref{eq:two_upper_bounds_pt,tHat,xi}  yields \eqref{eq:cond_prob_to_prove}, which completes the proof. 
\end{proof}

\begin{proof}[Proof of Lemma \ref{lem:cond_expectation_general_case}]\textbf{\edit{Proof of part \ref{lem:nu_tmin}:}} \edit{Given $\Ka=\ka$, $\Kap=\kap$, $\We$ (with $|\We|=t$), $|\hWe|=\hatt$,  $\W_\eMD$ and $\hW_\eFA$, the  elements of  $\hWe$ are fully determined by the remaining subsets $\hW_{\eFA}^1$ and $\hW_\eAUE$. Therefore, counting the number of distinct configurations of $\hWe$ reduces to counting the   distinct  ways to construct $\hW_{\eFA}^1 \cup \hW_\eAUE$.}

To begin, 
note that  $\hW_\eFA^1\cup \hW_\eAUE$ \emph{does not} contain codewords from the codebooks of the users:
\begin{itemize}
    \item who  transmitted $\W\cap \hW$ since these are  correctly decoded, 
    \item who  transmitted $\W_\iMD$ or $\W_{\eMD}$ because these users are declared silent, or
    \item who are  declared active via the $\FA$s in $\hW_\iFA$ or $\hW_{\eFA}$.
\end{itemize} 
As a result, 
 the elements of  $\hW_{\eFA}^1\cup \hW_{\eAUE}$ exclusively  come from the remaining $\mc{R}$ codebooks, with
 \begin{align}
     \mc{R}&=L-|\W\cap \hW| - |\W_\iMD\cup \W_{\eMD}| - |\hW_\iFA\cup \hW_{\eFA}|\nonumber\\
     &=L-\big(\ka-|\W_\eMD^1\cup \W_\eAUE| \big)- |\hW_\iFA\cup \hW_{\eFA}|\nonumber\\
     &=L-(\ka-\tmin) - \big((\ukap-\ka)^+ +(\hatt-t)^+\big)\nonumber\\
     &= L-\ka+\tmin - (\ukap-\ka)^+ -(\hatt-t)^+\ .\label{eq:def_r_general_case_simplified}
  \end{align}
These $\mc{R}$ codebooks include
\begin{enumerate}[i)]
\item \label{item:general_active} the codebooks  of the $|\W_{\eMD}^1\cup\W_{\eAUE}|=\tmin$  users who transmitted   $\W_{\eMD}^1\cup\W_{\eAUE}$, or
\item \label{item:general_silent} the codebooks of the $(\mc{R}-\tmin)$ silent users, who are \emph{not} mistakenly declared active through $\hW_\iFA$ and  $\hW_{\eFA}$. 
\end{enumerate}
As a sanity check on \ref{item:general_silent},  there are  $(L-\ka) $ silent users in total, among which
$|\hW_\iFA|+|\hW_{\eFA}|=(\ukap-\ka)^+ +(\hatt-t)^+ $ are mistaken as active, so the number of users in the category \ref{item:general_silent} is 
$(L-\ka) - \big( (\ukap-\ka)^+ +(\hatt-t)^+ \big)$, which is equal to $(\mc{R}-\tmin)$ due to \eqref{eq:def_r_general_case_simplified}.

Observe that each element of $\hW_{\eFA}^1$ comes from a type-\ref{item:general_silent} codebook, and each element of $\hW_{\eAUE}$ comes from the set of untransmitted codewords in a type-\ref{item:general_active} codebook. The number of ways to divide the set $\hW_{\eFA}^1\cup\hW_{\eAUE}$ into $\hW_{\eFA}^1$ and $\hW_{\eAUE}$ such that $|\hW_{\eFA}^1|=\psi$ and $|\hW_{\eAUE}|=\tmin-\psi$ is
\begin{align}
 \nu(\tmin,\psi)= \binom{\mc{R}-\tmin}{\psi}M^\psi \cdot \binom{\tmin}{\tmin-\psi}(M-1)^{\tmin-\psi}. \nonumber
\end{align} Thus the total number of ways to  divide  $\hW_{\eFA}^1\cup\hW_{\eAUE}$ into $\hW_{\eFA}^1$ and $ \hW_{\eAUE}$ is
\begin{align}
    \nu(\tmin )= \sum_{\psi=0}^{\min\{\tmin, \mc{R}-\tmin\}} \nu(\tmin, \psi) \le \binom{\mc{R}}{\tmin} M^\tmin\, . \nonumber
\end{align} 

\paragraph{\edit{Proof of part \ref{lem:cond_psi_tmin_general}:}}
Recall that $ |\W_{\eMD}^1|=|\hW_{\eFA}^1|$ as illustrated in Fig.~\ref{fig:general_case_venn}.  We characterize the conditional distribution of  $ |\W_{\eMD}^1|$ 
by characterizing the conditional distribution of $|\hW_{\eFA}^1|$. 
Combining $\nu(\tmin, \psi)$ and $\nu(\tmin)$ and using the exchangeability of the codewords yields 
\begin{align}
\P\left(|\W_{\eMD}^1|=\psi\,\bigg|\, \Ka=\ka, \Kap=\kap, |\We| = t, |\hWe|=\hatt,  \We, \W_\eMD, \hW_\eFA\right)
    =  \nu(\tmin,\psi)/\nu(\tmin).
\end{align}
\edit{Taking expectation over $\We, \W_\eMD$ and $\hW_\eFA$ yields 
\eqref{eq:prob_|W21|_general_case}.}
Note that \eqref{eq:u(t,psi),u(t)_general_case}--\eqref{eq:prob_|W21|_general_case} imply that 
$\P\big(\abs{\W_{\eMD}^1}=\psi \,\big| \, \Ka=\ka, \Kap=\kap, \abs{\We}=t,|\hWe|=\hatt \big)=1$ when  $\psi=0$ and $\tmin=\min\{t, \hatt\}=0$,  as one would expect.
\end{proof}

\subsection{Proof of Corollary \ref{cor:error_floors}}\label{sec:proof_error_floors}
When $E_b/N_0\to \infty$, the maximum-likelihood decoder doesn't make mistakes as long as  $\ka\in [\ukap:\okap]$. Under the new measure defined in Section \ref{sec:change_of_measure}, $\ka\in [\ukap:\okap]$ can be ensured by choosing \edit{$\rl=\ru=L$}.
However, given  fixed \edit{$\rl$ and $\ru$} that do not scale with $L$,  it is possible that  $\ka\notin  [\ukap: \okap] $. When this happens, we have one of the following two scenarios:
\begin{itemize}
     \item if $\ukap> \ka$, then $\hka\ge \ukap> \ka$, hence the decoder  commits at least $ (\ukap-\ka)$ $\FA$s in $\hW_{\iFA}$, and guarantees that  all   transmitted codewords are decoded  correctly,  i.e., $ \W\subset\hW$\ ;
    \item if $\okap< \ka$, then $ \hka\le\okap< \ka$, hence the decoder commits at least $ (\ka-\okap)$ $\MD$s in $\W_{\iMD}$, and  guarantees that the decoded set only contains  transmitted codewords, i.e., $\hW\subset \W$.
\end{itemize}
 Consequently, as ${E_b}/{N_0}\to\infty$ (or $P\to\infty$), the error floors are calculated assuming  no additional errors are incurred on top of the initial errors $\W_{\iMD}$ or $\hW_{\iFA}$ (i.e., $t=\hatt=0$).  Taking $t=\hatt=0$ in \eqref{eq:thm_epsMD}--\eqref{eq:thm_epsAUE} and noticing that $\lim_{P'\to\infty}\lim_{P\to \infty}\P(\|\bc_{w_\ell}^{(\ell)}\|_2^2> nP) = 0$ in \eqref{eq:tilde_p}, we obtain \eqref{eq:error_floor_MD}--\eqref{eq:error_floor_AUE} in Corollary \ref{cor:error_floors}.  \qed
    \section{Proof of Theorem \ref{thm:gen_sc_gamp_SE}}\label{app:gen_sc_gamp_proof}

We first describe in Section \ref{subsec:gen_rect_AMP} an abstract AMP iteration  with matrix-valued iterates. The abstract AMP iteration, defined for an i.i.d. Gaussian matrix, does not correspond to a specific signal/observation model.  The state evolution result for this  abstract AMP   was established  in \cite{javanmard2013state}; see also \cite[Sec.\@ 6.7]{feng2022unifying} and  \cite{Ger21}. We use this result to prove Theorem \ref{thm:gen_sc_gamp_SE}, by showing that the spatially coupled AMP in \eqref{eq:alg_sc_Zt}--\eqref{eq:alg_sc_Xt} for the linear model in \eqref{eq:model_B=1} is an instance of the abstract AMP iteration with matrix-valued iterates.

\subsection{Abstract  AMP recursion for i.i.d.~Gaussian matrices} \label{subsec:gen_rect_AMP}
The abstract AMP and the state evolution result in this subsection is similar to the one in \cite[Section V.A]{cobo2023bayes}, but we present it here for completeness.

Let $\bM \in \reals^{\tn \times L}$ be a random matrix with entries $M_{i\ell} \stackrel{\text{i.i.d.}}{\sim} \normal(0, {1}/{\tn})$. Consider the following recursion with iterates $\be^{t+1} \in \reals^{\tn \times \ell_E}$ and $\bh^{t+1} \in \reals^{L \times \ell_H}$, defined  for $t \ge 0$:
\begin{align}
    \bh^{t+1} & = \bM^{\top} g_t(\be^{t}, \, \bgamma)
    \, - \, f_t(\bh^t, \, \bbeta) \sD^{\top}_t \, ,\label{eq:Ht_update} \\
    \be^{t+1} &  = \bM f_{t+1}(\bh^{t+1}, \, \bbeta) \, - \, g_{t}(\be^{t}, \, \bgamma) \sB_{t+1}^{\top},\label{eq:Et_update} 
 \end{align}
 where $\bgamma\in\reals^{\tn\times \k},\,  \bbeta\in\reals^{L\times \k}$ are matrices independent of $\bM$.
 The recursion is initialized with some $\bi^0 \equiv  f_0(\bh^0, \, \bbeta) \in \reals^{L \times \ell_E}$ and $\be^0 = \bM \bi^0$.
Here, the functions $f_{t+1}: \reals^{L \times (\ell_H +\k)} \to \reals^{\tn \times \ell_E}$ and $g_t: \reals^{\tn \times (\ell_E +\k)} \to \reals^{L \times \ell_H}$ are defined component-wise as follows, for $t \ge 0$. 
With  the sets $[\tn]$ and $[L]$ partitioned into $\{\mc{I}_\sfr \}_{\sfr\in [\R]}$ and $\{\mc{L}_{\sfc}\}_{\sfc\in [\C]}$ as defined below \eqref{eq:Zt_result}, 
 we assume that  there exist functions $\tf_{t+1} : \reals^{\ell_H} \times \reals^{\k} \times [\Lc] \to \reals^{\ell_E}$ and 
$\tg_t: \reals^{\ell_E} \times \reals^{\k} \times [\Lr] \to \reals^{\ell_H}$ such that 
\begin{equation}
    \begin{split}
            & f_{t+1,\ell}(\bh, \bbeta) = \tf_{t+1}(\bh_\ell, \, \bbeta_\ell, \ \sfc ),  \quad \text{ for } \ell \in \mc{L}_{\sfc}, \quad \bh \in \reals^{L \times \ell_H}, \ \bbeta \in \reals^{L\times \k}, \\
            & g_{t,i}(\be, \bgamma) = \tg_{t}(\be_i, \, \bgamma_i, \ \sfr ), \quad \text{ for } i \in \mc{I}_{\sfr}, \quad \be \in \reals^{\tn \times \ell_E}, \ \bgamma \in \reals^{\tn\times \k}.
    \end{split}
    \label{eq:ftj_gti}
\end{equation}
(Here we note that $\bh_\ell, \bbeta_\ell, \be_i, \bgamma_i$ are all row vectors.) In words, the functions $f_{t+1}, g_t$ act row-wise on their inputs and depend only on the indices of the column block/row block that their inputs come from. 
The matrices $\sD_t \in \reals^{\ell_H \times \ell_E}$ and $\sB_{t+1} \in \reals^{\ell_E \times \ell_H}$ in  \eqref{eq:Ht_update}--\eqref{eq:Et_update}  are defined as:
\begin{equation}
\begin{split}
    \sD_t = \frac{1}{\tn} \sum_{i=1}^{\tn} \tg_t'(\be^t_i, \bgamma_i, \sfr),\quad 
    \sB_{t+1} = \frac{1}{\tn} \sum_{\ell=1}^{L} \tf_{t+1}'(\bh^{t+1}_\ell, \bbeta_\ell, \sfc),  \label{eq:abstract_Dt_Bt}
\end{split}
\end{equation}
where  $\tg_t'$ and $\tf_{t+1}'$ denote the Jacobians of $\tg_t$ and $\tf_{t+1}$ with respect to their first argument.

\paragraph{Assumptions}
\begin{enumerate}[label={\bfseries (B\arabic*)}, start=0]
\item As the dimensions $n, L \to \infty$, the aspect ratio $\frac{L}{n} \to \mu >0$, and we emphasize that $\ell_E, \ell_H$ as well as $\Lr, \Lc$ are positive integers that do not scale with  $\n, L$. 

\item The functions $\tf_t( \cdot \, , \cdot \, , \sfc)$ and 
$\tg_t(\cdot \, , \cdot \, , \sfr)$ are Lipschitz for $t \ge 1$, $\sfr \in [\R]$ and $\sfc \in [\C]$.

\item  For  $\sfc \in [\Lc]$ and $\sfr \in [\Lr]$, let $\nu(\bbeta_\sfc)$ and $\pi(\bgamma_\sfr)$ denote the  empirical distributions of $\bbeta_\sfc  \equiv  (\bbeta_\ell)_{\ell \in [\mc{L}_{\sfc}]}  \in \reals^{(L/\C)\times \k}$ and 
$\bgamma_\sfr \equiv (\bgamma_i)_{i \in [\mc{I}_{\sfr}]} \in \reals^{(\tn/\R)\times \k}$, respectively. Then, for some $m \in [2, \infty)$, there exist $\k$-dimensional vector random variables $\bar{\bbeta}_{\sfc} \sim \nu_{\sfc}$   and $\bar{\bgamma}_{\sfr} \sim \pi_{\sfr}$, 
with $\int_{\reals^k}\|\bx\|^m d\nu_{\sfc}(\bx)$, $\int_{\reals^k}\|\bx\|^m d\pi_{\sfr}(\bx)<\infty$
such that $d_m(\nu(\bbeta_\sfc),\nu_{\sfc}) \rightarrow 0$ and $d_m(\pi(\bgamma_\sfr), \pi_{\sfr}) \rightarrow 0$ almost surely.  Here $d_m(P,Q)$ is the $m$-Wasserstein distance between distributions $P, Q$ defined on the same Euclidean probability space.

\item  For $\sfc \in [\C]$, we assume there exists a symmetric non-negative definite $\hat{\bSigma}^{0, \sfc} \in \reals^{\ell_E \times \ell_E}$ such that we almost surely have 
\begin{align}
\k\mu \cdot\lim_{L \to \infty}   \frac{1}{(L / \C)}(\bi^0_{\sfc})^{\top} \bi^0_{\sfc} = \hat{\bSigma}^{0, \sfc},
\label{eq:hSig0}
\end{align}
where $\bi^0_\sfc  \equiv  (\bi^0_\ell)_{\ell\in [\mc{L}_{\sfc}]}  \in \reals^{(L/\C)\times \ell_E}$.
\end{enumerate}

 Theorem \ref{thm:SE_gen} below states that for each $t \ge 1$, the empirical distribution of the rows of $\be^t\in \reals^{\tn \times \ell_E}$ converge to the law of a Gaussian random vector $\sim \normal_{\ell_E}(\bzero, \bSigma^t)$. Similarly, the empirical distribution of the rows of $\bh^t\in \reals^{L \times \ell_H}$ converge to the law of a Gaussian random vector $\sim \normal_{\ell_H}(\bzero, \bOmega^t)$. Here the covariance matrices $\bSigma^t  $ and $\bOmega^t $ are   defined by the state evolution recursion, described below.

\paragraph{State evolution} The state evolution is initialized with $\bSigma^{0} = \frac{1}{\C} \sum_{\sfc \in [\Lc]}
\hat{\bSigma}^{0, \sfc}$, where $\hat{\bSigma}^{0, \sfc}$ is defined in \eqref{eq:hSig0}. Then for $t\ge 1$,  we recursively compute  $\bOmega^{t+1} \in \reals^{\ell_H \times \ell_H}$ and $\bSigma^{t+1} \in \reals^{\ell_E \times \ell_E}$ given $\bOmega^t,\bSigma^t$ as follows:
\begin{align}
    \bOmega^{t+1} = \frac{1}{\R} \sum_{\sfr \in [\Lr]}  \hat{\bOmega}^{t+1, \sfr} \, ,
    \label{eq:Omega_t1}
\end{align}
where for $\sfr \in 
[\Lr]$,
\begin{align}
    &\hat{\bOmega}^{t+1, \sfr} = \E\left\{ \tg_t(\tilde{\bU}^t, \bar{\bgamma}_\sfr,  \, \sfr)  \, \tg_t(\tilde{\bU}^t, \bar{\bgamma}_\sfr,  \, \sfr)^{\top} \right\}, \quad
    \tilde{\bU}^t \sim \normal(\bzero, \bSigma^t) \,  \perp \, \bar{\bgamma}_\sfr \sim \pi_{\sfr}.
    \label{eq:Gt_r}
\end{align}
Similarly, 
\begin{align}
    \bSigma^{t+1} = \frac{1}{\C} \sum_{\sfc \in [\Lc]}   \hat{\bSigma}^{t+1, \sfc} \, ,
    \label{eq:Sigma_t1}
\end{align}
where for $\sfc \in  [\Lc]$,
\begin{align}
    &\hat{\bSigma}^{t+1, \sfc} = \k\mu\cdot \E\left\{ \tf_{t+1}({\bU}^{t+1}, \bar{\bbeta}_\sfc,  \, \sfc)  \, \tf_{t+1}({\bU}^{t+1}, \bar{\bbeta}_\sfc,  \, \sfc)^{\top} \right\}, \quad 
    {\bU}^{t+1} \sim \normal(\bzero, \bOmega^{t+1}) \,  \perp \, \bar{\bbeta}_\sfc \sim \nu_{\sfc}.
    \label{eq:Gt1_c}
\end{align}
The random variables $\tilde{\bU}^t, \, \bar{\bgamma}_\sfr, \, {\bU}^{t+1}, \, \bar{\bbeta}_\sfc$ are treated as column vectors in \eqref{eq:Gt_r} and \eqref{eq:Gt1_c}, and $\k\mu = L/\tn$.

\begin{thm}
Consider the abstract AMP recursion in \eqref{eq:Ht_update}--\eqref{eq:Et_update} under the Assumptions \textbf{(B0)}--\textbf{(B3)} above. Let $\xi: \reals^{\ell_H} \times \reals^{\k} \to \reals$ and $\zeta: \reals^{\ell_E} \times \reals^{\k} \to \reals$ be any pseudo-Lipschitz functions of order $m$, where $m$ is specified in Assumption \textbf{(B2)}. Then, for  $t \ge 1$,  $\sfr \in [\Lr]$ and $\sfc \in [\Lc]$, we almost surely have:
\begin{align}
 & \lim_{L \to \infty} \,  \frac{1}{L/\C} \sum_{\ell \in  \mc{L}_\sfc} \xi(\bh^t_\ell\, , \bbeta_\ell)  = \E\{ \xi( {\bU}^t \, , \bar{\bbeta}_\sfc) \}, 
 \quad   {\bU}^t \sim \normal(\bzero, \bOmega^t) \,  \perp \, \bar{\bbeta}_\sfc \sim \nu_{\sfc} \, ,     \label{eq:Ht_result} 
\\
 &  \lim_{\tn \to \infty} \,  \frac{1}{\tn/\R} \sum_{i \in \mc{I}_\sfr } \zeta(\be^t_i \, , \bgamma_i)  = \E\{ \zeta( \tilde{\bU}^{t} \, , \bar{\bgamma}_\sfr) \}, 
\quad  \tilde{\bU}^t \sim \normal(\bzero, \bSigma^t) \,  \perp \, \bar{\bgamma}_\sfr \sim \pi_{\sfr} \, .
     \label{eq:Et_result}
\end{align}
\label{thm:SE_gen}
\end{thm}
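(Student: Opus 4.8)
The plan is to recognize that the abstract recursion \eqref{eq:Ht_update}--\eqref{eq:Et_update} is a bipartite instance of a \emph{symmetric} AMP iteration with block-structured, matrix-valued denoisers, whose state evolution is governed by the master theorem established in \cite{javanmard2013state} (see also \cite[Sec.~6.7]{feng2022unifying} and \cite{Ger21}, and the closely related abstract AMP of \cite{cobo2023bayes}). I would proceed in four steps: symmetrize the recursion, verify the master theorem's hypotheses, identify the resulting covariance recursion with \eqref{eq:Omega_t1}--\eqref{eq:Gt1_c}, and finally restrict the convergence guarantees to the two blocks to recover \eqref{eq:Ht_result}--\eqref{eq:Et_result}.

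First I would symmetrize. Embed the i.i.d.\ Gaussian matrix $\bM \in \reals^{\tn \times L}$ into
\[
\bA = \begin{bmatrix} \bzero_{\tn} & \bM \\ \bM^{\top} & \bzero_{L} \end{bmatrix} \in \reals^{(\tn+L)\times(\tn+L)},
\]
and stack the iterates into a single sequence on $\reals^{(\tn+L) \times \ell}$ whose top $\tn$ rows carry $g_t(\be^t, \bgamma)$ and whose bottom $L$ rows carry $f_{t}(\bh^t, \bbeta)$. Because $\bA$ is purely off-diagonal, left-multiplication by $\bA$ sends top-block quantities through $\bM^{\top}$ into the bottom block and bottom-block quantities through $\bM$ into the top block, which is exactly the bipartite pattern of \eqref{eq:Ht_update}--\eqref{eq:Et_update}; the staggering ($\bh^{t+1}$ uses $\be^t$ while $\be^{t+1}$ uses $\bh^{t+1}$) is absorbed by an interleaved time index. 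Under this embedding the two updates collapse to a single symmetric recursion of the form $\bA\,\Psi_t(\cdot) - \Psi_{t-1}(\cdot)\,\mathsf{O}_t^{\top}$, where the Onsager matrix $\mathsf{O}_t$ is block-diagonal and reproduces precisely $\sD_t$ and $\sB_{t+1}$ of \eqref{eq:abstract_Dt_Bt} on the two blocks. The block index $\sfr$ or $\sfc$ of each row, together with the covariate rows of $\bgamma,\bbeta$, is folded into a per-row side-information variable so that the stacked denoiser $\Psi_t$ is row-separable but block-dependent.

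Next I would check that Assumptions \textbf{(B0)}--\textbf{(B3)} imply the hypotheses of the symmetric master theorem: the row-wise Lipschitz property of $\tf_t,\tg_t$ in \textbf{(B1)} yields Lipschitzness of $\Psi_t$, the $m$-Wasserstein convergence of the empirical laws of $\bbeta_\sfc,\bgamma_\sfr$ in \textbf{(B2)} supplies the limiting covariate distributions $\nu_\sfc,\pi_\sfr$, and \textbf{(B3)} fixes the initialization covariance. The key structural point is that $\mathsf{O}_t$, being the empirical average of the Jacobians of the block-wise denoisers, is exactly the correction that cancels the memory (reaction) terms in the Bolthausen/Bayati--Montanari conditioning argument generalized to matrix iterates. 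Invoking the master theorem then gives that the row-wise empirical distribution of $\Psi_t$ applied to the stacked iterate converges to a centered Gaussian whose covariance obeys a deterministic recursion. In the third step I would verify that the bipartite structure of $\bA$ forces this limiting covariance to be block-diagonal across the $\tn$- and $L$-blocks, the cross-block terms vanishing in the limit; restricting to each block and tracking the aspect-ratio factor $\k\mu = L/\tn$ reproduces \eqref{eq:Omega_t1}--\eqref{eq:Gt1_c}, with $\bSigma^t$ governing the $\tn$-block and $\bOmega^t$ the $L$-block. Specializing the pseudo-Lipschitz test function to act on $(\be^t_i,\bgamma_i)$ for $i\in\mc{I}_\sfr$, or on $(\bh^t_\ell,\bbeta_\ell)$ for $\ell\in\mc{L}_\sfc$, and restricting the averages to the sub-blocks then yields \eqref{eq:Ht_result}--\eqref{eq:Et_result} exactly.

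The main obstacle lies in Steps~1--2: securing a master theorem in a form that \emph{simultaneously} accommodates matrix-valued iterates (so scalar variances become covariance matrices $\bSigma^t,\bOmega^t$ and scalar Onsager coefficients become Jacobian-averaged matrices) and block-dependent, non-separable denoisers, all under the comparatively weak $m$-Wasserstein assumption \textbf{(B2)} rather than a fixed empirical law. If such a theorem is not available off the shelf, the heart of the argument is the inductive conditioning step: one conditions $\bA$ on the $\sigma$-algebra generated by the iterate history, expresses the conditional law as a deterministic part (captured by state evolution) plus an independent Gaussian supported on the orthogonal complement of the accumulated linear constraints, and shows the deviation terms are asymptotically negligible. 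Carrying this through with matrix iterates requires controlling possible near-degeneracy of the Gram matrices of the stacked history and verifying that the Jacobian-averaged Onsager matrices $\sD_t,\sB_{t+1}$ concentrate to their state-evolution limits; these concentration and non-degeneracy estimates, rather than any single algebraic identity, are where the real work resides.
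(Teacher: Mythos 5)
Your proposal follows essentially the same route as the paper: the paper proves Theorem \ref{thm:SE_gen} by reducing the bipartite recursion to a symmetric AMP iteration via the standard embedding $\begin{bmatrix} \bzero & \bM \\ \bM^\top & \bzero\end{bmatrix}$ and then invoking the matrix-valued state evolution result of \cite[Thm.~1]{javanmard2013state}, exactly as in \cite[Section V.A]{cobo2023bayes}. Your additional discussion of the conditioning argument and Onsager-matrix concentration correctly identifies where the work in \cite{javanmard2013state} resides, but the paper treats that as an off-the-shelf master theorem rather than re-proving it.
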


Theorem \ref{thm:SE_gen} can be proved by applying \cite[Thm.\@ 1]{javanmard2013state}, which gives an analogous state evolution result for an AMP recursion defined via a \emph{symmetric} Gaussian matrix; see \cite[Section V.A]{cobo2023bayes}.

\subsection{Reduction of Spatially Coupled Matrix AMP to Abstract AMP}
\label{subsec:proof_main}
We define the i.i.d.\@ Gaussian matrix $\bM \in \reals^{\tn \times L}$ in terms of the spatially coupled matrix  $\bA$ in \eqref{eq:construct_sc_A} as follows.
For $i \in [\tn], \, \ell \in [L]$,
\begin{equation}
    M_{i\ell} =
    \begin{cases}
        A_{i\ell}/\sqrt{\Lr \, W_{\sfr(i), \sfc(\ell)} }, & \quad \text{ if } W_{\sfr(i), \sfc(\ell)} \neq 0, \\
        \stackrel{\text{indep.}}{\sim}\normal(0, 1/{\tn}), & \quad \text{ otherwise }.
    \end{cases}
    \label{eq:M_def}
\end{equation}
From \eqref{eq:construct_sc_A}, we have that $M_{i\ell} \stackrel{\text{i.i.d.}}{\sim} \normal(0, {1}/{\tn})$ for $i \in [\tn], \, \ell \in [L]$. Using the matrix $\bM$, we define an abstract AMP recursion for the form in \eqref{eq:Ht_update}--\eqref{eq:ftj_gti}, with iterates 
$\be^t \in \reals^{\tn \times \k\Lr}$ and $\bh^{t+1} \in \reals^{L \times \k\Lc}$. This is done via the following choice of functions $\tf_t : \reals^{\k\Lc} \times \reals^{\k} \times [\Lc] \to \reals^{\k \Lr}$ and 
$\tg_t: \reals^{\k \Lr} \times \reals^{\k} \times [\Lr] \to \reals^{\k\Lc}$,  for $t \ge 0$. In \eqref{eq:ftj_gti}, we  take $\bbeta := \bX $ and $\bgamma := \bEps$, and let
\begin{align}
    & \tf_{t}(\bh_\ell, \, \bX_\ell, \ \sfc )  
    = \sqrt{\Lr} \, \left[\left(\eta_{t-1,\sfc}(\bX_\ell- \bh_{\ell, \sfc}) - \bX_\ell\right)^\top \, [\sqrt{W_{1\sfc}}, \ldots, \sqrt{W_{\Lr \sfc}}]\right]^\top 
   \quad\text{for } \ell \in \mc{L}_{\sfc}, \ \bh_\ell \in \reals^{\k\Lc}, \label{eq:ft_choice}  \\
    & \tg_{t}(\be_i, \, \bEps_i, \ \sfr )  
    = \sqrt{\Lr}
\,     \left[\left(\be_{i,\sfr}-\bEps_i\right)^\top \, [\bQ^t_{\sfr 1}\sqrt{W_{\sfr 1}}, \,  \bQ^t_{\sfr 2}\sqrt{W_{\sfr 2}}, \ldots, \bQ^t_{\sfr \Lc}\sqrt{W_{\sfr \Lc}}] \right]^\top 
\quad  \text{for } i \in \mc{I}_{\sfr}, \ \be_i \in \reals^{\k\Lr}.
\label{eq:gt_choice}
\end{align}
Here $\bh_\ell=[\bh_{\ell,1}, \bh_{\ell,2}, \ldots, \bh_{\ell,\Lc}]$ and  $\bh_{\ell, \sfc}\in \reals^{\k}$ denotes the  $\sfc$th length-$k$  section  of the vector $\bh_\ell \in \reals^{\k\Lc}$. Similarly, $\be_{i, \sfr}\in\reals^{\k}$ denotes the  $\sfr$th length-$k$ section of the vector $\be_{i} \in \reals^{\k\Lr}$.  
We note that the vectors in  \eqref{eq:ft_choice}--\eqref{eq:gt_choice} are treated as column vectors even though they represent the $j$th row of the functions $f_t, g_t$, as given in \eqref{eq:ftj_gti}. 
$\bQ^t_{\sfr \sfc}\in\reals^{\k\times \k}$ is the ($\sfr, \sfc$)th $k\times k$ submatrix of $\bQ^t\in\reals^{\k\Lr \times \k\Lc}$  defined in \eqref{eq:Q_t_def}.

Consider the AMP algorithm \eqref{eq:Ht_update}--\eqref{eq:Et_update} defined via the matrix $\bM$ in \eqref{eq:M_def}, the functions in \eqref{eq:ft_choice}--\eqref{eq:gt_choice}, and the matrices $\sB_t, \sD_t$. The algorithm is initialized with $\bi^0 \equiv  f_0(\bh^0, \, \bbeta)$, whose $\ell$th row is given by
\begin{align}
    \bi^0_\ell = \sqrt{\R} \, \left[(\bX_\ell^0 - \bX_\ell)^\top \, \left[\sqrt{W_{1\sfc}}, \ldots, \sqrt{W_{\Lr \sfc}}\right]\right]^\top, 
    \quad\text{for }\ell\in\mc{L}_\sfc, \  \bi_\ell^0\in\reals^{k\R}.
    \label{eq:i0_init}
\end{align}
We set $\be^0 = \bM \bi^0$.

With the choice of functions above, the state evolution recursion \eqref{eq:Omega_t1}--\eqref{eq:Gt1_c} reduces to the following. Given $\bSigma^t \in \reals^{\k\R \times \k\R}$ and $\bOmega^{t} \in \reals^{\k\C \times \k\C}$, the $(\sfc,\sfc')$th $k\times k$ submatrix of $ \hat{\bOmega}^{t+1, \sfr}\in \reals^{\k\C \times \k\C} $ for $ \sfr\in [\R]$ takes the form
\begin{align}
\left[ \hat{\bOmega}^{t+1, \sfr} \right]_{\sfc  \sfc'} 
&= \R \,  \bQ^t_{\sfr\sfc}{}^\top\E\left\{ \left(\tilde{\bU}^t_{\sfr} - \bbE\right)\left(\tilde{\bU}^t_{\sfr} - \bbE\right)^{\top}\right\}\bQ^t_{\sfr\sfc'}
\,  \sqrt{W_{\sfr \sfc} W_{\sfr \sfc'}}, 
 \quad \text{for}\quad\sfc, \, \sfc' \in [\C].
\label{eq:hOmegat1_def}
\end{align}
%
Here $\tilde{\bU}^t_{\sfr}\in \reals^{\k}$ denotes the $\sfr$th length-$k$ section of $\tilde{\bU}^t \in \reals^{\k\R}$ with $ \tilde{\bU}^t \sim \normal(\bzero,  \bSigma^t)  \perp  \bbE \sim p_{\bbE}$.
Both $\tilde{\bU}^t_{\sfr}$ and $\bbE$ are treated as column vectors.
Similarly, the $(\sfr, \sfr')$th $k\times k$ submatrix of $ \hat{\bSigma}^{t+1, \sfc} \in \reals^{\k\R \times \k\R}$ for $\sfc\in [\C]$ takes the form 
%
%
\begin{align}
  \left[  \hat{\bSigma}^{t+1, \sfc} \right]_{\sfr\sfr'}
   =
 \k\mu\,\R \E \left\{ \left(\eta_{t+1,\sfc}(\bbX + {\bU}^{t+1}_{\sfc}) - \bbX\right) 
  \left(\eta_{t+1,\sfc}(\bbX + {\bU}^{t+1}_{\sfc}) - \bbX\right)^\top\vphantom{^\top} \right\} &\sqrt{W_{\sfr \sfc} W_{\sfr'\sfc}} \, ,  \nonumber\\
 & \text{for} \quad \sfr, \, \sfr'\in [\R].
  \label{eq:hSigmat1_def}
\end{align}
Here ${\bU}^t_{\sfc}\in \reals^{\k}$ denotes the $\sfc$th length-$k$ section of ${\bU}^t \in \reals^{\k\Lc}$ with ${\bU}^t \sim \normal(\bzero, \bOmega^t) \perp  \bbX \sim p_{\bbX}$. 
The state evolution is initialized with $\hat{\bSigma}^{0, \sfc} = k\mu \lim_{L \to \infty}   \frac{1}{(L / \C)}(\bi^0_{\sfc})^{\sT} \bi^0_{\sfc}$, for $\sfc \in [\C]$, with $\bi^0 \in \reals^{L \times k\R}$  defined in \eqref{eq:i0_init}. Using Assumption $\textbf{(A0)}$ (see \eqref{eq:init_limits_mat}), its entries are given by 
\begin{align}
  \left[  \hat{\bSigma}^{0, \sfc} \right]_{\sfr\sfr'} =
 k\mu\R \, \bXi_{\sfc} \sqrt{W_{\sfr \sfc} W_{\sfr' \sfc}} \, , \quad \sfr\sfr' \in [\R].
  \label{eq:hSigma0}
\end{align}
We then have $\bSigma^0 = \frac{1}{\C} \sum_{\sfc \in [\C]} \hat{\bSigma}^{0, \sfc}$.

To prove Theorem \ref{thm:gen_sc_gamp_SE}, we show that for $t \ge 0$ and $\sfr \in [\R]$, $\sfc \in [\C]$:
\begin{align}
 &   \left[\bSigma^{t+1}\right]_{\sfr\sfr} \,  = \,  \k\muin\sum_{\sfc=1}^\Lc W_{\sfr\sfc}\bPsi_\sfc^{t+1},  \quad   \text{and}\quad 
 \left[  \bOmega^{t+1} \right]_{\sfc \sfc} \, = \, \left[\sum_{\sfr=1}^\Lr W_{\sfr\sfc}{\left[\bPhi_{\sfr}^t\right]}^{-1}\right]^{-1},
  \label{eq:SE_equiv}
\end{align}
where the right side of the equalities consists of the state evolution parameters defined in \eqref{eq:SC_SE_Phi_t}--\eqref{eq:G_c_t}. This implies that $\bU_{\sfc}^{t+1}\stackrel{\rm d}{=}{\bG}_\sfc^t$ for $\sfc\in[\C]$ and $\tilde{\bU}_{\sfr}^t + \bbE \stackrel{\rm d}{=} \tilde{\bG}_\sfr^t$ for $\sfr \in [\R]$, where ${\bG}_\sfc^t\sim \normal_d(\bzero, \bT_{\sfc}^t)$ and $\tilde{\bG}_{\sfr}^t\sim \mathcal{N}_d\left(\0, \bPhi_\sfr^t\right)$, with $\bPhi_\sfr^t$ and  $\bT_\sfc^t$ defined in \eqref{eq:SC_SE_Phi_t} and  \eqref{eq:G_c_t}. 
We then show that, for $t\geq0$,
\begin{align}
& \be^t_{i, \sfr} = -\bZ^t_i + \bEps_i, 
\quad \text{ for } i \in \mc{I}_{\sfr}, \ \sfr \in [\Lr], \label{eq:eti_claim}\\ 
& \bh^{t+1}_{\ell, \sfc} = \bX_\ell - \bS^t_\ell, \quad \text{ for } \ell \in \mc{L}_{\sfc} , \ \sfc \in [\Lc].
\label{eq:ht1j_claim}
\end{align}
Here $\bZ_i^t, \bX_\ell^t$ denote the $i$th and $\ell$th rows, respectively, of the spatially coupled AMP iterates $\bZ^t$ and $\bX^t$. Theorem \ref{thm:gen_sc_gamp_SE} follows by substituting \eqref{eq:SE_equiv}--\eqref{eq:ht1j_claim} into  Theorem  \ref{thm:SE_gen}, the state evolution result for the abstract AMP.

We now prove  \eqref{eq:SE_equiv} and then \eqref{eq:eti_claim}--\eqref{eq:ht1j_claim}.
\paragraph{Proof of \eqref{eq:SE_equiv}} For $t = 0$, from \eqref{eq:hSigma0} we have
\begin{equation}
\begin{split}
      \left[  \hat{\bSigma}^{0, \sfc} \right]_{\sfr \sfr}
  =  k\mu\R \, \bXi_{\sfc} W_{\sfr \sfc}.
   \end{split}
\end{equation}
Therefore
\begin{equation}
   \left[  \bSigma^{0} \right]_{\sfr \sfr}
  =  k\mu \frac{\R}{\C} \sum_{\sfc \in [\C]} \bXi_{\sfc} W_{\sfr \sfc}.
   \label{eq:Sig0_step}
\end{equation}
Noting from \eqref{eq:init_limits_mat}--\eqref{eq:Psic0} that $\bPsi_c^0 = \bXi_\sfc$ for $\sfc \in [\C]$ and  $\muin =  \frac{\R}{\C}\mu$, we see that the first claim in \eqref{eq:SE_equiv} holds for $t=0$. Assume towards induction that for some $t \ge 0$ we have 
\begin{align}
 \left[\bSigma^{t}\right]_{\sfr\sfr} \,  = \,  k\muin\sum_{\sfc=1}^\Lc W_{\sfr\sfc}\bPsi_\sfc^{t},  \quad   \text{  and }   \quad  \left[  \bOmega^{t} \right]_{\sfc \sfc} \, = \, \left[\sum_{\sfr=1}^\Lr W_{\sfr\sfc}{[\bPhi_{\sfr}^{t-1}]}^{-1}\right]^{-1},
 \label{eq:SE_equiv_hyp}
\end{align}
(The second part of the induction hypothesis \eqref{eq:SE_equiv_hyp} is ignored for $t=0$.) From \eqref{eq:hOmegat1_def} we have: 
\begin{align}
\left[ \bOmega^{t+1} \right]_{\sfc \sfc} & = 
\frac{1}{\R} \sum_{\sfr \in [\R]} \left[ \hat{\bOmega}^{t+1, \sfr} \right]_{\sfc \, \sfc}
=  \frac{\R}{\R} \sum_{\sfr \in [\R]} \,  \btQ^t_{\sfr\sfc}{}^\top\E\left\{ \left(\tilde{\bU}^t_{\sfr} - \bbE\right)\left(\tilde{\bU}^t_{\sfr} - \bbE\right)^{\top}\right\}\btQ^t_{\sfr\sfc}
\,  W_{\sfr \sfc}, \nonumber \\
& = \sum_{\sfr \in [\R]} \,  W_{\sfr \sfc}\,\btQ^t_{\sfr\sfc}{}^\top\left(\E\left\{ \tilde{\bU}^t_{\sfr}\tilde{\bU}^t_{\sfr}{}^{\top}\right\} - \E\left\{\bbE\tilde{\bU}^t_{\sfr}{}^{\top}\right\} - \E\left\{\tilde{\bU}^t_{\sfr}\bbE^{\top}\right\} + \E\left\{\bbE\bbE^{\top}\right\}\right)\btQ^t_{\sfr\sfc}
\,  , \nonumber \\
& = \sum_{\sfr \in [\R]} W_{\sfr \sfc}\,  \btQ^t_{\sfr\sfc}{}^\top\left(\left[  \bSigma^{t} \right]_{\sfr \sfr} + \sigma^2\bI\right)\btQ^t_{\sfr\sfc}
\,  ,
\label{eq:Omegat1_step}
\end{align}
where $\tilde{\bU}^t_{\sfr}, \bbE\in \reals^k$ are treated as column vectors. Recalling from \eqref{eq:SC_SE_Phi_t} and \eqref{eq:Q_t_def} that
\begin{align}
    \bPhi_\sfr^t = \left[\Sigma^{t} \right]_{\sfr \sfr} + \sigma^2\bI,
 \qquad   \btQ^t_{\sfr\sfc} =  [\bPhi_\sfr^t]^{-1}\left[\sum_{\sfr'=1}^\R W_{\sfr' \sfc}[\bPhi_{\sfr'}^t]^{-1}\right]^{-1},
\end{align}
the expression for $\left[ \bOmega^{t+1} \right]_{\sfc \sfc}$ in \eqref{eq:Omegat1_step} can be rewritten as
\begin{align}
    \left[ \bOmega^{t+1} \right]_{\sfc \sfc}=&\sum_{\sfr \in [\R]}  W_{\sfr \sfc}\,  \left(\left[\sum_{\sfr' \in [\R]} W_{\sfr' \sfc}[\bPhi_{\sfr'}^t]^{-1}\right]^{-1}\right)^{\top}\left([\bPhi_\sfr^t]^{-1}\right)^{\top} \bPhi_\sfr^t[\bPhi_\sfr^t]^{-1}\left[\sum_{\sfr' \in [\R]} W_{\sfr' \sfc}[\bPhi_{\sfr'}^t]^{\ -1}\right]^{-1}\nonumber\\
    =& \sum_{\sfr \in [\R]}  W_{\sfr \sfc}\,  \left[\sum_{\sfr' \in [\R]} W_{\sfr' \sfc}[\bPhi_{\sfr'}^t]^{\ -1}\right]^{-1}[\bPhi_\sfr^t]^{-1} \bPhi_\sfr^t[\bPhi_\sfr^t]^{-1}\left[\sum_{\sfr' \in [\R]} W_{\sfr' \sfc}[\bPhi_{\sfr'}^t]^{\ -1}\right]^{-1}\nonumber\\
    =& \left[\sum_{\sfr' \in [\R]} W_{\sfr' \sfc}[\bPhi_{\sfr'}^t]^{-1}\right]^{-1} \sum_{\sfr \in [\R]}  W_{\sfr \sfc}\,  \left[ \bPhi_\sfr^t \right]^{-1}\left[\sum_{\sfr' \in [\R]} W_{\sfr' \sfc}[\bPhi_{\sfr'}^t]^{-1}\right]^{-1} \nonumber \\
    =& \left[\sum_{\sfr \in [\R]} W_{\sfr' \sfc}[\bPhi_{\sfr'}^t]^{-1}\right]^{-1},\label{eq:Omegat1_cc}
\end{align}
where we have used the fact that $\bPhi_\sfr^t$ is a sum of covariance matrices and is therefore symmetric, as is its inverse. This implies that $\bU_\sfc^{t+1}\stackrel{\rm d}{=}{\bG}_\sfc^t$.

Next, from \eqref{eq:hSigmat1_def} we have that 
\begin{align}
    \left[  \bSigma^{t+1} \right]_{\sfr \sfr} \, 
&= \frac{k\mu}{\C} \sum_{\sfc \in [\C]} \left[  \hat{\bSigma}^{t+1, \sfc} \right]_{\sfr \sfr} \nonumber\\
&= \frac{k\mu\R}{\C}  \sum_{\sfc \in [\C]} W_{\sfr \sfc} \, \E \left\{ \left(\eta_{t+1,\sfc}(\bbX + \bU^{t+1}_{\sfc}) - \bbX\right) \left(\eta_{t+1,\sfc}(\bbX + \bU^{t+1}_{\sfc}) - \bbX\right)^\top \right\} \, \nonumber\\
 \label{eq:Sigt1_rr}
&= k\muin\sum_{\sfc \in [\C]}W_{\sfr \sfc}\bPsi_\sfc^{t+1},
\end{align}
where we recall that $\bU^{t+1}_{\sfc} \sim \normal\left(0, [\bOmega^{t+1}]_{\sfc\sfc}\right)$ and the definition of $\bPsi_\sfc^{t}$ in \eqref{eq:SC_SE_Psi_t}. 
With the expressions in \eqref{eq:Omegat1_cc} and \eqref{eq:Sigt1_rr}, we complete the induction step and hence, prove \eqref{eq:SE_equiv}. 

\paragraph{Proof of \eqref{eq:eti_claim} and \eqref{eq:ht1j_claim}}  At $t=0$,   the algorithm is initialized with $\bi^0$ defined in \eqref{eq:i0_init},  and   $\be^0 = \bM  \bi^0 \in  \reals^{\tn \times k\R}$. For $\sfr \in [\R]$ and $i \in \mc{I}_\sfr$, consider the $\sfr$th section of the  $i$th row of $\be^0$, $\be_i^0\in\reals^{k\R}$. Writing the sum over $\ell$ from $1$ to $L$ as a double sum, we have that:
\begin{align}
& \be^0_{i, \sfr}  = \sum_{\sfc = 1}^{\C}  \sum_{\ell \in \mc{L}_\sfc}  M_{i \ell } \sqrt{R \, W_{\sfr\sfc}} \, \left(\bX_\ell^0 - \bX_\ell\right).  \label{eq:e0_ir} 
\end{align}

Recalling the definition of $M_{i\ell}$ in \eqref{eq:M_def},  we  have that  
\begin{align}
\be^0_{i, \sfr} =  \sum_{\ell=1}^L A_{i\ell}\left(\bX_\ell^0 -\bX_\ell\right) = \left(\bA\bX^0\right)_i - \left(\bA\bX\right)_i = -\bZ^0_i + \bEps_i, \label{eq:e0i}
\end{align} 
where the last equality follows from the initialization of $\bZ^0$ in \eqref{eq:Z_init}.

Assume towards induction that for some $t \ge 0$, we have:
\begin{align}
& \be^t_{i, \sfr} = -\bZ^t_i + \bEps_i,
\quad \text{ for } i \in \mc{I}_{\sfr}, \ \sfr \in [\Lr], \label{eq:eti_hyp}\\ 
& \bh^{t}_{\ell, \sfc} = \bX_\ell - \bX_\ell^{t-1} - \bV_\ell^{t-1}, \quad \text{ for } \ell \in \mc{L}_{\sfc} , \ \sfc \in [\Lc].
\label{eq:htj_hyp}
\end{align}
(For $t=0$, we only assume \eqref{eq:eti_hyp}, as shown in \eqref{eq:e0i}.) 

From \eqref{eq:Ht_update}, the $\ell$th row of $\bh^{t+1}$, for $\ell \in \mc{L}_{\sfc}$, is given by 
\begin{equation}
\bh^{t+1}_\ell = \sum_{i=1}^{\tn} M_{i\ell} \,  \tg_t(\be_i^t,\bEps_i, \sfr(i) )
\, - \, \tf_t(\bh_\ell^t, \bX_\ell, \sfc) {\sD}_t^{\top},
\label{eq:htj_exp}
\end{equation}
where from \eqref{eq:abstract_Dt_Bt},
\begin{align}
\sD_t = \frac{1}{\tn} \sum_{i=1}^{\tn} \tg_t'(\be^t_i, \bEps_i, \sfr(i)) = \frac{1}{\R} \sum_{\sfr=1}^\R \frac{1}{\tn/\R}\sum_{i \in \mc{I}_\sfr}\tg_t'(\be^t_i, \bEps_i, \sfr)\in \reals^{k\Lc \times k\Lr }.
\end{align}
Here $\tg_t'(\be^t_i, \bEps_i, \sfr(i))$ denotes the Jacobian with respect to  the first argument $\be^t_i\in \reals^{k\R}$. Using the definition of $\tg_t$ in \eqref{eq:gt_choice}, the $k\times k$ submatrices of its Jacobian are:  
\begin{align}
[ \tg_t'(\be^t_i, \bEps_i,  \, \sfr) ]_{\mathsf{j} \mathsf{m}} 
= \begin{cases}
    \sqrt{R} \sqrt{W_{\sfr \mathsf{j}}} \, \btQ^t_{\sfr \mathsf{j}}, & \text{ for } \mathsf{j} \in [\C],  \, \mathsf{m} = \sfr, \\
     \bzero, & \text{ otherwise.}
\end{cases}
\end{align}
Using this, we obtain that $[ {\sD}_t ]_{\sfc \sfr}\in\reals^{k\times k}$ is given by
\begin{align}
   &[ {\sD}_t ]_{\sfc \sfr}
   =  
    \frac{1}{\sqrt{\R}} 
        \sqrt{W_{\sfr \sfc}} \, \btQ_{\sfr\sfc}^t,\quad  \sfc \in [\C],   \sfr \in [\R]. \label{eq:barDt_exp01}
\end{align}
Using \eqref{eq:barDt_exp01} in \eqref{eq:htj_exp} along with the definition of $\tf_t$ from \eqref{eq:ft_choice}, we obtain
\begin{equation}
\begin{split}
   &\bh^{t+1}_{\ell, \sfc}  =  \sum_{ \sfr = 1}^{\R} \sum_{i \in \mc{I}_\sfr} M_{i\ell }  \sqrt{\Lr}
\,     \left(\be^t_{i,\sfr}-\bEps_i\right) \sqrt{W_{\sfr \sfc}}\btQ^t_{\sfr \sfc} \, - \, \left(\eta_{t-1,\sfc}(\bX_\ell - \bh_{\ell, \sfc}^t) - \bX_\ell\right) \, \sum_{\sfr=1}^\R W_{\sfr \sfc}\btQ_{\sfr\sfc}^t{}^\top
    \end{split}
    \label{eq:ht1_jc0}
\end{equation}
By the induction hypothesis, we have that
$\be^t_{i, \sfr} = -\bZ^t_i + \bEps_i$, for $i \in \mc{I}_{\sfr}$, and $\bh^{t}_{\ell, \sfc} = \bX_\ell - \bX_\ell^{t-1} - \bV_\ell^{t-1}$ for $\ell \in \mc{L}_{\sfc}$. (For $t=0$,  we only assume the hypothesis on $\be^t$, and the formula in \eqref{eq:ht1_jc0} holds for $\bh^{1}_{\ell, \sfc}$ with  the term $\eta_{t-1,\sfc}(\bX_\ell - \bh_{\ell, \sfc}^t)$  replaced by $\bX_\ell^0$.) 
Moreover, substituting the expression for $M_{i\ell}$ in \eqref{eq:M_def} and noting from the definition of $\btQ_{\sfr\sfc}^t$ in \eqref{eq:Q_t_def} that $\sum_{\sfr=1}^\R W_{\sfr \sfc}\btQ_{\sfr\sfc}^t{}^\top=\boldsymbol{I}_{d\times d}$ we obtain:
\begin{align}
    \bh^{t+1}_{\ell, \sfc} & = \sum_{i=1}^{\tn} A_{i\ell} \left(\be^t_{i,\sfr(i)}-\bEps_i\right) \btQ^t_{\sfr(i) \sfc}
    \, - \, \left(\eta_{t-1,\sfc}(\bX_\ell - \bh_{\ell, \sfc}^t) - \bX_\ell\right)\nonumber\\
    & = -\sum_{i=1}^{\tn} A_{i\ell} \bZ_i^t \btQ^t_{\sfr(i) \sfc}
    \, - \, \left(\eta_{t-1,\sfc}(\bX_\ell - \bh_{\ell, \sfc}^t) - \bX_\ell\right) \nonumber\\
    & = \bX_\ell \, - \, \eta_{t-1,\sfc}(\bX_\ell - \bh_{\ell, \sfc}^t)\, - \, \sum_{i=1}^{\tn} A_{i\ell} \bZ_i^t \btQ^t_{\sfr(i) \sfc}
    \nonumber\\
    & = \bX_\ell 
    \, - \, \eta_{t-1,\sfc}(\bX_\ell^{t-1} + \bV_\ell^{t-1})-\sum_{i=1}^{\tn} A_{i\ell} \bZ_i \btQ^t_{\sfr(i) \sfc}\nonumber\\
    & = \bX_\ell - \bX_\ell^{t} - \bV_\ell^{t}.
    \label{eq:ht1_step}
\end{align}
where the last equality follows from the definition of $\bV^t$ and $\bX^t$ in \eqref{eq:SC_AMP_V_t} and \eqref{eq:alg_sc_Xt}.

Next consider the $i$th row of $\be^{t+1}$, for $i \in \mc{I}_\sfr$, which from \eqref{eq:Et_update}, is given by 
\begin{equation}
e^{t+1}_i = \sum_{\ell=1}^L M_{i\ell} \, \tf_{t+1}(\bh_\ell^{t+1}, \bX_\ell, \sfc(\ell)) \, - \, \tg_{t}(\be^{t}_i, \bEps_i, \sfr) \,  {\sB}_{t+1}^{\top},
\label{eq:eti_exp}
\end{equation}
where from \eqref{eq:abstract_Dt_Bt},  
\begin{align}
\sB_{t+1} = \frac{1}{\tn} \sum_{\ell=1}^L \tf_{t+1}'(\bh^{t+1}_\ell, \bX_\ell, \sfc(\ell))=k\mu\frac{1}{\C} \sum_{\sfc=1}^{\C}\frac{1}{L/\C} \sum_{\ell\in \mc{L}_\sfc}\tf_{t+1}'(\bh^{t+1}_\ell, \bX_\ell, \sfc)\, \in \reals^{k\Lr \times k\Lc}.
\end{align}
From the definition of $\tf_t$ in \eqref{eq:ft_choice}, the $k\times k$ partitions of the Jacobian $\tf_{t+1}'(\bh^{t+1}_\ell, \bX_\ell, \sfc(\ell))$ for $\ell\in\mc{L}_c$ are: 
\begin{equation}
   [\tf_{t+1}'(\bh_\ell^{t+1}, \bX_\ell,  \, \sfc)]_{\mathsf{j}   \mathsf{m}} = \begin{cases}
        -\sqrt{\Lr} \sqrt{W_{\mathsf{j} \sfc}}\, \left[\eta'_{t,\sfc}(\bX_\ell - \bh_{\ell, \sfc}^{t+1})\right],  &    \text{ for } \mathsf{j} \in [\Lr],  \mathsf{m}= \sfc, \\
       \bzero, &  \text{ otherwise},
   \end{cases}
\end{equation}
where $\left[\eta'_{t,\sfc}(\bX_\ell - \bh_{\ell, \sfc}^{t+1})\right]\in\reals^{k\times k}$ represents the Jacobian of $\eta_{t,\sfc}(\bX_\ell - \bh_{\ell, \sfc}^{t+1})$. Therefore,
\begin{align}
   [ {\sB}_{t+1} ]_{\sfr \sfc}  =  
   -\frac{1}{\delta} \frac{1}{\C} 
        \sqrt{\Lr} \sqrt{W_{\sfr \sfc}}\, \frac{1}{L/\C}\sum_{\ell\in\mc{L}_c}\{ \eta'_{t,\sfc}(\bX_\ell - \bh_{\ell, \sfc}^{t+1}) \} ,  &    \text{ for } \sfr \in [\Lr],  \sfc \in [\Lc].
   \label{eq:barBt_exp_mat}
\end{align}

We now use \eqref{eq:barBt_exp_mat} in \eqref{eq:eti_exp} along with the definition of $\tg_{t}$ in \eqref{eq:gt_choice}. For $i \in \mc{I}_{\sfr}$, $\sfr \in [\Lr]$, we have:
\begin{align}
\be^{t+1}_{i, \sfr} &= \sum_{\sfc = 1}^{\C}  \sum_{\ell \in \mc{L}_\sfc}  M_{i\ell} \sqrt{R \, W_{\sfr\sfc}}
\, \left(\eta_{t,\sfc}(\bX_\ell - \bh_{\ell, \sfc}^{t+1}) - \bX_\ell\right) \nonumber \\
& \qquad  + \left(\be_{i,\sfr}^t-\bEps_i\right) \, \frac{k\mu\Lr}{\Lc} \sum_{\sfc =1}^{\Lc} W_{\sfr \sfc}\btQ^t_{\sfr \sfc} \,  \left\{\frac{1}{L/\C}\sum_{\ell\in\mc{L}_c} \left(\eta'_{t,\sfc}(\bX_\ell - \bh_{\ell, \sfc}^{t+1}) \right)^\top\right\}. \label{eq:et_ir}
\end{align}
Next, by the induction hypothesis we have that 
$\be^{t}_{i, \sfr} = -\bZ_i^t + \bEps_i$ for $i \in \mc{I}_\sfr$, and we have shown above  that $\bh^{t+1}_{\ell, \sfc} = \bX_\ell - \bX_\ell^{t} - \bV_\ell^{t} $  for $\ell \in \mc{L}_{\sfc}$. 
Using these in \eqref{eq:et_ir} along with $\muin =  \frac{\R}{\C}\mu$,  we obtain: 
\begin{align}
   \be^{t+1}_{i, \sfr} &  =  \sum_{\ell=1}^L A_{i\ell} \, \left(\eta_{t,\sfc(\ell)}(\bX_\ell^{t} + \bV_\ell^{t}) - \bX_\ell\right)
   \, - \,k\muin\bZ_i^t \,  \sum_{\sfc =1}^{\Lc} W_{\sfr \sfc}\btQ^t_{\sfr \sfc} \,  \left\{\frac{1}{L/\C}\sum_{\ell\in\mc{L}_c} \left(\eta'_{t,\sfc}(\bX_\ell - \bh_{\ell, \sfc}^{t+1}) \right)^\top\right\}\nonumber \\
   & = -\left(\bY_i -\bEps_i\right) + \sum_{\ell=1}^L A_{i\ell} \, \eta_{t,\sfc(\ell)}(\bX_\ell^{t} + \bV_\ell^{t})
   \, - \btZ_i^{t+1}\nonumber\\
   & = \bEps_i - \left( \bY_i - \sum_{\ell=1}^L A_{i\ell} \, \bX_\ell^{t+1}
   \, + \,\btZ_i^{t+1}\right)\nonumber\\
   & = \bEps_i - \bZ_i^{t+1},
\end{align}
where the second equality follows from the definition of $\btZ^t$ in \eqref{eq:Z_tilde} and the last equality follows from \eqref{eq:alg_sc_Zt}.

This completes the proof of the claims in \eqref{eq:eti_claim} and \eqref{eq:ht1j_claim}, and Theorem \ref{thm:gen_sc_gamp_SE} follows from Theorem \ref{thm:SE_gen}, the state evolution result for abstract AMP iterations. \qed

\section{Conclusion and Future Work}
This paper studied  the Gaussian multiple  access channel with random user activity  in the many-user  regime, where the number of users $L$ scales linearly with the code length $n$. 
Each user has their own codebook, and decoding may result in  three types of errors: \edit{missed detection} ($\MD$), false alarm ($\FA$), and active user error ($\AUE$).   We derived two  achievability bounds on the error probabilities  ($\pMD, \pFA$ and $\pAUE$), in terms of the  user payload $\mu$ and  the distribution of the number of  active users. The first  bound is based on a finite-length analysis of Gaussian random codebooks with  maximum-likelihood decoding. The second bound was  obtained by characterizing the asymptotic performance of spatially coupled Gaussian codebooks with AMP decoding. In the special case where the users are always active, our asymptotic achievability bound is strictly tighter than the existing bounds  by Kowshik \cite{kowshik2022improved} and Zadik et  al.\@ \cite{zadik2019improved} for moderate or large user payloads (e.g., hundreds of bits). 

 We also proposed an efficient CDMA-type  scheme with a spatially coupled signature matrix and AMP decoding. The AMP decoder can be tailored to take advantage of prior information on the users' messages or activity patterns. Precise asymptotic guarantees   on its error performance  were derived by establishing a general state evolution result for spatially coupled AMP with matrix-valued iterates.  Although we considered spatially coupled Gaussian signature matrices, using recent  results on AMP universality \cite{Wan22, tan2024group}, the decoding algorithm and all the theoretical results remain valid for a much broader class of `generalized white noise' matrices. This class includes spatially coupled matrices with sub-Gaussian entries, so the results apply to the popular setting of random binary-valued signature sequences \cite{shamaiVerdu01, guo2005randomly}.  Extending our CDMA-type scheme to handle other  priors, such as modulation \cite{hsieh2022near},  coding \cite{liu2024LDPC}, channel  fading \cite{gao2024unsourced},  or common alarm messages \cite{Ngo2024unsourced_alarm} is a promising direction for future work.  In particular, the  AMP decoder with the thresholding denoiser \eqref{eq:thres_denoiser} has low complexity and
could be integrated with a decoder for an outer code, similarly to \cite{ebert2023sparse, liu2024LDPC}, to tackle coded random access.

This paper investigated random access in the conventional GMAC setting where users have distinct codebooks. It would be interesting to adapt the proposed coding schemes  to \emph{unsourced}  random access \cite{polyanskiy2017perspective, ngo2022unsourced}, where users share a common codebook and the decoder returns a list of messages without needing to identify the senders. 
Several works \cite{Amalladinne2020Coded,fengler2021SPARCS,amalladinne2022unsourced} have investigated schemes for  unsourced random access based on sparse regression codes (SPARCs) and AMP decoding. A coded CDMA-type scheme with AMP decoding  could significantly reduce  the   computational and memory complexity  compared to unsourced random access schemes based on SPARCs.

\appendix

\section{Implementation Details}
\label{appx:implementation}
In our  experiments in Section \ref{sec:sparc_ach_numerical}, we find that the asymptotic error bounds in Theorem \ref{thm:sparc_asymp_errors} and Corollary \ref{cor:asymp_error_alpha=1} 
associated  with $\mc{F}_\Bayes$ cannot be computed efficiently for user payloads $k>8$ bits, while the error bounds associated  with $\mc{F}_\marginal$ can be computed in an efficient and numerically stable way  for significantly larger payloads $k$. The exact threshold on $k$ is determined by the precision level employed. For instance, with Python's  64-bit (double) precision, we can handle up to $k< 63$  bits. This observation is consistent with those in \cite{hsieh2022near, kowshik2022improved}.

Since $M=2^k$,  when computing the error bounds $\epsMD, \epsFA$ or $\epsAUE$ for larger  $k$ we may encounter numerical issues  due to the terms with $M$ or $(M-1)$ in  their exponents (see 
\eqref{eq:sparc_epsMD_def}--\eqref{eq:sparc_epsAUE_def} and  \eqref{eq:asymp_pAUE_sparc_alpha=1}). To avoid such problems, we use the  approximation
$(1-\epsilon)^n \approx e^{-\epsilon n}$ when $n$ is large and $\epsilon$ is small.

For all the  experiments in  Sections \ref{sec:sparc_ach_numerical}, numerical integration is used to evaluate any expectation terms $\E_z[\cdot]$ over the standard Gaussian  $z$, including the expectation  in the mutual information term in $\mc{F}_\marginal$ (see \eqref{eq:pot_fn_entrywise}--\eqref{eq:MI_entrywise}) or the expectation  in  $\epsAUE$ (see 
\eqref{eq:sparc_epsAUE_def} 
or  \eqref{eq:asymp_pAUE_sparc_alpha=1}). Monte Carlo sampling is used to estimate any expectation terms $\E_{\bz}[\cdot]$ over the multivariate standard Gaussian $\bz$, such as the expectation in the mutual information term in $\mc{F}_\Bayes$ (see \eqref{eq:pot_fn_sectionwise}--\eqref{eq:MI_sectionwise}).

\section*{Acknowledgment}
We thank the anonymous reviewers for  their feedback which helped improve the paper.
 
{\small{
\bibliographystyle{IEEEtran}
 \bibliography{gmac} 

\begin{thebibliography}{10}
\providecommand{\url}[1]{#1}
\csname url@samestyle\endcsname
\providecommand{\newblock}{\relax}
\providecommand{\bibinfo}[2]{#2}
\providecommand{\BIBentrySTDinterwordspacing}{\spaceskip=0pt\relax}
\providecommand{\BIBentryALTinterwordstretchfactor}{4}
\providecommand{\BIBentryALTinterwordspacing}{\spaceskip=\fontdimen2\font plus
\BIBentryALTinterwordstretchfactor\fontdimen3\font minus
  \fontdimen4\font\relax}
\providecommand{\BIBforeignlanguage}[2]{{%
\expandafter\ifx\csname l@#1\endcsname\relax
\typeout{** WARNING: IEEEtran.bst: No hyphenation pattern has been}%
\typeout{** loaded for the language `#1'. Using the pattern for}%
\typeout{** the default language instead.}%
\else
\language=\csname l@#1\endcsname
\fi
#2}}
\providecommand{\BIBdecl}{\relax}
\BIBdecl

\bibitem{chen2014many}
X.~Chen and D.~Guo, ``Many-access channels: The {G}aussian case with random
  user activities,'' in \emph{Proc. IEEE Int. Symp. Inf. Theory}, 2014.

\bibitem{chen2017capacity}
X.~Chen, T.-Y. Chen, and D.~Guo, ``Capacity of {G}aussian many-access
  channels,'' \emph{IEEE Trans. Inf. Theory}, vol.~63, no.~6, pp. 3516--3539,
  2017.

\bibitem{polyanskiy2017perspective}
Y.~Polyanskiy, ``A perspective on massive random-access,'' in \emph{Proc. IEEE
  Int. Symp. Inf. Theory}, 2017.

\bibitem{zadik2019improved}
I.~Zadik, Y.~Polyanskiy, and C.~Thrampoulidis, ``Improved bounds on {G}aussian
  {MAC} and sparse regression via {G}aussian inequalities,'' in \emph{Proc.
  IEEE Int. Symp. Inf. Theory}, 2019, pp. 430--434.

\bibitem{RaviK22}
J.~Ravi and T.~Koch, ``Scaling laws for {G}aussian random many-access
  channels,'' \emph{{IEEE} Trans. Inf. Theory}, vol.~68, no.~4, pp. 2429--2459,
  2022.

\bibitem{kowshik2022improved}
S.~S. Kowshik, ``Improved bounds for the many-user {MAC},'' in \emph{Proc. IEEE
  Int. Symp. Inf. Theory}, 2022, pp. 2874--2879.

\bibitem{hsieh2022near}
K.~Hsieh, C.~Rush, and R.~Venkataramanan, ``Near-optimal coding for many-user
  multiple access channels,'' \emph{IEEE Journal on Selected Areas in
  Information Theory}, vol.~3, no.~1, pp. 21--36, 2022.

\bibitem{Ordentlich2017low}
O.~Ordentlich and Y.~Polyanskiy, ``Low complexity schemes for the random access
  {G}aussian channel,'' in \emph{Proc. IEEE Int. Symp. Inf. Theory}, 2017.

\bibitem{Yavas2021gaussian}
R.~C. Yavas, V.~Kostina, and M.~Effros, ``Gaussian multiple and random access
  channels: Finite-blocklength analysis,'' \emph{{IEEE} Trans. Inf. Theory},
  vol.~67, no.~11, pp. 6983--7009, 2021.

\bibitem{Yavas2021random}
------, ``Random access channel coding in the finite blocklength regime,''
  \emph{IEEE Trans. Inf. Theory}, vol.~67, no.~4, p. 2115–2140, 2021.

\bibitem{ngo2022unsourced}
K.-H. Ngo, A.~Lancho, G.~Durisi, and A.~Graell~i Amat, ``Unsourced multiple
  access with random user activity,'' \emph{IEEE Trans. Inf. Theory}, vol.~69,
  no.~7, pp. 4537--4558, 2023.

\bibitem{Ngo2024unsourced_alarm}
K.-H. Ngo, G.~Durisi, A.~Graell~i Amat, P.~Popovski, A.~E. Kalør, and
  B.~Soret, ``Unsourced multiple access with common alarm messages: Network
  slicing for massive and critical {IoT},'' \emph{IEEE Trans. Commun.},
  vol.~72, no.~2, pp. 907--923, 2024.

\bibitem{gao2024unsourced}
J.~Gao, Y.~Wu, G.~Caire, W.~Yang, H.~Vincent~Poor, and W.~Zhang, ``Unsourced
  random access in {MIMO} quasi-static {R}ayleigh fading channels: Finite
  blocklength and scaling law analyses,'' vol.~71, no.~6, 2025, pp. 4342--4373.

\bibitem{Amalladinne2020Coded}
V.~K. Amalladinne, J.-F. Chamberland, and K.~R. Narayanan, ``A coded compressed
  sensing scheme for unsourced multiple access,'' \emph{IEEE Trans. Inf.
  Theory}, vol.~66, no.~10, pp. 6509--6533, 2020.

\bibitem{fengler2021SPARCS}
A.~Fengler, P.~Jung, and G.~Caire, ``{SPARCs} for unsourced random access,''
  \emph{IEEE Trans. Inf. Theory}, vol.~67, no.~10, pp. 6894--6915, 2021.

\bibitem{amalladinne2022unsourced}
V.~K. Amalladinne, A.~K. Pradhan, C.~Rush, J.-F. Chamberland, and K.~R.
  Narayanan, ``Unsourced random access with coded compressed sensing:
  Integrating {AMP} and belief propagation,'' \emph{IEEE Trans. Inf. Theory},
  vol.~68, no.~4, pp. 2384--2409, 2022.

\bibitem{venkataramanan19monograph}
R.~Venkataramanan, S.~Tatikonda, and A.~Barron, ``Sparse regression codes,''
  \emph{Foundations and Trends{\textregistered} in Communications and
  Information Theory}, vol.~15, no. 1-2, pp. 1--195, 2019.

\bibitem{donoho2013information}
D.~L. Donoho, A.~Javanmard, and A.~Montanari, ``Information-theoretically
  optimal compressed sensing via spatial coupling and approximate message
  passing,'' \emph{IEEE Trans. Inf. Theory}, vol.~59, no.~11, pp. 7434--7464,
  2013.

\bibitem{Kab03}
Y.~Kabashima, ``A {CDMA} multiuser detection algorithm on the basis of belief
  propagation,'' \emph{Journal of Physics A: Mathematical and General}, 2003.

\bibitem{donoho2009message}
D.~L. Donoho, A.~Maleki, and A.~Montanari, ``Message-passing algorithms for
  compressed sensing,'' \emph{Proc. Natl. Acad. Sci. U.S.A.}, vol. 106, no.~45,
  pp. 18\,914--18\,919, 2009.

\bibitem{bayati2011dynamics}
M.~Bayati and A.~Montanari, ``The dynamics of message passing on dense graphs,
  with applications to compressed sensing,'' \emph{IEEE Trans. Inf. Theory},
  vol.~57, no.~2, pp. 764--785, 2011.

\bibitem{Krz12}
F.~Krzakala, M.~Mézard, F.~Sausset, Y.~Sun, and L.~Zdeborová, ``Probabilistic
  reconstruction in compressed sensing: algorithms, phase diagrams, and
  threshold achieving matrices,'' \emph{Journal of Statistical Mechanics:
  Theory and Experiment}, 2012.

\bibitem{fletcher2018matrix}
A.~K. Fletcher and S.~Rangan, ``{Iterative reconstruction of rank-one matrices
  in noise},'' \emph{Information and Inference: A Journal of the IMA}, vol.~7,
  no.~3, pp. 531--562, 01 2018.

\bibitem{montanari2021lowrank}
A.~Montanari and R.~Venkataramanan, ``Estimation of low-rank matrices via
  approximate message passing,'' \emph{The Annals of Statistics}, vol.~49,
  no.~1, pp. 321--345, 2021.

\bibitem{barbier2020matrix}
J.~Barbier, N.~Macris, and C.~Rush, ``All-or-nothing statistical and
  computational phase transitions in sparse spiked matrix estimation,'' in
  \emph{Advances in Neural Information Processing Systems (NeurIPS)}, vol.~33,
  2020, pp. 14\,915--14\,926.

\bibitem{rangan2011generalized}
S.~Rangan, ``Generalized approximate message passing for estimation with random
  linear mixing,'' in \emph{Proc. IEEE Int. Symp. Inf. Theory}, 2011, pp.
  2168--2172.

\bibitem{Mai20}
A.~Maillard, B.~Loureiro, F.~Krzakala, and L.~Zdeborov{\'a}, ``Phase retrieval
  in high dimensions: Statistical and computational phase transitions,'' in
  \emph{Neural Information Processing Systems (NeurIPS)}, 2020.

\bibitem{Mon21}
M.~Mondelli and R.~Venkataramanan, ``Approximate message passing with spectral
  initialization for generalized linear models,'' in \emph{International
  Conference on Artificial Intelligence and Statistics (AISTATS)}, 2021, pp.
  397--405.

\bibitem{celentano2020estimation}
M.~Celentano, A.~Montanari, and Y.~Wu, ``The estimation error of general first
  order methods,'' in \emph{Proceedings of Thirty Third Conference on Learning
  Theory (COLT)}, vol. 125, 2020, pp. 1078--1141.

\bibitem{celentano2022fundamental}
M.~Celentano and A.~Montanari, ``Fundamental barriers to high-dimensional
  regression with convex penalties,'' \emph{The Annals of Statistics}, vol.~50,
  no.~1, pp. 170--196, 2022.

\bibitem{feng2022unifying}
O.~Y. Feng, R.~Venkataramanan, C.~Rush, and R.~J. Samworth, ``A unifying
  tutorial on approximate message passing,'' \emph{Foundations and Trends in
  Machine Learning}, vol.~15, no.~4, pp. 335--536, 2022.

\bibitem{barbier2017approximate}
J.~Barbier and F.~Krzakala, ``Approximate message-passing decoder and capacity
  achieving sparse superposition codes,'' \emph{IEEE Trans. Inf. Theory},
  vol.~63, no.~8, pp. 4894--4927, 2017.

\bibitem{barbier2019universal}
J.~Barbier, M.~Dia, and N.~Macris, ``Universal sparse superposition codes with
  spatial coupling and {GAMP} decoding,'' \emph{IEEE Trans. Inf. Theory},
  vol.~65, no.~9, pp. 5618--5642, 2019.

\bibitem{rush2021capacity}
C.~Rush, K.~Hsieh, and R.~Venkataramanan, ``Capacity-achieving spatially
  coupled sparse superposition codes with {AMP} decoding,'' \emph{IEEE Trans.
  Inf. Theory}, vol.~67, no.~7, pp. 4446--4484, 2021.

\bibitem{liu2024LDPC}
X.~Liu, K.~Hsieh, and R.~Venkataramanan, ``Coded many-user multiple access via
  approximate message passing,'' in \emph{Proc. IEEE Int. Symp. Inf. Theory},
  2024.

\bibitem{cobo2023bayes}
P.~Pascual~Cobo, K.~Hsieh, and R.~Venkataramanan, ``Bayes-optimal estimation in
  generalized linear models via spatial coupling,'' \emph{IEEE Trans. Inf.
  Theory}, vol.~70, no.~11, pp. 8343--8363, 2024.

\bibitem{verduShamai99}
S.~Verdu and S.~Shamai, ``Spectral efficiency of {CDMA} with random
  spreading,'' \emph{IEEE Trans. Inf. Theory}, vol.~45, no.~2, pp. 622--640,
  1999.

\bibitem{shamaiVerdu01}
S.~Shamai and S.~Verdu, ``The impact of frequency-flat fading on the spectral
  efficiency of {CDMA},'' \emph{IEEE Trans. Inf. Theory}, vol.~47, no.~4, pp.
  1302--1327, 2001.

\bibitem{CaireGuemRomyVerdu2004}
G.~Caire, S.~Guemghar, A.~Roumy, and S.~Verdu, ``Maximizing the spectral
  efficiency of coded {CDMA} under successive decoding,'' \emph{IEEE Trans.
  Inf. Theory}, vol.~50, no.~1, pp. 152--164, 2004.

\bibitem{tanaka2002cdma}
T.~{Tanaka}, ``A statistical-mechanics approach to large-system analysis of
  {CDMA} multiuser detectors,'' \emph{IEEE Trans. Inf. Theory}, vol.~48,
  no.~11, pp. 2888--2910, 2002.

\bibitem{guo2005randomly}
D.~Guo and S.~Verdu, ``Randomly spread {CDMA}: asymptotics via statistical
  physics,'' \emph{IEEE Trans. Inf. Theory}, vol.~51, no.~6, pp. 1983--2010,
  2005.

\bibitem{chen2018sparse}
Z.~Chen, F.~Sohrabi, and W.~Yu, ``Sparse activity detection for massive
  connectivity,'' \emph{IEEE Trans. Signal Process.}, vol.~66, no.~7, pp.
  1890--1904, 2018.

\bibitem{liu2018massive}
L.~Liu and W.~Yu, ``Massive connectivity with massive {MIMO—Part I}: Device
  activity detection and channel estimation,'' \emph{IEEE Trans. Signal
  Process.}, vol.~66, no.~11, pp. 2933--2946, 2018.

\bibitem{cakmak2024joint}
B.~Çakmak, E.~Gkiouzepi, M.~Opper, and G.~Caire, ``Joint message detection and
  channel estimation for unsourced random access in cell-free user-centric
  wireless networks,'' \emph{IEEE Trans. Inf. Theory}, vol.~71, no.~5, pp.
  3614--3643, 2025.

\bibitem{Ziniel2013efficient}
J.~Ziniel and P.~Schniter, ``Efficient high-dimensional inference in the
  multiple measurement vector problem,'' \emph{IEEE Trans. Signal Process.},
  vol.~61, no.~2, pp. 340--354, 2013.

\bibitem{polyanskiy2011feedback}
Y.~Polyanskiy, H.~V. Poor, and S.~Verdu, ``Feedback in the non-asymptotic
  regime,'' \emph{IEEE Trans. Inf. Theory}, vol.~57, no.~8, pp. 4903--4925,
  2011.

\bibitem{kiefer1953sequential}
J.~Kiefer, ``Sequential minimax search for a maximum,'' \emph{Proceedings of
  the American Mathematical Society}, vol.~4, no.~3, pp. 502--506, 1953.

\bibitem{Avriel1966minimax}
M.~Avriel and D.~J. Wilde, ``Optimal search for a maximum with sequences of
  simultaneous function evaluations,'' \emph{Management Science}, vol.~12,
  no.~9, pp. 722--731, 1966.

\bibitem{liu_ach_github}
X.~Liu, ``Finite-length random coding achievability bounds for many-user random
  access,'' \url{https://github.com/ShirleyLiuXQ/GMAC\_ach\_bounds.git}, 2024.

\bibitem{joseph2014sparc}
A.~Joseph and A.~R. Barron, ``Fast sparse superposition codes have near
  exponential error probability for ${R}<\mathcal{C}$,'' \emph{IEEE Trans. Inf.
  Theory}, vol.~60, no.~2, pp. 919--942, 2014.

\bibitem{hsieh2021thesis}
\BIBentryALTinterwordspacing
K.~Hsieh, ``Spatially coupled sparse regression codes for single- and
  multi-user communications,'' Ph.D. dissertation, University of Cambridge,
  Department of Engineering, Cambridge, UK, 2021. [Online]. Available:
  \url{{https://doi.org/10.17863/CAM.70721}}
\BIBentrySTDinterwordspacing

\bibitem{yedla2014simple_scalar}
A.~Yedla, Y.-Y. Jian, P.~S. Nguyen, and H.~D. Pfister, ``A simple proof of
  maxwell saturation for coupled scalar recursions,'' \emph{IEEE Trans. Inf.
  Theory}, vol.~60, no.~11, pp. 6943--6965, 2014.

\bibitem{rush2017capacity}
C.~Rush, A.~Greig, and R.~Venkataramanan, ``Capacity-achieving sparse
  superposition codes via approximate message passing decoding,'' \emph{IEEE
  Trans. Inf. Theory}, vol.~63, no.~3, pp. 1476--1500, 2017.

\bibitem{liu_amp_github}
X.~Liu and P.~Pascual~Cobo, ``{AMP} for {G}aussian multiple access with random
  access and coding,'' \url{https://github.com/ShirleyLiuXQ/GMAC\_AMP.git},
  2024.

\bibitem{kowshik2021fundamental}
S.~S. Kowshik and Y.~Polyanskiy, ``Fundamental limits of many-user {MAC} with
  finite payloads and fading,'' \emph{IEEE Trans. Inf. Theory}, vol.~67, no.~9,
  pp. 5853--5884, 2021.

\bibitem{ohnishi2021novel}
Y.~Ohnishi and J.~Honorio, ``{Novel Change of Measure Inequalities with
  Applications to PAC-Bayesian Bounds and Monte Carlo Estimation},'' in
  \emph{International Conference on Artificial Intelligence and Statistics
  (AISTATS)}, 2021, pp. 1711--1719.

\bibitem{javanmard2013state}
A.~{Javanmard} and A.~{Montanari}, ``State evolution for general approximate
  message passing algorithms, with applications to spatial coupling,''
  \emph{Information and Inference: A Journal of the IMA}, vol.~2, no.~2, pp.
  115--144, 2013.

\bibitem{Ger21}
C.~Gerbelot and R.~Berthier, ``Graph-based approximate message passing
  iterations,'' \emph{Information and Inference: A Journal of the IMA},
  vol.~12, no.~4, pp. 2562--2628, 2023.

\bibitem{Wan22}
T.~Wang, X.~Zhong, and Z.~Fan, ``Universality of approximate message passing
  algorithms and tensor networks,'' \emph{The Annals of Applied Probability},
  vol.~34, no.~4, pp. 3943--3994, 2024.

\bibitem{tan2024group}
N.~Tan, P.~Pascual~Cobo, J.~Scarlett, and R.~Venkataramanan, ``Approximate
  message passing with rigorous guarantees for pooled data and quantitative
  group testing,'' \emph{SIAM Journal on Mathematics of Data Science}, vol.~6,
  no.~4, pp. 1027--1054, 2024.

\bibitem{ebert2023sparse}
J.~R. Ebert, J.-F. Chamberland, and K.~R. Narayanan, ``On sparse regression
  {LDPC} codes,'' in \emph{Proc. IEEE Int. Symp. Inf. Theory}, 2023.

\end{thebibliography}
 }}

\end{document}